\author{}
\numberwithin{equation}{section}
\theoremstyle{plain}
\newtheorem{rem}{Remark}[section]
\theoremstyle{definition}
\newtheorem{rema}{Remark}
\newcommand{\ee}{\end{aligned} \end{equation}}
\newcommand{\eq}{\end{quote}}
\newcommand{\diag}{\mathrm{diag}}
\newcommand{\ep}{\end{parts}}
\newcommand{\bqp}{\begin{quote}\begin{parts}}
\newcommand{\epq}{\end{parts}\end{quote}}
\newcommand{\Rom}[1]{\text{\uppercase\expandafter{\romannumeral #1\relax}}}
\newcommand{\bee}{\begin{equation}\begin{aligned}}
\newcommand{\emm}{\end{bmatrix}}
\numberwithin{equation}{section}
\newcommand{\vertiii}[1]{{\vert\kern-0.25ex\vert\kern-0.25ex\vert #1 
    \vert\kern-0.25ex\vert\kern-0.25ex\vert}}
\newcommand\ma{\mathbf{A}}
\newcommand\mb{\mathbf{B}}
\newcommand\mc{\mathbf{C}}
\newcommand\md{\mathbf{D}}
\newcommand\me{\mathbf{E}}
\newcommand\mn{\mathbf{N}}
\newcommand\mh{\mathbf{H}}
\newcommand\mi{\mathbf{I}}
\newcommand\mq{\mathbf{Q}}
\newcommand\mr{\mathbf{R}}
\newcommand\ms{\mathbf{S}}
\newcommand\mv{\mathbf{V}}
\newcommand\mx{\mathbf{X}}
\newcommand\mz{\mathbf{Z}}
\newcommand\mw{\mathbf{W}}
\newcommand\mm{\mathbf{M}}
\newcommand\mt{\mathbf{T}}
\newcommand\mpp{\mathbf{P}}
\newcommand\mo{\mathbf{O}}
\newcommand\my{\mathbf{Y}}
\newcommand\muu{\mathbf{U}}
\newcommand\mSigma{\bm{\Sigma}}
\newcommand\mLambda{\bm{\Lambda}}
\title{\bf Chain-linked Multiple Matrix Integration via Embedding Alignment}
\author{Runbing Zheng 
    \\
    Department of Applied Mathematics and Statistics, Johns Hopkins University\\
    and \\
    Minh Tang \\
    Department of Statistics, North Carolina State University}
\date{}
\begin{document}

\maketitle

{\fontsize{11pt}{20pt}
\selectfont

\begin{abstract}
Motivated by the increasing demand for multi-source data integration in various scientific fields, in this paper we study matrix completion in scenarios where the data exhibits certain block-wise missing structures -- specifically, where only a few noisy submatrices representing (overlapping) parts of the full matrix are available. We propose the Chain-linked Multiple Matrix Integration (CMMI) procedure to efficiently combine the information that can be extracted from these individual noisy submatrices. CMMI begins by deriving entity embeddings for each observed submatrix, then aligns these embeddings using overlapping entities between pairs of submatrices, and finally aggregates them to reconstruct the entire matrix of interest. We establish, under mild regularity conditions, entrywise error bounds and normal approximations for the CMMI estimates. Simulation studies and real data applications show that CMMI is computationally efficient and effective in recovering the full matrix, even when overlaps between the observed submatrices are minimal. 

\end{abstract}

\noindent%
{\it Keywords:} $2 \to \infty$ norm, normal approximations, matrix completion, data integration
\vfill

}

{\fontsize{11pt}{16.9pt}
\selectfont

\counterwithout{figure}{section}
\counterwithout{theorem}{section}
\counterwithout{assumption}{section}
\counterwithout{algorithm}{section}

\section{Introduction}

The development of large-scale data collection and sharing has sparked considerable research interests in integrating data from diverse sources to efficiently uncover underlying signals. This problem is especially pertinent in fields such as healthcare research \citep{zhou2021multi},
genomic data integration \citep{maneck2011genomic,tseng2015integrating,
cai2016structured},
single-cell data integration \citep{stuart2019comprehensive,
ma2023your},
and chemometrics \citep{mishra2021recent}.
In this paper we consider a formulation of the problem 
where each source $i$ corresponds to a partially observed submatrix $\mathbf{M}^{(i)}$ of some matrix $\mathbf{M}$, and the goal is to integrate these $\{ \mathbf{M}^{(i)}\}$ to recover $\mathbf{M}$ as accurately as possible. 

As a first motivating example, consider pointwise mutual information (PMI) constructed from different electronic healthcare records (EHR) datasets.
PMI quantifies the association between a pair of clinical concepts, and matrices representing these associations can be derived from co-occurrence summaries of various EHR datasets \citep{ahuja2020surelda,zhou2022multiview}. 
However, due to the lack of interoperability across healthcare systems \citep{rajkomar2018scalable}, different EHR data often involve non-identical concepts with limited overlap, resulting in substantial differences among their PMI matrices. 
The analysis of PMI matrices from different EHR datasets can thus be viewed as a 
multi-source matrix integration problem. Specifically, let $\mathcal{U}$ represent some concept set and 
suppose there is a symmetric PMI matrix $\mpp\in\mathbb{R}^{N\times N}$ associated with $\mathcal{U}$, where $N := |\mathcal{U}|$. For the $i$th EHR, we denote its clinical concept by $\mathcal{U}_i\subset \mathcal{U}$ and let $n_i:=|\mathcal{U}_i|$.
The PMI matrix derived from the $i$th EHR, $\ma^{(i)}\in\mathbb{R}^{n_i\times n_i}$, then corresponds to the 
principal submatrix of $\mpp$ associated with $\mathcal{U}_i$. As it is often the case that the union of all the entries in $\{\ma^{(i)}\}$ constitutes only a strict subset of those in $\mpp$, our aim is to integrate these $\{\ma^{(i)}\}$ to recover the unobserved entries in $\mpp$. 
{\color{black}
Another example involving symmetric matrices integration appears in neuroscience, where symmetric covariance matrices are computed from calcium imaging data to characterize functional connectivity among neurons. Due to experimental constraints, only a strict subset of neurons are observed in each recording session. Integrating these incomplete covariance matrices enables the reconstruction of global neuronal interaction networks and accurate identification of functional hubs \citep{chang2022low}.
}

An example of asymmetric matrix integration arises in single-cell matrix data, where rows represent genomic features, columns represent cells, and each entry records some specific information about a feature in the corresponding cell. 
A key challenge in the joint analysis for this type of data is to devise efficient computational strategies to integrate different data modalities \citep{ma2020integrative,lahnemann2020eleven}, as the experimental design 
may lead to a collection of single-cell data matrices for different, but potentially overlapping, sets of cells and features,
{\color{black} such as those generated across batches, tissues, or technologies. Completing such partially overlapping data is crucial for constructing unified representations of cell populations and improving downstream tasks like clustering or trajectory inference.
}
More specifically, 
let $\mpp\in\mathbb{R}^{N\times M}$ be the population matrix for all involved features and cells where $N:=|\mathcal{U}|, M:=|\mathcal{V}|$ (with $\mathcal{U}$ and $\mathcal{V}$ denoting the sets of genomic features and cells, respectively). Each single-cell data matrix $\ma^{(i)}\in\mathbb{R}^{n_i\times m_i}$ is then a submatrix of $\mpp$ corresponding to some $\mathcal{U}_i \subset \mathcal U$ and $\mathcal{V}_i \subset \mathcal V$; here we denote $n_i:=|\mathcal{U}_i|$ and $m_i:=|\mathcal{V}_i|$. Our aim is once again to integrate the collection of $\{\ma^{(i)}\}$ to reconstruct the original $\mpp$.


The above examples involving EHR and single-cell data are special cases of the matrix completion with noise and block-wise missing structures. 
However, the existing literature on matrix completion mainly focuses on recovering a possibly low-rank matrix based on uniformly sampled observed entries 
or independently sampled observed entries which may be contaminated by noise; see, e.g., \citet{candes2012exact,cai2010singular,candes2011tight,koltchinskii2011nuclear,tanner2013normalized,chen2019inference,fornasier2011low,mohan2012iterative,lee2010admira,vandereycken2013low,hu2012fast,sun2016guaranteed,cho2017asymptotic,chen2020nonconvex,srebro2010collaborative,cai2016matrix,foygel2011learning} for an incomplete list of references. 

 These assumptions of uniform or independent sampling in standard matrix completion models are generally violated in applications of matrix integration, thus necessitating the development of efficient methods for tackling the block-wise missing structures.
Some examples of this development include the generalized integrative principal component analysis (GIPCA) of
\cite{zhu2020generalized}, structured matrix completion (SMC) of \cite{cai2016structured}, block-wise overlapping noisy matrix integration (BONMI) of \cite{zhou2021multi}, and symmetric positive semidefinite matrix completion (SPSMC) of \cite{bishop2014deterministic}. 
The GIPCA procedure operates under the setting where each data matrix have some common samples and completely different variables, and furthermore assumes that each entry in these matrices are from some exponential family of distribution, with entries in the same matrix having the same distributional form.
SMC is a spectral procedure for recovering the missing block of an approximately low-rank matrix when a subset of the rows and columns are observed; thus, SMC is designed to impute only a single missing block at a time.
BONMI is also a spectral procedure for recovering a missing block (or submatrix) in an approximately low-rank matrix but, in contrast to SMC, 
assumes that this missing block is associated with a given {\em pair} of observed submatrices that share some (limited) overlap. 
SPSMC has a similar spectral procedure with BONMI to recover a low-rank symmetric positive semidefinite matrix using some observed principal submatrices. While BONMI combines submatrices pair by pair, SPSMC sequentially integrates each new submatrix with the combined structure formed by all previously integrated submatrices.
The key idea behind BONMI and SPSMC is to align (via an orthogonal transformation) the spectral embeddings given by the leading (scaled) eigenvectors of the two overlapping submatrices and then impute the missing block by taking the outer product of these aligned embeddings. 

In this paper, we extend the BONMI procedure, which integrates only two {\em overlapping} submatrices at a time, to simultaneously and jointly integrate $K \geq 2$ submatrices, and propose the Chain-linked Multiple Matrix Integration (CMMI) for more efficient and flexible matrix completion. 
{\color{black}
As a motivating example, suppose we have two overlapping pairs
of positive semidefinite submatrices $(\ma^{(1)}, \ma^{(2)})$ and
$(\ma^{(2)}, \ma^{(3)})$. For each submatrix $i$, let $\hat\mx^{(i)}$ be the $n_i \times d$ matrix whose columns are the $d$ leading eigenvectors of $\ma^{(i)}$ scaled by the square root of the corresponding eigenvalues. The rows of $\hat{\mx}^{(i)}$ represent the embeddings of the $n_i$
entities associated with $\ma^{(i)}$ into $\mathbb{R}^{d}$, and $\hat{\mx}^{(i)} \hat{\mx}^{(i)\top}$ correspond to the best rank-$d$ approximation to $\ma^{(i)}$. However, 
as the leading eigenvectors of $\ma^{(i)}$ are not necessarily unique,
we cannot directly use the inner product between different
$\hat\mx^{(i)}$ to estimate the unobserved entries. Rather, we first have to align $\{\hat{\mx}^{(1)}, \hat{\mx}^{(2)}, \hat{\mx}^{(3)}\}$ using their overlapping submatrices. 
More specifically, we align $\hat\mx^{(1)}$ and $\hat\mx^{(2)}$ by finding the orthogonal
transformation $\mw^{(1,2)}$ that maps the embeddings of
$\mathcal{U}_1 \cap \mathcal{U}_2$ in $\hat\mx^{(1)}$ to (approximate) their
counterparts in $\hat\mx^{(2)}$. The entries of 
$\hat\mx^{(1)}\mw^{(1,2)}\hat\mx^{(2)\top}$ then serve as estimates of the unobserved entries between $\mathcal{U}_1$ and $\mathcal{U}_2$.
Similarly, we compute $\mw^{(2,3)}$ to map the embeddings for
$\mathcal{U}_2 \cap \mathcal{U}_3$ in $\hat\mx^{(2)}$ to their
counterparts in $\hat\mx^{(3)}$, and entries of $\hat\mx^{(2)}\mw^{(2,3)}\hat\mx^{(3)\top}$ serve as estimates of the unobserved entries between $\mathcal{U}_2$ and $\mathcal{U}_3$. 
Finally, we can use $\hat\mx^{(1)}\mw^{(1,2)}\mw^{(2,3)}\hat\mx^{(3)\top}$ to estimate the unobserved entries between $\mathcal{U}_1$ and $\mathcal{U}_3$  
even when $\mathcal{U}_1 \cap \mathcal{U}_3 = \varnothing$ (so that
$\ma^{(1)}$ and $\ma^{(3)}$ are {\em non-overlapping}).
}
Generalizing this observation we can show that as long as  $\{\ma^{(i)}\}$ are \textit{connected} then we can integrate them simultaneously to recover all the missing entries; here two submatrices $\ma^{(i)}$ and $\ma^{(j)}$ are said to be connected if there exists a sequence $i_0, i_1, \dots, i_L$ with $i_0 = i$, $i_L = j$ such that $\ma^{(i_{\ell-1})}$ and $\ma^{(i_{\ell})}$ are overlapping for all $\ell =1,\dots,L$.
The use of CMMI thus enables the recovery of many missing blocks that are unrecoverable by BONMI and furthermore allows for significantly smaller overlap between the observed submatrices. 
CMMI considers all possible overlapping pairs without relying on the integration order of submatrices, unlike SPSMC, enabling a more optimal recovery result.




%


The structure of our paper is as follows. In Section~\ref{sec:meth} we introduce the model for multiple observed principal submatrices of a whole symmetric positive semi-definite matrix, and propose CMMI to integrate a chain of connected overlapping submatrices.
Theoretical results for our CMMI procedures are presented in Section~\ref{sec:thm}.
In particular we derive error bounds in $2\to\infty$ norm for the spectral embedding of the submatrices and entrywise error bound for the recovered entries. Using these error bounds we show that our recovered entries are approximately normally distributed around their true values and that CMMI yields consistent estimates even with minimal overlaps between the observed submatrices. 
We emphasize that the results in Section~\ref{sec:thm} also hold for BONMI (which is a special case of our results for $K=2$) and SPSMC, thereby providing significant refinements over those in \cite{zhou2021multi} and \cite{bishop2014deterministic}, which mainly focus on bounding the spectral or Frobenius norm errors of the missing block and embeddings.
And our analysis handles both noisy and missing entries in the observed submatrices while \cite{zhou2021multi} and \cite{bishop2014deterministic} only consider the case of noisy entries. 
Numerical simulations and experiments on real data are presented in Sections~\ref{sec:simu} and \ref{sec:real}. In Section~\ref{sec:asy}, we extend our embedding alignment approach to the cases of symmetric indefinite matrices and asymmetric or rectangular matrices.
Detailed proofs of stated results and additional numerical experiments are provided in the supplementary material.
{\color{black}
Section~\ref{sec:integration} of the supplementary material extends the basic CMMI algorithm for chains to handle the integration of connected submatrices with arbitrarily complex structures, making it more applicable to real-world scenarios. We provide theoretical analysis of the generalized CMMI based on the results in Section~\ref{sec:thm}, and demonstrate its effectiveness through real data experiments.}


\subsection{Notations}
We summarize some notations used in this paper. 
For any positive integer $n$, we denote by $[n]$ the set $\{1,2,\dots, n\}$. 
For two non-negative sequences
$\{a_n\}_{n \geq 1}$ and $\{b_n\}_{n \geq 1}$, we write $a_n \lesssim
b_n$ (resp. $a_n \gtrsim b_n$) if there exists some constant $C>0$
such that $a_n \leq C b_n$ (resp. $a_n \geq C b_n$) for all  $n \geq 1$, and we write $a_n \asymp b_n$ if $a_n\lesssim b_n$ and $a_n\gtrsim b_n$.
The notation $a_n \ll b_n$ (resp. $a_n \gg b_n$) means that there exists some sufficiently small (resp. large) constant $C>0$ such that $a_n \leq Cb_n$ (resp. $a_n \geq Cb_n$).
If $a_n/b_n$ stays bounded away from $+\infty$, we write $a_n=O(b_n)$ and $b_n=\Omega(a_n)$, and we use the notation $a_n=\Theta(b_n)$ to indicate that $a_n=O(b_n)$ and $a_n=\Omega(b_n)$.
If $a_n/b_n\to 0$, we write $a_n=o(b_n)$ and $b_n=\omega(a_n)$.
We say a sequence of events $\mathcal{A}_n$ holds with high probability if for any $c > 0$ there exists a finite constant $n_0$ depending only on $c$ such that $\mathbb{P}(\mathcal{A}_n)\geq 1-n^{-c}$ for all $n \geq n_0$.
We write $a_n = O_p(b_n)$ (resp. $a_n = o_p(b_n)$) to denote that $a_n = O(b_n)$ (resp. $a_n = o(b_n)$) holds with high probability.
We denote by $\mathcal{O}_d$ the set of $d \times d$ orthogonal
matrices.
For any matrix $\mm\in \mathbb{R}^{A\times B}$ and index sets $\mathcal{A}\subseteq [A]$, $\mathcal{B}\subseteq [B]$, we denote by $\mm_{\mathcal{A},\mathcal{B}}\in \mathbb{R}^{|\mathcal{A}|\times |\mathcal{B}|}$ the submatrix of $\mm$ formed from rows $\mathcal{A}$ and columns $\mathcal{B}$, and we denote by $\mm_{\mathcal{A}}\in\mathbb{R}^{|\mathcal{A}|\times B}$ the submatrix of $\mm$ consisting of the rows indexed by $\mathcal{A}$. The Hadamard product between conformal matrices $\mathbf{M}$ and $\mathbf{N}$ is denoted by $\mathbf{M} \circ \mathbf{N}$.  
Given a matrix $\mathbf{M}$, we denote
its spectral, Frobenius, and infinity norms by $\|\mathbf{M}\|$, 
$\|\mathbf{M}\|_{F}$, and $\|\mathbf{M}\|_{\infty}$. 
We also denote the maximum entry (in modulus) of $\mathbf{M}$ by $\|\mathbf{M}\|_{\max}$ and the $2 \to \infty$ norm of
$\mathbf{M}$ by
$\|\mathbf{M}\|_{2 \to \infty} = \max_{\|\bm{x}\| = 1} \|\mathbf{M}
\bm{x}\|_{\infty} = \max_{i} \|\mathbf{m}_i\|,$
where $\mathbf{m}_i$ denotes the $i$th row of $\mathbf{M}$, i.e., $\|\mathbf{M}\|_{2 \to \infty}$ is the maximum of the $\ell_2$ norms of the rows of $\mathbf{M}$.

\section{Methodology}

\label{sec:meth}


We are interested in an unobserved population matrix associated with $N$ entities denoted by $\mpp\in \mathbb{R}^{N\times N}$. We assume $\mpp$ is 
positive semi-definite with rank $d \ll N$; extensions to the case of symmetric but indefinite $\mpp$ as well as asymmetric or rectangular $\mpp$ are discussed in Section~\ref{sec:asy}. Denote the eigen-decomposition of $\mpp$ as
 $\muu\mLambda\muu^\top$, 
where $\mLambda\in\mathbb{R}^{d\times d}$ is a diagonal matrix whose diagonal entries are the non-zero eigenvalues of $\mpp$ in descending order, and the orthonormal columns of $\muu\in\mathbb{R}^{N\times d}$ constitute the corresponding eigenvectors.
The latent positions associated to the entities are given by $\mx=\muu\mLambda^{1/2}\in\mathbb{R}^{N\times d}$ and any entry in $\mpp$ can be written as the inner product of these latent positions, i.e., $\mpp = \mx \mx^{\top}$ so that $\mpp_{s,t}=\mathbf{x}_s^\top\mathbf{x}_t$ for any $s,t \in [N]$, where $\mathbf{x}_s$ and $\mathbf{x}_t$ denote the $s$th and $t$th row of $\mx$, respectively. 

We assume that the entries of $\mpp$ are only partially observed, and furthermore, that the observed entries can be grouped into blocks. More specifically, suppose that we have $K$ sources and
for any $i\in[K]$ we denote the index set of the entities contained in the $i$th source by $\mathcal{U}_i\subseteq [N]$. 
For ease of exposition we also require $\mathcal{U}_i \cap (\cup_{j \not = i} \mathcal{U}_j) \not = \varnothing$ for all $i \in [K]$
as otherwise there exists some $i_*$ such that it is impossible to integrate observations from $\mathcal{U}_{i_*}$ with those from $\{\mathcal{U}_{j}\}_{j \not = i_*}$. 
We denote $n_i := |\mathcal{U}_i|$ and the population matrix for the $i$th source by $\mpp^{(i)}\in\mathbb{R}^{n_i\times n_i}$. We then have
$$
\mpp^{(i)}=\mpp_{\mathcal{U}_i,\mathcal{U}_i}=\muu_{\mathcal{U}_i}\mLambda\muu_{\mathcal{U}_i}^\top
=\mx_{\mathcal{U}_i}\mx_{\mathcal{U}_i}^\top,
$$
where $\mpp_{\mathcal{U}_i,\mathcal{U}_i}$ is the submatrix of $\mpp$ formed from rows and columns in $\mathcal{U}_i$, $\muu_{\mathcal{U}_i}\in\mathbb{R}^{n_i\times d}$ contains the rows of $\muu$ in $\mathcal{U}_i$, and $\mx_{\mathcal{U}_i}\in\mathbb{R}^{n_i\times d}$ contains the latent positions of $\mathcal{U}_i$.

We also allow for missing and corrupted observations in each source, i.e., for the $i$th source 
 we only get to observe
$\mpp^{(i)}_{s,t}+\mn^{(i)}_{s,t}$ for all $\mathbf{\Omega}^{(i)}_{s,t}=1.$
Here $\mathbf{\Omega}^{(i)}\in \{0,1\}^{n_i\times n_i}$ indicates the indices of the observed entries and $\mn^{(i)} \in \mathbb{R}^{n_i \times n_i}$ represents the random noise.
In particular $\mathbf{\Omega}^{(i)}$ and $\mn^{(i)}$ are both symmetric, and we assume the upper triangular entries of $\mathbf{\Omega}^{(i)}$ are i.i.d. Bernoulli random variables with success probability $q_i$ while 
the upper triangular entries of $\mn^{(i)}$ are independent, mean-zero sub-Gaussian random variables 
with Orlicz-2 norm bounded by $\sigma_i:=\max_{s,t\in [n_i]}\|\mn^{(i)}_{s,t}\|_{\psi_2}$.
For this model, the matrix
\begin{equation}\label{eq:A(i)=...}
\begin{aligned}
\ma^{(i)}
=(\mpp^{(i)}+\mn^{(i)})\circ \mathbf{\Omega}^{(i)}/q_i
\end{aligned}
\end{equation}
is an unbiased estimate of $\mpp^{(i)}$, and thus a natural idea is to use the scaled leading eigenvectors $\hat\mx^{(i)}=\hat\muu^{(i)}(\hat\mLambda^{(i)})^{1/2}$ as an estimate for $\mx_{\mathcal{U}_i}$, where $\hat\mLambda^{(i)}$ and $\hat\muu^{(i)}$ contain the leading eigenvalues and the leading eigenvectors of $\ma^{(i)}$, respectively.
{\color{black} Note that in practice we use the empirical observed proportion $\hat{q}_i$ in place of $q_i$ when constructing $\mathbf{A}^{(i)}$. In particular, our evaluation of the algorithm's performance in the numerical experiments is based entirely on $\hat{q}_i$. 
    In contrast we assume that $q_i$ is known in our theoretical analysis. This is done, both for ease of exposition and without loss of generality, as $|\hat{q}_i - q_i| = O_p(n_i^{-1})$ for all $i$ and thus has no impact on the theoretical results stated in Section~\ref{sec:thm}.}
We now propose an algorithm to integrate and align $\{\hat\mx^{(i)}\}_{i\in[K]}$ for recovery of the unobserved entries in $\mpp$.




\subsection{Motivation of the algorithm}

\label{sec:mot}
We first summarize the BONMI algorithm of \cite{zhou2021multi}. We start with the \textit{noiseless} case for two overlapping submatrices $\mpp^{(1)}$ and $\mpp^{(2)}$ to illustrate the main ideas. 
Our goal is to recover the unobserved entries for the white block in 
Figure~\ref{fig:K=2}; this is part of $\mpp_{\mathcal{U}_1,\mathcal{U}_2}$. 
\begin{figure}[htbp!] 
\centering
\subfigure{\includegraphics[height=3.5cm]{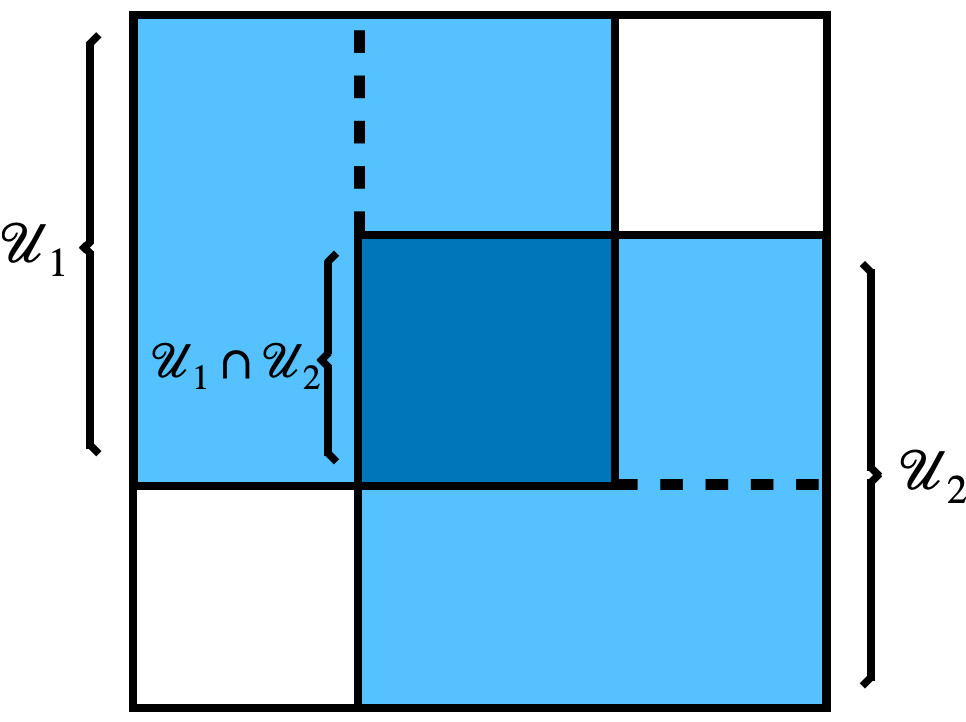}}
\caption{
a pair of overlapping observed submatrices. 
}
\label{fig:K=2}
\end{figure}

Based on $\mpp^{(1)}$ and $\mpp^{(2)}$ we can obtain latent position estimates for entities in $\mathcal{U}_1$ and $\mathcal{U}_2$, which we denote as $\mx^{(1)}$ and $\mx^{(2)}$. 
Next note that
$$
\begin{aligned}
	\mx_{\mathcal{U}_1} \mx_{\mathcal{U}_1}^\top=\mpp_{\mathcal{U}_1,\mathcal{U}_1}=\mpp^{(1)}=\mx^{(1)}\mx^{(1)\top},
	\quad \mx_{\mathcal{U}_2} \mx_{\mathcal{U}_2}^\top=\mpp_{\mathcal{U}_2,\mathcal{U}_2}=\mpp^{(2)}=\mx^{(2)}\mx^{(2)\top},
\end{aligned}
$$
and hence there exist $\mw^{(1)},\mw^{(2)}\in\mathcal{O}_d$ such that 
\begin{equation}\label{eq:X}
\begin{aligned}
		\mx_{\mathcal{U}_1}=\mx^{(1)}\mw^{(1)},
		\quad \mx_{\mathcal{U}_2}=\mx^{(2)}\mw^{(2)}.
\end{aligned}
\end{equation}
Eq.~\eqref{eq:X} then implies
\begin{equation}\label{eq:P12}
	\mpp_{\mathcal{U}_1,\mathcal{U}_2}=\mx_{\mathcal{U}_1}\mx_{\mathcal{U}_2}^\top=\mx^{(1)}\mw^{(1)}\mw^{(2)\top}\mx^{(2)\top}=\mx^{(1)}\mw^{(1,2)}\mx^{(2)\top},
\end{equation}
where $\mw^{(1,2)}:=\mw^{(1)}\mw^{(2)\top}\in\mathcal{O}_d$, 
and thus we only need $\mw^{(1,2)}
$ to recover $\mpp_{\mathcal{U}_1,\mathcal{U}_2}$.

Note that for entities in $\mathcal{U}_1\cap \mathcal{U}_2$, we have two equivalent representations of their latent positions. More specifically, let $\mx^{(1)}_{ \langle\mathcal{U}_1\cap \mathcal{U}_2\rangle}$ and $\mx^{(2)}_{ \langle\mathcal{U}_1\cap \mathcal{U}_2\rangle}$ be the rows of $\mx^{(1)}$ and $\mx^{(2)}$ corresponding to entities in $\mathcal{U}_1 \cap \mathcal{U}_2$. Then by
Eq.~\eqref{eq:X} we have 
$\mx^{(2)}_{ \langle\mathcal{U}_1\cap \mathcal{U}_2\rangle}
=\mx^{(1)}_{ \langle\mathcal{U}_1\cap \mathcal{U}_2\rangle}\mw^{(1,2)}$
and thus $\mw^{(1,2)}$ can be obtained by aligning 
$\mx^{(1)}_{ \langle\mathcal{U}_1\cap \mathcal{U}_2\rangle}$ and $\mx^{(2)}_{ \langle\mathcal{U}_1\cap \mathcal{U}_2\rangle}$. 
The resulting $\mw^{(1,2)}$ is unique
whenever $\mathrm{rk}(\mpp_{\mathcal{U}_1\cap \mathcal{U}_2,\mathcal{U}_1\cap \mathcal{U}_2}) = d$.

The same approach also extends to the case where the $\mpp^{(1)}$ and $\mpp^{(2)}$
are partially and noisily observed. More specifically, suppose 
we observe $\ma^{(1)}$ and $\ma^{(2)}$ as defined in Eq.~\eqref{eq:A(i)=...}. We then obtain estimated latent positions $\hat\mx^{(1)}$ for $\mathcal{U}_1$ and $\hat\mx^{(2)}$ for $\mathcal{U}_2$ from $\ma^{(1)}$ and $\ma^{(2)}$, respectively. To align $\hat\mx^{(1)}$ and $\hat\mx^{(2)}$, we solve the orthogonal Procrustes problem
$$
\begin{aligned}
	\mw^{(1,2)}=\underset{\mo\in \mathcal{O}_{d}}{\operatorname{argmin}} \|\hat\mx^{(1)}_{ \langle\mathcal{U}_1\cap \mathcal{U}_2\rangle}\mo-\hat\mx^{(2)}_{ \langle\mathcal{U}_1\cap \mathcal{U}_2\rangle}\|_F
\end{aligned}
$$
and then estimate the unobserved block as part of
$\hat\mpp_{\mathcal{U}_1,\mathcal{U}_2}=\hat\mx^{(1)}\mw^{(1,2)}\hat\mx^{(2)\top}.$

\subsection{Chain-linked Multiple Matrix Integration (CMMI)}

\label{sec:chain}

We now extend the ideas in Section~\ref{sec:mot} to a chain of overlapping submatrices.
Suppose our goal is to recover the entries in the yellow block 
in Figure~\ref{fig:K=L}.
Given a collection $\{\ma^{(i)}\}_{0 \leq i \leq L}$ such that $\mathcal{U}_{i-1} \cap \mathcal{U}_i \not = \varnothing $ for all $1\leq i \leq L$.
Then for each pair $(\mathcal{U}_{i-1},\mathcal{U}_i)$, we align the estimated latent position matrices $\hat{\mx}^{(i-1)}$ and $\hat{\mx}^{(i)}$ by solving the orthogonal Procrustes problem
$$
\mw^{(i-1,i)}=\underset{\mo\in \mathcal{O}_d}{\operatorname{argmin}} \|\hat\mx^{(i-1)}_{ \langle\mathcal{U}_{i-1}\cap \mathcal{U}_i\rangle}\mo-\hat\mx^{(i)}_{ \langle\mathcal{U}_{i-1}\cap \mathcal{U}_i\rangle}\|_F.
$$
Note that the solution of the orthogonal Procrustes problem between matrices $\hat\mx^{(i-1)}_{ \langle\mathcal{U}_{i-1}\cap \mathcal{U}_i\rangle}$ and $\hat\mx^{(i)}_{ \langle\mathcal{U}_{i-1}\cap \mathcal{U}_i\rangle}$ 
is given by $\mm_1 \mm_2^{\top}$ where $\mm_1$ and $\mm_2$ contain the left and right singular vectors of $\hat\mx^{(i-1)\top}_{ \langle\mathcal{U}_{i-1}\cap \mathcal{U}_i\rangle} \hat\mx^{(i)}_{ \langle\mathcal{U}_{i-1}\cap \mathcal{U}_i\rangle}$, respectively \citep{procrustes}.
$\hat{\mx}^{(0)}$ and $\hat{\mx}^{(L)}$ can be aligned by combining these $\{\mw^{(i-1,i)}\}_{1\leq i\leq L}$, which then yields
$$\hat{\mpp}_{\mathcal{U}_0, \mathcal{U}_{L}}= \hat\mx^{(0)}\mw^{(0,1)}\mw^{(1,2)}\cdots \mw^{(L-1,L)}\hat\mx^{(L)\top}$$
as an estimate for $\mpp_{\mathcal{U}_0, \mathcal{U}_{L}}$. See Algorithm~\ref{Alg_chain} for more details.

\begin{figure}[htbp!] 
\centering
\subfigure{\includegraphics[height=4cm]{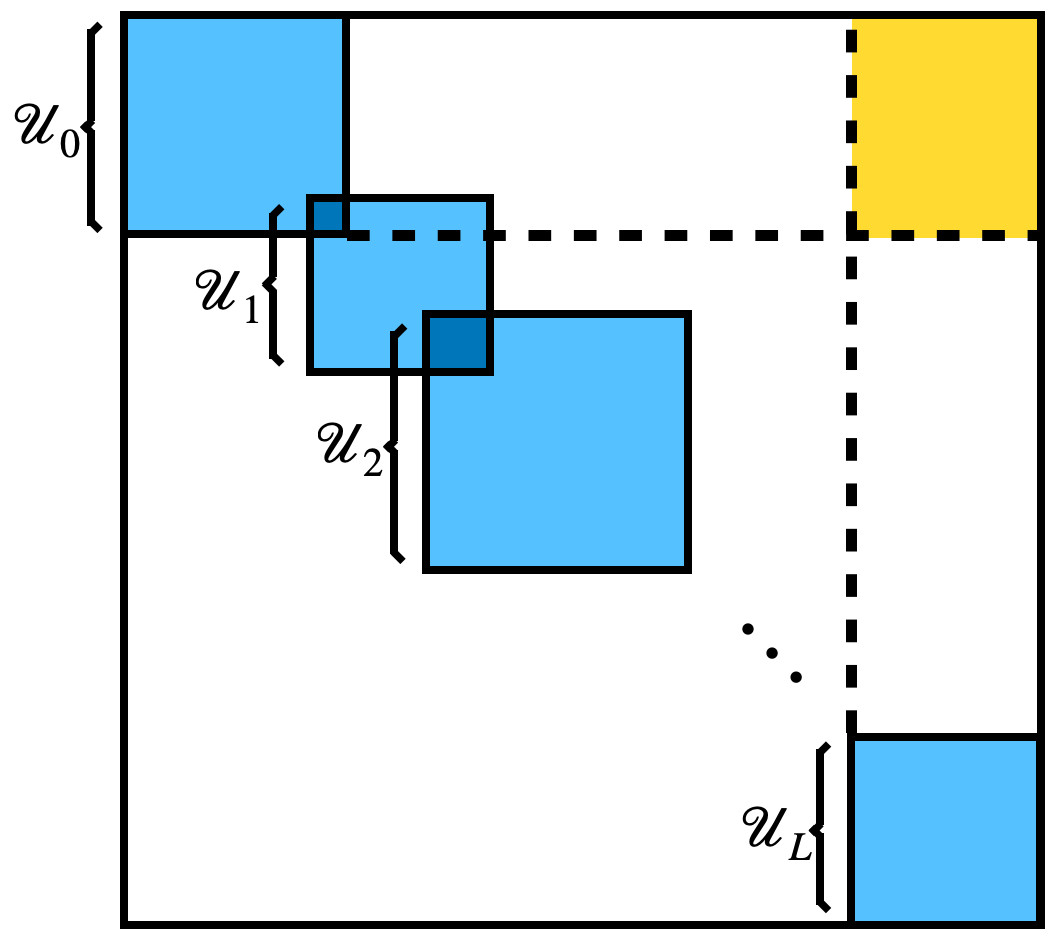}}
\caption{
a chain of overlapping observed submatrices.
}
\label{fig:K=L}
\end{figure}

\begin{algorithm}[htp]
\caption{Chain-linked Multiple Matrix Integration (CMMI) algorithm}	
\label{Alg_chain}
\begin{algorithmic}
\REQUIRE Embedding dimension $d$, a chain of overlapping submatrices $\ma^{(i_0)},\ma^{(i_1)},\dots,\ma^{(i_L)}$ for $\mathcal{U}_{i_0},\mathcal{U}_{i_1},\dots,\mathcal{U}_{i_L}$ with $\min\{|\mathcal{U}_{i_0}\cap \mathcal{U}_{i_1}|,|\mathcal{U}_{i_1}\cap \mathcal{U}_{i_2}|,\dots,|\mathcal{U}_{i_{L-1}}\cap\mathcal{U}_{i_L}|\} \geq d$.
\begin{enumerate}
	\item For $0 \leq \ell \leq L$, obtain estimated latent position matrix for $\mathcal{U}_{i_\ell}$, denoted by $\hat\mx^{(i_\ell)}=\hat\muu^{(i_\ell)}(\hat\mLambda^{(i_\ell)})^{1/2}$, where $\hat\muu^{(i_\ell)}\in\mathbb{R}^{|\mathcal{U}_{i_\ell}|\times d}$ and the diagonal matrix $\hat\mLambda^{(i_\ell)}\in\mathbb{R}^{d\times d}$ contain the $d$ leading eigenvectors and eigenvalues of $\ma^{(i_\ell)}$, respectively.
	\item For $1 \leq \ell \leq L$, obtain $\mw^{(i_{\ell-1},i_\ell)}$ by solving the orthogonal Procrustes problem
\[
\mw^{(i_{\ell-1},i_\ell)}=\underset{\mo\in \mathcal{O}_d}{\operatorname{argmin}} \|\hat\mx^{(i_{\ell-1})}_{ \langle\mathcal{U}_{i_{\ell-1}}\cap \mathcal{U}_{i_\ell}\rangle}\mo-\hat\mx^{(i_\ell)}_{ \langle\mathcal{U}_{i_{\ell-1}}\cap \mathcal{U}_{i_\ell}\rangle}\|_F.
\]
	\item Compute $\hat\mpp_{\mathcal{U}_{i_0},\mathcal{U}_{i_L}}=\hat\mx^{(i_0)}\mw^{(i_0,i_1)}\mw^{(i_1,i_2)}\cdots \mw^{(i_{L-1},i_L)}\hat\mx^{(i_L)\top}$.
\end{enumerate} 
\ENSURE $\hat\mpp_{\mathcal{U}_{i_0},\mathcal{U}_{i_L}}$.
\end{algorithmic}
\end{algorithm}

{\color{black}For choosing $d$ in Algorithm~\ref{Alg_chain}, in practice we can first examine the eigenvalues of $\{\mathbf{A}^{(i)}\}$ to select the individual $\{d_i\}$, and then set $d = \max_i d_i$ to ensure that it captures all relevant signal components. One widely-used approach for selecting an individual $d_i$ is to inspect the scree plot for $\ma^{(i)}$ and identify a ``elbow" separating the signal eigenvalues from the noise eigenvalues. An example of this approach is the automated dimensionality selection procedure in \cite{zhu2006automatic} which maximizes a profile likelihood function. Other examples include residual subsampling \citep{han2019universal}, eigenvalue ratio tests \citep{ahn2013eigenvalue}, and inference based on empirical eigenvalue distributions \citep{onatski2010determining}.}

Compared to BONMI in \cite{zhou2021multi}, which handles only two {\em overlapping} submatrices at a time,
our proposed CMMI can actually combine all \textit{connected} submatrices, where two submatrices $\mpp^{(i)}$ and $\mpp^{(j)}$ are said to be connected if there exists a path of overlapping submatrices between them. Indeed, for the example in Figure~\ref{fig:K=L}, BONMI can only recover the entries associated with pairs of overlapping submatrices, namely \(\mpp_{\mathcal{U}_{0},\mathcal{U}_{1}}, \mpp_{\mathcal{U}_{1},\mathcal{U}_{2}}, \dots, \mpp_{\mathcal{U}_{L-1},\mathcal{U}_{L}}\), 
while CMMI can recover the whole matrix
$\mpp$. In general, BONMI only recovers $O(1/L)$ fraction of the entries recoverable by CMMI.
Moreover, our theoretical results indicate that increasing $L$ has a minimal effect on the estimation error of CMMI (see Theorem~\ref{thm:R(i0,...,iL)}), and 
simulations and real data experiments in Sections~\ref{sec:simu} and \ref{sec:real} show that accurate recovery is possible even when $L = 20$. 
Our theoretical results also show that CMMI requires only minimal overlap between $\mathcal{U}_{i-1}$ and $\mathcal{U}_i$, e.g., $|\mathcal{U}_{i-1} \cap \mathcal{U}_i|$ can be as small as $d$, the embedding dimension of $\{\hat{\mx}^{(i)}\}$; see Remark~\ref{rem:example1_setting} for further discussion, and Section~\ref{sec:simu limit overlap} of the supplementary material for corresponding experimental results.

{\color{black}For more general cases encountered in practice, the structures of the observed submatrices can be more complex than simple chains. In Section~\ref{sec:integration} of the supplementary material, we extend the basic CMMI algorithm from Algorithm~\ref{Alg_chain} to integrate connected submatrices with arbitrarily complex structures.}
Although in certain cases, such as a simple chain, CMMI is identical to the sequential integration approach SPSMC in \cite{bishop2014deterministic}, 
CMMI offers a more refined strategy in more complex scenarios by considering all overlapping pairs, as the restriction to sequential integration imposes limitations on SPSMC, and how to determine an effective integration order is unresolved in \cite{bishop2014deterministic}. 

{\color{black}This idea of first obtaining individual estimates and then
sequentially aligning them to obtain a global estimate also appears in
the Spectral-Stitching algorithm in \cite{chen2016community} where the
goal is to determine the community membership of each vertex in a
graph with two communities. More specifically, the Spectral-Stiching
algorithm first partitions the vertex set into several overlapping
subsets of size $n$ such that any two adjacent subsets share $n/2$
common vertices.  It then applies spectral methods separately to each
subgraph to obtain community estimates. Since the community labels
obtained from different subgraphs may be inconsistent, the algorithm
sequentially stitches the individual community estimates together
using majority voting. }

\section{Theoretical Results}

\label{sec:thm}
We now present theoretical guarantees for the estimate $\hat\mpp_{\mathcal{U}_{i_0},\mathcal{U}_{i_L}}$ obtained by Algorithm~\ref{Alg_chain}.
We shall make the following assumptions on the underlying population matrices $\{\mpp_{\mathcal{U}_{i},\mathcal{U}_{i}}\}$ for the observed blocks. 
We emphasize that, because our results address either large-sample approximations or limiting distributions, these assumptions should be interpreted in the regime where $n_i$ is arbitrarily large and/or $n_i\to\infty$. 
\begin{assumption}
  \label{ass:main}
  For each $i$, the following conditions hold for sufficiently large $n_i$.
  \begin{itemize}
  \item We have $\mathrm{rk}(\mpp_{\mathcal{U}_i,\mathcal{U}_i}) = d$. Let $\lambda_{i,\max}$ and $\lambda_{i,\min}$ denote the largest and smallest non-zero eigenvalues of $\mpp_{\mathcal{U}_i,\mathcal{U}_i}$, and let $\muu^{(i)}\in\mathbb{R}^{n_i\times d}$ contain the eigenvectors corresponding to all non-zero eigenvalues.
        We then assume
  \begin{gather}
  \label{eq:rem_1}
  \|\muu^{(i)}\|\lesssim \frac{d^{1/2}}{n_i^{1/2}},\quad \text{and}\quad
  \frac{\lambda_{i,\max}}{\lambda_{i,\min}}\leq M\end{gather}  for some finite constant $M>0$.
  \item $\ma^{(i)} = (\mpp_{\mathcal{U}_{i}, \mathcal{U}_i} + \mn^{(i)}) \circ \bm{\Omega}^{(i)}$ where $\mathbf{N}^{(i)}$ is a symmetric matrix whose (upper triangular) entries are independent mean-zero sub-Gaussian random variables with Orlicz-2 norm bounded by $\sigma_i$ and $\bm{\Omega}^{(i)}$ is a symmetric binary matrix whose (upper triangular) entries are i.i.d. Bernoulli random variables with success probability $q_i$. 
  \item Denote \[\mu_i := \lambda_{i,\min}/n_i, \quad  \gamma_i := (\|\mpp_{\mathcal{U}_i,\mathcal{U}_i}\|_{\max} + \sigma_i) \log^{1/2}{n_i}.\]
  We suppose $q_i n_i \gtrsim \log^{2}{n_i}$ and
  \begin{equation}\label{eq:assm1_con}
     \frac{\gamma_i}{(q_i n_i)^{1/2}\mu_i}\ll 1,
     \quad \frac{\gamma_i}{(q_i n_i\mu_i)^{1/2}}\lesssim \|\mx_{\mathcal{U}_i}\|_{2\to\infty}.
  \end{equation}

  \end{itemize}
\end{assumption}

{\color{black}We note that the conditions in Assumption~\ref{ass:main} are quite mild and typically seen in the matrix completion literature. For example Eq.~\eqref{eq:rem_1} is satisfied whenever $\mpp_{\mathcal{U}_i, \mathcal{U}_i}$ has bounded condition number and bounded coherence; see e.g., \cite{abbe2020entrywise,chen2021spectral,recht2011simpler}.
Next, $n_i q_i \gtrsim \log^{2}{n_i}$ is much less stringent compared to
$q_i \equiv 1$ as assumed in \cite{zhou2021multi} and \cite{bishop2014deterministic}. Finally Eq.~\eqref{eq:assm1_con} is satisfied whenever Eq.~\eqref{eq:rem_1} holds and 
$\|\mpp_{\mathcal{U}_i, \mathcal{U}_i}\|_{\max} + \sigma_i = O(1)$. See Remark~\ref{rem:example1_setting} for further discussion.
Note that $\|\mpp_{\mathcal{U}_i,\mathcal{U}_i}\|_{\max} = \|\mx_{\mathcal{U}_i}\|_{2 \to \infty}^2 \lesssim d \frac{\lambda_{i,\max}}{n_i} \asymp \mu_i$ under Assumption~\ref{ass:main}, and see Section~\ref{sec:add discussion} of the supplementary
material for further discussion on $\mu_i$ and $\|\mx_{\mathcal{U}_i}\|_{2\to\infty}$.
}

\vspace{0.25cm}

\begin{rema}\label{rm:lambda}
Let $\mpp\in\mathbb{R}^{N\times N}$ have rank $d$. Denote by $\lambda_{\max}$ and $\lambda_{\min}$ the largest and smallest non-zero eigenvalues of $\mpp$, respectively, and let $\muu\in\mathbb{R}^{N\times d}$ contain the eigenvectors corresponding to the non-zero eigenvalues of $\mpp$. Suppose (1) $\mpp$ has bounded condition number, i.e., $\lambda_{\max}/\lambda_{\min}\leq M'$ for some constant $M'>0$, and $\muu$ has bounded coherence, i.e., $\|\muu\|_{2\to\infty}\lesssim d^{1/2}N^{-1/2}$; (2) for each $i$, $\mathcal{U}_i$ are drawn uniformly at random from $\mathcal{U}$. Then
\begin{equation}\label{eq:relation lambda_i}
	\frac{n_i}{N}\lambda_{\min} \lesssim \lambda_{i,\min} \leq \lambda_{i,\max}\lesssim \frac{dn_i}{N}\lambda_{\max}, 
  \quad \text{and} \quad \|\muu^{(i)}\|_{2\to\infty}\lesssim\frac{d^{1/2}}{n_i^{1/2}}
\end{equation}
 with high probability, and 
 Eq.~\eqref{eq:rem_1} holds (see Lemma~\ref{lemma:i->global} for more details). 
%
%
\end{rema}

{\color{black} 
We first present an informal, simplified version of the theoretical results to provide a basic intuition about the properties of CMMI, as the formal statement, while being  quite more general, is also more complex.
Suppose that $n_i \asymp n, q_i \equiv q, \sigma_i \equiv \sigma$, and $\lambda_{i,\min} \asymp \frac{n_i}{N} \lambda_{\min}$ for all $i$, with the overlap sizes satisfying $n_{i_{\ell-1},i_\ell} \equiv m \geq d$ for all $1 \leq \ell \leq L$. This setting corresponds to a scenario where the block sizes, overlap sizes, noise levels, and missing rates are uniform, and we further suppose that the entries are bounded (see Remark~\ref{rem:example1_setting} for more details on this case).
Then we have the entrywise error bound 
$$
        \|\hat\mpp_{\mathcal{U}_{i_0},\mathcal{U}_{i_L}}-\mpp_{\mathcal{U}_{i_0},\mathcal{U}_{i_L}}\|_{\max}
        \lesssim \frac{(1+\sigma)\log^{1/2}n}{(qn)^{1/2}}
        $$
        with high probability, provided that 
        $
        L\Big(\frac{(1+\sigma)\log^{1/2} n}{(qn)^{1/2}}
        +\frac{1}{m^{1/2}}\Big)
        \lesssim 1
        $.
Furthermore, under mild conditions we also establish an entrywise normal approximation. More specifically, for any $s \in [n_{i_0}], t \in [n_{i_L}]$, we have
        $$
	\tilde\sigma_{s,t}^{-1}
	\big( \hat{\mpp}_{\mathcal{U}_{i_0},\mathcal{U}_{i_L}} - \mpp_{\mathcal{U}_{i_0},\mathcal{U}_{i_L}}\big)_{s,t}
	    \rightsquigarrow \mathcal{N}(0,1)
$$
as $n\to\infty$, where the standard deviation satisfies
$
\tilde{\sigma}_{s,t} \lesssim \frac{\sigma^2 + (1 - q)}{qn}.
$

We now present the formal theoretical results, which are applicable to a much broader range of settings while also yielding more detailed analysis.
We first consider the case where we only have two overlapping submatrices $\ma^{(i)}$ and $\ma^{(j)}$.
	Theorem~\ref{thm:R(i,j)} presents an expansion for 
$ \hat\mpp_{\mathcal{U}_{i},\mathcal{U}_{j}}
-\mpp_{\mathcal{U}_i,\mathcal{U}_j} 
	=\hat\mx^{(i)} \mw^{(i,j)} \hat\mx^{(j)\top}
	-\mx_{\mathcal{U}_i}\mx_{\mathcal{U}_j}^\top$. 
}
\begin{theorem}
\label{thm:R(i,j)}
    Let $\ma^{(i)}$ and $\ma^{(j)}$ be overlapping submatrices satisfying Assumption~\ref{ass:main}. 
    For their overlap, suppose $\mathrm{rk}(\mpp_{\mathcal{U}_i \cap \mathcal{U}_{j}, \mathcal{U}_i \cap \mathcal{U}_j}) = d$, and define
    \begin{equation}\label{eq:notation ij}
    	\begin{aligned}
    	n_{i,j}:=&|\mathcal{U}_i \cap \mathcal{U}_{j}|,\quad \vartheta_{i,j} := \lambda_{\max}(\mx_{\mathcal{U}_i \cap \mathcal{U}_j}^{\top}\mx_{\mathcal{U}_i \cap \mathcal{U}_j}), \quad 
    \theta_{i,j} := \lambda_{\min}(\mx_{\mathcal{U}_i \cap \mathcal{U}_j}^{\top}\mx_{\mathcal{U}_i \cap \mathcal{U}_j}),
    \\\alpha_{i,j}:=&\frac{n_{i,j} \gamma_i  \gamma_j }{{\theta_{i,j}}(q_in_i  \mu_i)^{1/2} (q_jn_j  \mu_j)^{1/2} }
    + \frac{{(n_{i,j} \vartheta_{i,j})^{1/2}}}{{\theta_{i,j}}}\Big(\frac{\gamma_i^2 }{q_in_i  \mu_i^{3/2}}
    +\frac{\gamma_j^2 }{q_jn_j  \mu_j^{3/2}}\Big)\\
      & + \frac{n_{i,j}^{1/2}\|\mx_{\mathcal{U}_i\cap \mathcal{U}_j}\|_{2\to\infty}}{{\theta_{i,j}}}\Big(\frac{\gamma_i}{(q_i n_i\mu_i)^{1/2}} +\frac{\gamma_j}{(q_j n_j\mu_j)^{1/2}}\Big).
    \end{aligned}
    \end{equation}
    Let  $\me^{(i)}:=\ma^{(i)}-\mpp^{(i)}$ for any $i$. 
    We then have
    \begin{equation}
      \label{eq:expansion_theorem1}
	\begin{aligned}
 \hat\mpp_{\mathcal{U}_{i},\mathcal{U}_{j}}
-\mpp_{\mathcal{U}_i,\mathcal{U}_j} 
		&=\me^{(i)}\mx_{\mathcal{U}_i}(\mx_{\mathcal{U}_i}^\top\mx_{\mathcal{U}_i})^{-1} \mx_{\mathcal{U}_j}^\top
	    +\mx_{\mathcal{U}_i} (\mx_{\mathcal{U}_j}^\top\mx_{\mathcal{U}_j})^{-1} \mx_{\mathcal{U}_j}^\top \me^{(j)}
	    +\mr^{(i,j)} + \ms^{(i,j)},
	\end{aligned}
    \end{equation}
    where 
	$\mr^{(i,j)}$ and $\ms^{(i,j)}$ are random matrices satisfying
        \begin{equation}
          \label{eq:bound_theorem1}
	\begin{aligned}
	\|\mr^{(i,j)}\|_{\max}
		\lesssim & \frac{\gamma_i \gamma_j}{(q_i n_i \mu_i)^{1/2}(q_j n_j \mu_j)^{1/2}} + \Big(\frac{\gamma_i^2}{q_in_i  \mu_i^{3/2}}
	+\frac{\gamma_i }{q_i^{1/2}n_i \mu_i^{1/2}}\Big)
	\|\mx_{\mathcal{U}_j}\|_{2\to\infty} \\ &
	+\Big(\frac{\gamma_j^2}{q_jn_j  \mu_j^{3/2}}
	+\frac{\gamma_j }{q_j^{1/2}n_j \mu_j^{1/2}}\Big)
	\|\mx_{\mathcal{U}_i}\|_{2\to\infty},
	\end{aligned}
\end{equation}
\begin{equation}
          \label{eq:bound_theorem2}
	\begin{aligned}
\|\ms^{(i,j)}\|_{\max} &\lesssim 
      {\alpha_{i,j}}\|\mx_{\mathcal{U}_i}\|_{2 \to \infty}\|\mx_{\mathcal{U}_j}\|_{2 \to \infty}
	\end{aligned}
        \end{equation}
with high probability. Furthermore suppose
\begin{equation}\label{eq:thm1_add_condition}
\alpha_{i,j}
\min\Big\{\frac{(q_{i}n_{i}  \mu_{i})^{1/2}}{ \gamma_{i}}\|\mx_{\mathcal{U}_{i}}\|_{2 \to \infty},
\frac{(q_{j}n_{j}  \mu_{j})^{1/2}}{ \gamma_{j}}\|\mx_{\mathcal{U}_{j}}\|_{2 \to \infty}\Big\}
\lesssim 1.
\end{equation}
Then $\me^{({i})}\mx_{\mathcal{U}_{i}}(\mx_{\mathcal{U}_{i}}^\top\mx_{\mathcal{U}_{i}})^{-1} \mx_{\mathcal{U}_{j}}^\top
	    +\mx_{\mathcal{U}_{i}} (\mx_{\mathcal{U}_{j}}^\top\mx_{\mathcal{U}_{j}})^{-1} \mx_{\mathcal{U}_{j}}^\top \me^{({j})}$ is the dominant term and 
\begin{equation*}
	\begin{aligned}
 \| \hat{\mpp}_{\mathcal{U}_{i},\mathcal{U}_{j}} - \mpp_{\mathcal{U}_{i},\mathcal{U}_{j}}\|_{\max} 
	    \lesssim \frac{\gamma_{i}}{(q_{i}n_{i}  \mu_{i})^{1/2}}\|\mx_{\mathcal{U}_{j}}\|_{2 \to \infty} + 
\frac{\gamma_{j}}{(q_{j}n_{j}  \mu_{j})^{1/2}}\|\mx_{\mathcal{U}_{i}}\|_{2 \to \infty}
\end{aligned}
\end{equation*}
with high probability.

\end{theorem} 

\begin{rema}
  \label{rem:interpret}
  The expansion in Eq.~\eqref{eq:expansion_theorem1} consists of four terms, with the first two terms being linear transformations of the additive noise matrices $\me^{(i)}$ and $\me^{(j)}$. The third term $\mr^{(i,j)}$ corresponds to second-order estimation errors for $\hat{\mx}^{(i)}$ and
$\hat{\mx}^{(j)}$, and hence Eq.~\eqref{eq:bound_theorem1} only depends
on quantities associated with ${\mx}_{\mathcal{U}_i}$ and ${\mx}_{\mathcal{U}_j}$. The last term $\ms^{(i,j)}$ 
corresponds to the error when aligning the overlaps $\hat{\mx}_{ \langle\mathcal{U}_i \cap \mathcal{U}_j\rangle}^{(i)}$ and
$\hat{\mx}_{ \langle\mathcal{U}_i \cap \mathcal{U}_j\rangle}^{(j)}$ and hence Eq.~\eqref{eq:bound_theorem2} depends on $\mx_{\mathcal{U}_i \cap \mathcal{U}_j}$.
{\color{black} Finally, Eq.~\eqref{eq:thm1_add_condition} ensures that $\ms^{(i,j)}$ is bounded by the first two terms, and is a mild and natural assumption in many settings; see Remark~\ref{rem:example1_setting} for further discussion. }
\end{rema}

Next we consider a chain of overlapping submatrices as described in Algorithm~\ref{Alg_chain}. 
	Theorem~\ref{thm:R(i0,...,iL)} presents the expansion of the
estimation error for
$\hat{\mpp}_{\mathcal{U}_{i_0},\mathcal{U}_{i_L}}$ and
Theorem~\ref{thm:normal} leverages this expansion to derive an
entrywise normal approximation for
$\hat{\mpp}_{\mathcal{U}_{i_0},\mathcal{U}_{i_L}} -
\mpp_{\mathcal{U}_{i_0},\mathcal{U}_{i_L}}$. While the statements of Theorem~\ref{thm:R(i0,...,iL)} and Theorem~\ref{thm:normal}
appear somewhat complicated at first glance, this is intentional as they make the results more general and thus applicable to a wider range of settings. 
Indeed, we allow for $(n_i, \sigma_i, q_i)$ to have different magnitudes as well as the overlaps
to be of very different sizes $n_{i,j}$. For example we can have
$n_1 \gg n_2 \gg n_3$ but $q_1 \ll q_2 \ll q_3$ while $n_{1,2} \ll n_{2,3}$ but $\sigma_1 \asymp \sigma_2 \ll \sigma_3$. 
If $(n_i, q_i, \sigma_i, n_{i,j}) \equiv (n, q, \sigma, m)$ then 
these results can be simplified considerably; see Remark~\ref{rem:example1_setting}. 
\begin{theorem}
\label{thm:R(i0,...,iL)}
Consider a chain of overlapping submatrices $(\ma^{(i_0)}, \dots, \ma^{(i_L)})$ satisfying Assumption~\ref{ass:main}. 
For all overlaps $1\leq \ell \leq L$, suppose $\mathrm{rk}(\mpp_{\mathcal{U}_{i_{\ell-1}}  \cap \mathcal{U}_{i_{\ell}} , \mathcal{U}_{i_{\ell-1}} \cap \mathcal{U}_{i_\ell}}) = d$, 
and define $n_{{i_{\ell-1}} ,{i_{\ell}} },\vartheta_{{i_{\ell-1}} ,{i_{\ell}} },\theta_{{i_{\ell-1}} ,{i_{\ell}} },\alpha_{{i_{\ell-1}},{i_{\ell}}}$ as in Eq.~\eqref{eq:notation ij}.
    Let $\me^{(i)}:=\ma^{({i} )}-\mpp^{({i} )}$ for all $i$. 
     We then have 
	$$
	\begin{aligned}
 \hat{\mpp}_{\mathcal{U}_{i_0},\mathcal{U}_{i_L}} - \mpp_{\mathcal{U}_{i_0},\mathcal{U}_{i_L}}
	    =\me^{({i_0})}\mx_{\mathcal{U}_{i_0}}(\mx_{\mathcal{U}_{i_0}}^\top\mx_{\mathcal{U}_{i_0}})^{-1} \mx_{\mathcal{U}_{i_L}}^\top
	    +\mx_{\mathcal{U}_{i_0}} (\mx_{\mathcal{U}_{i_L}}^\top\mx_{\mathcal{U}_{i_L}})^{-1} \mx_{\mathcal{U}_{i_L}}^\top \me^{({i_L})}
	 + \mr^{(i_0,i_L)} + \ms^{(i_0,i_1,\dots,i_L)},
	\end{aligned}
	$$
	where $\mr^{(i_0,i_L)}$ and $\ms^{(i_0,i_1,\dots,i_L)}$ are random matrices satisfying
        \begin{equation}
        \label{eq:bound_ri0iL}
 	\begin{split}
		\|\mr^{(i_0,i_L)}\|_{\max}  \lesssim &
\frac{\gamma_{i_{0}} \gamma_{i_{L}} }{(q_{i_{0}} n_{i_{0}} \mu_{i_{0}})^{1/2}(q_{i_{L}} n_{i_{L}} \mu_{i_{L}})^{1/2}} + \Big(\frac{\gamma_{i_{0}}^2}{q_{i_{0}}n_{i_{0}}  \mu_{i_{0}}^{3/2}}
	+\frac{\gamma_{i_{0}} }{q_{i_{0}}^{1/2}n_{i_{0}} \mu_{i_{0}}^{1/2}}\Big)
	\|\mx_{\mathcal{U}_{i_{L}}}\|_{2\to\infty} \\ &+
	\Big(\frac{\gamma_{i_{L}}^2}{q_{i_{L}}n_{i_{L}}  \mu_{i_{L}}^{3/2}}
	+\frac{\gamma_{i_{L}} }{q_{i_{L}}^{1/2}n_{i_{L}} \mu_{i_{L}}^{1/2}}\Big)
	\|\mx_{\mathcal{U}_{i_{0}}}\|_{2\to\infty}
    \end{split}
    \end{equation}
    \begin{equation}
    \label{eq:bound_si0iL}
  \|\ms^{(i_0,i_1,\dots,i_{L})}\|_{\max} \lesssim 
  \Big[\sum_{\ell=1}^{L}{\alpha_{{i_{\ell-1}},{i_{\ell}}}}
    \Big]\cdot
    \|\mx_{\mathcal{U}_{i_0}}\|_{2 \to \infty}\cdot	\|\mx_{\mathcal{U}_{i_L}}\|_{2 \to \infty}
\end{equation}
with high probability. Furthermore suppose
\begin{equation}\label{eq:thm2_add_condition}
\Big[\sum_{\ell=1}^{L}{\alpha_{{i_{\ell-1}},{i_{\ell}}}}\Big]
\min\Big\{\frac{(q_{i_0}n_{i_0}  \mu_{i_0})^{1/2}}{ \gamma_{i_0}}\|\mx_{\mathcal{U}_{i_0}}\|_{2 \to \infty},
\frac{(q_{i_L}n_{i_L}  \mu_{i_L})^{1/2}}{ \gamma_{i_L}}\|\mx_{\mathcal{U}_{i_L}}\|_{2 \to \infty}\Big\}
\lesssim 1.
\end{equation}
Then $\me^{({i_0})}\mx_{\mathcal{U}_{i_0}}(\mx_{\mathcal{U}_{i_0}}^\top\mx_{\mathcal{U}_{i_0}})^{-1} \mx_{\mathcal{U}_{i_L}}^\top
	    +\mx_{\mathcal{U}_{i_0}} (\mx_{\mathcal{U}_{i_L}}^\top\mx_{\mathcal{U}_{i_L}})^{-1} \mx_{\mathcal{U}_{i_L}}^\top \me^{({i_L})}$ is the dominant term and 
\begin{equation*}
	\begin{aligned}
 \| \hat{\mpp}_{\mathcal{U}_{i_0},\mathcal{U}_{i_L}} - \mpp_{\mathcal{U}_{i_0},\mathcal{U}_{i_L}}\|_{\max} 
	    \lesssim 
	    \frac{ \gamma_{i_0}}{(q_{i_0}n_{i_0}  \mu_{i_0})^{1/2}}\|\mx_{\mathcal{U}_{i_L}}\|_{2 \to \infty} + 
\frac{ \gamma_{i_L}}{(q_{i_L}n_{i_L}  \mu_{i_L})^{1/2}}\|\mx_{\mathcal{U}_{i_0}}\|_{2 \to \infty}
\end{aligned}
\end{equation*}
with high probability.
\end{theorem}
The only difference between Theorems~\ref{thm:R(i,j)} and \ref{thm:R(i0,...,iL)} is in the upper bound for $\ms^{(i_0,i_1, \dots, i_L)}$ compared to that for $\ms^{(i,j)}$. Indeed, $\me^{({i_0})}\mx_{\mathcal{U}_{i_0}}(\mx_{\mathcal{U}_{i_0}}^\top\mx_{\mathcal{U}_{i_0}})^{-1} \mx_{\mathcal{U}_{i_L}}^\top
	    +\mx_{\mathcal{U}_{i_0}} (\mx_{\mathcal{U}_{i_L}}^\top\mx_{\mathcal{U}_{i_L}})^{-1} \mx_{\mathcal{U}_{i_L}}^\top \me^{({i_L})}$ and $\mr^{(i_0,i_L)}$ in Theorem~\ref{thm:R(i0,...,iL)} only depend on $\mx_{\mathcal{U}_{i_0}}$ and $\mx_{\mathcal{U}_{i_L}}$, but not on the chain linking them, and thus their upper bounds are the same as that in Theorem~\ref{thm:R(i,j)} for $i = i_0$ and $j = i_L$. In contrast, from our discussion in Remark~\ref{rem:interpret}, the term $\ms^{(i_0,i_1,\dots,i_L)}$ corresponds to the alignment error between $\hat{\mx}^{(i_0)}$ and $\hat{\mx}^{(i_L)}$. As $\mathcal{U}_{i_0}$ and $\mathcal{U}_{i_L}$ need not share any overlap, this alignment is obtained via a sequence of orthogonal Procrustes transformations between $\hat{\mx}^{(i_{\ell-1})}_{ \langle\mathcal{U}_{i_{\ell-1}} \cap \mathcal{U}_{i_{\ell}}\rangle}$ and $\hat{\mx}^{(i_{\ell})}_{ \langle\mathcal{U}_{i_{\ell-1}} \cap \mathcal{U}_{i_{\ell}}\rangle}$ for $1\leq\ell \leq L$. The accumulated error for these transformations is reflected in the term $\sum_{\ell=1}^{L} {\alpha_{i_{\ell-1},i_{\ell}}}$ and depends on the whole chain. 
	   If $L$ is not too large relative to the overlaps $\{n_{i,j}\}$ then the error in $\ms^{(i_0,i_1,\dots,i_L)}$ is negligible compared to that of $\me^{({i_0})}\mx_{\mathcal{U}_{i_0}}(\mx_{\mathcal{U}_{i_0}}^\top\mx_{\mathcal{U}_{i_0}})^{-1} \mx_{\mathcal{U}_{i_L}}^\top
	    +\mx_{\mathcal{U}_{i_0}} (\mx_{\mathcal{U}_{i_L}}^\top\mx_{\mathcal{U}_{i_L}})^{-1} \mx_{\mathcal{U}_{i_L}}^\top \me^{({i_L})}$, and consequently our entrywise error rate depends only on $\mx_{\mathcal{U}_{i_0}}$ and $\mx_{\mathcal{U}_{i_L}}$ rather than on the chain linking them.
	    \begin{theorem}
\label{thm:normal}
Consider the setting of Theorem~\ref{thm:R(i0,...,iL)}. For $i\in\{i_0,i_L\}$, let $\md^{(i)}$ be a $n_{i}\times n_{i}$ matrix whose entries are of the form
$$
\begin{aligned}
	&\md^{(i)}_{k_1,k_2}:=
	[\mathrm{Var}(\mn^{(i)}_{k_1,k_2})
	+(1-q_{i})(\mpp_{k_1,k_2}^{(i)})^2]/q_{i}\text{ for any }k_1,k_2\in[n_{i}].
\end{aligned}
$$
Define $\mb^{(i_0,i_L)}, \mb^{(i_L,i_0)}$ as
\begin{gather*}
\mb^{(i_0,i_L)}:=\mx_{\mathcal{U}_{i_0}}(\mx_{\mathcal{U}_{i_0}}^\top\mx_{\mathcal{U}_{i_0}})^{-1} \mx_{\mathcal{U}_{i_L}}^\top,
\quad \mb^{(i_L,i_0)}:=\mx_{\mathcal{U}_{i_L}}(\mx_{\mathcal{U}_{i_L}}^\top\mx_{\mathcal{U}_{i_L}})^{-1} \mx_{\mathcal{U}_{i_0}}^\top.
\end{gather*}
For any fixed $(s,t)\in[n_{i_0}]\times [n_{i_L}]$, define $\tilde\sigma^2_{s,t}$ as
\begin{gather*}
		\tilde\sigma^2_{s,t}:=
\sum_{k_1=1}^{n_{i_0}}(\mb^{(i_0,i_L)}_{k_1,t})^2\md^{({i_0})}_{s,k_1}
    +\sum_{k_2=1}^{n_{i_L}}(\mb^{(i_L,i_0)}_{k_2,s})^2\md^{({i_L})}_{t,k_2}.
\end{gather*}
Furthermore, denote
\[ \zeta_{s, t} := 
\frac{(\sigma_{i_0}^2+(1-q_{i_0})\|\mpp_{\mathcal{U}_{i_0},\mathcal{U}_{i_0}}\|_{\max}^2) \|\bm{x}_{i_L,t}\|^2}{q_{i_0} n_{i_0} \mu_{i_0}} +
\frac{(\sigma_{i_L}^2+(1-q_{i_L})\|\mpp_{\mathcal{U}_{i_L},\mathcal{U}_{i_L}}\|_{\max}^2)\|\bm{x}_{i_0,s}^2\|}{q_{i_L} n_{i_L} \mu_{i_L}},\]
where $\bm{x}_{i_0,s}$ and $\bm{x}_{i_L,t}$ denote the $s$-th row and $t$-th row of $\mx_{\mathcal{U}_{i_0}}$ and $\mx_{\mathcal{U}_{i_L}}$ respectively.
Note that $\tilde{\sigma}^2_{s,t}
                \lesssim \zeta_{s,t}$.
Suppose the following conditions 
\begin{gather}
	\label{eq:con_2}
		\tilde{\sigma}^2_{s,t}
                \gtrsim \zeta_{s,t}, \\
 \label{eq:con_3}	
\frac{\|\mx_{\mathcal{U}_{i}}\|_{2 \to \infty}\log^{1/2} (q_{i}n_{i})}{(q_{i}n_{i}  \mu_{i})^{1/2}}+\frac{\|\mpp_{\mathcal{U}_{i},\mathcal{U}_{i}}\|_{\max} \cdot \|\mx_{\mathcal{U}_{i}}\|_{2 \to \infty}}{(n_{i} \mu_{i})^{1/2} \sigma_{i}}=o(1)\text{ for }i=\{i_0,i_L\},\\
\label{eq:con_4}
\begin{aligned}
    (r_{\infty} + s_{\infty})/\zeta_{s,t}^{1/2} = o(1)
\end{aligned}
\end{gather}
are satisfied, where $r_{\infty}$ and $s_{\infty}$ are upper bounds for $\|\mr^{(i_0,i_L)}\|_{\max}$ and $\|\ms^{(i_0,\dots,i_L)}\|_{\max}$ given in Theorem~\ref{thm:R(i0,...,iL)}.
We then have
$
	\tilde\sigma_{s,t}^{-1}
	\big( \hat{\mpp}_{\mathcal{U}_{i_0},\mathcal{U}_{i_L}} - \mpp_{\mathcal{U}_{i_0},\mathcal{U}_{i_L}}\big)_{s,t}
	    \rightsquigarrow \mathcal{N}(0,1)
$
as $\min\{n_{i_0},n_{i_L}\} \rightarrow \infty$. 
\end{theorem}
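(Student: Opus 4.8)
The plan is to build on the expansion in Theorem~\ref{thm:R(i0,...,iL)} and reduce the claimed normal approximation to a Lindeberg--Feller central limit theorem for the two linear-in-noise terms, treating the higher-order remainders as negligible. Fix $(s,t)$ and write the $(s,t)$ entry of the two leading terms as $T_{s,t} := \sum_{k_1=1}^{n_{i_0}} \me^{(i_0)}_{s,k_1}\,\mb^{(i_0,i_L)}_{k_1,t} + \sum_{k_2=1}^{n_{i_L}} \mb^{(i_L,i_0)}_{k_2,s}\,\me^{(i_L)}_{k_2,t}$, which follows by reading off the $(s,t)$ entries of $\me^{(i_0)}\mb^{(i_0,i_L)}$ and $(\mb^{(i_L,i_0)})^{\top}\me^{(i_L)}$, noting that $(\mb^{(i_L,i_0)})^{\top}=\mx_{\mathcal{U}_{i_0}}(\mx_{\mathcal{U}_{i_L}}^\top\mx_{\mathcal{U}_{i_L}})^{-1}\mx_{\mathcal{U}_{i_L}}^\top$. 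Theorem~\ref{thm:R(i0,...,iL)} then gives $(\hat{\mpp}_{\mathcal{U}_{i_0},\mathcal{U}_{i_L}} - \mpp_{\mathcal{U}_{i_0},\mathcal{U}_{i_L}})_{s,t} = T_{s,t} + \mr^{(i_0,i_L)}_{s,t} + \ms^{(i_0,\dots,i_L)}_{s,t}$, so the argument splits into (i) discarding the remainder entries after normalization and (ii) establishing asymptotic normality of $\tilde\sigma_{s,t}^{-1}T_{s,t}$.

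For step (i) I would bound $|\mr^{(i_0,i_L)}_{s,t}| \le \|\mr^{(i_0,i_L)}\|_{\max} \le r_{\infty}$ and $|\ms^{(i_0,\dots,i_L)}_{s,t}| \le s_{\infty}$ with high probability using Theorem~\ref{thm:R(i0,...,iL)}, and combine the stated bound $\tilde\sigma^2_{s,t}\lesssim\zeta_{s,t}$ with condition~\eqref{eq:con_2} to get $\tilde\sigma^2_{s,t}\asymp\zeta_{s,t}$. Condition~\eqref{eq:con_4} then yields $(\mr^{(i_0,i_L)}_{s,t}+\ms^{(i_0,\dots,i_L)}_{s,t})/\tilde\sigma_{s,t} = o_p(1)$, so by Slutsky's theorem it suffices to prove $\tilde\sigma_{s,t}^{-1}T_{s,t}\rightsquigarrow\mathcal{N}(0,1)$.

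For step (ii) I would first argue that $T_{s,t}$ is a sum of independent mean-zero summands: the entries $\{\me^{(i_0)}_{s,k_1}\}_{k_1}$ index distinct unordered pairs of the symmetric, upper-triangular-independent matrix $\me^{(i_0)}$ and are hence mutually independent (likewise for $\{\me^{(i_L)}_{k_2,t}\}_{k_2}$), while $\me^{(i_0)}$ and $\me^{(i_L)}$ come from different sources and are independent. A direct second-moment computation on the decomposition $\me^{(i)}_{k_1,k_2}=\mpp^{(i)}_{k_1,k_2}(\bm{\Omega}^{(i)}_{k_1,k_2}/q_i-1)+\mn^{(i)}_{k_1,k_2}\bm{\Omega}^{(i)}_{k_1,k_2}/q_i$ gives $\mathrm{Var}(\me^{(i)}_{k_1,k_2})=\md^{(i)}_{k_1,k_2}$, and hence $\mathrm{Var}(T_{s,t})=\tilde\sigma^2_{s,t}$ exactly; it then remains to verify a Lyapunov/Lindeberg condition for the normalized triangular array of the $n_{i_0}+n_{i_L}$ summands.

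The main obstacle is this Lindeberg verification, because each summand superposes two qualitatively different noise mechanisms: a sub-Gaussian term $\mn^{(i)}_{k_1,k_2}\bm{\Omega}^{(i)}_{k_1,k_2}/q_i$ and a bounded masked-signal term $\mpp^{(i)}_{k_1,k_2}(\bm{\Omega}^{(i)}_{k_1,k_2}/q_i-1)$ whose discrete spikes have size $\asymp\|\mpp_{\mathcal{U}_i,\mathcal{U}_i}\|_{\max}/q_i$. I would treat the two parts separately, bounding the sub-Gaussian contribution by a truncation argument and the masked-signal contribution by comparing its maximal spike to the total standard deviation. Using $\lambda_{\min}(\mx_{\mathcal{U}_i}^\top\mx_{\mathcal{U}_i})=n_i\mu_i$ together with $\max_{k_1}|\mb^{(i_0,i_L)}_{k_1,t}|\lesssim\|\mx_{\mathcal{U}_{i_0}}\|_{2\to\infty}\|\bm{x}_{i_L,t}\|/(n_{i_0}\mu_{i_0})$ (and the analogous bound for $\mb^{(i_L,i_0)}$), the ratio of the maximal summand to $\tilde\sigma_{s,t}$ reduces to the two quantities appearing in condition~\eqref{eq:con_3}: the first term, carrying the $\log^{1/2}(q_in_i)$ factor, governs the sub-Gaussian tails, while the second term, the $\|\mpp_{\mathcal{U}_i,\mathcal{U}_i}\|_{\max}/\sigma_i$ ratio, ensures the masked-signal spikes are asymptotically dominated by the continuous noise. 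Since both are $o(1)$, the Lindeberg condition holds and $\tilde\sigma_{s,t}^{-1}T_{s,t}\rightsquigarrow\mathcal{N}(0,1)$; combining with step (i) via Slutsky completes the proof.
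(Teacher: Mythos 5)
Your proposal follows essentially the same route as the paper's proof: the same reduction via Theorem~\ref{thm:R(i0,...,iL)} and Slutsky (using \eqref{eq:con_2} to get $\tilde\sigma^2_{s,t}\asymp\zeta_{s,t}$ so that \eqref{eq:con_4} kills the remainders), the same variance identity $\mathrm{Var}(\me^{(i)}_{k_1,k_2})=\md^{(i)}_{k_1,k_2}$ and independence structure, and the same Lindeberg verification that splits each summand into the masked-signal part $\mpp^{(i)}_{k_1,k_2}(\bm{\Omega}^{(i)}_{k_1,k_2}/q_i-1)$, controlled deterministically by the second term of \eqref{eq:con_3}, and the sub-Gaussian part $\mn^{(i)}_{k_1,k_2}\bm{\Omega}^{(i)}_{k_1,k_2}/q_i$, controlled by a tail-truncation integral whose convergence is exactly the first term of \eqref{eq:con_3}. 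The proposal is correct and I see no gaps relative to the paper's argument.
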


\begin{rema}\label{rm:qleq1-c}
{\color{black} Eq.~\eqref{eq:con_2} provides a lower bound for the entrywise variance $\tilde{\sigma}^2_{s,t}$ and is necessary as our normal approximations are for 
    $\tilde{\sigma}_{s,t}^{-1}(\hat{\mpp}_{\mathcal{U}_{i_0},\mathcal{U}_{i_L}} - 
    \mpp_{\mathcal{U}_{i_0},\mathcal{U}_{i_L}})_{s,t}$. 
	Eq.~\eqref{eq:con_3} ensures that each independent component of the dominant terms for $\big( \hat{\mpp}_{\mathcal{U}_{i_0},\mathcal{U}_{i_L}} - \mpp_{\mathcal{U}_{i_0},\mathcal{U}_{i_L}}\big)_{s,t}$ is not too large compared to $\tilde{\sigma}_{s,t}$. 
	Eq.~\eqref{eq:con_4} guarantees that the remainder terms $\|\mr^{(i_0,i_L)}\|_{\max}$ and $\|\ms^{(i_0,\dots,i_L)}\|_{\max}$ are negligible (when scaled by $\tilde{\sigma}_{s,t}^{-1})$.
	These conditions are very mild; see Remark~\ref{rem:example1_setting}.}
	In addition, if $q_i = 1$ then 
all terms depending on $\|\mpp_{\mathcal{U}_i,\mathcal{U}_i}\|_{\max}$ are dropped from these conditions. Specifically, $\gamma_i$ in Assumption~\ref{ass:main} simplifies to $\sigma_i\log^{1/2} n_i$, 
and Eq.~\eqref{eq:con_3} simplifies to
$\frac{\|\mx_{\mathcal{U}_{i}}\|_{2 \to \infty}\log^{1/2} n_{i}}{(n_{i}  \mu_{i})^{1/2}}=o(1)$ for $i=\{i_0,i_L\}$.
If $\sigma_i = 0$ then all terms depending on $\sigma_i$ are dropped. Specifically, 
Eq.~\eqref{eq:con_3} simplifies to $\frac{\|\mx_{\mathcal{U}_{i}}\|_{2 \to \infty}\log^{1/2} (q_{i}n_{i})}{(q_{i}n_{i}  \mu_{i})^{1/2}}=o(1)\text{ for }i=\{i_0,i_L\}$.
\end{rema}

\vspace{0.25cm}

\begin{rema}
Using Theorem~\ref{thm:normal} we can also construct a $(1 - \alpha)\times 100\%$ confidence interval for $\big(\mpp_{\mathcal{U}_{i_0},\mathcal{U}_{i_L}}\big)_{s,t}$
as 
$\big(\hat\mpp_{\mathcal{U}_{i_0},\mathcal{U}_{i_L}}\big)_{s,t}
\pm z_{\alpha/2} \hat{\tilde \sigma}_{s,t}$
     where $z_{\alpha/2}$ denotes the upper $\alpha/2$ quantile of the standard normal distribution and $\hat{\tilde \sigma}_{s,t}$ is a consistent estimate of 
     ${\tilde \sigma}_{s,t}$ based on
  \begin{gather*}
	    	\hat\mb^{(i_0,i_L)}=\hat\mx^{(i_0)}(\hat\mx^{(i_0)\top}\hat\mx^{(i_0)})^{-1} \tilde\mw^{(i_0,i_L)}\hat\mx^{(i_L)\top},\quad \hat\mb^{(i_L,i_0)}=\hat\mx^{(i_L)}(\hat\mx^{(i_L)\top}\hat\mx^{(i_L)})^{-1} \tilde\mw^{(i_0,i_L)\top}\hat\mx^{(i_0)\top}, \\
	     	\hat\md^{(i_0)}_{k_1,k_2}
	    	=(\ma^{(i_0)}_{k_1,k_2}-\hat\mpp^{(i_0)}_{k_1,k_2})^2,
   	\quad \hat\md^{(i_L)}_{k_1,k_2}
	    	=(\ma^{(i_L)}_{k_1,k_2}-\hat\mpp^{(i_L)}_{k_1,k_2})^2
      \end{gather*}
    with $\hat\mpp^{(i_0)}=\hat\mx^{(i_0)}\hat\mx^{(i_0)\top},\hat\mpp^{(i_L)}=\hat\mx^{(i_L)}\hat\mx^{(i_L)\top}$ and $\tilde\mw^{(i_0,i_L)}=\mw^{(i_0,i_1)}\mw^{(i_1,i_2)}\dots\mw^{(i_{L-1},i_L)}$;
    we leave the details to the interested readers. 
\end{rema}

\vspace{0.25cm}

\begin{rema}
\label{rem:example1_setting}

	We now provide an example to illustrate the above results.
	We first assume $\{n_i\}$ are of the same order, i.e., there exists an $n$ with $n_i \asymp n$ for all $i$. We also assume $q_i \equiv q$, $\sigma_i \equiv \sigma$ for all $i$. 
	We further suppose $\mpp \in \mathbb{R}^{N \times N}$ has $\Theta(N^2)$ entries that are lower bounded by some constant $c_0$ not depending on $N$, and that $\|\mpp\|_{\max} \leq c_1$ for some other constant $c_1$ not depending on $N$. Then $\|\mx\|_{2\to\infty}\lesssim 1$, and as $\mpp$ is low-rank with bounded condition number, we also have $\lambda_{\min} = \Theta(N)$. 
	By Eq.~\eqref{eq:relation lambda_i} we have $\lambda_{i,\min}\asymp \frac{n_i}{N}\lambda_{\min}$, 
    so $\mu_i\asymp 1$ for all $i$.
		For the overlaps we assume $n_{i_{\ell-1},i_{\ell}}\equiv m\geq d$ and 
        $\vartheta_{i_{\ell-1},i_{\ell}} \asymp \theta_{i_{\ell-1},i_{\ell}} = \Theta(m)$.
        Under this setting, the condition in Eq~\eqref{eq:assm1_con} simplifies to
        $
        \frac{(1+\sigma)\log^{1/2}n}{(qn)^{1/2}}\ll 1;
        $
        the error bounds in Eq.~\eqref{eq:bound_ri0iL} and Eq.~\eqref{eq:bound_si0iL} of Theorem~\ref{thm:R(i0,...,iL)} simplify to 
        $$\|\mr^{(i_0,i_L)}\|_{\max}
        	\lesssim \frac{(1+\sigma)^2\log n}{qn}+\frac{(1+\sigma)\log^{1/2} n}{q^{1/2}n},$$
        \begin{equation}\label{eq:bound_si0iL_2} 
        	\begin{aligned}
        	\|\ms^{(i_0,i_1,\dots,i_L)}\|_{\max}
        	\lesssim L\Big(\frac{(1+\sigma)^2\log n}{qn}
        +\frac{(1+\sigma)\log^{1/2} n}{m^{1/2}(qn)^{1/2}}\Big)
        \end{aligned}
        \end{equation}
        with high probability. Furthermore provided the condition in Eq.~\eqref{eq:thm2_add_condition}, we also have
        $$
        \|\hat\mpp_{\mathcal{U}_{i_0},\mathcal{U}_{i_L}}-\mpp_{\mathcal{U}_{i_0},\mathcal{U}_{i_L}}\|_{\max}
        \lesssim \frac{(1+\sigma)\log^{1/2}n}{(qn)^{1/2}}
        $$
        with high probability,  and Eq.~\eqref{eq:thm2_add_condition} simplifies to
        $
        L\Big(\frac{(1+\sigma)\log^{1/2} n}{(qn)^{1/2}}
        +\frac{1}{m^{1/2}}\Big)
        \lesssim 1
        $ now.
        
        All conditions are then trivially satisfied and the estimate error converges to $0$ whenever $\sigma = O(1)$, $L=O(1)$, and $qn =\Omega(\log^2 n)$.
        Note that the overlap size $m$ can be as small as $d = \mathrm{rk}(\mpp)$; see Section~\ref{sec:simu limit overlap} for simulation results. 
     For Theorem~\ref{thm:normal}, 
     the condition in Eq.~\eqref{eq:con_3} is trivial, and the condition in Eq.~\eqref{eq:con_4} simplifies to
     \begin{equation}\label{eq:thm3_add}
     	\begin{aligned}
     	L\Big(\frac{(1+\sigma)^2\log n}{(\sigma^2+(1-q))^{1/2}(qn)^{1/2}}
        +\frac{(1+\sigma)\log^{1/2} n}{(\sigma^2+(1-q))^{1/2}m^{1/2}}\Big)
        =o(1).
     \end{aligned}
     \end{equation}
     Note that $\frac{(1+\sigma)}{(\sigma^2+(1-q))^{1/2}}$ is bounded when $q\leq 1-c$ for some constant $c>0$ (this condition can be omitted when $q=1$). 
     Then Eq.~\eqref{eq:thm3_add} is satisfied when $qn=\omega(\log^2 n)$ and $m=\omega (\log n)$, i.e. the overlap size is slightly larger than the minimal $d = \mathrm{rk}(\mpp)$.
     See Section~\ref{sec:add discussion} for further discussion on Assumption~\ref{ass:main} and the conditions in Theorems~\ref{thm:R(i,j)}, \ref{thm:R(i0,...,iL)}, and \ref{thm:normal}.
     \end{rema}

\subsection{Discussion and comparison with related work}
\label{sec:comparison}

Our theoretical results are comparable to those of \cite{zhou2021multi} for two observed submatrices and to \cite{bishop2014deterministic} for a simple chain, 
while being significantly stronger than both.
{\color{black}
In particular, the error bounds in \cite{zhou2021multi} and \cite{bishop2014deterministic} are given in terms of the spectral and Frobenius norms, which only provide coarse control over individual entries. In other words, the estimation error for each entry can only be bounded indirectly through the overall matrix norm and does not yield sharp {\em entrywise} guarantees.
In contrast, our analysis leverages a $2 \to \infty$ norm bound on $\hat{\mathbf{X}}^{(i)}$, which controls row-wise fluctuations and leads directly to a bound on the maximum entrywise error of the recovered matrix. 
Furthermore, our analysis also reveals the dominant error terms in our estimation from which we can obtain entrywise normal approximations. This type of inferential guarantees can not be achieved using either the spectral or Frobenius norm bounds.
In addition, our theoretical results allow for heterogeneity across blocks as we can handle cases where $(n_i, \sigma_i, q_i)$ have different magnitudes and the overlap sizes $n_{i,j}$ can 
vary across submatrices. 
In contrast, \cite{zhou2021multi} assumes all $\{n_i\}$ are of the same order and all $\{n_{i,j}\}$ are of the same order, and their error bound for recovering the entries related to blocks $i$ and $j$ depends on $\sigma = \max_{k\in[m]} \sigma_k$ 
rather than the specific $(\sigma_i,\sigma_j)$. 
Our analysis is thus more general and also yields sharper results under heterogeneous settings.
Finally we relax the overlap requirement, as
\cite{zhou2021multi} requires 
overlaps to grow with submatrices, while 
we show that the overlap size can be 
$d$.
}
Comparing our results to \cite{bishop2014deterministic}, we note that the error bound in Theorem~4 of
\cite{bishop2014deterministic} grows exponentially with the length of the chain (due to its dependency on $G^{k-2}$ where $k$ is the chain length), whereas our bound for $\hat\mpp_{\mathcal{U}_{i_0},\mathcal{U}_{i_L}}$ includes only a non-dominant term that grows linearly with the chain length; see Eq.~\eqref{eq:bound_si0iL} or Eq.~\eqref{eq:bound_si0iL_2}.
And the bound in Theorem~4 of \cite{bishop2014deterministic} is only applicable for small $\epsilon$ where $\|\ma^{(i)} - \mpp^{(i)}\|_F \leq \epsilon$, whereas our noise model allows $\epsilon$ to be of order $n_i$ with $\|\mx\|_F$ of order $\sqrt{N}$, rendering their result ineffective.

We emphasize that the block-wise observation models in this paper, BONMI \citep{zhou2021multi} and SPSMC \citep{bishop2014deterministic} differ significantly from those in the standard matrix completion literature, which typically focuses on a single large matrix and assumes uniformly or independently sampled observed entries.
Nevertheless, the authors of BONMI have compared their results with other results in the standard matrix completion literature. For example, Remark~10 in \cite{zhou2021multi} shows that the upper bound for the spectral norm error of 
BONMI matches the minimax rate for the missing at random setting. 
As CMMI is an extension of BONMI to more than $2$ matrices, the above comparison is still valid.
Furthermore, our results for CMMI are in terms of the maximum entrywise norm and normal approximations, which are significant refinements of the spectral norm error in BONMI, and are also comparable to the best available results for standard matrix completion such as those in \cite{abbe2020entrywise} and \cite{chen2021spectral}. More specifically, consider the case of $n_i \equiv n \asymp N/L$ with $\lambda_{i,\min} \asymp \frac{n}{N} \lambda_{\min} \asymp \frac{1}{L} \lambda_{\min}$. Also suppose $q_i \equiv q$ and $\sigma_i \equiv \sigma$. Then CMMI has the maximum entrywise error bound of $\frac{(\|\mpp\|_{\max} + \sigma)\log^{1/2} N}{q^{1/2}(N/L)^{1/2}}$, which matches the rate in Theorem 3.4 of \cite{abbe2020entrywise} and Theorem~4.5 of \cite{chen2021spectral} up to a factor of $L^{-1/2}$, as the number 
of observed entries in our model is only $1/L$ times that for the standard matrix completion models. 
Finally the normal approximation result in Theorem~\ref{thm:normal} is analogous to Theorem~4.12 in \cite{chen2021spectral}, with the main difference being the expression for the normalizing variance as our model considers individual noise matrices $\me^{(i_0)}$ and $\me^{(i_L)}$ whereas \cite{chen2021spectral} consider a global noise matrix $\me$. 

Another related work is \cite{chang2022low} which considers matrix completion for sample covariance matrices with a spiked covariance structure. Sample covariance matrices differ somewhat from the data matrices considered in our paper as, while both our population data matrix $\mpp$ in Section~\ref{sec:meth} and their population covariance matrix $\mSigma$ are positive semidefinite, the entries of the sample covariance matrix $\hat{\mSigma}$ are {\em dependent}. Consequently, the settings in the two papers are related but not directly comparable.
Nevertheless, if we were to compare our results against Theorem~C.1 in \cite{chang2022low} (where we set $q_i \equiv 1$ in our model, as \cite{chang2022low} assume that the sample covariance submatrices are observed completely) then (1) we allow block sizes $\{p_k\}$ (theirs $\{p_k\}$ are our $\{n_i\}$) to differ significantly in magnitude; (2) more importantly, our error bounds depend at most linearly on the chain length, whereas their bounds grow exponentially with the chain length due to the dependency on $\xi^{K}$, (their $K$ is our $L$). As $\xi = \sqrt{\log \max_{i} n_i}$ (see Proposition~C.2 in \cite{chang2022low}), this results in a factor of $(\log \max_i n_i)^{K/2}$ that is highly undesirable as $K$ increases. 

{\color{black}We finally note that for each observed submatrix with potentially independent random missing entries, estimating the latent positions $\hat{\mathbf{X}}^{(i)}$ using the scaled leading eigenvectors of $\mathbf{A}^{(i)}$ (as done in this paper) is rate-optimal. More specifically, from Lemma~\ref{lemma:hat X(i)W(i)-X} we have
$
\|\hat{\mathbf{P}}^{(i)} - \mathbf{P}^{(i)}\|_{\max} \lesssim \frac{(\|\mathbf{P}^{(i)}\|_{\max} + \sigma_i)\log^{1/2} n_i}{q_i^{1/2} n_i^{1/2}}
$
with high probability, 
where $\hat{\mathbf{P}}^{(i)} = \hat{\mathbf{X}}^{(i)} \hat{\mathbf{X}}^{(i)\top}$. This matches the best available entrywise error rates for matrix completion established in the literature, such as those in \cite{abbe2020entrywise} and \cite{chen2021spectral}. Lemma~\ref{lemma:hat X(i)W(i)-X} also implies
$\|\hat{\mathbf{P}}^{(i)} - \mathbf{P}^{(i)}\|_{F} \lesssim \frac{(\|\mathbf{P}^{(i)}\|_{\max} + \sigma_i)n_i^{1/2} \log^{1/2} n_i}{q_i^{1/2}}$ with high probability, and is the same (up to logarithmic factor) as the oracle bound from \cite[Eq.~(III.13)]{candes2010matrix}. The logarithmic factor is negligible and is due mainly to the fact that Lemma~\ref{lemma:hat X(i)W(i)-X} yields a concentration bound for $\|\hat{\mathbf{P}}^{(i)} - \mathbf{P}^{(i)}\|_{F}$ while \cite{candes2010matrix} is for
$\mathbb{E}[\|\hat{\mathbf{P}}^{(i)} - \mathbf{P}^{(i)}\|_{F}]$.
See Section~\ref{sec:simu initilization} of the supplementary material for empirical comparisons
between the SVD-based algorithm and other matrix completion methods for estimating $\{\mx^{(i)}\}$. 
Accurate initialization of $\{\hat{\mx}^{(i)}\}$ is crucial for the subsequent joint integration, as it leads to more precise estimates of the transformation matrices $\{\mathbf{W}^{(i,j)}\}$ and thereby contributing to improved overall recovery. 
}

\section{Simulation Experiments}
\label{sec:simu}
 
We now present simulation experiments 
to complement our theoretical results and compare the performance of CMMI against
existing state-of-the-art matrix completion algorithms.

\subsection{Estimation error of CMMI}

 \label{sec:simu1}

We simulate a chain of $(L+1)$ overlapping observed submatrices $\{\ma^{(i)}\}_{i=0}^L$ for the underlying population matrix $\mpp$ 
as described in Figure~\ref{fig:simulation_setting}, and then predict the unknown yellow block by  Algorithm~\ref{Alg_chain}.
Each $\mpp^{(i)}$ has the same dimension, i.e. $n_i\equiv n=pN$ for all $i=0,1,\dots,L$, and the overlap between $\mpp^{(i-1)}$ and $\mpp^{(i)}$ are set to $n_{i-1,i} \equiv m=\breve p n$ for all $i=1,\dots,L$.
We generate $\mpp=\muu\mLambda\muu^\top$ by sampling $\muu$ uniformly at random from the set of $N\times 3$ matrices 
with orthonormal columns and set $\mLambda=\text{diag}(N, \tfrac{3}{4} N, \tfrac{1}{2} N)$. 
We then generate symmetric noise matrices $\{\mn^{(i)}\}$ with $\mn^{(i)}_{st}\stackrel{i i d}{\sim}\mathcal{N}(0,\sigma^2)$ for all $i=0,1,\dots,L$ and all $s,t\in[n]$ with $s\leq t$. Finally, we set $\ma^{(i)} = (\mpp^{(i)} + \mn^{(i)}) \circ \bm{\Omega}^{(i)}$ where $\bm{\Omega}^{(i)}$ is a symmetric $n \times n$ matrix whose upper triangular entries are i.i.d. Bernoulli random variables with success probability $q_i \equiv q$.
 \begin{figure}[htbp!] 
\centering
\subfigure{\includegraphics[width=3.5cm]{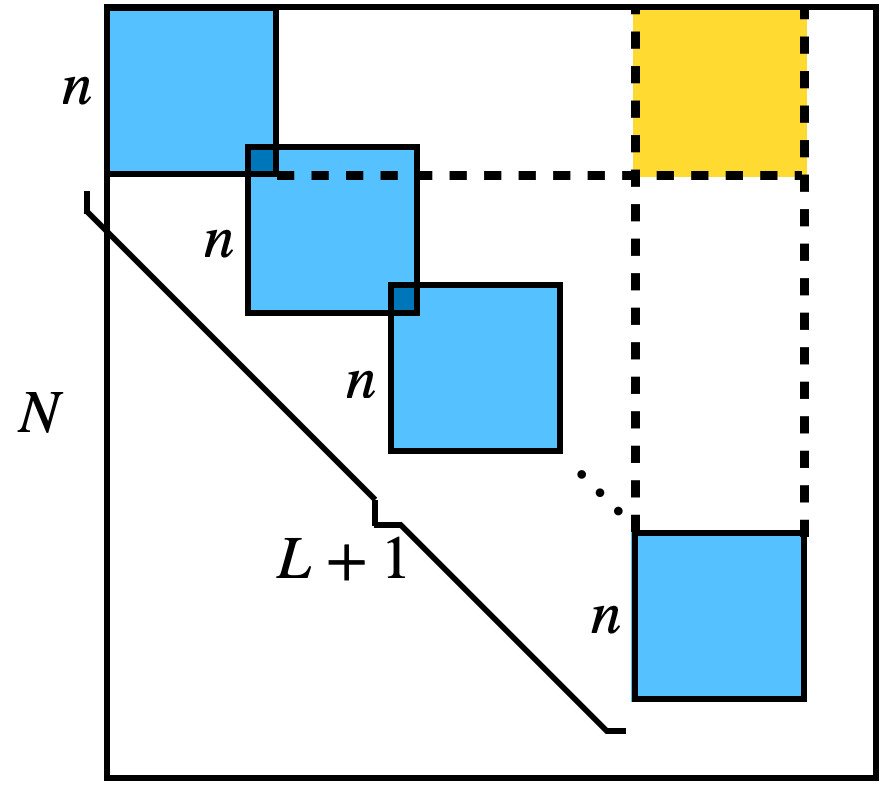}}
\caption{Simulation setting
}
\label{fig:simulation_setting}
\end{figure}

Recall that, by Theorem~\ref{thm:R(i0,...,iL)}, the estimation error for $\hat\mpp_{\mathcal{U}_{0},\mathcal{U}_{L}}-\mpp_{\mathcal{U}_{0},\mathcal{U}_{L}}$
can be decomposed into the first order approximation $\mm_\star:=\me^{(1)}\mx_{\mathcal{U}_{1}}(\mx_{\mathcal{U}_{1}}^\top\mx_{\mathcal{U}_{1}})^{-1} \mx_{\mathcal{U}_{L}}^\top
	    +\mx_{\mathcal{U}_{1}} (\mx_{\mathcal{U}_{L}}^\top\mx_{\mathcal{U}_{L}})^{-1} \mx_{\mathcal{U}_{L}}^\top \me^{({L})}$
	    and the remainder term $\mr^{(0,L)}+\ms^{(0,1,\dots,L)}$. Furthermore we also have
	    \begin{equation*}
	    	\begin{aligned}
	    	\|\mm_\star\|_{\max}\lesssim \frac{(1+\sigma)\log^{1/2}n}{(pqN)^{1/2}},
	    	\quad\|\mr^{(0,L)}+\ms^{(0,1,\dots,L)}\|_{\max}
	    		\lesssim L\Big(\frac{(1+\sigma)^2\log n}{pqN}+\frac{(1+\sigma)\log^{1/2}n}{q^{1/2}\breve p^{1/2} pN}\Big)
	    	\end{aligned}
	    \end{equation*}
        with high probability.
	   We compare the error rates for 
     $\|\hat\mpp_{\mathcal{U}_{0},\mathcal{U}_{L}}-\mpp_{\mathcal{U}_{0},\mathcal{U}_{L}}\|_{\max}$ against 
    $\|\mm_\star\|_{\max}$ and $\|\mr^{(0,L)}+\ms^{(0,1,\dots,L)}\|_{\max}$ by varying the value of one parameter among $N$, $p$, $\breve p$, $q$, $L$ and $\sigma$ while fixing the values of the remaining parameters.
Empirical results for these error rates, averaged over $100$ Monte Carlo replicates, are summarized in Figure~\ref{fig:simulation_vary}. 
We note that the error rates in Figure~\ref{fig:simulation_vary} 
are consistent with the above bounds. 
\begin{figure}[htbp!] 
\centering
\subfigure[]{\includegraphics[height=4.25cm]{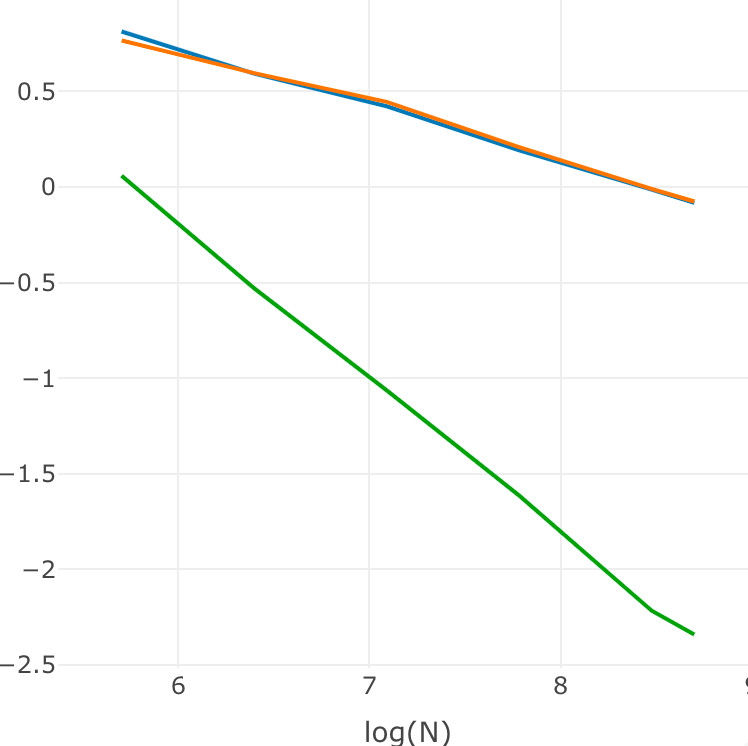}}
\subfigure[]{\includegraphics[height=4.25cm]{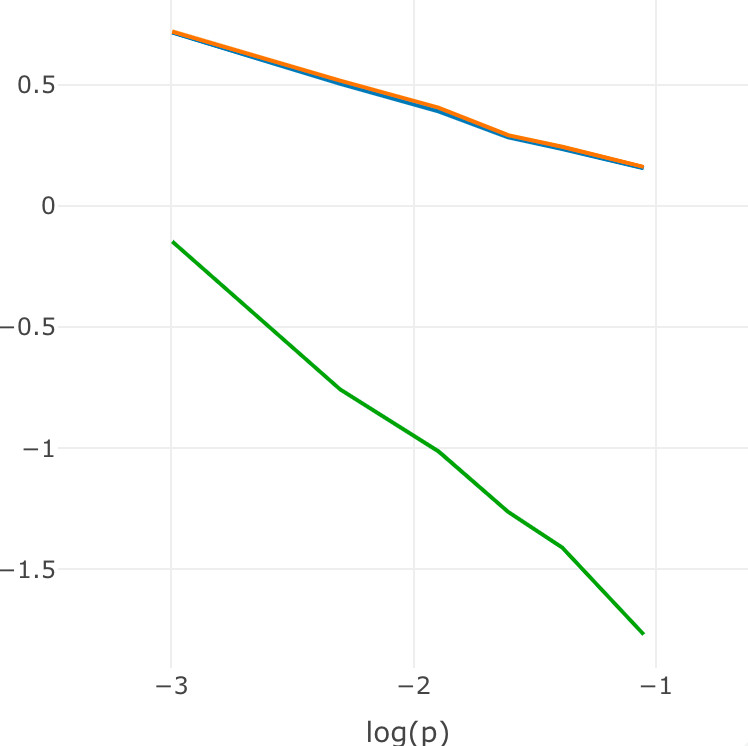}}
\subfigure[]{\includegraphics[height=4.25cm]{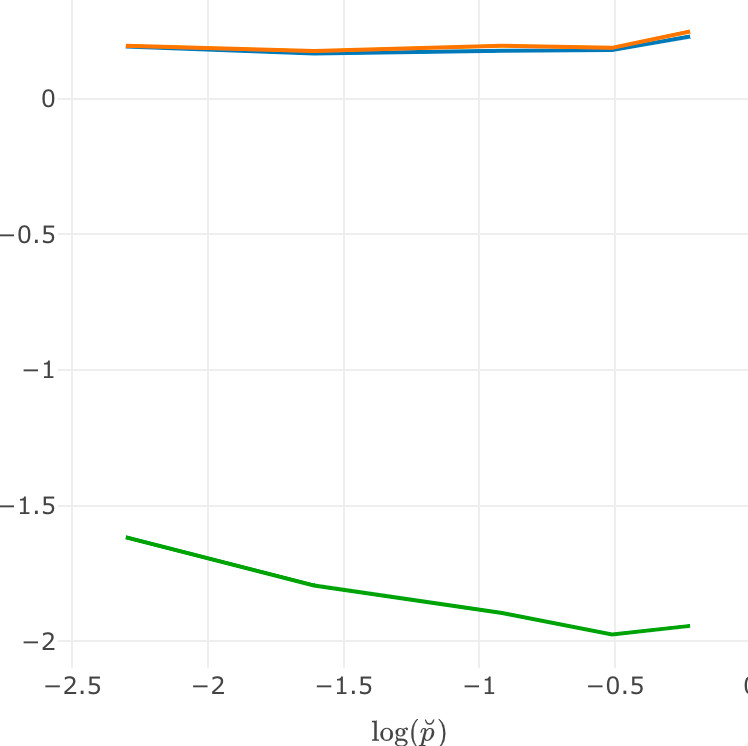}}\\
\subfigure[]{\includegraphics[height=4.25cm]{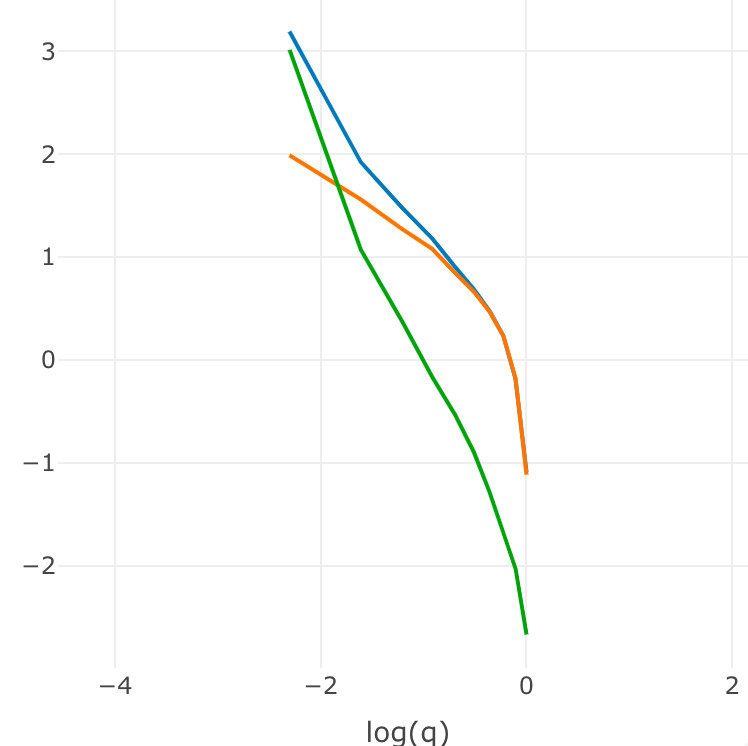}}
\subfigure[]{\includegraphics[height=4.25cm]{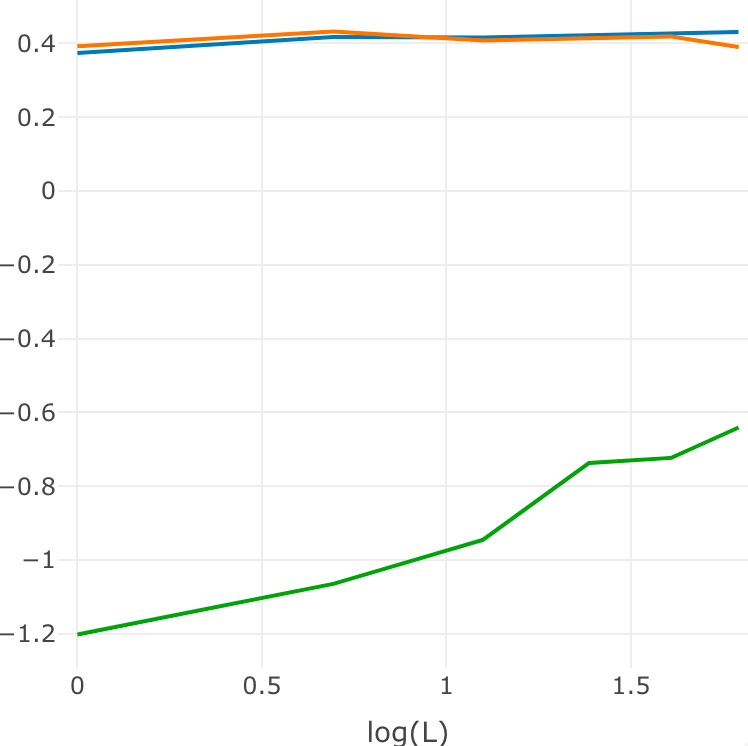}}
\subfigure[]{\includegraphics[height=4.25cm]{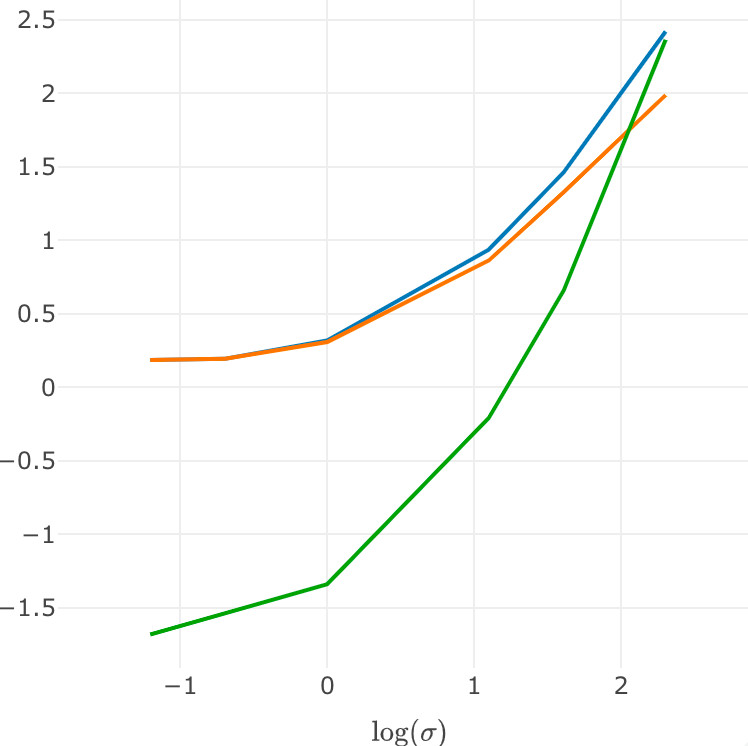}}
\caption{\footnotesize Log-log plot of the empirical error rates for 
$\|\hat\mpp_{\mathcal{U}_{0},\mathcal{U}_{L}}-\mpp_{\mathcal{U}_{0},\mathcal{U}_{L}}\|_{\max}$ (blue lines), 
its first order approximation $\|\mm_\star\|_{\max}$ (orange lines),
	    and the remainder term $\|\mr^{(0,L)}+\ms^{(0,1,\dots,L)}\|_{\max}$ (green lines) as we vary the value of a single parameter among $\{N, p, \breve p, q, L, \sigma\}$ while keeping the values of the remaining parameters fixed (note that we set $\lambda_{\min} = \tfrac{1}{2}N$). 
	    Panel (a): vary $N \in \{300,600,1200,2400,4800,6000\}$ for $p=0.3, \breve p=0.1, q=0.8, L=2, \sigma=0.5$.
	    Panel (b): vary $p \in \{0.05,0.1,0.15,0.2,0.25,0.35\}$ for $N=2400$, $\breve p=0.1$, $q=0.8$, $L=2$, $\sigma=0.5$.
	    Panel (c): vary $\breve p \in \{0.1,0.2,0.4,0.6,0.8\}$ for $N=2400$, $p=0.3$, $q=0.8$, $L=2$, $\sigma=0.5$.
	    Panel (d): vary $q \in \{0.1,0.2,\dots,0.9,1\}$ for $N=2400$, $p=0.3$, $\breve p=0.1$, $L=2$, $\sigma=0.5$.
	    Panel (e): vary $L \in \{1,2,3,4,5,6\}$ for $N=2400$, $p=0.15$, $\breve p=0.1$, $q=0.8$, $\sigma=0.5$.
	    Panel (f): vary $\sigma \in \{0.3,0.5,1,3,5,10\}$ for $N=2400$, $p=0.15$, $\breve p=0.1$, $q=0.8$, $L=2$, $\sigma=0.5$. 
	    Error rates in each panel are averages based on $100$ independent Monte Carlo replicates. 
}
\label{fig:simulation_vary}
\end{figure}

   \subsection{Comparison with other matrix completion algorithms}
 
 \label{sec:comp}
 We use the same setting as in Section~\ref{sec:simu1}, 
 but with $N = n + L \times (1 - \breve{p}) \times n \approx 2200$, so that the observed submatrices fully span the diagonal of the matrix.
 

 We vary $L$ and compare the performance of Algorithm~\ref{Alg_chain}
 (CMMI) with some existing state-of-art low-rank matrix completion algorithms, 
 including generalized spectral regularization (GSR) \citep{mazumder2010spectral}, fast alternating least squares (FALS) \citep{hastie2015matrix}, singular value thresholding (SVT) \citep{cai2010singular}, universal singular value thresholding (USVT) \citep{chatterjee2015matrix}, iterative regression against right singular vectors (IRRSV) \citep{troyanskaya2001missing}. Note that increasing $L$ leads to more 
 observed submatrices but, as each submatrix is of smaller dimensions, the total number of observed entries decreases with $L$ at rate of $N^2 q/L$. 
 Our performance metric for recovering the yellow unknown block  is in terms of the relative Frobenius norm error 
 $\|\hat\mpp_{\mathcal{U}_{0},\mathcal{U}_{L}}-\mpp_{\mathcal{U}_{0},\mathcal{U}_{L}}\|_F/\|\mpp_{\mathcal{U}_{0},\mathcal{U}_{L}}\|_F$. 
The error rates (averaged over $100$ independent Monte Carlo replicates) for 
different algorithms are presented in Figure~\ref{fig:simulation_comparison}, and it
 shows that CMMI outperforms all competing methods in terms of recovery accuracy. CMMI is also computationally efficient; see Section~\ref{sec:time comparison} of the supplementary material for details.  

\begin{figure}[htbp!] 
\centering
\subfigure{\includegraphics[height=5cm]{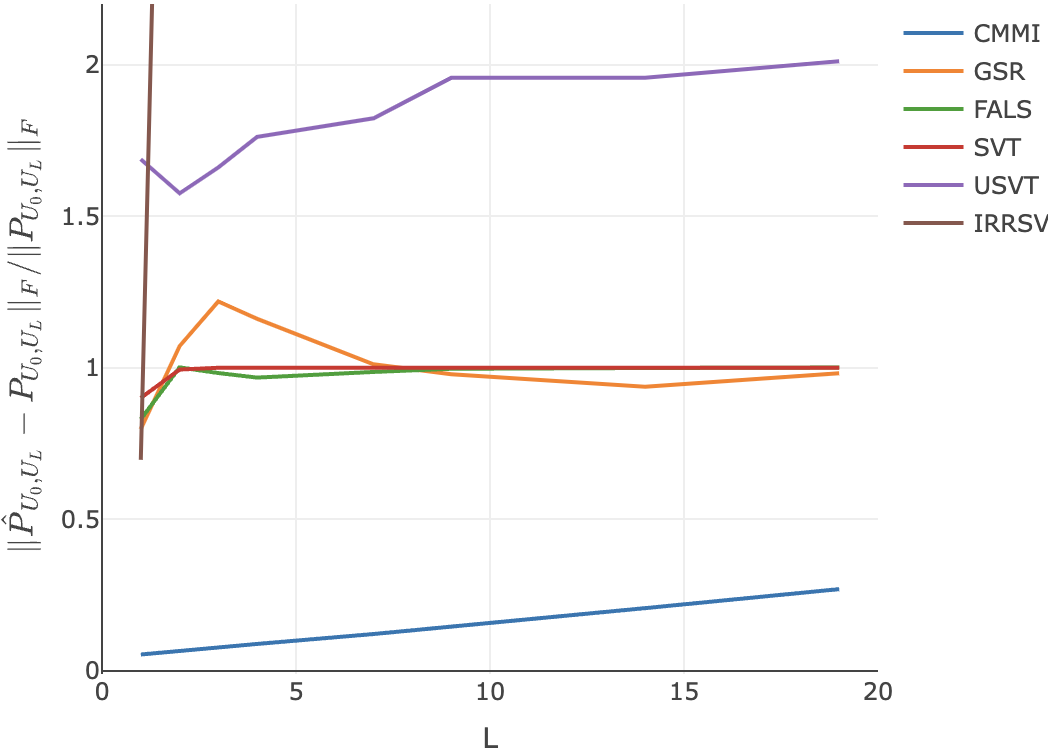}}
\caption{
\footnotesize Empirical errors $\|\hat\mpp_{\mathcal{U}_{0},\mathcal{U}_{L}}-\mpp_{\mathcal{U}_{0},\mathcal{U}_{L}}\|_F/\|\mpp_{\mathcal{U}_{0},\mathcal{U}_{L}}\|_F$ for CMMI 
 and other matrix completion algorithms 
as we vary $L \in \{1,2,3,4,7,9,14,19\}$ while fixing $N\approx 2200$, $\breve p=0.1$, $q=0.8$, $\sigma=0.5$.
The results are averaged over $100$ independent Monte Carlo replicates.
Note that the averaged relative $F$-norm errors of IRRSV are $\{0.7,    5.9,   21.0,   27.0,   96.3,   55.4,  206.2, 2299.0\}$ with some values being too large to be displayed.  
}
\label{fig:simulation_comparison}
\end{figure}

 \section{Real Data Experiment: MEDLINE Co-occurrences}
\label{sec:real}
We compare the performance of CMMI against other matrix completion algorithms on 
MEDLINE database of co-occurrences citations. 
 The MEDLINE co-occurrences database \citep{medline1a} summarizes the MeSH Descriptors that occur together in MEDLINE citations from the MEDLINE/PubMed Baseline over a duration of several years. A standard approach for handling this type of data is to first transform the (normalized) co-occurrence counts into pointwise mutual information (PMI), an association measure widely used in natural language processing 
 More specifically, the PMI between two concepts $x$ and $y$ is
defined as
$\mathrm{PMI}(x,y)=\log\tfrac{\mathbb{P}(x,y)}{\mathbb{P}(x)\mathbb{P}(y)},$
where $\mathbb{P}(x)$ and $\mathbb{P}(y)$ are the (marginal) occurrence probability of $x$ and $y$, and $\mathbb{P}(x,y)$ is the (joint) co-occurrence probability of $x$ and $y$.

For our analysis of the MEDLINE data, we first select $7486$ clinical concepts which are most frequently cited during the twelve years period from $2011$ to $2022$, and construct the total PMI matrix $\tilde\mpp \in \mathbb{R}^{7486 \times 7486}$ between these concepts.  Next we split the $12$ years into $L+1$ time intervals of equal length, and for each time interval $0\leq i\leq L$  we construct the individual PMI matrix $\tilde\mpp^{(i)} \in \mathbb{R}^{7486 \times 7486}$. We randomly sample, for each interval, a subset $\mathcal{U}_i$ of $n = 1000$ concepts from those $7486$ cited concepts such that $|\mathcal{U}_{i-1} \cap \mathcal{U}_i| = 100$ for all $1 \leq i \leq L$. Finally we set $\ma^{(i)} = \tilde\mpp^{(i)}_{\mathcal{U}_i, \mathcal{U}_i}$ as the principal submatrix of $\tilde\mpp^{(i)}$ induced by $\mathcal{U}_i$. The collection $\{\ma^{(i)}\}_{0 \leq i \leq L}$ forms a chain of perturbed overlapping submatrices of $\tilde\mpp$.
(See Section~\ref{sec:hostic_realdata} of the supplementary material for a related analysis of this data
that might be more practically relevant.)


Given $\{\ma^{(i)}\}$, we apply CMMI and other low-rank matrix completion algorithms to construct $\hat{\mpp}_{\mathcal{U}_{i_0}, \mathcal{U}_{i_L}}$ for the PMIs between clinical concepts in $\mathcal{U}_0$ and those in $\mathcal{U}_L$ in the total PMI matrix $\tilde\mpp$. Note that we specify $d=23$ for both CMMI and FALS, where this choice 
is based on applying the dimensionality selection procedure of
\cite{zhu2006automatic} to $\tilde\mpp$. In contrast we set $d = 3$ for GSR as its running time increase substantially for larger values of $d$. The values of $d$ for SVT and USVT are not specified, as both algorithms automatically determine $d$ using their respective eigenvalue thresholding procedures.
We then measure the similarities between the estimated PMIs in $\hat\mpp_{\mathcal{U}_0,\mathcal{U}_L}$ and the true total PMIs in $\tilde\mpp_{\langle\mathcal{U}_0\rangle,\langle\mathcal{U}_L\rangle}$ in terms of the Spearman's rank correlation $\rho$ (note that we only compare PMIs for pairs of concepts with positive co-occurrence). The Spearman's $\rho$ between two set of vectors takes value in $[-1,1]$ with $1$ (resp. $-1$) denoting perfect monotone increasing (resp. decreasing) relationship and $0$ suggesting no relationship. 
The results, averaged over $100$ independent Monte Carlo replicates, 
are summarized in Figure~\ref{fig:realdata_MRCOC}, where CMMI outperforms competing algorithms in accuracy. CMMI is also computationally efficient; see Section~\ref{sec:time comparison} of the supplementary material for details. %

 \begin{figure}[htbp!] 
\centering
\subfigure{\includegraphics[height=5cm]{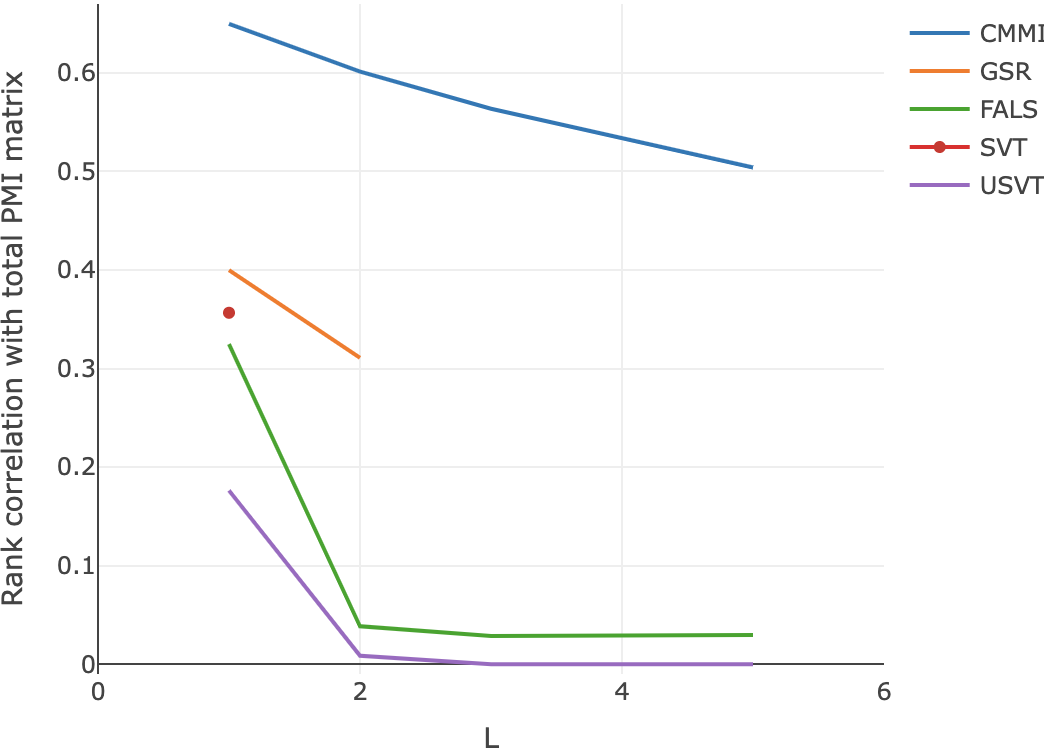}}
\caption{
\footnotesize Empirical estimates of Spearman's rank correlations for CMMI and other matrix completion algorithms as 
we vary $L=\{1,2,3,5\}$ while fixing $n=1000$ and $\breve p=0.1.$
The results are averaged over $100$ independent Monte Carlo replicates.
We are not able to evaluate IRRSV in this experiment due to the sparse nature of the co-occurence matrix (about 60\% zero entries in $\tilde\mpp$).
We only evaluate the performance of GSR for $L\leq 2$ and SVT for $L=1$, as these algorithms are computationally prohibitive with slight increases in $L$.
}
\label{fig:realdata_MRCOC}
\end{figure}

 \section{Extensions to Indefinite or Asymmetric Matrices}
 \label{sec:asy}
 \subsection{CMMI for symmetric indefinite matrices}

Suppose $\mpp\in\mathbb{R}^{N\times N}$ is a symmetric
 indefinite low-rank matrix. 
 Let $d_{+}$ and $d_{-}$ be the number of positive and negative eigenvalues of $\mpp$ and set $d = d_+ + d_{-}\ll N$.
We denote the non-zero eigenvalues of $\mpp$ by $\lambda_1(\mpp)\geq\dots\geq\lambda_{d_+}(\mpp)>0>\lambda_{n-d_-+1}(\mpp)\geq\dots\geq\lambda_n(\mpp).$
Let $\mLambda_+:=\diag(\lambda_1(\mpp),\dots,\lambda_{d_+}(\mpp))$, $\mLambda_-:=\diag(\lambda_{n-d_-+1}(\mpp),\dots,\lambda_n(\mpp))$, and the orthonormal columns of 
$\muu_+\in\mathbb{R}^{N\times d_+}$ and $\muu_-\in\mathbb{R}^{N\times d_-}$ constitute the corresponding eigenvectors.
Then the eigen-decomposition of $\mpp$ is $\muu\mLambda\muu^\top$, where
$\mLambda:=\diag(\mLambda_+,\mLambda_-)$ and $\muu:=[\muu_+,\muu_-]$.
 
 Then $\mpp$ can be written as $
 \mpp =\mx\mi_{d_+,d_-}\mx^\top$ with $\mi_{d_+,d_-}=\diag(\mi_{d_+},-\mi_{d_-})$ and $\mx=\muu|\mLambda|^{1/2}$, and thus the rows of $\mx$ represent the latent positions for the entities. 
For any $i\in[K]$, let $\mathcal{U}_i\subseteq [N]$ denote the set of entities contained in the $i$th source, and let 
$\mpp^{(i)}$ 
be the corresponding population matrix. We then have
$
\mpp^{(i)}=\mpp_{\mathcal{U}_i,\mathcal{U}_i}=\muu_{\mathcal{U}_i}\mLambda\muu_{\mathcal{U}_i}^\top
=\mx_{\mathcal{U}_i}\mi_{d_+,d_-}\mx_{\mathcal{U}_i}^\top.$
For each observed submatrix $\ma^{(i)}$ on $\mathcal{U}_i$, we compute the 
estimated latent position matrix $\hat\mx^{(i)}=\hat\muu^{(i)}|\hat\mLambda^{(i)}|^{1/2}$. Here 
$\hat\mLambda^{(i)}:=\diag(\hat\mLambda_+^{(i)},\hat\mLambda_-^{(i)})\in\mathbb{R}^{d\times d}$ and $\hat\mLambda_+^{(i)}$, 
$\hat\mLambda_-^{(i)}$ contain the $d_+$ largest positive and $d_{-}$ largest (in-magnitude) negative eigenvalues of $\ma^{(i)}$, respectively.
$\hat\muu^{(i)}:=[\hat\muu_+^{(i)},\hat\muu_-^{(i)}]$ contains the corresponding eigenvectors. 

We start with the \textit{noiseless} case to illustrate the main idea. Consider $2$ overlapping block-wise submatrices $\mpp^{(1)}$ and $\mpp^{(2)}$ as shown in Figure~\ref{fig:K=2}. 
Now
$$
\begin{aligned}
	\mx_{\mathcal{U}_1} \mi_{d_+,d_-}\mx_{\mathcal{U}_1}^\top= \mpp^{(1)}=\mx^{(1)}\mi_{d_+,d_-}\mx^{(1)\top},
	\quad \mx_{\mathcal{U}_2} \mi_{d_+,d_-}\mx_{\mathcal{U}_2}^\top=\mpp^{(2)}=\mx^{(2)}\mi_{d_+,d_-}\mx^{(2)\top},
\end{aligned}
$$
and hence there exist matrices $\mw^{(1)} \in \mathcal{O}_{d_{+}, d_{-}}$ and $\mw^{(2)} \in \mathcal{O}_{d_+, d_{-}}$ such that 
\begin{equation*}
\begin{aligned}
		\mx_{\mathcal{U}_1}=\mx^{(1)}\mw^{(1)},
		\quad \mx_{\mathcal{U}_2}=\mx^{(2)}\mw^{(2)}.
\end{aligned}
\end{equation*}
Here $\mathcal{O}_{d_+,d_-}:=\{\mo\in\mathbb{R}^{d \times d}\mid \mo\mi_{d_+,d_-}\mo^\top=\mi_{d_+,d_-}\}$ is the indefinite orthogonal group. Then
\begin{equation*}
\begin{aligned}
	\mpp_{\mathcal{U}_1,\mathcal{U}_2}
	&=
	\mx_{\mathcal{U}_1}\mi_{d_+,d_-}\mx_{\mathcal{U}_2}^\top =\mx^{(1)}\mw^{(1)}\mi_{d_+,d_-}\mw^{(2)\top}\mx^{(2)\top}
	=\mx^{(1)}\mw^{(1,2)}\mi_{d_+,d_-}\mx^{(2)\top},
\end{aligned}
\end{equation*}
where $\mw^{(1,2)} :=\mw^{(1)}(\mw^{(2)})^{-1}\in \mathcal{O}_{d_+,d_-}$. We can recover $\mw^{(1,2)}$ by aligning the latent positions for overlapping entities by
\begin{equation}
\label{eq:align_noiseless_indefinite}
\mw^{(1,2)}=\underset{\mo\in \mathcal{O}_{d_+,d_-}}{\operatorname{argmin}} \|\mx^{(1)}_{ \langle\mathcal{U}_1\cap \mathcal{U}_2\rangle}\mo-\mx^{(2)}_{ \langle\mathcal{U}_1\cap \mathcal{U}_2\rangle}\|_F.
\end{equation}
If $\mathrm{rk}(\mpp_{\mathcal{U}_1\cap \mathcal{U}_2,\mathcal{U}_1\cap \mathcal{U}_2}) = d$ then  
$\mw^{(1,2)} 
=
(\mx^{(1)}_{ \langle\mathcal{U}_1\cap \mathcal{U}_2\rangle})^{\dagger}\mx^{(2)}_{ \langle\mathcal{U}_1\cap \mathcal{U}_2\rangle}$ is the {\em unique} minimizer of Eq.~\eqref{eq:align_noiseless_indefinite}. 
Here $(\cdot)^{\dagger}$ denotes the Moore-Penrose pseudoinverse.

Now suppose $\ma^{(1)}$ and $\ma^{(2)}$ are noisy observations of $\mpp^{(1)}$ and $\mpp^{(2)}$. Let $\hat\mx^{(1)}$ and $\hat\mx^{(2)}$ be estimates of $\mx^{(1)}$ and $\mx^{(2)}$ as described above. 
Then to align $\hat\mx^{(1)}$ and $\hat\mx^{(2)}$, we can consider solving the indefinite orthogonal Procrustes problem
\begin{equation}
\label{eq:align_noisy_indefinite1}
\begin{aligned}
	\mw^{(1,2)}=\underset{\mo\in \mathcal{O}_{d_+,d_-}}{\operatorname{argmin}} \|\hat\mx^{(1)}_{ \langle\mathcal{U}_1\cap \mathcal{U}_2\rangle}\mo-\hat\mx^{(2)}_{ \langle\mathcal{U}_1\cap \mathcal{U}_2\rangle}\|_F.
\end{aligned}
\end{equation}
However, in contrast to the noiseless case, there is no longer an 
analytical solution to Eq.~\eqref{eq:align_noisy_indefinite1}. We thus replace Eq.~\eqref{eq:align_noisy_indefinite1} with  the unconstrained least squares problem 
$
	\mw^{(1,2)}
	=\underset{\mo\in \mathbb{R}^{d\times d}}{\operatorname{argmin}} \|\hat\mx^{(1)}_{ \langle\mathcal{U}_1\cap \mathcal{U}_2\rangle}\mo-\hat\mx^{(2)}_{ \langle\mathcal{U}_1\cap \mathcal{U}_2\rangle}\|_F,
$
whose solution is once again $\mw^{(1,2)} = (\hat\mx^{(1)}_{ \langle\mathcal{U}_1\cap \mathcal{U}_2\rangle})^{\dagger}\hat\mx^{(2)}_{ \langle\mathcal{U}_1\cap \mathcal{U}_2\rangle}$.
Given $\mw^{(1,2)}$, we estimate $\mpp_{\mathcal{U}_1, \mathcal{U}_2}$ by
$\hat\mpp_{\mathcal{U}_1,\mathcal{U}_2}=\hat\mx^{(1)}\mw^{(1,2)}\mi_{d_+,d_-}\hat\mx^{(2)\top}.$
Extending the above idea to a chain of overlapping submatrices is also straightforward. See Section~\ref{supp:indef} of the supplementary material for the detailed algorithm, associated theoretical result (Theorem~\ref{thm:R(il)_ind}) and numerical simulations. 
\subsection{CMMI for asymmetric matrices}

Data integration for asymmetric matrices has many applications including genomic data integration \citep{maneck2011genomic,
cai2016structured},
single-cell data integration \citep{stuart2019comprehensive,
ma2023your}.
Suppose $\mpp\in\mathbb{R}^{N\times M}$ is a low-rank matrix. 
 Let $d$ be the rank of $\mpp$, and write the singular decomposition of $\mpp$ as
 $\mpp=\muu\mSigma\mv^\top,$
 where $\mSigma\in\mathbb{R}^{d\times d}$ is a diagonal matrix whose diagonal entries are composed of the singular values of $\mpp$ in a descending order, and orthonormal columns of $\muu\in\mathbb{R}^{N\times d}$ and $\mv\in\mathbb{R}^{M\times d}$ constitute the corresponding left and right singular vectors, respectively.
The left and right latent position matrices associated to the entities can be represented by $\mx=\muu\mSigma^{1/2}\in\mathbb{R}^{N\times d}$ and $\my=\mv\mSigma^{1/2}\in\mathbb{R}^{M\times d}$, respectively. 
 For the $i$th source we denote the index set of the entities for rows and columns by $\mathcal{U}_i\subseteq [N]$ and $\mathcal{V}_i\subseteq [M]$, and let
$
\mpp^{(i)}=\mpp_{\mathcal{U}_i,\mathcal{V}_i}=\muu_{\mathcal{U}_i}\mSigma\mv_{\mathcal{V}_i}^\top
=\mx_{\mathcal{U}_i}\my_{\mathcal{V}_i}^\top.$ For each noisily observed realization $\ma^{(i)}$ of $\mpp^{(i)}$, we obtain the estimated left latent positions $\hat\mx^{(i)}=\hat\muu^{(i)}(\hat\mSigma^{(i)})^{1/2}$ for entities in $\mathcal{U}_i$ and right latent positions $\hat\my^{(i)}=\hat\mv^{(i)}(\hat\mSigma^{(i)})^{1/2}$ for entities in $\mathcal{V}_i$.

Suppose we want to recover the unobserved yellow submatrix in the left panel of Figure~\ref{fig:K=2a} as part of $\mpp_{\mathcal{U}_1,\mathcal{V}_2}$,
and only observe $\ma^{(1)}$ and $\ma^{(2)}$ as noisy versions of $\mpp^{(1)}$ and $\mpp^{(2)}$.
Suppose $\mathrm{rk}(\mx_{\mathcal{U}_1\cap \mathcal{U}_2}) = \mathrm{rk}(\my_{\mathcal{V}_1\cap \mathcal{V}_2}) = d$. We first align the estimated latent positions $(\hat\mx^{(1)},\hat\my^{(1)})$ and $(\hat\mx^{(2)},\hat\my^{(2)})$ 
by solving the least squares problems
$$
\begin{aligned}
	\mw_\mx^{(1,2)}
	=\underset{\mo\in \mathbb{R}^{d\times d}}{\operatorname{argmin}} \|\hat\mx^{(1)}_{ \langle\mathcal{U}_1\cap \mathcal{U}_2\rangle}\mo-\hat\mx^{(2)}_{ \langle\mathcal{U}_1\cap \mathcal{U}_2\rangle}\|_F, \quad
\mw_\my^{(1,2)\top}=\underset{\mo\in \mathbb{R}^{d\times d}}{\operatorname{argmin}} \|\hat\my^{(2)}_{ \langle\mathcal{V}_1\cap \mathcal{V}_2\rangle}\mo-\hat\my^{(1)}_{\langle\mathcal{V}_1\cap \mathcal{V}_2\rangle}\|_F,
\end{aligned}
$$
and setting
$
\mw^{(1,2)}=\frac{1}{2}(\mw_\mx^{(1,2)}+\mw_\my^{(1,2)})
$.
We then estimate $\mpp_{\mathcal{U}_1,\mathcal{V}_2}$ by
$\hat\mpp_{\mathcal{U}_1,\mathcal{V}_2}=\hat\mx^{(1)}\mw^{(1,2)}\hat\my^{(2)\top}$ (see detailed derivations in Section~\ref{supp:asy} of the supplementary material).
Note that the unobserved white submatrix in the left panel of Figure~\ref{fig:K=2a} is part of $\mpp_{\mathcal{U}_2,\mathcal{V}_1}$ and can
be recovered using the similar procedure. 

\begin{figure}[htp!] 
\centering
\subfigure{\includegraphics[width=4.8cm]{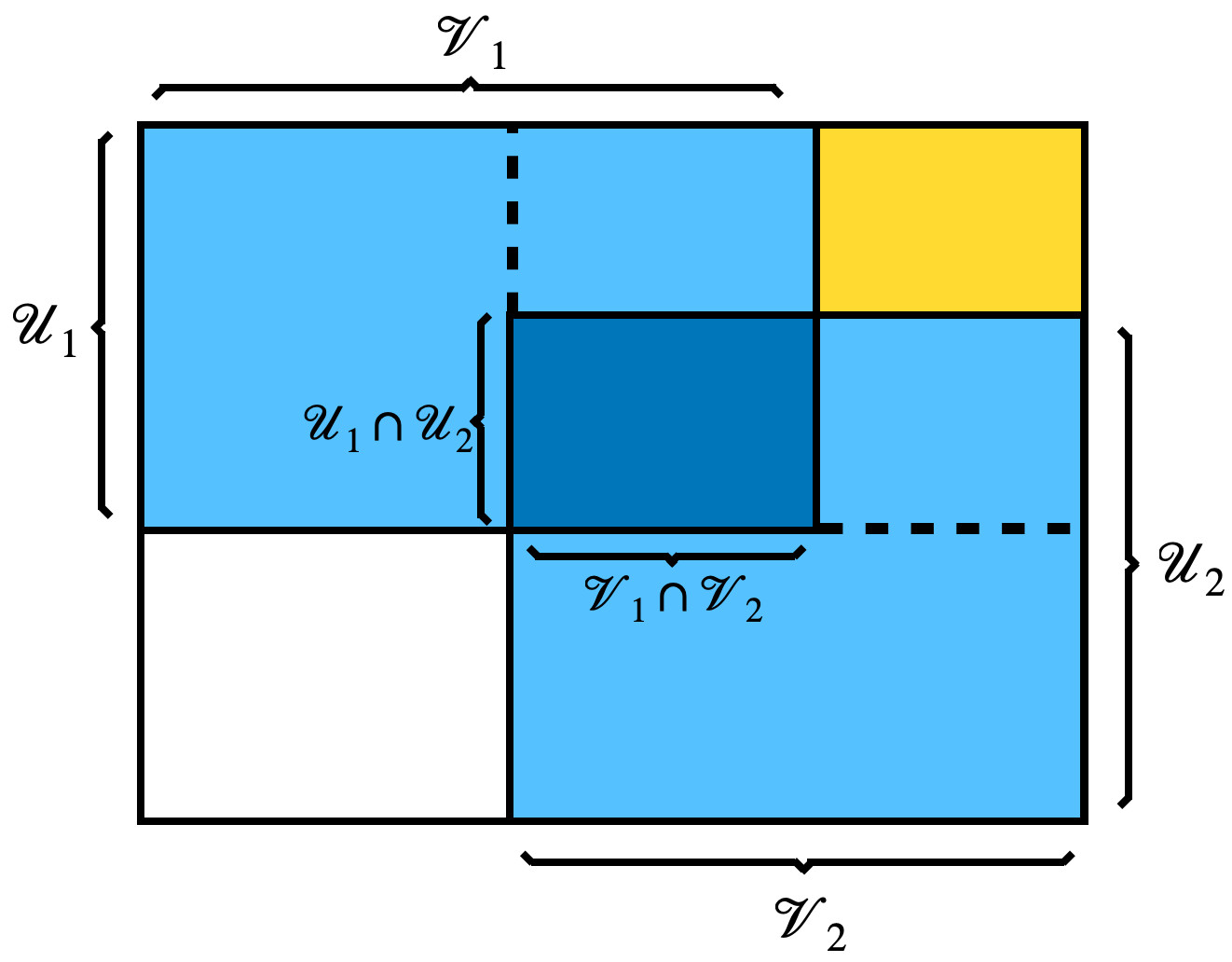}} \quad \quad
\subfigure{\includegraphics[width=4.8cm]{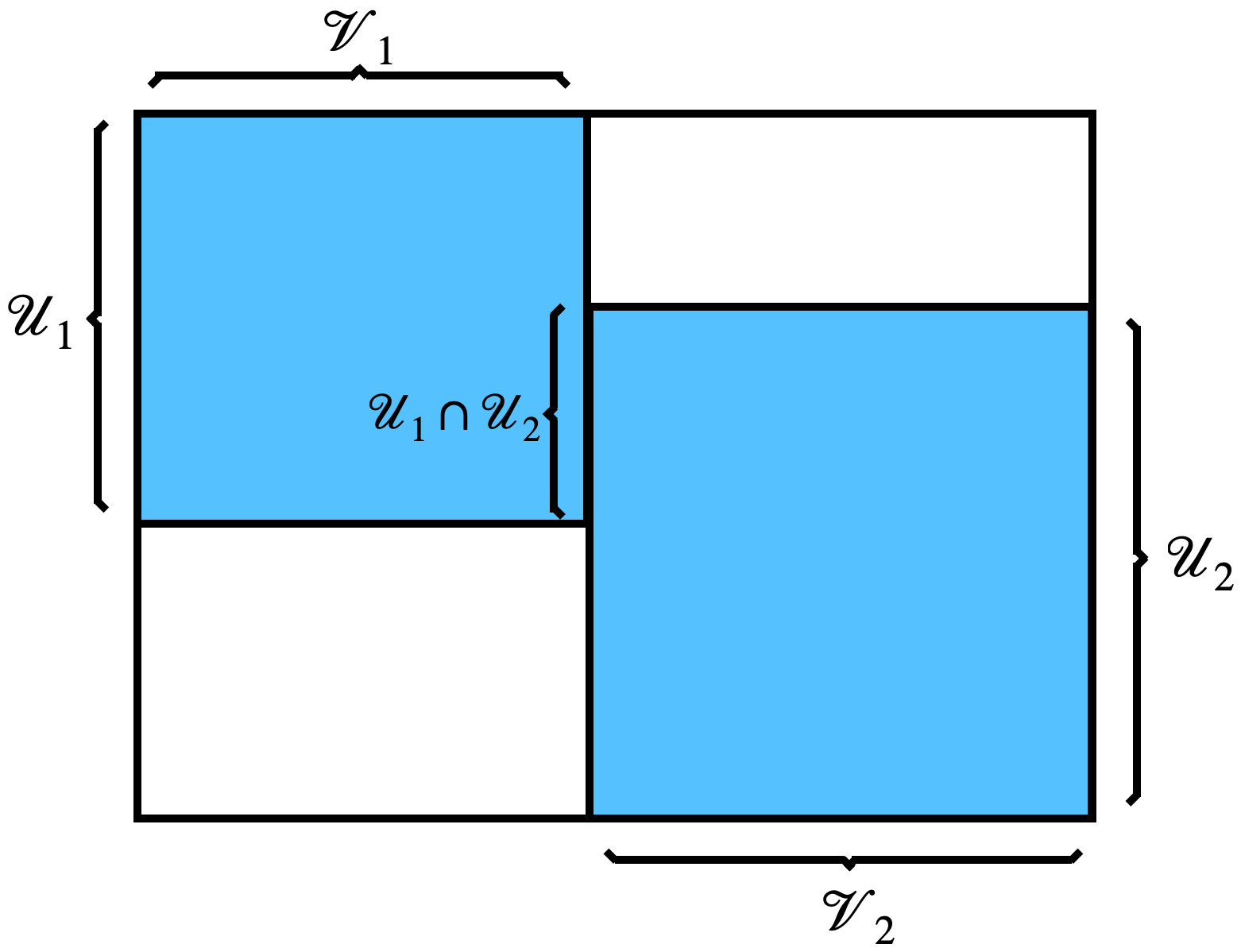}}
\caption{Left panel: a pair of overlapping observed submatrices of an asymmetric matrix. Right panel: overlapping row indices but no overlapping entries
}
\label{fig:K=2a}
\end{figure}

We emphasize that to integrate any two submatrices $\ma^{(1)}$
and $\ma^{(2)}$ of an asymmetric matrix, it is not necessary for them
to have any overlapping entries, i.e., it is not necessary that both
$\mathcal{U}_1\cap\mathcal{U}_2\neq \varnothing$ \textit{and}
$\mathcal{V}_1\cap\mathcal{V}_2\neq \varnothing$.  Indeed, if $|\mathcal{U}_1\cap\mathcal{U}_2| \geq d$,
\textit{or} (inclusive or) $|\mathcal{V}_1\cap\mathcal{V}_2| \geq d$
then we can recover $\mw^{(1,2)}$. Consider, for example, the situation in the right panel of Figure~\ref{fig:K=2a} and
suppose $\operatorname{rk}(\mx_{\mathcal{U}_1\cap\mathcal{U}_2})=d$. We can
then set $
\begin{aligned}
	\mw^{(1,2)}=\mw_\mx^{(1,2)}
	=\underset{\mo\in \mathbb{R}^{d\times d}}{\operatorname{argmin}} \|\hat\mx^{(1)}_{ \langle\mathcal{U}_1\cap \mathcal{U}_2\rangle}\mo-\hat\mx^{(2)}_{ \langle\mathcal{U}_1\cap \mathcal{U}_2\rangle}\|_F.
\end{aligned}
$
Extending the idea to a chain of overlapping submatrices is
straightforward; see Section~\ref{supp:asy} of the supplementary
material for the detailed algorithm and simulation results.

Finally,
we note that extending Theorem~\ref{thm:R(il)_ind} to the asymmetric setting is also straightforward if we assume the entries of
$\mathbf{N}^{(i)}$ are independent and that $|\mathcal{V}_i| \asymp
|\mathcal{U}_i|$ for all $i$. Indeed, we can simply apply
Theorem~\ref{thm:R(il)_ind} to the Hermitean dilations
of $\ma^{(i)}$. However, the asymmetric case also allows for 
richer noise models such as the rows of $\ma^{(i)}$ being independent
but the entries in each row are dependent, or imbalanced dimensions
where $|\mathcal{U}_i| \ll |\mathcal{V}_i|$ or vice versa. 
We leave theoretical results for these more general settings to future work. 

}

\newpage
 
\bibliographystyle{chicago}
\bibliography{ref}

\newpage

\begin{center}%
    {\LARGE Supplementary Material for ``Chain-lined Multiple Matrix Integration via Embedding Alignment"\par}%
  \end{center}

\counterwithin{figure}{section}
\counterwithin{theorem}{section}
\counterwithin{assumption}{section}
\counterwithin{algorithm}{section}

\appendix

\section{Additional Numerical Results}

\setcounter{figure}{0}

\subsection{Entrywise normal approximations}
We now compare the entrywise behavior of $\hat{\mpp}_{\mathcal{U}_{i_0}, \mathcal{U}_{i_L}} - \mpp_{\mathcal{U}_{i_0}, \mathcal{U}_{i_L}}$ against the limiting distributions in Theorem~\ref{thm:normal}. 
In particular, we plot in Figure~\ref{fig:simulation_normal} histograms (based on $1000$ independent Monte Carlo replicates) of the $(i,j)$th entries where $(i,j) \in \{(1,1),(1,2),(1,3)\}$, and it is clear that the empirical distributions in Figure~\ref{fig:simulation_normal} 
are well approximated by the normal distributions with parameters given in Theorem~\ref{thm:normal}.
\begin{figure}[htbp!] 
\centering
\subfigure{\includegraphics[width=4cm]{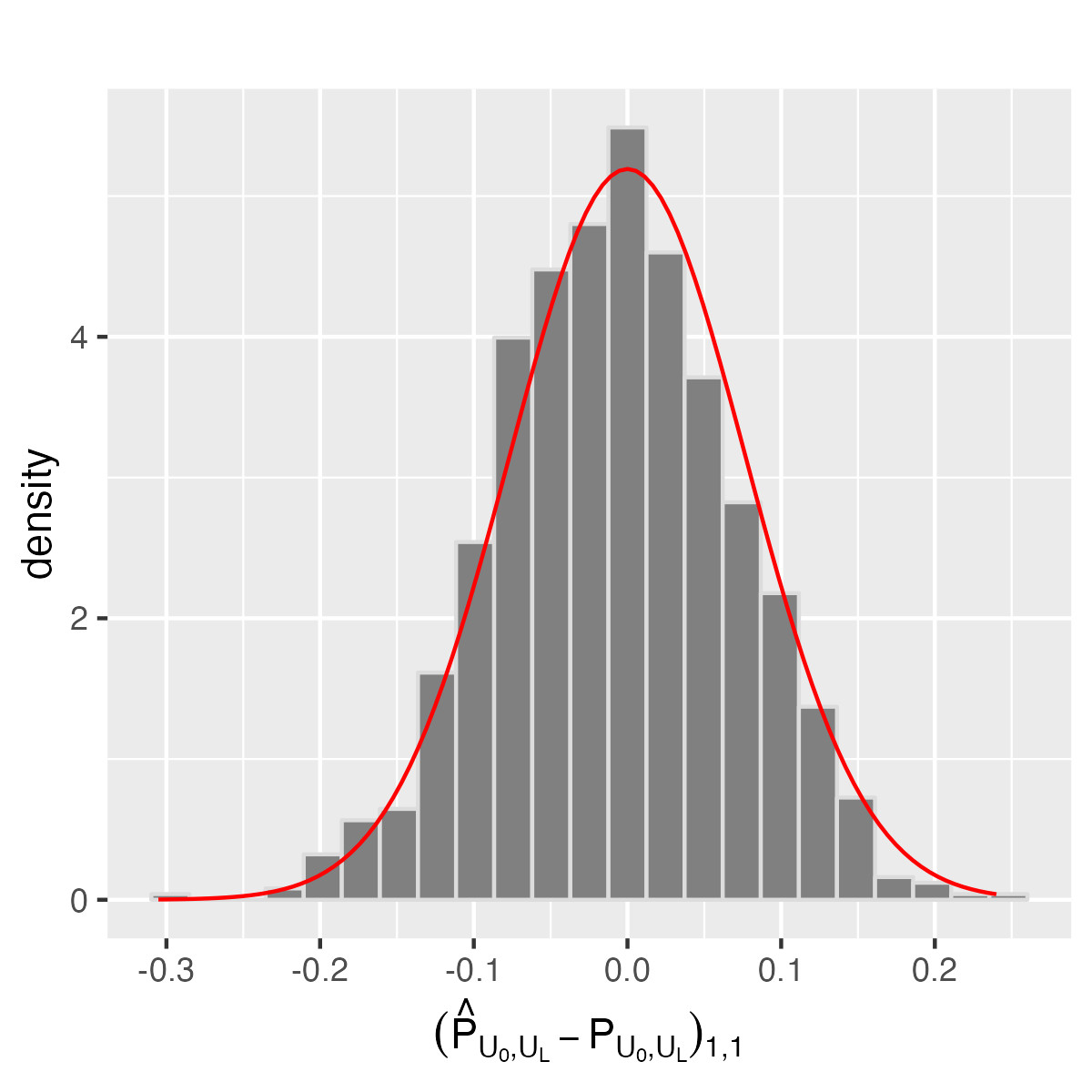}}
\subfigure{\includegraphics[width=4cm]{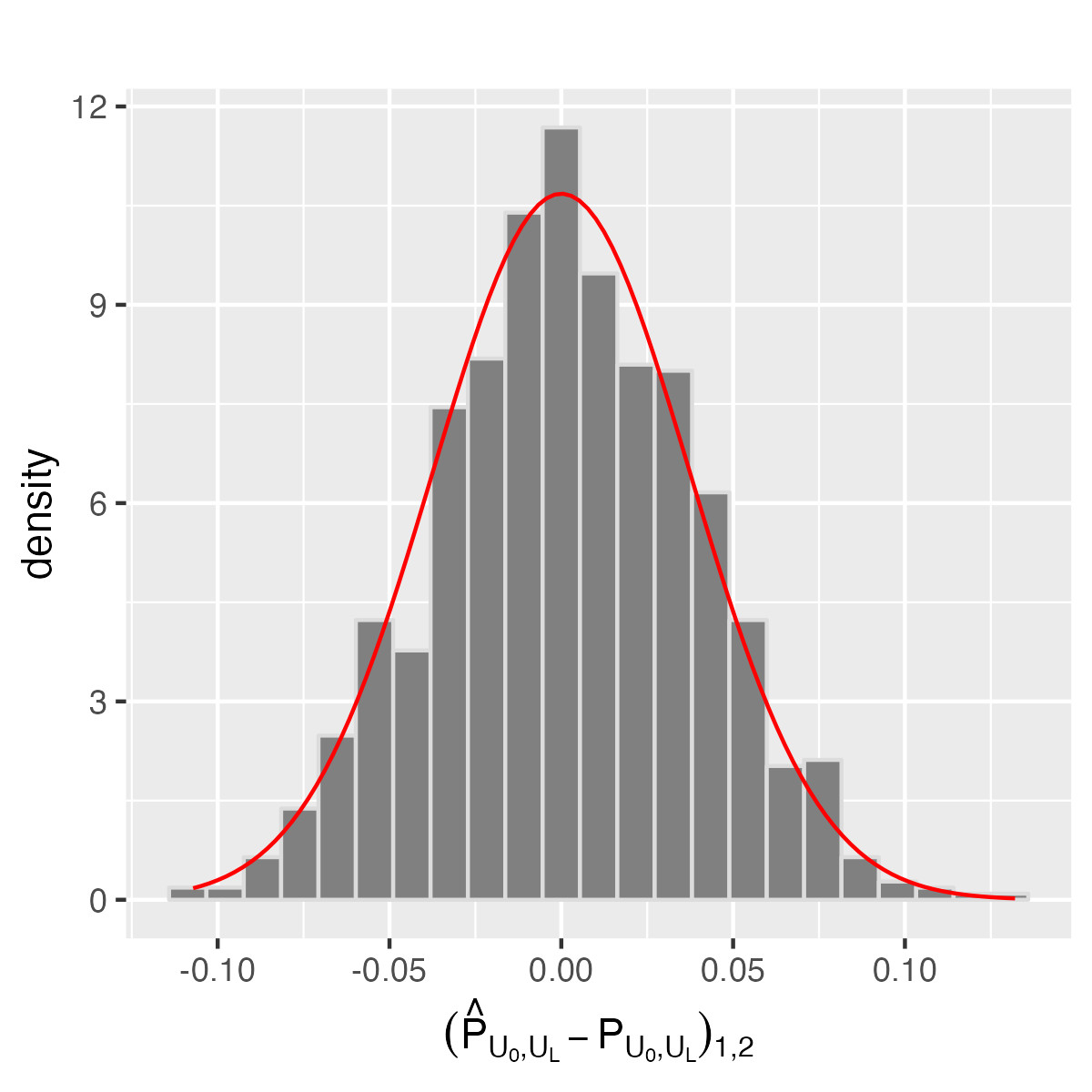}}
\subfigure{\includegraphics[width=4cm]{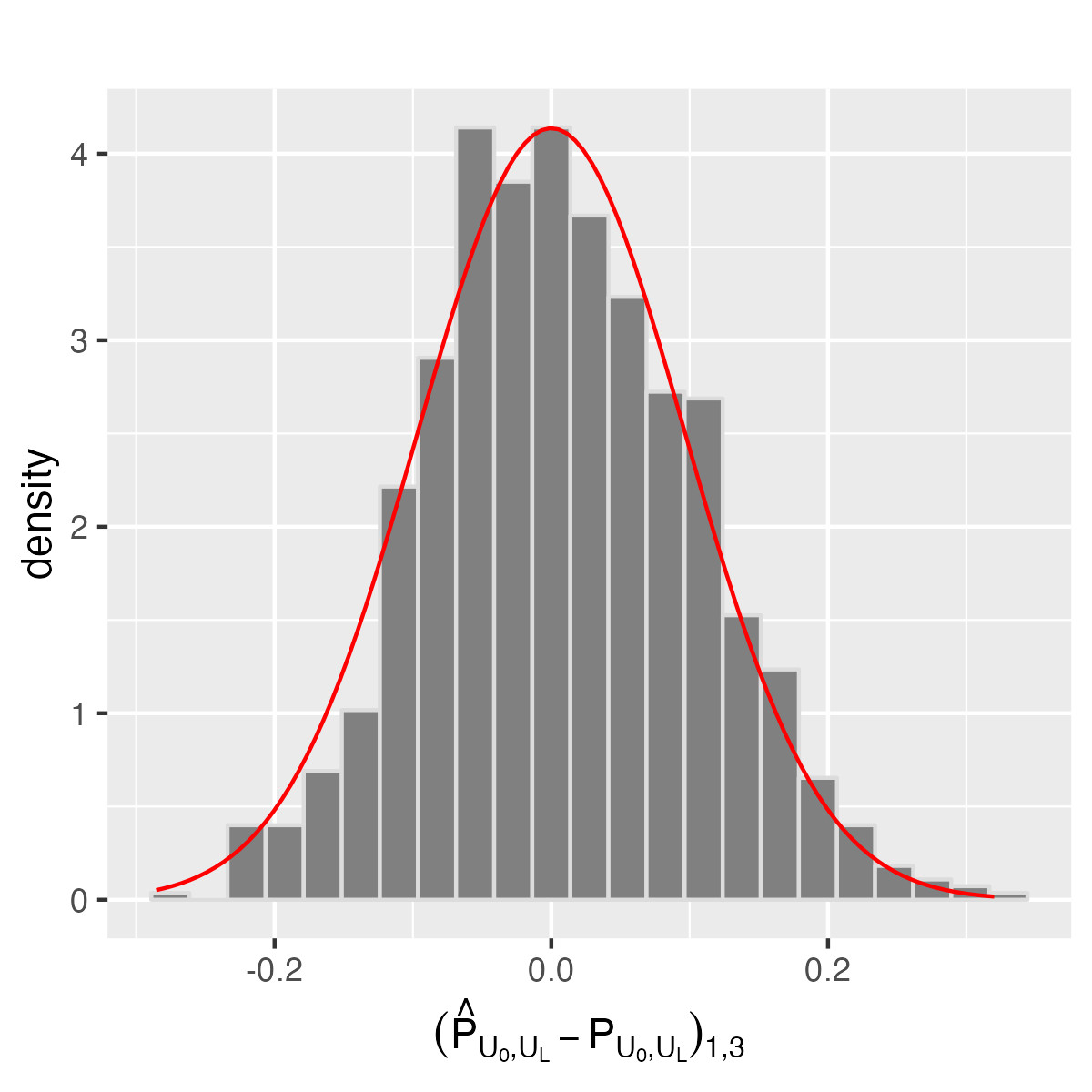}}
\caption{\footnotesize Histograms of the empirical distributions for the $(i,j)$th entry of $\hat\mpp_{\mathcal{U}_0,\mathcal{U}_L}-\mpp_{\mathcal{U}_0,\mathcal{U}_L}$; here $(i,j) = (1,1)$ (left panel), $(i,j) = (1,2)$ (middle panel), and $(i,j) = (1,3)$ (right panel).
These histograms are based on $1000$ independent Monte Carlo replicates where 
$N=
6000$, $p=0.3$, $\breve p=0.1$, $q=0.8$, $L=2$, and $\sigma=0.5$. 
The red lines are pdfs of the normal distributions with parameters given in Theorem~\ref{thm:normal}. }
\label{fig:simulation_normal}
\end{figure}

\subsection{Performance of CMMI with minimal overlaps}
\label{sec:simu limit overlap}
		
We now examine the performance of CMMI when the overlap between the submatrices are very small. More specifically, we use the setting from Section~\ref{sec:comp} with $L = 2$ and $|\mathcal{U}_0 \cap \mathcal{U}_1| = |\mathcal{U}_1 \cap \mathcal{U}_2| = 3$; as $\mathrm{rk}(\mpp) = 3$, this is the smallest overlap for which the latent positions for the $\{\hat\mx^{(i)}\}_{i=0}^{L}$ can still be aligned. 
We fix $q = 0.8, \sigma = 0.5$ and compute the estimation error $\|\hat\mpp_{\mathcal{U}_{0},\mathcal{U}_{L}} - \mpp_{\mathcal{U}_{0},\mathcal{U}_{L}}\|_{\max}$ for several values of $n$. The results are summarized in Figure~\ref{fig:simulation_overlap d}. Note that the slope of the line in the left panel of Figure~\ref{fig:simulation_overlap d} is approximately the same as the theoretical error rate of $\|\hat\mpp_{\mathcal{U}_{0},\mathcal{U}_{L}} - \mpp_{\mathcal{U}_{0},\mathcal{U}_{L}}\|_{\max} \lesssim n^{-1/2} \log^{1/2}{n}$ in Remark~\ref{rem:example1_setting}. In summary, CMMI can integrate arbitrarily large submatrices even with very limited overlap.
\begin{figure}[htbp!] 
\centering
\subfigure{\includegraphics[height=4.1cm]{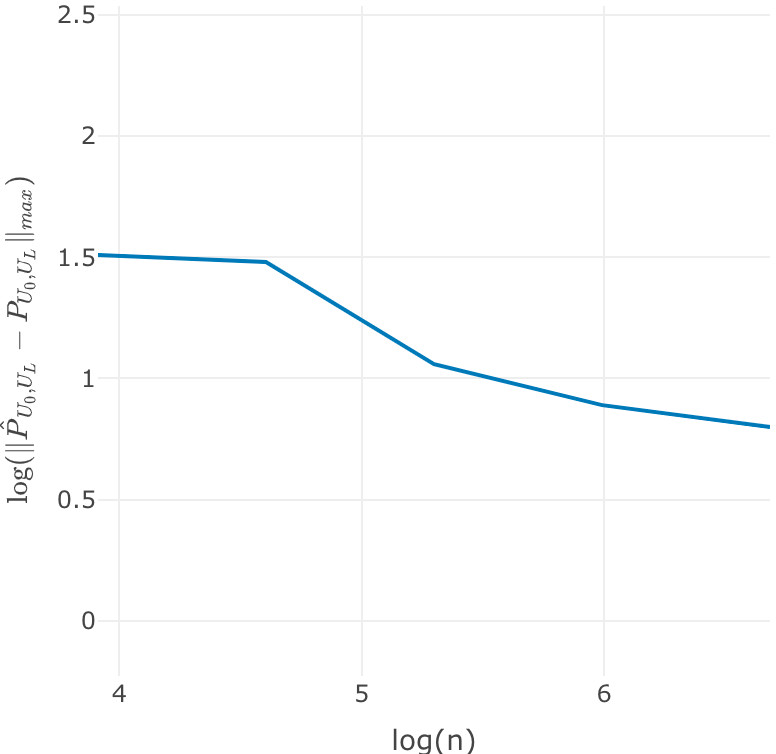}}
\subfigure{\includegraphics[height=4.1cm]{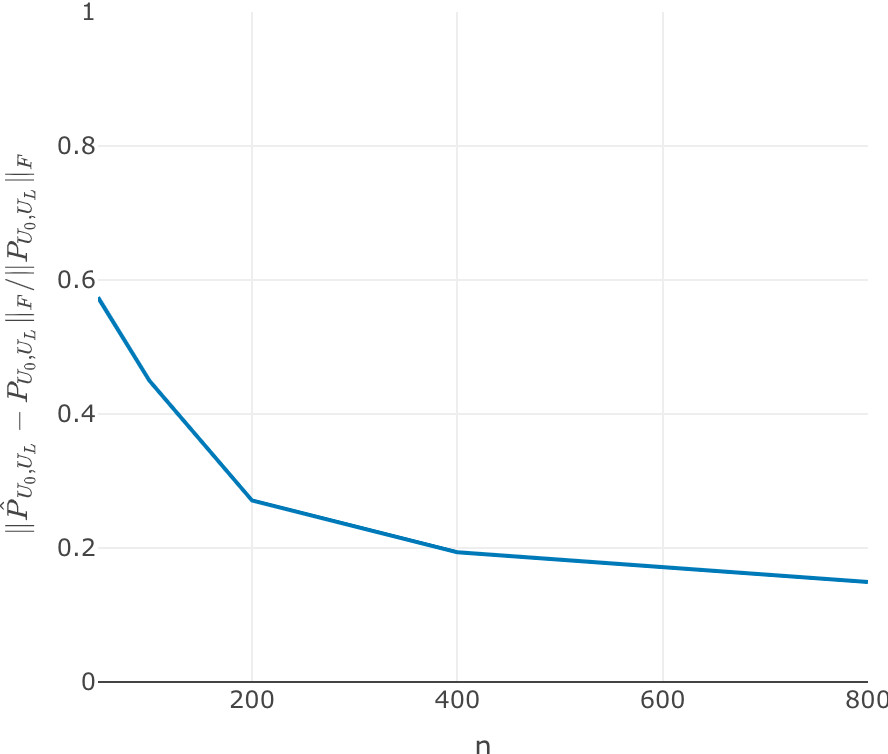}}
\caption{\footnotesize The left panel is a log-log plot of the empirical error rate for $\|\hat\mpp_{\mathcal{U}_{0},\mathcal{U}_{L}} - \mpp_{\mathcal{U}_{0},\mathcal{U}_{L}}\|_{\max}$ as we vary the values of $n \in \{50, 100, 200, 400, 800\}$ while fixing the overlap size as $d=3$, with $L=2$, $q=0.8$, and $\sigma=0.5$.
Right panel reports the empirical error rate for $\|\hat\mpp_{\mathcal{U}_{0},\mathcal{U}_{L}}-\mpp_{\mathcal{U}_{0},\mathcal{U}_{L}}\|_F/\|\mpp_{\mathcal{U}_{0},\mathcal{U}_{L}}\|_F$.
Results are based on $100$ independent Monte Carlo replicates.
}
\label{fig:simulation_overlap d}
\end{figure}

\subsection{Comparison of the recovery of each $\mx^{(i)}$ with other algorithms}
\label{sec:simu initilization}

We use the same setting as in Section~\ref{sec:simu1}, but consider only a single observed block of size $n$. We evaluate how different algorithms recover the corresponding latent position matrix $\mx$ of the block.
The SVD-based algorithm computes $\hat\mx$ as the scaled leading eigenvectors of $\ma$ in Eq.~\eqref{eq:A(i)=...}, and for other matrix completion methods, $\hat\mx$ is obtained as the scaled leading eigenvectors of the recovered matrix.
Our performance metric for recovering the latent position matrix is in terms of the relative Frobenius norm error 
$\min_{\mw\in\mathcal{O}_d}\|\hat\mx\mw-\mx\|_F/\|\mx\|_F$. 
Plots of the error rates (averaged over $100$ independent Monte Carlo replicates) for different algorithms and their running times are presented in the left and right panels of Figure~\ref{fig:simulation_initialization}, respectively.
Figure~\ref{fig:simulation_initialization} shows that the SVD-based algorithm achieves comparable recovery accuracy relative to other matrix completion methods while being computationally efficient.
\begin{figure}[h!] 
\centering
\subfigure{\includegraphics[height=4.5cm]{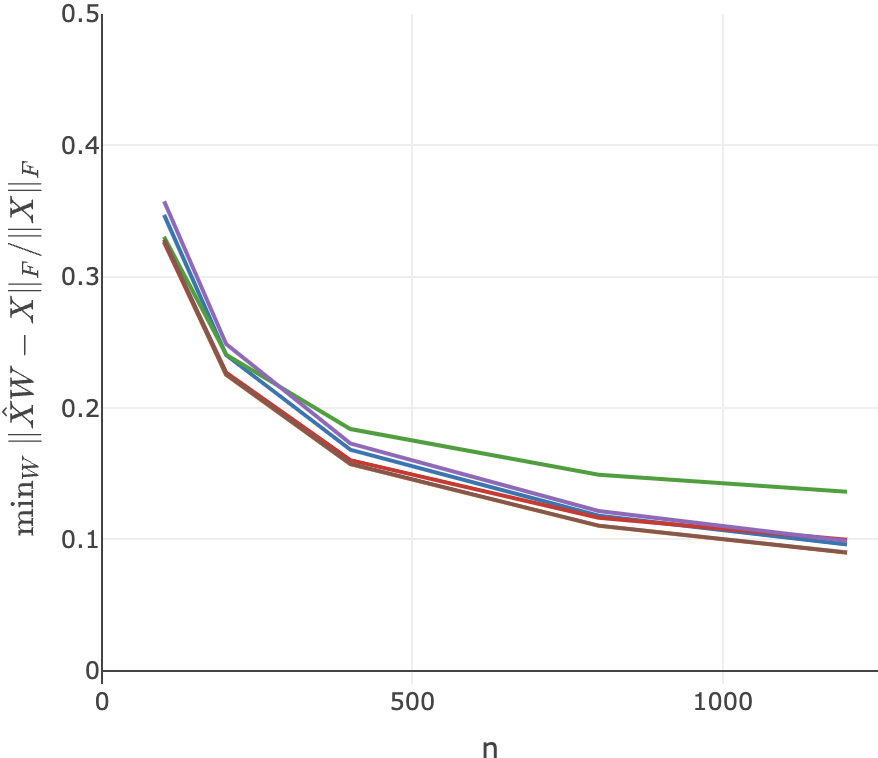}}
\subfigure{\includegraphics[height=4.5cm]{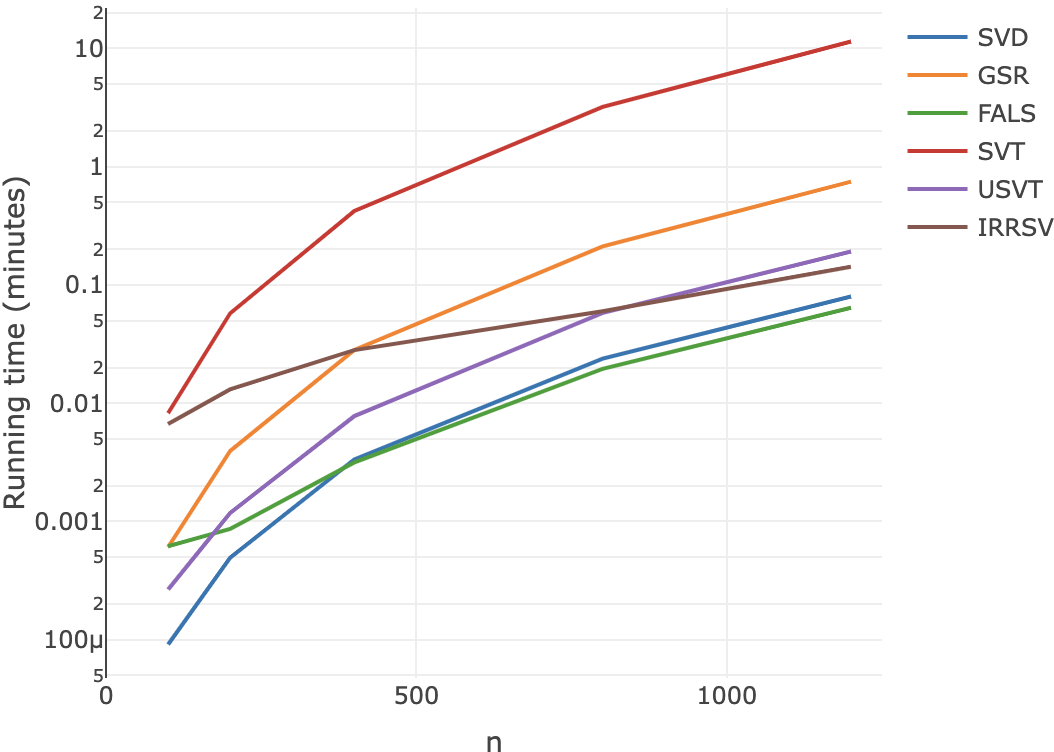}}
\caption{
\footnotesize The left panel reports empirical errors $\min_{\mw\in\mathcal{O}_d}\|\hat\mx\mw-\mx\|_F/\|\mx\|_F$ for the SVD-based algorithm and other matrix completion algorithms as we vary $n \in \{100,200,400,800,1200\}$ while fixing $q=0.8$, $\sigma=2$.
The results are averaged over $100$ independent Monte Carlo replicates.
The average running time (in log scale) over 100 replicates for algorithms, using 25-core parallel computing and 256 GB memory, is shown in the right panel. 
}
\label{fig:simulation_initialization}
\end{figure}

In some cases one may initialize
$\{\hat{\mx}^{(i)}\}$ using other matrix
completion algorithms to obtain slight improvements in
the joint integration. These gains are, however, limited (as 
Section~\ref{sec:comparison} shows that initialization using SVD-based algorithm is rate-optimal) while also being more computationally costly.

\subsection{Real data experiment: MNIST}
\label{sec:MNIST}

We compare the performance of CMMI against other matrix completion algorithms on the MNIST database of grayscale images.
 The MNIST database consists of $60000$ grayscale images of handwritten digits for the numbers $0$ through $9$. Each image is of size $28 \times 28$ pixels and can be viewed as a vector in $\{0,1,\dots,255\}^{784}$. Let $\mathbf{Y}$ denote the $60000 \times 784$ matrix whose rows represent these images, where each row is normalized to be of unit norm. 
We consider a chain of $L+1$ overlapping blocks, each block corresponding to a partially observed (cosine) kernel matrix for some subset of $n = 1000$ images. More specifically, 
\begin{enumerate}
\item for each $0 \leq i \leq L$ we generate a $n \times 784$ matrix $\my^{(i)}$ whose rows are sampled independently and {\em uniformly} from rows of $\my$ corresponding to one of the digits $\{0,1,2\}$, with the last $n\breve p$ rows of $\my^{(i-1)}$ and the first $n\breve p$ rows of $\my^{(i)}$ having the same labels;
\item we set $\mpp^{(i)} = \my^{(i)} \my^{(i)\top}$ for all $0 \leq i \leq L$;
\item finally, $\ma^{(i)} = \mpp^{(i)} \circ \bm{\Omega}^{(i)}$ where $\bm{\Omega}^{(i)}$ is a $n \times n$ symmetric matrix whose upper triangular entries are i.i.d. Bernoulli random variables with success probability $q$. 
\end{enumerate}

Given above collection of $\{\ma^{(i)}\}_{0 \leq i \leq L}$, we compare the accuracy for jointly clustering the images in the first and last blocks. In particular, for CMMI we first construct an embedding $\hat{\mx}^{(i)} \in \mathbb{R}^{n \times d}$ using the $d$ leading scaled eigenvectors of $\ma^{(i)}$ for each $0 \leq i \leq L$.
We specify $d=36$ for CMMI, where this choice 
is based on applying the dimensionality selection procedure of
\cite{zhu2006automatic} to $\my$.
We then align $\hat{\mathbf{X}}^{(0)}$ to $\hat{\mathbf{X}}^{(L)}$ via
$\breve{\mathbf{X}}^{(0)} := \hat{\mathbf{X}}^{(0)} \mathbf{W}^{(0,1)} \cdots \mathbf{W}^{(L-1,L)}$,
and concatenate the rows of $\breve{\mathbf{X}}^{(0)}$ and $\hat{\mathbf{X}}^{(L)}$ into a $2n \times d$ matrix $\mathbf{Z}^{(0,L)}$. We cluster the rows of $\mathbf{Z}^{(0,L)}$ into three groups using $K$-means, and evaluate clustering accuracy against the true labels $\ell \in \{0,1,2\}$ using the Adjusted Rand Index (ARI).
		Note that
 ARI values range from $-1$ to $1$, with higher values indicating closer alignment between two sets of labels. 
 For the other low-rank matrix completion algorithms, we reconstruct $\mathbf{P}$ from $\{\mathbf{A}^{(i)}\}_{i=0}^{L}$ using $d = 36$ for FALS, and $d = 3$ for GSR, as the running time of GSR increases substantially with larger values of $d$.
 Letting $\hat{\mathbf{P}}$ denote the resulting estimate, we then compute $\hat{\mathbf{X}}$ such that $\hat{\mathbf{X}} \hat{\mathbf{X}}^{\top}$ is the best rank-$36$ approximation to $\hat{\mathbf{P}}$ in Frobenius norm (among all positive semidefinite matrices). We extract the $2n$ rows of $\hat{\mathbf{X}}$ corresponding to the images in $\mathbf{A}^{(0)}$ and $\mathbf{A}^{(L)}$, cluster these rows into 3 groups using $K$-means, and compute the Adjusted Rand Index (ARI) of the resulting cluster assignments.
%
%
%


Comparisons between the ARIs of CMMI and other matrix completion algorithms, for different numbers of submatrices $L$, are summarized in Figure~\ref{fig:realdata_MNIST}. 
We observe that CMMI outperforms all competing methods on this dataset.
CMMI also has strong advantages in computational efficiency; see Section~\ref{sec:time comparison} for details.

\begin{figure}[htbp!] 
\centering
\subfigure{\includegraphics[height=5cm]{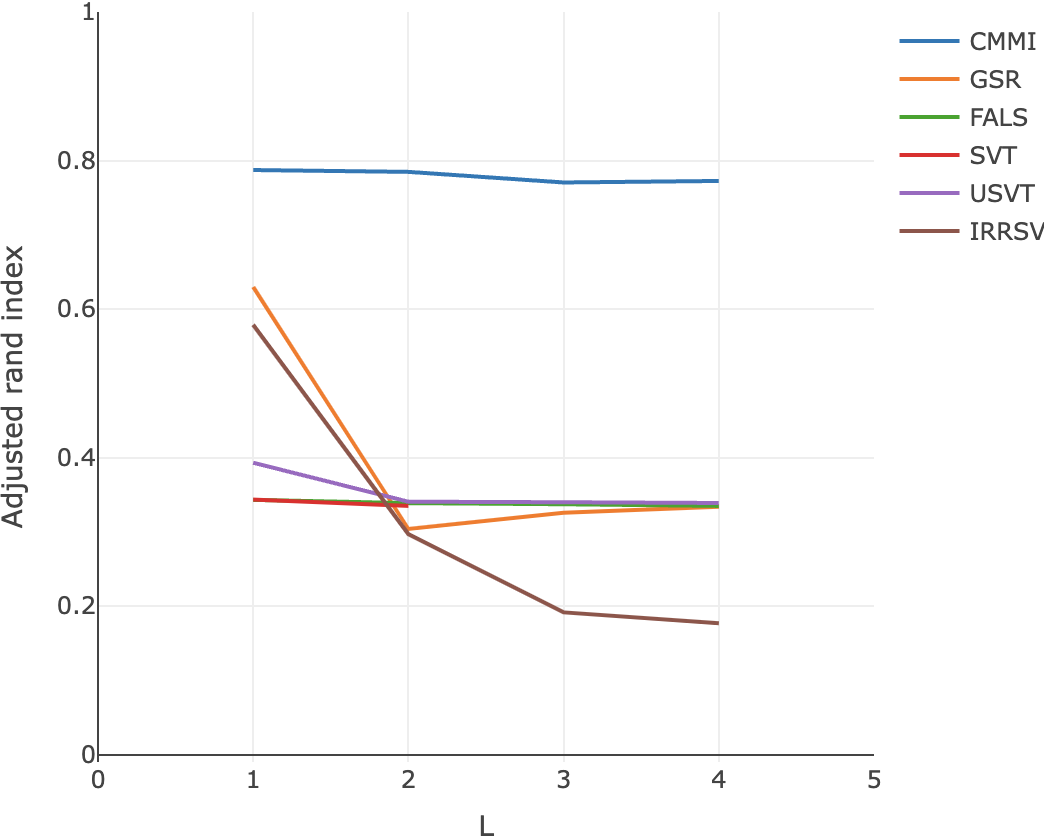}}
\caption{
\footnotesize{
ARIs for joint clustering of $(\hat{\mx}^{(0)}, \hat{\mx}^{(L)})$ for subsets of the MNIST dataset using CMMI and other matrix completion algorithms as we vary $L \in \{1,2,3,4\}$ while fixing $n=1000$, $\breve p=0.1$, $q=0.8$.
The results are averaged over $100$ independent Monte Carlo replicates.
We only evaluate the performance of SVT for $L\leq 2$, as these algorithms are computationally prohibitive with even slight increases in $L$.}
}
\label{fig:realdata_MNIST}
\end{figure}

\setcounter{figure}{0}
\setcounter{algorithm}{0}

\newpage 

{\color{black}

 \section{Integration of Multiple Matrices with Complex Connectivity}
 \label{sec:integration}
 
If we are given a chain of overlapping submatrices of some larger matrix $\mathbf{P}$, then Algorithm~\ref{Alg_chain} provides a simple and computationally efficient procedure for recovering the unobserved regions of $\mathbf{P}$. 
In practice, the structure of the observed submatrices can be more complex than a simple chain. Building on the idea of Algorithm~\ref{Alg_chain}, which can be used to integrate any pair of connected observed submatrices, we now introduce a procedure for integrating submatrices with arbitrary overlap structures. 
The procedure is illustrated for positive semidefinite matrices, and it can be easily extended to the cases of symmetric indefinite matrices and asymmetric or rectangular matrices. 


Suppose we have observed submatrices $\ma^{(1)},\ma^{(2)},\dots,\ma^{(K)}$ for $\mathcal{U}_{1},\mathcal{U}_{2},\dots,\mathcal{U}_{K}\subset [N]$ and want to integrate them.
Given Algorithm~\ref{Alg_chain}, a straightforward idea is to sequentially integrate each pair of connected submatrices along a chain connecting them. However, this strategy can lead to a significant amount of redundant computation. We now describe a more efficient approach that simplifies the integration process and allows all observed submatrices to be integrated simultaneously.

We consider the following undirected graph $\mathcal{G}$ to facilitate the integration process. Specifically, $\mathcal{G}$ has $K$ vertices $\{v_1, \dots, v_K\}$, where each vertex $v_i$ corresponds to the observed submatrix $\ma^{(i)}$ with estimated latent position matrix $\hat\mx^{(i)}$, and $v_i$ and $v_j$ are adjacent if and only if $\mathcal{U}_i$ and $\mathcal{U}_j$ are overlapping, i.e., $|\mathcal{U}_i \cap \mathcal{U}_j| \geq d$.
For each pair of adjacent vertices $v_i$ and $v_j$ in $\mathcal{G}$, we can compute an orthogonal matrix $\mw^{(i,j)}$ to align $\hat\mx^{(i)}$ and $\hat\mx^{(j)}$.

For any pair of vertices $v_i$ and $v_j$ in $\mathcal{G}$, if they are connected, meaning there exists a path between them, then the corresponding submatrices along this path form a chain that can be used to integrate $\hat\mx^{(i)}$ and $\hat\mx^{(j)}$.
In the following, we assume that $\mathcal{G}$ is connected, so that all latent position estimates $\{\hat\mx^{(i)}\}_{i \in [K]}$ can be integrated. If $\mathcal{G}$ is not connected, the integration procedure can be applied separately to each connected component.

Suppose we have a graph $\mathcal{G}$ as visualized in panel~(a) of Figure~\ref{fig:holistic1}. Note that $\mathcal{G}$ in panel~(a) of Figure~\ref{fig:holistic1} contains cycles, which means there exists at least one pair of vertices $v_i$ and $v_j$ with multiple paths connecting them. If, instead, there is a unique path between every pair of vertices, then all the latent position matrices $\{\hat\mx^{(i)}\}_{i \in [K]}$ can be consistently aligned, allowing all unobserved entries to be recovered simultaneously.
To resolve this issue, we consider a spanning tree of $\mathcal{G}$, as illustrated in panel~(b) of Figure~\ref{fig:holistic1}.
While Theorem~\ref{thm:R(i0,...,iL)} shows that the choice of spanning tree has a negligible effect on the estimation error, we may still prefer to construct a tree such that the paths pass through vertices with smaller estimation errors.
This can be achieved by setting the weight of any edge $e_{i,j}$ in $\mathcal{G}$ to $c_i+c_j$, where $c_i$ reflects the magnitude of the error for $\hat\mx^{(i)}$ as an estimate of $\mx_{\mathcal{U}_i}$. Specifically, motivated by Lemma~\ref{lemma:hat X(i)W(i)-X}, we define
$$c_i:=\|(\ma^{(i)}-\hat\mpp^{(i)})\hat\mx^{(i)}(\hat\mx^{(i)\top}\hat\mx^{(i)})^{-1}\|_F/n_i^{1/2},$$
where $\hat\mpp^{(i)}=\hat\mx^{(i)}\hat\mx^{(i)\top}$.
A minimum spanning tree (MST) of $\mathcal{G}$ is then computed based on these edge weights; see panel~(c) of Figure~\ref{fig:holistic1}.
\begin{figure}[htbp!] 
\centering
\subfigure[]{\includegraphics[height=4.3cm]{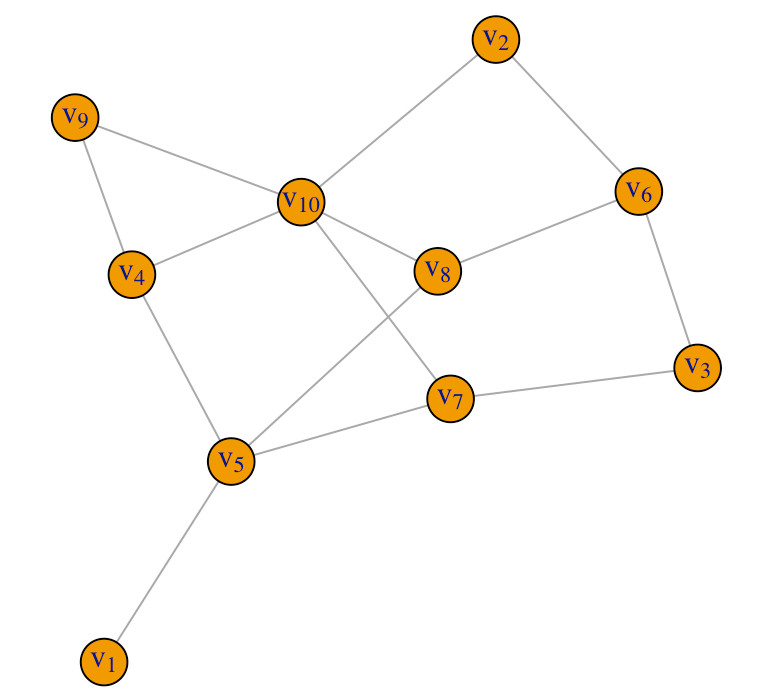}}
\subfigure[]{\includegraphics[height=4.3cm]{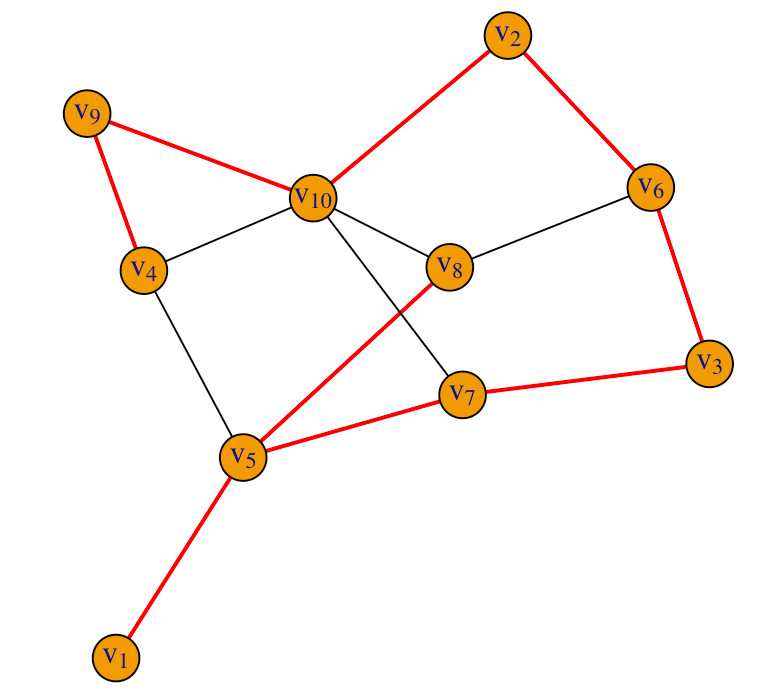}}
\subfigure[]{\includegraphics[height=4.3cm]{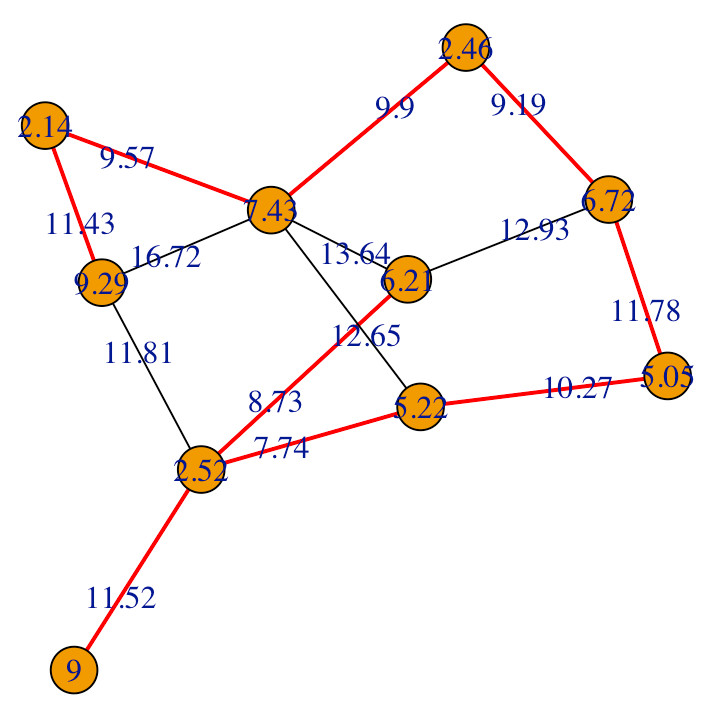}}
\caption{\footnotesize An example of a graph $\mathcal{G}$ described in Section~\ref{sec:integration} (see panel (a)). Panel (b) gives a spanning tree of $\mathcal{G}$. In Panel (c), we suppose the vertices have the estimation error magnitudes $\{c_i\}$ as shown in the labels of vertices. The edges weights between vertices $i$ and $j$ are given by $c_i+c_j$, and we highlight the minimum spanning tree of $\mathcal{G}$ using red colored lines.}
\label{fig:holistic1}
\end{figure}

Given a minimum spanning tree, we randomly select a reference index $i_\star$ and align the remaining latent position estimates $\{\hat\mx^{(i)}\}_{i \ne i_\star}$ to $\hat\mx^{(i_\star)}$ using the unique paths in the tree. Note that for some entities $\ell \in [N]$, we may obtain multiple estimated latent positions from different submatrices, denoted by $\{\hat\mx^{(i)}_{\langle \ell \rangle}\}_{\ell \in \mathcal{U}i}$. In such cases, we compute a weighted average of these estimates to obtain the integrated estimated latent position for entity $\ell$, where the weight assigned to each $\hat\mx^{(i)}_{\langle \ell \rangle}$ is given by $1 / c_i^2$.
See Algorithm~\ref{Alg_holistic} for  details.

Overall, the generalized CMMI procedure in Algorithm~\ref{Alg_holistic} provides a principled approach for aligning entity embeddings across arbitrarily connected submatrices and for aggregating these aligned embeddings into a unified representation.

\begin{algorithm}[htbp!]
\caption{Algorithm for holistic recovery}	
\label{Alg_holistic}
\begin{algorithmic}
\REQUIRE 
\small
Embedding dimension $d$, observed submatrices $\ma^{(1)},\ma^{(2)},\dots,\ma^{(K)}$ for $\mathcal{U}_{1},\mathcal{U}_{2},\dots,\mathcal{U}_{K}\subset [N]$. 

\STATE 
\vspace{0.1cm}
\textbf{Step 1 Constructing the weighted graph $\mathcal{G}$: }

\begin{enumerate}
	\item $\mathcal{G}$ have $K$ vertices $v_1,\dots,v_K$, and $v_i,v_j$ are adjacent if and only if
 $|\mathcal{U}_i\cap\mathcal{U}_j|\geq d$. 

\item For each $i\in [K]$, obtain estimated latent positions for $\mathcal{U}_i$, denoted by $\hat\mx^{(i)}$, and compute $c_i=\|(\ma^{(i)}-\hat\mpp^{(i)})\hat\mx^{(i)}(\hat\mx^{(i)\top}\hat\mx^{(i)})^{-1}\|_F/n_i^{1/2}$.


\item Set the weight of each edge $e_{i,j}$ as $c_i+c_j$.
\end{enumerate}


\STATE \textbf{Step 2 Obtaining the aligned latent position estimates $\{\tilde\mx^{(i)}\}_{i\in[K]}$:}

\begin{enumerate}

\item Find the minimum spanning tree of $\mathcal{G}$ by Prim's algorithm or Kruskal's algorithm, and denote its edge set by $E_{MST}$.

\item For each edge $e_{i,j}\in E_{MST}$, obtain $\mw^{(i,j)}$ via the orthogonal Procrustes problem \[\mw^{(i,j)}=\underset{\mo\in\mathcal{O}_d}{\operatorname{argmin}}\|\hat\mx^{(i)}_{\langle\mathcal{U}_i\cap \mathcal{U}_j\rangle}\mo-\hat\mx^{(j)}_{\langle\mathcal{U}_i\cap \mathcal{U}_j\rangle}\|_F.\]

\item Choose one of the vertex denoted by ${i_\star}$ (for example, $i_\star=1$), and let $\tilde\mx^{(i_\star)}=\hat\mx^{(i_\star)}$.

\item For each $i\in [K]\backslash \{i_\star\}$, apply Breadth-First Search (BFS) to find a path from $i$ to $i_\star$, denoted by 
$(i_0=i,i_1,\dots, i_L=i_\star)$, and let $\tilde\mx^{(i)}=\hat\mx^{(i)}\mw^{(i_0,i_1)}\cdots\mw^{(i_{L-1},i_L)}$.

\end{enumerate}

\STATE \textbf{Step 3 Obtaining the holistic latent position estimate $\tilde\mx\in\mathbb{R}^{N\times d}$:}

For each $\ell\in[N]$, compute $S=\sum_{\ell\in\mathcal{U}_i}c_i^{-2}$, and compute the holistic estimated latent position as
$$
\tilde\mx_\ell
= \sum_{\ell\in\mathcal{U}_i} (c_i^{-2} /S)
    \hat\mx^{(i)}_{\langle \ell \rangle}.
$$

\ENSURE $\hat\mpp = \tilde{\mx} \tilde{\mx}^{\top}$.
\end{algorithmic}
\end{algorithm}

Our theoretical results and the analysis of Algorithm~\ref{Alg_chain} can be naturally extended to Algorithm~\ref{Alg_holistic}, albeit at the cost of more involved expressions and notations. In particular, as the holistic latent position estimate for some entities is computed as a weighted average of individual latent position estimates,
    the (entrywise) estimation error of the final $\hat\mpp$ is a weighted average of the errors from the individual estimates. 
More specifically, recall that the dominant term in the error for each individual estimate is
$\me^{({i_0})}\mx_{\mathcal{U}_{i_0}}(\mx_{\mathcal{U}_{i_0}}^\top\mx_{\mathcal{U}_{i_0}})^{-1} \mx_{\mathcal{U}_{i_L}}^\top
	    +\mx_{\mathcal{U}_{i_0}} (\mx_{\mathcal{U}_{i_L}}^\top\mx_{\mathcal{U}_{i_L}})^{-1} \mx_{\mathcal{U}_{i_L}}^\top \me^{({i_L})}$. 
            The dominant term in the error for the holistic latent position estimate is then a weighted average of multiple independent noise terms corresponding to different $(\me^{(i_0)},\me^{(i_L)})$ pairs. As a result, the holistic estimate, which aggregates information from multiple blocks, can yield significantly reduced error compared to individual
            estimates and lead to more accurate recovery.
            Furthermore, as the dominant error term is a weighted sum of independent terms,
            one can derive an entrywise normal approximation for the estimation error of $\hat\mpp$ using a similar analysis to that in Theorem~\ref{thm:normal} (but with a more complicated expression for the variance $\tilde{\sigma}_{s,t}^2$)


Compared with BONMI in \cite{zhou2021multi}, which integrates overlapping submatrices pairwise and selects the estimate for each unobserved entity based on the pair with the lowest sum of estimated noise levels, the generalized CMMI algorithm in Algorithm~\ref{Alg_holistic} provides a more refined strategy for multiple matrix integration. First, CMMI aligns all connected submatrices (not just directly overlapping ones), which can significantly expand the set of recoverable entries. Moreover, CMMI simultaneously leverages all connected submatrices by jointly aligning them and computing a weighted average of the estimated latent positions. In contrast, BONMI utilizes only a single selected pair of submatrices for each entity and therefore does not fully exploit the information available from all observed submatrices.

Compared with the sequential integration approach SPSMC in \cite{bishop2014deterministic}, CMMI also provides a more effective integration strategy. CMMI aligns the estimated latent positions from all submatrices more optimally by jointly considering all overlapping pairs, whereas SPSMC is constrained by its sequential structure, and the problem of determining an effective integration order is not addressed in \cite{bishop2014deterministic}. Furthermore, CMMI aggregates the aligned embeddings by incorporating information from all available submatrices, leading to more accurate estimates. In contrast, SPSMC assigns each entity a latent position based solely on the first submatrix in which it appears in the sequence, and thus also does not to fully leverage all available information.

 \subsection{Real data experiment: MEDLINE co-occurrences}
 \label{sec:hostic_realdata}
 
We note that the analysis of the MEDLINE co-occurrence dataset in
Section~\ref{sec:real} is based on a synthetic scenario where some of the observed entries are held out. While this might lead to a somewhat artificial use case, it is nevertheless intentional as we can then evaluate the performance of CMMI for simple chains (as implemented in
Algorithm~\ref{Alg_chain}). 
We now consider a more
realistic data integration problem for the MEDLINE data. More specifically, if we partition the citations
by year then each year tends to feature a
different set of frequently occurring clinical concepts.  The
PMIs computed between the high-frequency concepts in a given year
are likely to be less noisy compared to those involving rarely occurring
concepts. We now demonstrate that, by extracting the PMIs for top-frequency concepts in
each year and then integrating them using CMMI (or other matrix
integration algorithms), we recover more accurate co-occurrence
relationships between the clinical concepts than the PMIs computed directly from data aggregated across all the years. 
 
 We consider MEDLINE co-occurrence data from the years 1993 to 2022. For each year, we extract a PMI submatrix based on the co-occurrence of the top 1000 most frequent clinical concepts. 
 Given a number of observed years $K$ (it is also the number of observed PMI submatrices), we aim to integrate these submatrices. 
In this experiment, we always select the most temporally distant years for the PMI submatrix integration task. For example, for $K = 2$, we integrate the PMI submatrices corresponding to the years 1993 and 2022.
These two $1000 \times 1000$ PMI submatrices together involve a total of $N=1540$ unique clinical concepts. Our goal is to recover the unobserved entries (approximately 25\%) in the resulting $1540 \times 1540$ PMI matrix.

When applying CMMI in Algorithm~\ref{Alg_holistic}, we determine the embedding dimension $d$ by first applying the automatic dimensionality selection procedure from \cite{zhu2006automatic} to each observed submatrix, and then selecting the largest resulting value as $d$ to retain sufficient information. 
For example, when $K = 2$, the procedure yields dimensions $12$ and $16$ for the two submatrices, and we set $d = 16$ for CMMI. For FALS, we use the same dimension $d$ as in CMMI, while for $GSR$, we always fix $d = 3$ to avoid excessive computational cost.
In addition, for low-rank matrix completion algorithms other than CMMI, we first construct a global matrix by merging all observed submatrices, where entries appearing in multiple submatrices are averaged, and then apply the matrix completion algorithm to this aggregated matrix.
 
We refer to the pre-trained clinical concept embeddings from \cite{beam2020clinical}, learned from massive sources of multimodal medical data, to evaluate how well the algorithms recover the unobserved entries.
Given pre-trained embedding vectors $\bm{v}_1, \dots, \bm{v}_N$ for $N$ clinical concepts, we construct a similarity matrix $\mathbf{P} \in \mathbb{R}^{N \times N}$ where each entry $\mathbf{P}_{ij}$ is the cosine similarity between $\bm{v}_i$ and $\bm{v}_j$. We then measure the similarities between the estimated PMIs for the unobserved entries with corresponding entries in $\mpp$ in terms of the Spearman's rank correlation.

We vary $K$ from $2$ to $15$ to compare the performance of different algorithms, and the results are summarized in Figure~\ref{fig:realdata_MEDLINE_dense}. 
Note that the baseline, shown as a dashed black line in the left panel of Figure~\ref{fig:realdata_MEDLINE_dense}, represents the performance obtained by directly computing PMIs from the co-occurrence data aggregated across the selected years.
Figure~\ref{fig:realdata_MEDLINE_dense} shows that integrating per-year PMIs can yield more faithful co-occurrence relationships than directly computing PMIs from aggregated data, especially when only a few years are observed. For example, when $K = 2$, the baseline achieves only $0.048$, while CMMI reaches $0.273$.
And Figure~\ref{fig:realdata_MEDLINE_dense} also shows that, compared to other matrix integration algorithms, CMMI has the highest accuracy while maintaining significant advantages in computational efficiency.
 
  \begin{figure}[htbp!] 
\centering
\subfigure{\includegraphics[height=6cm]{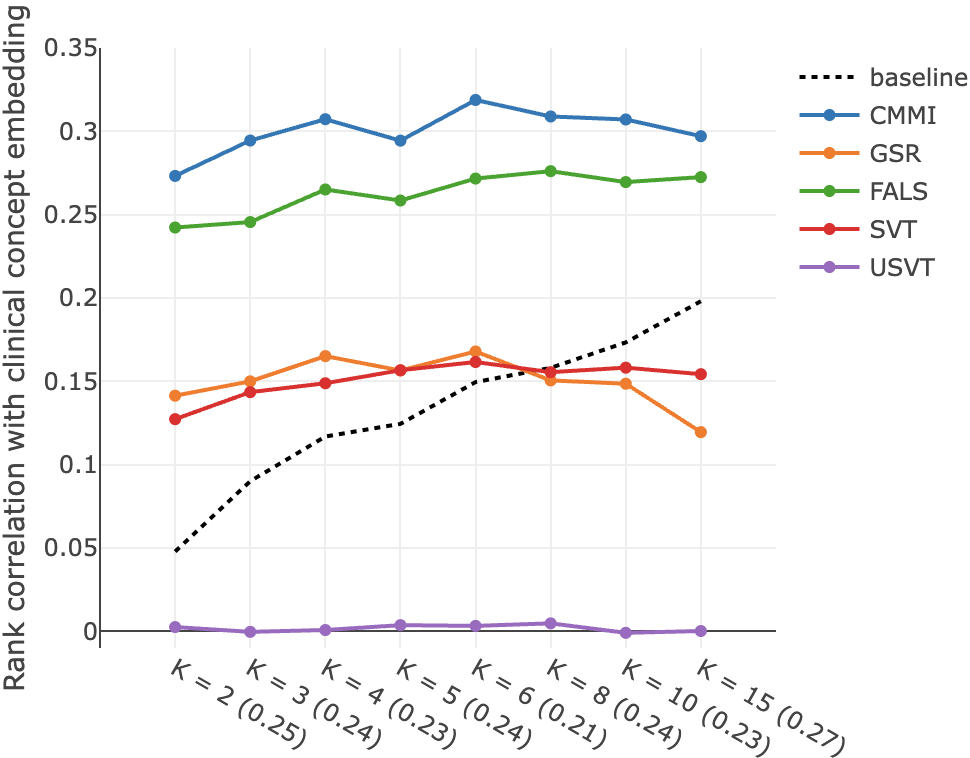}}
\subfigure{\includegraphics[height=6cm]{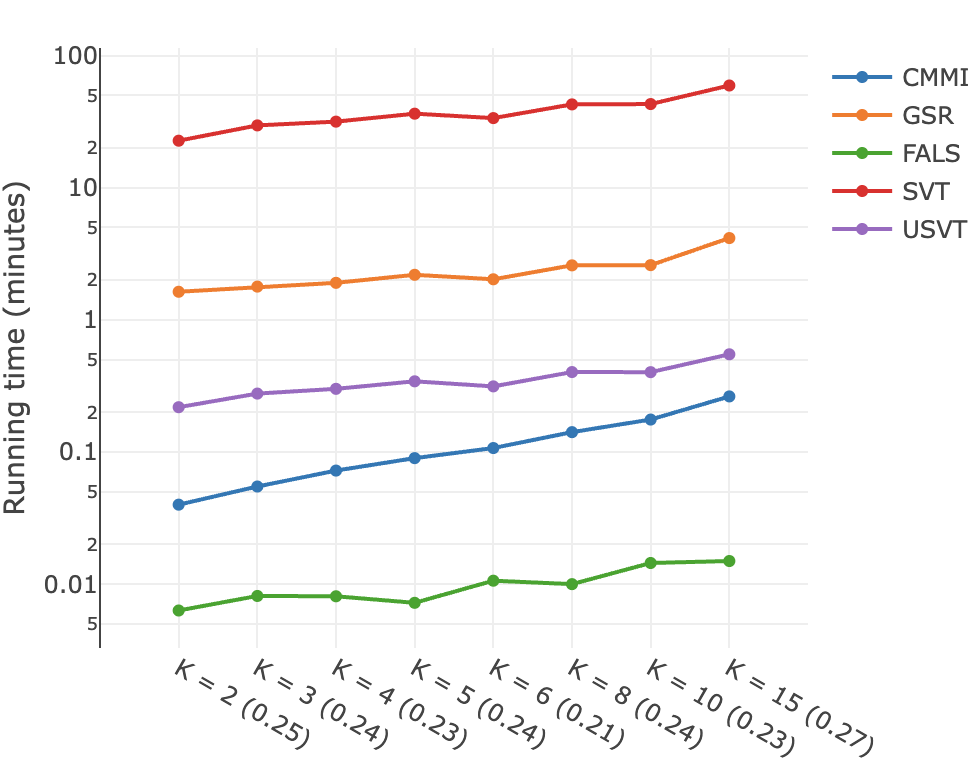}}
\caption{
\footnotesize The left panel reports empirical estimates of Spearman's rank correlations for CMMI and other low-rank matrix completion algorithms as we vary $K$ from $2$ to $15$. 
Note that we always use the most temporally distant $K$ years for the PMI submatrix integration task. For example, when $K = 2$, we use the years 1993 and 2022; when $K = 3$, we use the years 2022, 2008, and 1994; and when $K = 15$, we use the years 2022, 2020, \dots, 1994.
On the x-axis, each value of $K$ is followed (in parentheses) by the proportion of unobserved entries to be recovered in the entire matrix.
The running time (in log scale) for algorithms is shown in the right panel.
}
\label{fig:realdata_MEDLINE_dense}
\end{figure}

 \newpage 
 
\subsection{Computational running time 
for Sections~\ref{sec:comp}, \ref{sec:real}, and \ref{sec:MNIST}}

\label{sec:time comparison}

In the simulation setting in Section~\ref{sec:comp} and the synthetic
settings in Sections~\ref{sec:real} and~\ref{sec:MNIST}, only a single
missing submatrix was of interest, and the basic CMMI was used to
impute this submatrix.  In contrast, other existing low-rank matrix
completion algorithms are designed to impute all missing entries
across the entire matrix, which makes direct comparisons of
computational time somewhat unfair.  With the generalized CMMI, we are
now able to impute all unobserved entries, thereby allowing for a fairer
comparison of computational performance.

The results presented below are obtained following the same setup as
in Sections~\ref{sec:comp},\ref{sec:real}, and \ref{sec:MNIST}.
Instead of using the basic CMMI in Algorithm~\ref{Alg_chain} to impute
just a single missing block, we apply the generalized CMMI in
Algorithm~\ref{Alg_holistic} to recover all unobserved entries in the
entire matrix, and in evaluating algorithm performance, we consider
all unobserved entries rather than focusing on just a single
block of interest.

\textbf{Additional experimental results for Section~\ref{sec:comp}:}
\begin{figure}[h!] 
\centering
\subfigure{\includegraphics[height=5cm]{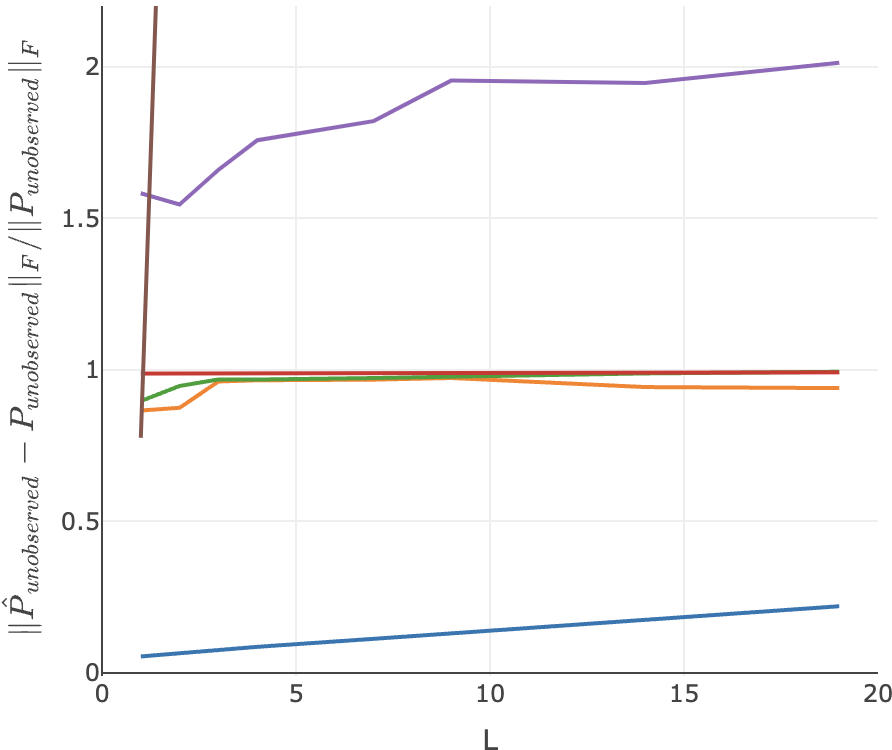}}
\subfigure{\includegraphics[height=5cm]{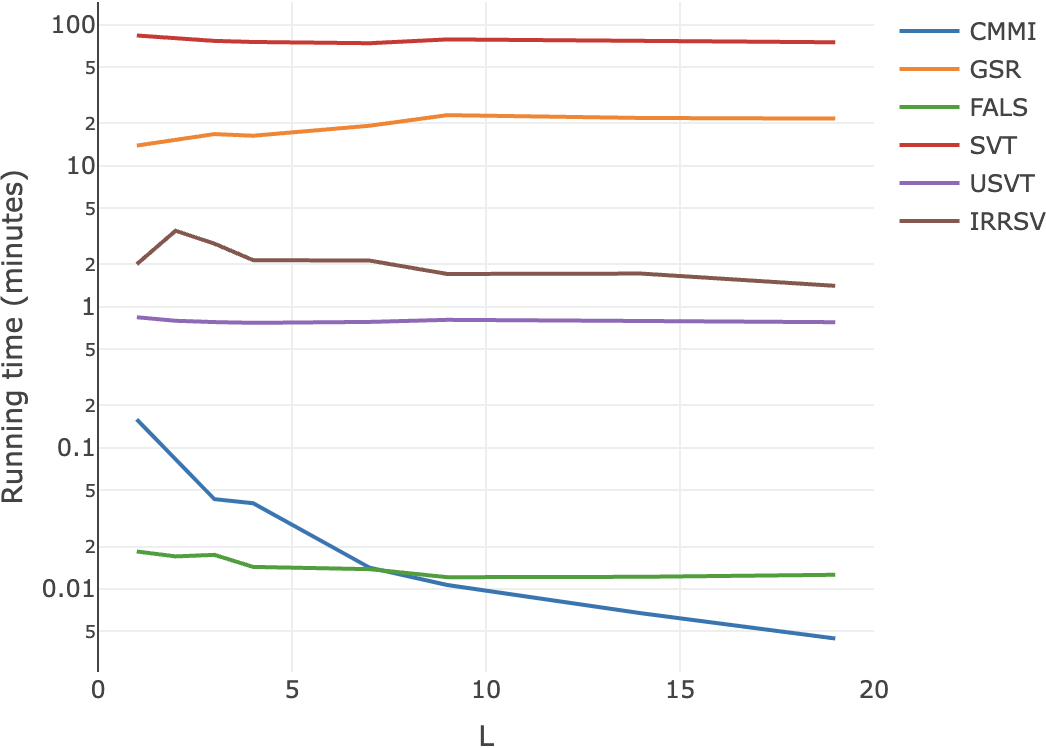}}
\caption{
\footnotesize The left panel reports empirical errors $\|\hat\mpp_{\text{unobserved}}-\mpp_{\text{unobserved}}\|_F/\|\mpp_{\text{unobserved}}\|_F$, where $\mpp_{\text{unobserved}}$ and $\hat\mpp_{\text{unobserved}}$ denote the submatrices of unobserved entries in $\mathbf{P}$ and $\hat{\mathbf{P}}$, respectively, for CMMI and other matrix completion algorithms as we vary $L \in \{1,2,3,4,7,9,14,19\}$ while fixing $N\approx 2200$, $\breve p=0.1$, $q=0.8$, $\sigma=0.5$.
The results are averaged over $100$ independent Monte Carlo replicates.
Note that the averaged relative $F$-norm errors of IRRSV are $\{0.8,    4.4,   11.5,   15.2,   32.7,   37.5,  202.7, 2299.0\}$ and some of these values are too large to be displayed in this panel. 
The average running time (in log scale) over 100 replicates for algorithms, using 20-core parallel computing and 256 GB memory, is shown in the right panel. 
}
\label{fig:simulation_comparison_all}
\end{figure}

\newpage
\textbf{Additional experimental results for Section~\ref{sec:real}:}
 \begin{figure}[h!] 
\centering
\subfigure{\includegraphics[height=5cm]{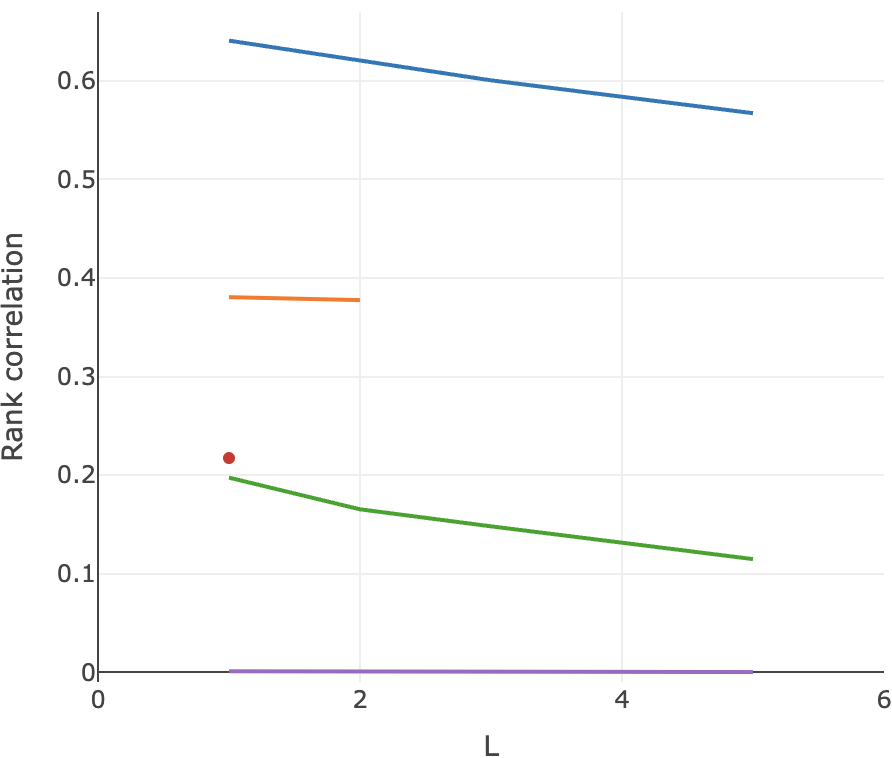}}
\subfigure{\includegraphics[height=5cm]{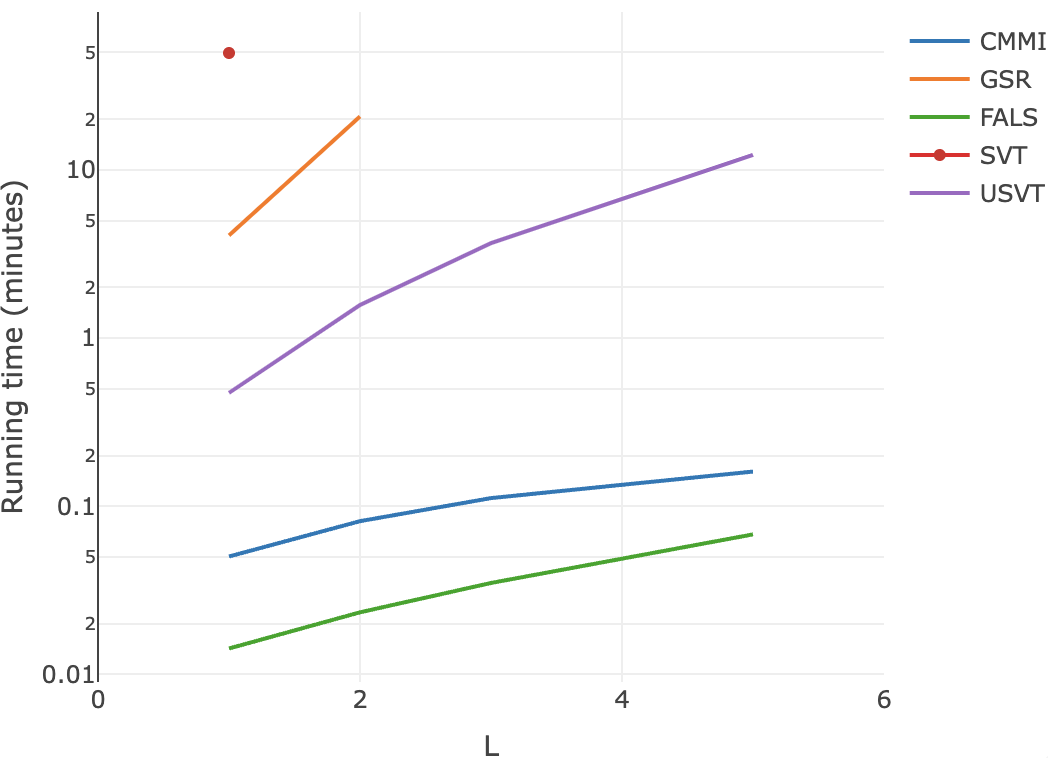}}
\caption{
\footnotesize The left panel reports empirical estimates of Spearman's rank correlations for CMMI and other low-rank matrix completion algorithms as $L$ changes.
In particular, we vary $L=\{1,2,3,5\}$ while fixing $n=1000$ and $\breve p=0.1.$
The results are averaged over $100$ independent Monte Carlo replicates.
The average running time (in log scale) over 100 replicates for algorithms, using 25-core parallel computing and 256 GB memory, is shown in the right panel. 
}
\label{fig:realdata_MRCOC_all}
\end{figure}

\textbf{Additional experimental results for Section~\ref{sec:MNIST}:}

\begin{figure}[h!] 
\centering
\subfigure{\includegraphics[height=5cm]{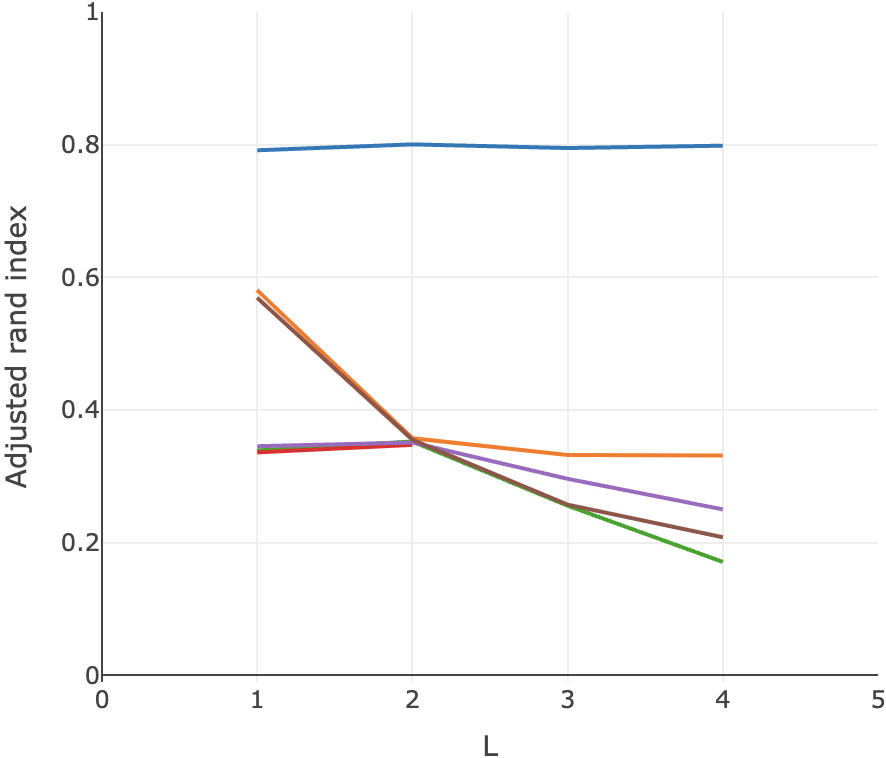}}
\subfigure{\includegraphics[height=5cm]{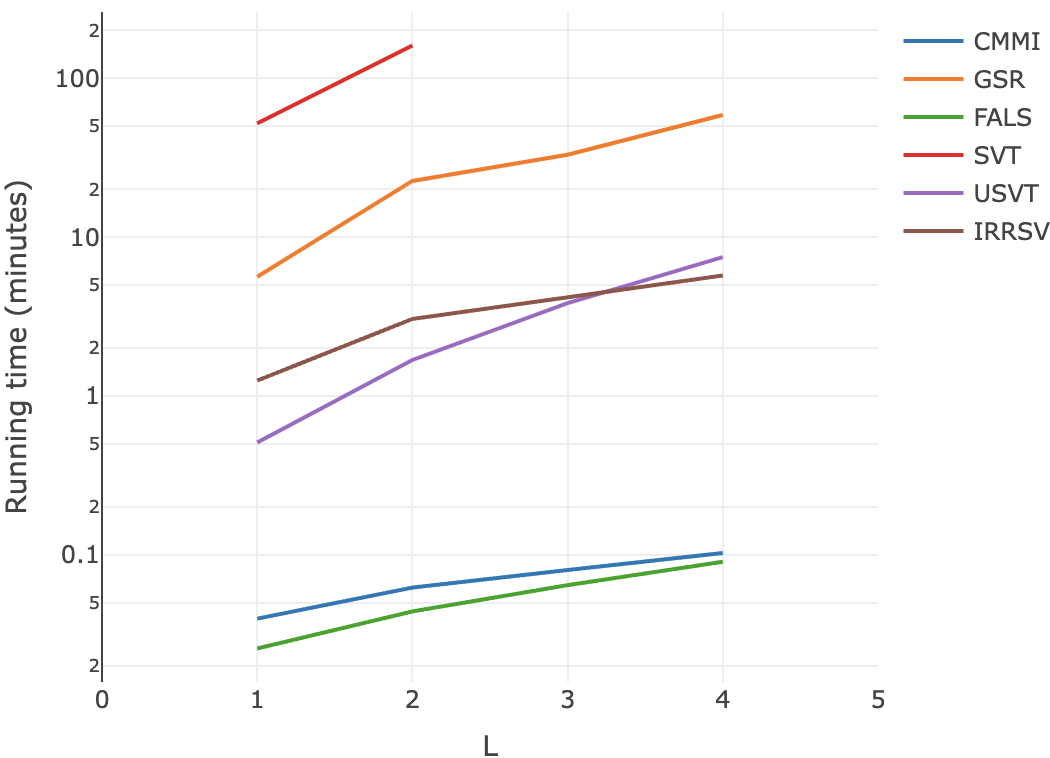}}
\caption{
\footnotesize{The left panel reports ARIs for clustering $\hat{\mathbf{X}} \in \mathbb{R}^{N \times d}$, which consists of the top $d = 36$ scaled eigenvectors of the recovered full matrix $\hat{\mathbf{P}}$, obtained using CMMI and other low-rank matrix completion algorithms.
In particular, we vary $L \in \{1,2,3,4\}$ while fixing $n=1000$, $\breve p=0.1$, $q=0.8$.
The results are averaged over $100$ independent Monte Carlo replicates.
The average running time (in log scale) over 100 replicates for algorithms, using 25-core parallel computing and 256 GB memory, is shown in the right panel. 
}
}
\label{fig:realdata_MNIST_all}
\end{figure}

Figures~\ref{fig:simulation_comparison_all},
\ref{fig:realdata_MRCOC_all}, and \ref{fig:realdata_MNIST_all} show
that CMMI consistently outperforms other low-rank matrix completion
algorithms in recovering the entire matrix, and its computational time
remains highly competitive.
}

\setcounter{figure}{0}
\setcounter{algorithm}{0}
\setcounter{theorem}{0}
\newpage

\section{Algorithms and Simulation Results for Section~\ref{sec:asy}}

\subsection{Symmetric indefinite matrices integration}
\label{supp:indef}
\begin{algorithm}[htp]
\caption{CMMI for overlapping submatrices of a symmetric indefinite matrix}	
\label{Alg_chain2}
\begin{algorithmic}
\REQUIRE Embedding dimensions $d_+$ and $d_-$ for positive and negative eigenvalues, respectively; a chain of overlapping submatrices $\ma^{(0)},\ma^{(i_1)},\dots,\ma^{(i_L)}$ for $\mathcal{U}_{i_0},\mathcal{U}_{i_1},\dots,\mathcal{U}_{i_L}$ with $\min\{|\mathcal{U}_{i_0}\cap \mathcal{U}_{i_1}|,|\mathcal{U}_{i_1}\cap \mathcal{U}_{i_2}|,\dots,|\mathcal{U}_{i_{L-1}}\cap\mathcal{U}_{i_L}|\}\geq d$.
\begin{enumerate}
	\item For each $0\leq \ell \leq L$, obtain the estimated latent positions $\hat\mx^{(i_\ell)}=\hat\muu^{(i_\ell)}|\hat\mLambda^{(i_\ell)}|^{1/2}$. 
	\item For each $1\leq \ell \leq L$, obtain $\mw^{(i_{\ell-1},i_\ell)}$ by solving the least square optimization problem
$$
\begin{aligned}
\mw^{(i_{\ell-1},i_\ell)}
&=\underset{\mo\in \mathbb{R}^{d\times d}}{\operatorname{argmin}} \|\hat\mx^{(i_{\ell-1})}_{\langle\mathcal{U}_{i_{\ell-1}}\cap \mathcal{U}_{i_\ell}\rangle}\mo-\hat\mx^{(i_\ell)}_{\langle\mathcal{U}_{i_{\ell-1}}\cap \mathcal{U}_{i_\ell}\rangle}\|_F.
\end{aligned}
$$
	\item Compute $\hat\mpp_{\mathcal{U}_{i_0},\mathcal{U}_{i_L}}=\hat\mx^{(i_0)}\mw^{(i_0,i_1)}\mw^{(i_1,i_2)}\cdots \mw^{(i_L-1,i_L)}\mi_{d_+,d_-}\hat\mx^{(i_L)\top}$.
\end{enumerate} 
\ENSURE $\hat\mpp_{\mathcal{U}_{i_0},\mathcal{U}_{i_L}}$.
\end{algorithmic}
\end{algorithm}

Algorithm~\ref{Alg_chain2} presents a procedure for integration of
symmetric but possibly indefinite matrices.
We now state an extension of Theorem~\ref{thm:R(i0,...,iL)} to this setting. The main difference in this extension is the upper bound for
$\ms^{(i_0,i_1,\dots,i_L)}$ and this is due to the fact that the least square transformations $(\hat\mx^{(i_{\ell-1})}_{ \langle\mathcal{U}_{i_{\ell-1}}\cap \mathcal{U}_{i_{\ell}}\rangle})^{\dagger}\hat{\mx}^{(i_{\ell})}_{ \langle\mathcal{U}_{i_{\ell-1}}\cap \mathcal{U}_{i_{\ell}}\rangle}$ for $1 \leq \ell \leq L$
have spectral norms that can be smaller or larger than $1$,
and the accumulated error induced by these transformations need not grow linearly with $L$. If $L = 1$ then the bounds in Theorem~\ref{thm:R(il)_ind} are almost identical to those in 
Theorem~\ref{thm:R(i,j)}, but with a slightly different definition for $\alpha_{i,j}$. 
\begin{theorem}
{\em 
\label{thm:R(il)_ind}
    Consider a chain of overlapping submatrices $(\ma^{(i_0)}, \dots, \ma^{(i_L)})$ where, for each $0 \leq \ell \leq L$, 
    $\mpp^{(i_{\ell})}$ has $d_+$ positive eigenvalues and $d_{-}$ negative eigenvalues, satisfying Assumption~\ref{ass:main}.
    Set $d = d_{+} + d_{-}$. Here we define $\mu_i:=\lambda_{\min}(\mx^\top_{\mathcal{U}_i,\mathcal{U}_i}\mx_{\mathcal{U}_i,\mathcal{U}_i})/n_i$, $\gamma_i:=(\|\mpp_{\mathcal{U}_i,\mathcal{U}_i}\|_{\max}+\sigma_i)\log^{1/2}n_i$ for any $i$, and suppose $\frac{\gamma_i}{(q_in_i\mu_i)^{1/2}}\lesssim \|\mx^{(i)}\|_{2\to\infty}$ 
for $i \in \{i_0,i_L\}$.
    For all overlaps $1\leq \ell \leq L$, suppose $\mathrm{rk}(\mpp^{(i_{\ell})}_{\mathcal{U}_{i_{\ell-1}} \cap \mathcal{U}_{i_{\ell}},\mathcal{U}_{i_{\ell-1}} \cap \mathcal{U}_{i_{\ell}}}) = d$, and define
    $$
    \begin{aligned}
    \vartheta_{i_{\ell-1},i_\ell} &:= \lambda_{\max}(\mx^{(i_\ell)\top}_{\langle\mathcal{U}_{i_{\ell-1}} \cap \mathcal{U}_{i_{\ell}}\rangle}\mx^{(i_\ell)}_{\langle\mathcal{U}_{i_{\ell-1}} \cap \mathcal{U}_{i_{\ell}}\rangle}), \quad \theta_{i_{\ell-1},i_{\ell}} := \lambda_{\min}(\mx^{(i_{\ell-1})\top}_{\langle\mathcal{U}_{i_{\ell-1}} \cap \mathcal{U}_{i_{\ell}}\rangle} \mx^{(i_{\ell-1})}_{\langle\mathcal{U}_{i_{\ell-1}} \cap \mathcal{U}_{i_{\ell}}\rangle}),\\
    	\alpha_{i_{\ell-1},i_{\ell}}:=&\frac{n_{i_{\ell-1},i_{\ell}} \gamma_{i_{\ell-1}}  \gamma_{i_{\ell}} }{\theta_{{i_{\ell-1}},{i_{\ell}}}(q_{i_{\ell-1}}n_{i_{\ell-1}}  \mu_{i_{\ell-1}})^{1/2} (q_{i_{\ell}}n_{i_{\ell}}  \mu_{i_{\ell}})^{1/2} }
    +\frac{n_{{i_{\ell-1}},{i_{\ell}}}^{1/2}}{\theta_{{i_{\ell-1}},{i_{\ell}}}^{1/2}}\Bigl(\frac{\gamma_{i_{\ell-1}}^2}{q_{i_{\ell-1}} n_{i_{\ell-1}} \mu_{i_{\ell-1}}^{3/2}}+\frac{\vartheta_{{i_{\ell-1}},{i_{\ell}}}^{1/2}\gamma_{i_{\ell}}^2}{\theta_{{i_{\ell-1}},{i_{\ell}}}^{1/2}q_{i_{\ell}}n_{i_{\ell}}  \mu_{i_{\ell}}^{3/2}}\Bigr)
   \\&+\frac{n_{{i_{\ell-1}},{i_{\ell}}}\vartheta_{{i_{\ell-1}},{i_{\ell}}}^{1/2}\gamma_{i_{\ell-1}}^2}{\theta_{{i_{\ell-1}},{i_{\ell}}}^{3/2}q_{i_{\ell-1}}n_{i_{\ell-1}}  \mu_{i_{\ell-1}}}+\frac{n_{{i_{\ell-1}},{i_{\ell}}}^{1/2}\|\mx^{({i_{\ell-1}})}_{ \langle\mathcal{U}_{i_{\ell-1}}\cap\mathcal{U}_{i_{\ell}}\rangle}\|_{2\to\infty}}{\theta_{{i_{\ell-1}},{i_{\ell}}}}\Big(\frac{\gamma_{i_{\ell-1}}}{(q_{i_{\ell-1}}n_{i_{\ell-1}}\mu_{i_{\ell-1}})^{1/2}}+\frac{\vartheta_{{i_{\ell-1}},{i_{\ell}}}^{1/2}\gamma_{i_{\ell}}}{\theta_{{i_{\ell-1}},{i_{\ell}}}^{1/2}(q_{i_{\ell}}n_{i_{\ell}}\mu_{i_{\ell}})^{1/2}}\Big).
    \end{aligned}
    $$
    Suppose  $\frac{n_{{i_{\ell-1}},{i_{\ell}}}^{1/2}\gamma_{i_{\ell-1}}}{(q_{i_{\ell-1}}n_{i_{\ell-1}}  \mu_{i_{\ell-1}})^{1/2}} \ll \theta_{{i_{\ell-1}},{i_{\ell}}}^{1/2}$ for all $1\leq \ell\leq L$.
    We then have 
    \[\hat{\mpp}_{\mathcal{U}_{i_0}, \mathcal{U}_{i_{L}}} - \mpp_{\mathcal{U}_{i_0}, \mathcal{U}_{i_{L}}} = 
      \me^{(i_0)} \mx_{\mathcal{U}_{i_0}} (\mx_{\mathcal{U}_{i_0}}^{\top} \mx_{\mathcal{U}_{i_0}})^{-1} \mx_{\mathcal{U}_{i_L}}^{\top} + \mx_{\mathcal{U}_{i_0}} (\mx_{\mathcal{U}_{i_L}}^{\top}\mx_{\mathcal{U}_{i_L}})^{-1} \mx_{\mathcal{U}_{i_L}}^{\top}
      \me^{(i_L)} + \mr^{(i_0,i_L)} + \ms^{(i_0,i_1, \dots, i_L)}, \]
   where $\mr^{(i_0,i_L)}$ and $\ms^{(i_0,i_1,\dots,i_L)}$ are random matrices satisfying
   \[ \begin{split} \|\mr^{(i_0,i_L)}\|_{\max}  \lesssim  &\Big(\frac{\gamma_{i_0}^2}{q_{i_0} n_{i_0}  \mu_{i_0}^{3/2}}
	+\frac{\gamma_{i_0} }{q_{i_0}^{1/2}n_{i_0} \mu_{i_0}^{1/2}}\Big)
	\|\mx^{(i_L)}\|_{2\to\infty} +
	\Big(\frac{\gamma_{i_L}^2}{q_{i_L} n_{i_L}  \mu_{i_L}^{3/2}}
	+\frac{\gamma_{i_L} }{q_{i_L}^{1/2}n_{i_L} \mu_{i_L}^{1/2}}\Big) 
	\|\mx^{(i_0)}\|_{2\to\infty} \\ &+ \frac{\gamma_{i_0} \gamma_{i_L}}{(q_{i_0} n_{i_0} \mu_{i_0})^{1/2}(q_{i_L} n_{i_L} \mu_{i_L})^{1/2}},
 \end{split}
 \]
 \[ \|\ms^{(i_0,i_1,\dots,i_L)}\|_{\max} \lesssim a_{L} \|\mx^{(i_0)}\|_{2 \to \infty} \cdot \|\mx^{(i_L)}\|_{2 \to \infty} \]
 with high probability. Here $a_{L}$ is a quantity defined recursively by $a_{1} = \alpha_{i_0,i_1}$ and
 \[  \quad a_{\ell} = a_{\ell-1} \cdot \Bigl(\alpha_{i_{\ell-1},i_{\ell}} + \Bigl[\frac{\vartheta_{i_{\ell-1},i_{\ell}}}{\theta_{i_{\ell-1},i_{\ell}}}\Bigr]^{1/2} \Bigr) + \Big\|\prod_{k=1}^{\ell-1} \bigl(\mx_{\langle\mathcal{U}_{i_{k-1}}\cap \mathcal{U}_{i_k}\rangle}^{(i_{k-1})}\bigr)^{\dagger} \mx_{\langle\mathcal{U}_{i_{k-1}}\cap \mathcal{U}_{i_k}\rangle}^{(i_{k})} \Bigr\| \cdot \alpha_{i_{\ell-1},i_{\ell}} \]
 for $2 \leq \ell \leq L$. }
\end{theorem}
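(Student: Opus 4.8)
The plan is to reuse the endpoint/chain decomposition from the proof of Theorem~\ref{thm:R(i0,...,iL)}, the only genuinely new ingredient being the propagation of error through the non-orthogonal least-squares maps. Abbreviate the $\ell$-th overlap by writing $\mx^{(i_{\ell-1})}_{\langle\ell\rangle}$ and $\mx^{(i_\ell)}_{\langle\ell\rangle}$ for the rows of $\mx^{(i_{\ell-1})}$ and $\mx^{(i_\ell)}$ indexed by $\mathcal{U}_{i_{\ell-1}}\cap\mathcal{U}_{i_\ell}$, with the analogous convention for their estimates. First I would record the per-source expansion of each $\hat\mx^{(i)}$: because $\mpp$ is symmetric its eigenvectors remain orthonormal, so the $2\to\infty$ eigenvector perturbation bounds underlying Theorem~\ref{thm:R(i,j)} apply after replacing $\mLambda^{1/2}$ by $|\mLambda|^{1/2}$ and carrying the signature $\mi_{d_+,d_-}$. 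This yields, for a block-orthogonal reference alignment $\mq^{(i)}\in\mathcal{O}_{d_+}\times\mathcal{O}_{d_-}$, an expansion $\hat\mx^{(i)}\mq^{(i)} = \mx_{\mathcal{U}_i} + (\text{first order in }\me^{(i)}) + (\text{second order})$ controlled in both $\|\cdot\|_{2\to\infty}$ and $\|\cdot\|_{\max}$, exactly as in the definite case.

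Next I would introduce the population composite $\tilde\mw_* := \mw_*^{(i_0,i_1)}\cdots\mw_*^{(i_{L-1},i_L)}$ assembled from the noiseless maps $\mw_*^{(i_{\ell-1},i_\ell)} := (\mx^{(i_{\ell-1})}_{\langle\ell\rangle})^{\dagger}\mx^{(i_\ell)}_{\langle\ell\rangle}$. As shown in the noiseless analysis, when each overlap has rank $d$ one has $\mw_*^{(i_{\ell-1},i_\ell)} = \mw^{(i_{\ell-1})}(\mw^{(i_\ell)})^{-1}\in\mathcal{O}_{d_+,d_-}$, where $\mw^{(i)}$ is the global frame map $\mx_{\mathcal{U}_i}=\mx^{(i)}\mw^{(i)}$; the intermediate factors telescope, giving $\tilde\mw_* = \mw^{(i_0)}(\mw^{(i_L)})^{-1}\in\mathcal{O}_{d_+,d_-}$, a single indefinite-orthogonal map connecting the frame of source $i_0$ directly to that of $i_L$. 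Writing the total error as $\hat\mx^{(i_0)}\tilde\mw_*\mi_{d_+,d_-}\hat\mx^{(i_L)\top}-\mpp_{\mathcal{U}_{i_0},\mathcal{U}_{i_L}}$ plus $\hat\mx^{(i_0)}(\tilde\mw-\tilde\mw_*)\mi_{d_+,d_-}\hat\mx^{(i_L)\top}$, the first part is exactly the two-source estimate of Theorem~\ref{thm:R(i,j)} with the known alignment $\tilde\mw_*$, so it reproduces the two leading noise terms and the remainder $\mr^{(i_0,i_L)}$ with endpoint-only bounds. The second part is $\ms^{(i_0,\dots,i_L)}$, which I would bound by $\|\mx_{\mathcal{U}_{i_0}}\|_{2\to\infty}\|\tilde\mw-\tilde\mw_*\|\|\mx_{\mathcal{U}_{i_L}}\|_{2\to\infty}$ after replacing $\hat\mx^{(i_0)},\hat\mx^{(i_L)}$ by their leading approximations.

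The core step is to bound $\|\tilde\mw-\tilde\mw_*\|$ by $a_L$. For a single link I would combine a Wedin-type expansion of $(\hat\mx^{(i_{\ell-1})}_{\langle\ell\rangle})^{\dagger}$ about $(\mx^{(i_{\ell-1})}_{\langle\ell\rangle})^{\dagger}$ with the per-source expansion of $\hat\mx^{(i_\ell)}_{\langle\ell\rangle}$; the hypothesis $n_{i_{\ell-1},i_\ell}^{1/2}\gamma_{i_{\ell-1}}/(q_{i_{\ell-1}}n_{i_{\ell-1}}\mu_{i_{\ell-1}})^{1/2}\ll\theta_{i_{\ell-1},i_\ell}^{1/2}$ keeps the perturbed overlap block of rank $d$ so the pseudoinverse is stable, and grouping the resulting terms by their powers of $\theta^{-1/2}$, $\vartheta^{1/2}$ and the per-source rates $\gamma_i/(q_in_i\mu_i)^{1/2}$ gives $\|\mw^{(i_{\ell-1},i_\ell)}-\mw_*^{(i_{\ell-1},i_\ell)}\|\lesssim\alpha_{i_{\ell-1},i_\ell}$. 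Writing $\tilde\mw_\ell,\tilde\mw_{*,\ell}$ for the partial products, the telescoping identity
\[ \tilde\mw_\ell-\tilde\mw_{*,\ell} = (\tilde\mw_{\ell-1}-\tilde\mw_{*,\ell-1})\,\mw^{(i_{\ell-1},i_\ell)} + \tilde\mw_{*,\ell-1}\,(\mw^{(i_{\ell-1},i_\ell)}-\mw_*^{(i_{\ell-1},i_\ell)}) \]
followed by submultiplicativity of the spectral norm, together with $\|\mw^{(i_{\ell-1},i_\ell)}\|\le(\vartheta_{i_{\ell-1},i_\ell}/\theta_{i_{\ell-1},i_\ell})^{1/2}+\alpha_{i_{\ell-1},i_\ell}$ (since $\|\mw_*^{(i_{\ell-1},i_\ell)}\|\le(\vartheta_{i_{\ell-1},i_\ell}/\theta_{i_{\ell-1},i_\ell})^{1/2}$) and the identification of $\|\tilde\mw_{*,\ell-1}\|$ with the partial-product norm $\|\prod_{k=1}^{\ell-1}(\mx^{(i_{k-1})}_{\langle k\rangle})^{\dagger}\mx^{(i_k)}_{\langle k\rangle}\|$, reproduces exactly the stated recursion for $a_\ell$.

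The main obstacle is the single-link pseudoinverse perturbation. In the definite case the alignment is the orthogonal Procrustes solution, whose perturbation is governed cleanly by SVD / Davis--Kahan stability and whose norm is pinned at $1$; here $(\hat\mx^{(i_{\ell-1})}_{\langle\ell\rangle})^{\dagger}$ expands into several competing contributions, and the delicate accounting is to show each maps to precisely one summand of $\alpha_{i_{\ell-1},i_\ell}$ while remaining on the high-probability event carrying the per-source expansions. A secondary subtlety, absent when the maps are orthogonal, is that $\|\tilde\mw_*\|$ may exceed $1$: one must verify that the cross terms dropped in passing from $\hat\mx^{(i_0)}(\tilde\mw-\tilde\mw_*)\mi_{d_+,d_-}\hat\mx^{(i_L)\top}$ to the bound $a_L\|\mx_{\mathcal{U}_{i_0}}\|_{2\to\infty}\|\mx_{\mathcal{U}_{i_L}}\|_{2\to\infty}$ are not themselves inflated by the same non-unitarity, which is precisely what forces $a_L$ (rather than a plain sum $\sum_\ell\alpha_{i_{\ell-1},i_\ell}$) into the final bound.
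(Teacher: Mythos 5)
Your overall architecture---a per-source expansion, a Wedin-type pseudoinverse perturbation bound for each link, and a telescoping recursion whose norm factors produce $a_L$---is the same as the paper's, but there is a genuine gap in how you handle the basis indeterminacy, and it breaks your central decomposition. In your first step you assert a \emph{block-orthogonal} alignment with $\hat\mx^{(i)}\,(\text{block-orthogonal})\approx\mx_{\mathcal{U}_i}$, ``exactly as in the definite case.'' This is false when $\mpp$ is indefinite: the scaled eigenvectors of $\ma^{(i)}$ estimate $\mx^{(i)}=\muu^{(i)}|\mLambda^{(i)}|^{1/2}$ up to a block-orthogonal gauge $\mathring{\mw}^{(i)}\in\mathcal{O}_d\cap\mathcal{O}_{d_+,d_-}$, whereas $\mx_{\mathcal{U}_i}=\mx^{(i)}\mq^{(i)}$ for an \emph{indefinite orthogonal} $\mq^{(i)}\in\mathcal{O}_{d_+,d_-}$ (your $\mw^{(i)}$), and an element of $\mathcal{O}_{d_+,d_-}$ is orthogonal only if it is already block-orthogonal, so the hyperbolic factor $\mq^{(i)}$ cannot be absorbed into any block-orthogonal gauge. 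The downstream consequence is fatal for your split: each empirical link satisfies $\mw^{(i_{\ell-1},i_\ell)}\approx\mathring{\mw}^{(i_{\ell-1})}(\mx^{(i_{\ell-1})}_{\langle\ell\rangle})^{\dagger}\mx^{(i_\ell)}_{\langle\ell\rangle}\mathring{\mw}^{(i_\ell)\top}$, so in the product $\tilde\mw$ the interior gauges cancel but the endpoint ones do not, and $\tilde\mw$ concentrates around $\mathring{\mw}^{(i_0)}\tilde\mw_*\mathring{\mw}^{(i_L)\top}$, \emph{not} around $\tilde\mw_*$. Hence both pieces of your decomposition, $\hat\mx^{(i_0)}\tilde\mw_*\mi_{d_+,d_-}\hat\mx^{(i_L)\top}-\mpp_{\mathcal{U}_{i_0},\mathcal{U}_{i_L}}$ and $\hat\mx^{(i_0)}(\tilde\mw-\tilde\mw_*)\mi_{d_+,d_-}\hat\mx^{(i_L)\top}$, are generically $\Theta(1)$ rather than vanishing: already in the noiseless limit, where the estimated eigenvectors differ from the population ones only by column sign flips, the two endpoint sign matrices survive and neither piece is zero. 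For the same reason your single-link claim $\|\mw^{(i_{\ell-1},i_\ell)}-\mw_*^{(i_{\ell-1},i_\ell)}\|\lesssim\alpha_{i_{\ell-1},i_\ell}$ is false as stated.

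The missing idea is to keep the two kinds of transformations separate throughout, which is exactly what the paper's proof does. One proves the per-source expansion for $\hat\mx^{(i)}\mathring{\mw}^{(i)}-\mx^{(i)}$ (target $\mx^{(i)}$, not $\mx_{\mathcal{U}_i}$), and compares the \emph{gauge-conjugated} links $\tilde{\mt}^{(i_{\ell-1},i_\ell)}:=\mathring{\mw}^{(i_{\ell-1})\top}\mw^{(i_{\ell-1},i_\ell)}\mathring{\mw}^{(i_\ell)}$ with $\mq^{(i_{\ell-1})}(\mq^{(i_\ell)})^{-1}=(\mx^{(i_{\ell-1})}_{\langle\ell\rangle})^{\dagger}\mx^{(i_\ell)}_{\langle\ell\rangle}$. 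Since each $\mathring{\mw}^{(i_\ell)}$ is orthogonal, the conjugated product equals $\mathring{\mw}^{(i_0)\top}\tilde\mw\,\mathring{\mw}^{(i_L)}$, which does concentrate around $\mq^{(i_0)}(\mq^{(i_L)})^{-1}$; and because block-orthogonal matrices commute with $\mi_{d_+,d_-}$, the endpoint gauges can be pushed through the signature when reassembling $\hat{\mpp}_{\mathcal{U}_{i_0},\mathcal{U}_{i_L}}$, so your two-term split is restored with the corrected reference $\mathring{\mw}^{(i_0)}\tilde\mw_*\mathring{\mw}^{(i_L)\top}$ in place of $\tilde\mw_*$ (note this also replaces your factor $\|\mx_{\mathcal{U}_{i_0}}\|_{2\to\infty}\|\mx_{\mathcal{U}_{i_L}}\|_{2\to\infty}$ in the bound for $\ms^{(i_0,\dots,i_L)}$ by $\|\mx^{(i_0)}\|_{2\to\infty}\|\mx^{(i_L)}\|_{2\to\infty}$, as in the statement). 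Once this bookkeeping is fixed, the rest of your plan coincides with the paper's: Wedin's pseudoinverse perturbation under the hypothesis $n_{i_{\ell-1},i_\ell}^{1/2}\gamma_{i_{\ell-1}}/(q_{i_{\ell-1}}n_{i_{\ell-1}}\mu_{i_{\ell-1}})^{1/2}\ll\theta_{i_{\ell-1},i_\ell}^{1/2}$ gives $\|\tilde{\mt}^{(i_{\ell-1},i_\ell)}-\mq^{(i_{\ell-1})}(\mq^{(i_\ell)})^{-1}\|\lesssim\alpha_{i_{\ell-1},i_\ell}$, and your telescoping identity with $\|\mq^{(i_{\ell-1})}(\mq^{(i_\ell)})^{-1}\|\le(\vartheta_{i_{\ell-1},i_\ell}/\theta_{i_{\ell-1},i_\ell})^{1/2}$ and $\varrho_{\ell-1}=\|\prod_{k=1}^{\ell-1}(\mx^{(i_{k-1})}_{\langle k\rangle})^{\dagger}\mx^{(i_k)}_{\langle k\rangle}\|$ yields precisely the stated recursion for $a_\ell$.
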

We compare the performance of Algorithm~\ref{Alg_chain2} with other matrix completion methods. Consider the setting of Section~\ref{sec:comp}, but
with $\mLambda=\diag(N,\tfrac{1}{2} N, -\tfrac{1}{2} N,-N)$, and thus $d_+=d_-=2$.
Figure~\ref{fig:simulation_comparison2} shows the relative $F$-norm estimation error results for CMMI against other matrix completion
algorithms. 
  \begin{figure}[htbp] 
\centering
\subfigure{\includegraphics[height=5cm]{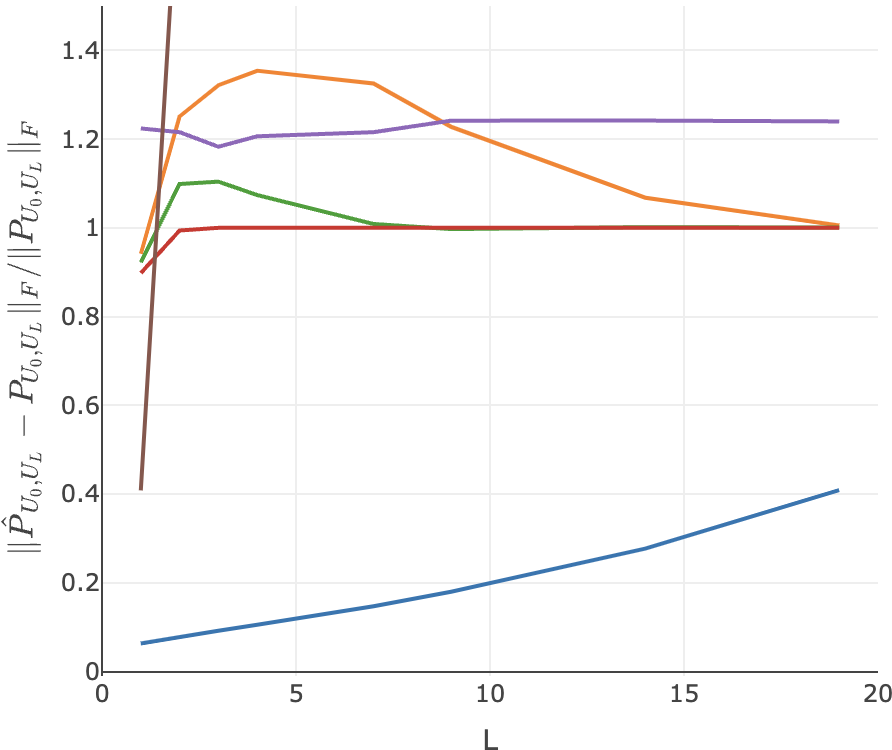}}
\subfigure{\includegraphics[height=5cm]{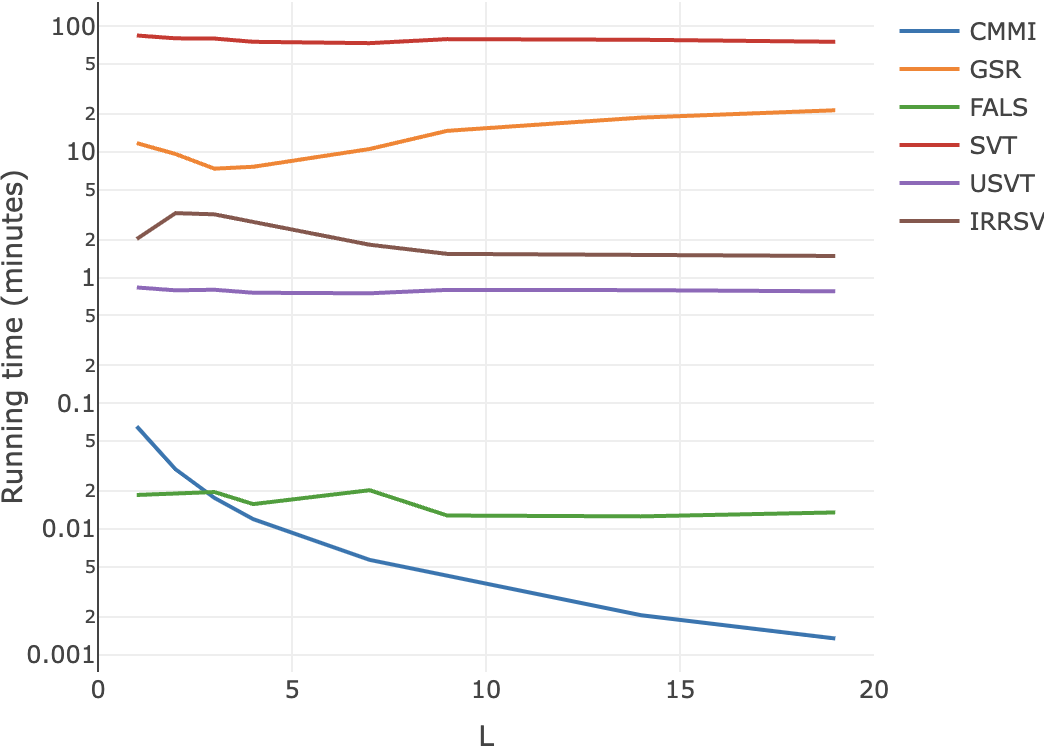}}
\caption{\footnotesize 
The left panel reports 
relative $F$-norm errors $\|\hat\mpp_{\mathcal{U}_{0},\mathcal{U}_{L}}-\mpp_{\mathcal{U}_{0},\mathcal{U}_{L}}\|_F/\|\mpp_{\mathcal{U}_{0},\mathcal{U}_{L}}\|_F$ for CMMI and other low-rank matrix completion algorithms as $L$ changes.
In particular, we vary $L=\{1,2,3,4,7,9,14,19\}$ while fixing $N\approx 2200$, $\breve p=0.1$, $q=0.8$, $\sigma=0.5$.
The results are averaged over $100$ independent Monte Carlo replicates.
Note that the averaged relative $F$-norm errors of IRRSV are $\{0.4,   1.9,   2.6,   3.8,  11.0,  70.4, 469.6, 305.0\}$ and some of these values are too large to be displayed in this panel.
The average running time (in log scale) over 100 replicates for algorithms, using 20-core parallel computing and 256 GB memory, is shown in the right panel. 
}
\label{fig:simulation_comparison2}
\end{figure}

\subsection{Asymmetric matrices integration}
\label{supp:asy}

We provide detailed derivations of our idea for asymmetric matrices here.
We first consider the noiseless case to illustrate the idea.
 Let $\mpp^{(1)}$ and $\mpp^{(2)}$ be two overlapping submatrices shown in the left panel of Figure~\ref{fig:K=2a} without noise or missing entries. Suppose $\mathrm{rk}(\mpp^{(1)}) = \mathrm{rk}(\mpp^{(2)}) = 
\mathrm{rk}(\mx_{\mathcal{U}_1\cap \mathcal{U}_2}) = \mathrm{rk}(\my_{\mathcal{V}_1\cap \mathcal{V}_2}) = d$.
Now 
$$
\begin{aligned}
	\mx_{\mathcal{U}_1}\my_{\mathcal{V}_1}^\top=\mpp_{\mathcal{U}_1,\mathcal{V}_1}=\mpp^{(1)}=\mx^{(1)}\my^{(1)\top},
	\quad \mx_{\mathcal{U}_2} \my_{\mathcal{V}_2}^\top=\mpp_{\mathcal{U}_2,\mathcal{V}_2}=\mpp^{(2)}=\mx^{(2)}\my^{(2)\top}.
\end{aligned}
$$
Then there exist $d \times d$ matrices $\mw^{(1)}$ and $\mw^{(2)}$ such that
$$\begin{aligned}
		\mx_{\mathcal{U}_1}=\mx^{(1)}\mw^{(1)},
		\quad \my_{\mathcal{V}_1}=\my^{(1)}(\mw^{(1)\top})^{-1},
		\quad\mx_{\mathcal{U}_2}=\mx^{(2)}\mw^{(2)},
		\quad \my_{\mathcal{V}_2}=\my^{(2)}(\mw^{(2)\top})^{-1}.
\end{aligned}
$$
Suppose we want to recover the unobserved yellow submatrix in the left panel of Figure~\ref{fig:K=2a} as part of $\mpp_{\mathcal{U}_1,\mathcal{V}_2}=
 \mx_{\mathcal{U}_1}\my_{\mathcal{V}_2}^\top
	=\mx^{(1)}\mw^{(1)}(\mw^{(2)})^{-1}\my^{(2)\top}=\mx^{(1)}\mw^{(1,2)}\my^{(2)\top}
$ where $\mw^{(1,2)}:= \mw^{(1)}(\mw^{(2)})^{-1}$, 
 and thus our problem reduces to that of recovering $\mw^{(1,2)}$. 
By straightforward algebra, we have
$$
\begin{aligned}
	&\mx^{(2)}_{ \langle\mathcal{U}_1\cap \mathcal{U}_2\rangle}
=\mx^{(1)}_{ \langle\mathcal{U}_1\cap \mathcal{U}_2\rangle}\mw^{(1,2)}, \quad 
&\my^{(1)}_{\langle\mathcal{V}_1\cap \mathcal{V}_2\rangle}
=\my^{(2)}_{\langle\mathcal{V}_1\cap \mathcal{V}_2\rangle}\mw^{(1,2)\top}, 
\end{aligned}
$$
and $\mw^{(1,2)}$ can be obtained by aligning the latent positions for the 
overlapping entities, i.e., 
$$
\mw^{(1,2)}=\underset{\mo\in \mathbb{R}^{d\times d}}{\operatorname{argmin}} \|\mx^{(1)}_{ \langle\mathcal{U}_1\cap \mathcal{U}_2\rangle}\mo-\mx^{(2)}_{ \langle\mathcal{U}_1\cap \mathcal{U}_2\rangle}\|_F
\text{ or }
\mw^{(1,2)\top}=\underset{\mo\in \mathbb{R}^{d\times d}}{\operatorname{argmin}} \|\my^{(2)}_{ \langle\mathcal{V}_1\cap \mathcal{V}_2\rangle}\mo-\my^{(1)}_{\langle\mathcal{V}_1\cap \mathcal{V}_2\rangle}\|_F.
$$

Now suppose $\ma^{(1)}$ and $\ma^{(2)}$ are noisy observations of $\mpp^{(1)}$ and $\mpp^{(2)}$ with possible missing entries. Let $(\hat\mx^{(1)},\hat\my^{(1)})$ and $(\hat\mx^{(2)},\hat\my^{(2)})$ be the estimated latent positions matrices obtained from $\ma^{(1)}$ and $\ma^{(2)}$. 
We can align these estimates by solving the least squares problems
$$
\begin{aligned}
	\mw_\mx^{(1,2)}
	=\underset{\mo\in \mathbb{R}^{d\times d}}{\operatorname{argmin}} \|\hat\mx^{(1)}_{ \langle\mathcal{U}_1\cap \mathcal{U}_2\rangle}\mo-\hat\mx^{(2)}_{ \langle\mathcal{U}_1\cap \mathcal{U}_2\rangle}\|_F, \quad
\mw_\my^{(1,2)\top}=\underset{\mo\in \mathbb{R}^{d\times d}}{\operatorname{argmin}} \|\hat\my^{(2)}_{ \langle\mathcal{V}_1\cap \mathcal{V}_2\rangle}\mo-\hat\my^{(1)}_{\langle\mathcal{V}_1\cap \mathcal{V}_2\rangle}\|_F,
\end{aligned}
$$
and setting
$
\mw^{(1,2)}=\frac{1}{2}(\mw_\mx^{(1,2)}+\mw_\my^{(1,2)})
$.
We then extend this idea to a chain of overlapping submatrices and have Algorithm~\ref{Alg_chain3}.

\begin{algorithm}[htbp]
\caption{CMMI for overlapping submatrices of an asymmetric matrix}	
\label{Alg_chain3}
\begin{algorithmic}
\small 
\REQUIRE Embedding dimension $d$, a chain of overlapping submatrices $\ma^{(i_0)},\ma^{(i_1)},\dots,\ma^{(i_L)}$ for $(\mathcal{U}_{i_0},\mathcal{V}_{i_0}),(\mathcal{U}_{i_1},\mathcal{V}_{i_1}),\dots,(\mathcal{U}_{i_L},\mathcal{V}_{i_L})$; here for each $\ell\in[L]$, we have $|\mathcal{U}_{i_{\ell-1}}\cap \mathcal{U}_{i_{\ell}}|\geq d$ or $|\mathcal{V}_{i_{\ell-1}}\cap \mathcal{V}_{i_{\ell}}|\geq d$.
\begin{enumerate}
\item For each $0\leq\ell\leq L$, obtain estimated left latent positions for $\mathcal{U}_{i_\ell}$ as $\hat\mx^{(i_\ell)}=\hat\muu^{(i_\ell)}(\hat\mSigma^{(i_\ell)})^{1/2}$ and right latent positions for $\mathcal{V}_{i_\ell}$ as $\hat\my^{(i_\ell)}=\hat\mv^{(i_\ell)}(\hat\mSigma^{(i_\ell)})^{1/2}$.
    
	\item For each $1\leq \ell \leq L$, obtain $\mw^{(i_{\ell-1},i_\ell)}$:

\textbf{if} $|\mathcal{U}_{i_{\ell-1}}\cap \mathcal{U}_{i_{\ell}}|\geq d$ and $|\mathcal{V}_{i_{\ell-1}}\cap \mathcal{V}_{i_{\ell}}|\geq d$ \textbf{then}
    
    \quad Compute $\mw^{(i_{\ell-1},i_\ell)}=\frac{1}{2}(\mw_\mx^{(i_{\ell-1},i_\ell)}+\mw_\my^{(i_{\ell-1},i_\ell)})$ where
    $$
\begin{aligned}
\mw_\mx^{(i_{\ell-1},i_\ell)}
&=\underset{\mo\in \mathbb{R}^{d\times d}}{\operatorname{argmin}} \|\hat\mx^{(i_{\ell-1})}_{\langle\mathcal{U}_{i_{\ell-1}}\cap \mathcal{U}_{i_\ell}\rangle}\mo-\hat\mx^{(i_\ell)}_{\langle\mathcal{U}_{i_{\ell-1}}\cap \mathcal{U}_{i_\ell}\rangle}\|_F,\\
\mw_\my^{(i_{\ell-1},i_\ell)\top }
&=\underset{\mo\in \mathbb{R}^{d\times d}}{\operatorname{argmin}} \|\hat\my^{(i_\ell)}_{\langle\mathcal{V}_{i_{\ell-1}}\cap \mathcal{V}_{i_\ell}\rangle}\mo-\hat\my^{(i_{\ell-1})}_{\langle\mathcal{V}_{i_{\ell-1}}\cap \mathcal{V}_{i_\ell}\rangle}\|_F.
\end{aligned}
$$
\textbf{else if} $|\mathcal{U}_{i_{\ell-1}}\cap \mathcal{U}_{i_{\ell}}|\geq d$
 \textbf{then}
 
     \quad Compute
   $
\begin{aligned}
\mw^{(i_{\ell-1},i_\ell)}
&=\underset{\mo\in \mathbb{R}^{d\times d}}{\operatorname{argmin}} \|\hat\mx^{(i_{\ell-1})}_{\langle\mathcal{U}_{i_{\ell-1}}\cap \mathcal{U}_{i_\ell}\rangle}\mo-\hat\mx^{(i_\ell)}_{\langle\mathcal{U}_{i_{\ell-1}}\cap \mathcal{U}_{i_\ell}\rangle}\|_F.\end{aligned}
$

   \textbf{else} 
    
   \quad Compute $\mw^{(i_{\ell-1},i_\ell)}$ by
   $
\begin{aligned}
\mw^{(i_{\ell-1},i_\ell)\top}
&=\underset{\mo\in \mathbb{R}^{d\times d}}{\operatorname{argmin}} \|\hat\my^{(i_\ell)}_{\langle\mathcal{V}_{i_{\ell-1}}\cap \mathcal{V}_{i_\ell}\rangle}\mo-\hat\my^{(i_{\ell-1})}_{\langle\mathcal{V}_{i_{\ell-1}}\cap \mathcal{V}_{i_\ell}\rangle}\|_F.\end{aligned}
$
    
    \textbf{end if}

	\item Compute $\hat\mpp_{\mathcal{U}_{i_0},\mathcal{V}_{i_L}}=\hat\mx^{(i_0)}\mw^{(i_0,i_1)}\mw^{(i_1,i_2)}\cdots \mw^{(i_L-1,i_L)}\hat\my^{(i_L)\top}$.
\end{enumerate} 
\ENSURE $\hat\mpp_{\mathcal{U}_{i_0},\mathcal{V}_{i_L}}$.
\end{algorithmic}
\end{algorithm}

\newpage

 We compare the performance of Algorithm~\ref{Alg_chain3} with other matrix completion methods. 
 We simulate a chain of $(L+1)$ overlapping observed submatrices $\{\ma^{(i)}\}_{i=0}^L$ for the underlying population matrix $\mpp$ 
as described in Figure~\ref{fig:simulation_setting3}, and then predict the yellow unknown block by Algorithm~\ref{Alg_chain3}.
We let all observed submatrices have the same dimension $n\times m$, and let all overlapping parts have the same dimension $(\breve p_n n)\times (\breve p_m m)$.
For the observed submatrices, we generate the noise matrices $\{\mn^{(i)}\}$ by $\mn^{(i)}_{st}\stackrel{i i d}{\sim}\mathcal{N}(0,\sigma^2)$ for all $i=0,1,\dots,L$ and all $s\in[n],t\in[m]$, and we let all observed submatrices have the same non-missing probability $q$.
For the low-rank underlying population matrix $\mpp=\muu\mSigma\mv^\top$, we randomly generate $\muu$ and $\mv$ from $\{\mo\in\mathbb{R}^{N\times d}\mid \mo^\top\mo=\mi\}$ and $\{\mo\in\mathbb{R}^{M\times d}\mid \mo^\top\mo=\mi\}$, respectively. We fix the rank as $d=3$, and set $\mSigma=\text{diag}(N,\tfrac{3}{4}N,\tfrac{1}{2}N)$.
We fix the dimensions of the entire matrix at $N \approx 2200$ and $M \approx 2800$, and we vary $L$, the length of the chain, while ensuring that the observed submatrices fully span the diagonal of the matrix Recall that as $L$ increases, we have more observed submarices but each observed submatrix is of smaller dimensions, which then increases the difficulty of recovering the original matrix $\mpp$. 
Figure~\ref{fig:simulation_comparison3} shows the relative $F$-norm estimation error results of recovering the yellow region.

\begin{figure}[htp!] 
\centering
\subfigure{\includegraphics[width=4.5cm]{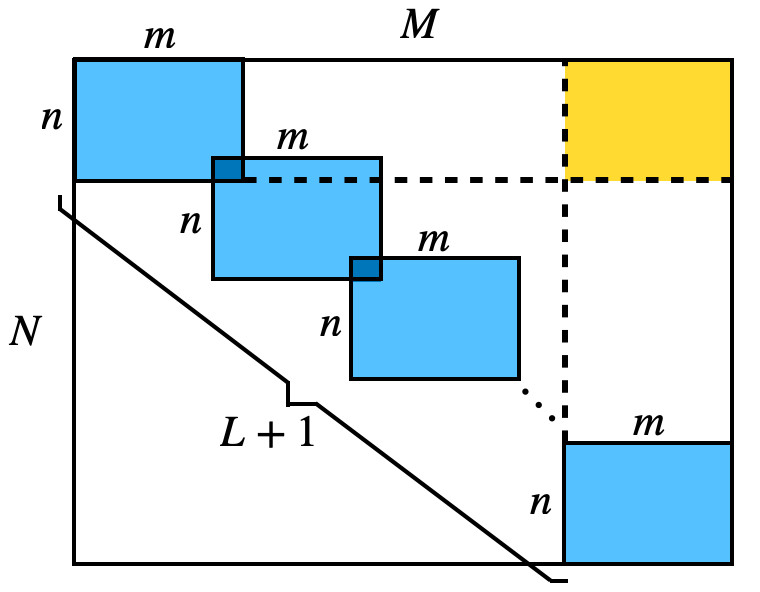}}
\caption{Simulation setting for an asymmetric matrix
}
\label{fig:simulation_setting3}
\end{figure}

\newpage

   \begin{figure}[htp!] 
\centering
\subfigure{\includegraphics[height=5cm]{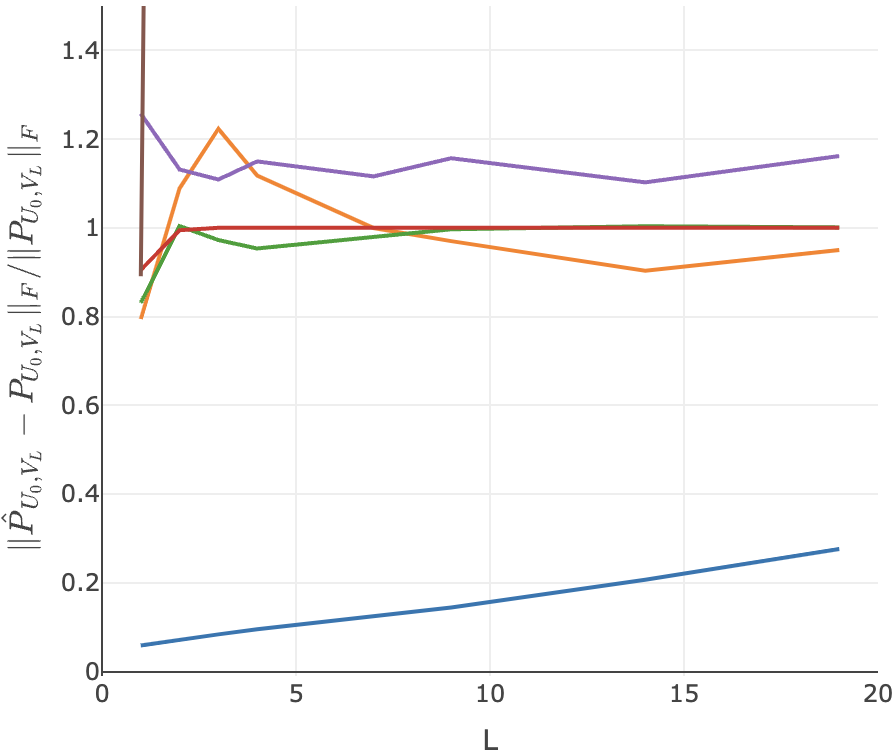}}
\subfigure{\includegraphics[height=5cm]{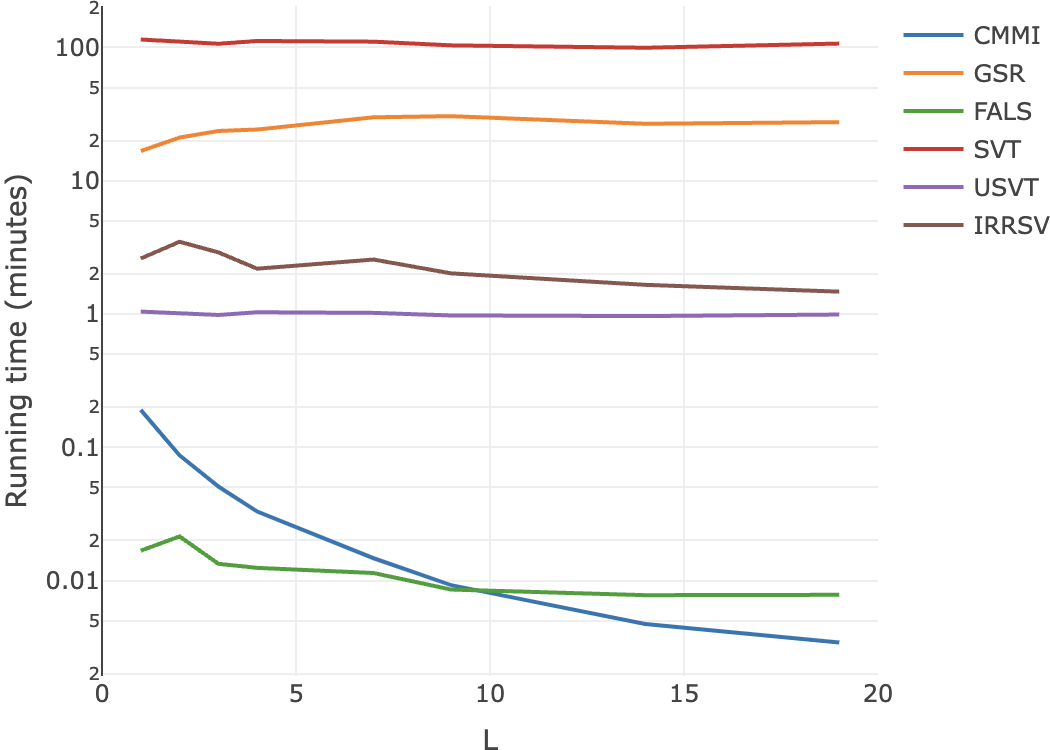}}
\caption{\footnotesize 
The left panel reports relative $F$-norm errors $\|\hat\mpp_{\mathcal{U}_{0},\mathcal{V}_{L}}-\mpp_{\mathcal{U}_{0},\mathcal{V}_{L}}\|_F/\|\mpp_{\mathcal{U}_{0},\mathcal{V}_{L}}\|_F$ for CMMI and other low-rank matrix completion algorithms as $L$ changes.
In particular, we vary $L \in \{1,2,3,4,7,9,14,19\}$ while fixing $N\approx 2200$, $M\approx 2800$, $\breve p_n=\breve p_m=0.1$, $q=0.8$, $\sigma=0.5$.
The results are averaged over $100$ independent Monte Carlo replicates.
Note that the averaged relative $F$-norm errors of IRRSV are $\{0.9,   9.3,  26.1,  25.0,  19.2,  34.1, 172.0, 216.8\}$ and some of these values are too large to be displayed in this panel.
The average running time (in log scale) over 100 replicates for algorithms, using 20-core parallel computing and 256 GB memory, is shown in the right panel. 
}
\label{fig:simulation_comparison3}
\end{figure}

\setcounter{figure}{0}
\setcounter{algorithm}{0}
\setcounter{theorem}{0}
\newpage

\section{Proof of Main Results}

\subsection{Proof of Theorem~\ref{thm:R(i,j)}}\label{sec:proof thm1}

We first state two important technical lemmas, one for the error of $\hat{\mx}^{(i)}$ as an estimate of the true latent position matrix $\mx_{\mathcal{U}_i}$ for each $i \in [K]$, and another for the difference between $\mw^{(i,j)}$ and $\mw^{(i)} \mw^{(j)\top}$. The proofs of these lemmas are presented in Section~\ref{App: Lemma:hat X(i)W(i)-X} and Section~\ref{App: lemma:|| W(i)T W(i,j) W(j)-I ||}.
\begin{lemma}
\label{lemma:hat X(i)W(i)-X}
{\em 
Fix an $i\in[K]$ and consider $\ma^{(i)}=(\mpp^{(i)}+\mn^{(i)})\circ \mathbf{\Omega}^{(i)}/q_i\in\mathbb{R}^{n_i\times n_i}$ as defined in Eq.~\eqref{eq:A(i)=...}.
    Write the eigen-decompositions of $\mpp^{(i)}$ and $\ma^{(i)}$ as
    \[ \mpp^{(i)}=\muu^{(i)}\mLambda^{(i)}\muu^{(i)\top}, \quad \ma^{(i)}=\hat\muu^{(i)}\hat\mLambda^{(i)}\hat\muu^{(i)\top} + \hat\muu^{(i)}_\perp\hat\mLambda^{(i)}_\perp\hat\muu^{(i)\top}_\perp.\]
    Let $\hat\mx^{(i)}=\hat\muu^{(i)}(\hat\mLambda^{(i)})^{1/2}$, and define $\mw^{(i)}=\underset{\mo\in \mathcal{O}_d}{\arg\min}\|\hat\mx^{(i)}\mo-\mx_{\mathcal{U}_i}\|_F$.
    Suppose that
    \begin{itemize}
      \item	$\muu^{(i)}$ is a $n_i\times d$ matrix with bounded coherence, i.e., 
      $$\|\muu^{(i)}\|_{2\to\infty}\lesssim d^{1/2}n_i^{-1/2}.$$
      \item $\mpp^{(i)}$ has bounded condition number, i.e.,
      $$\frac{\lambda_{i,\max}}{\lambda_{i,\min}}\leq M$$
      for some finite constant $M > 0$; here $\lambda_{i,\max}$ and $\lambda_{i,\min}$ denote the largest and smallest non-zero eigenvalues of $\mpp^{(i)}$, respectively. 
      \item 
      The following conditions are satisfied.
      \begin{equation}
          \label{eq:cond_lemmaA1}
          q_i n_i\gtrsim \log^2{n_i} ,\quad
          \frac{(\|\mpp^{(i)}\|_{\max}+\sigma_i) n_i^{1/2}}{q_i^{1/2}\lambda_{i,\min}}=\frac{\gamma_i}{(q_in_i)^{1/2}\mu_i \log^{1/2} n_i}\ll 1.
      \end{equation}
    \end{itemize}    
    We then have
    \begin{equation}\label{eq:hatXW-X}
    	\hat\mx^{(i)} \mw^{(i)}-\mx_{\mathcal{U}_i}
	= \me^{(i)}\mx_{\mathcal{U}_i}(\mx_{\mathcal{U}_i}^{\top}\mx_{\mathcal{U}_i})^{-1}
	+\mr^{(i)},
    \end{equation}
	where the remainder term $\mr^{(i)}$ satisfies
	$$
	\|\mr^{(i)}\|
	\lesssim \frac{(\|\mpp^{(i)}\|_{\max}+\sigma_i)^2 n_i}{q_i\lambda_{i,\min}^{3/2}}
		+\frac{(\|\mpp^{(i)}\|_{\max}+\sigma_i)\log^{1/2}n_i}{q_i^{1/2}\lambda_{i,\min}^{1/2}}
    $$
    with high probability.
	If we further assume 
 \begin{equation}
     \label{eq:cond2_lemmaA1}
     \frac{(\|\mpp^{(i)}\|_{\max}+\sigma_i) n_i^{1/2}\log^{1/2}n_i}{q_i^{1/2}\lambda_{i,\min}}=\frac{\gamma_i}{(q_i n_i)^{1/2}\mu_i}\ll 1,
 \end{equation}
 then we also have 
$$
\begin{aligned}
&\|\mr^{(i)}\|_{2\to\infty}
	\lesssim \frac{(\|\mpp^{(i)}\|_{\max}+\sigma_i)^2 n_i^{1/2}\log n_i}{q_i\lambda_{i,\min}^{3/2}}
		+\frac{ (\|\mpp^{(i)}\|_{\max}+\sigma_i)\log^{1/2} n_i}{q_i^{1/2}n_i^{1/2}\lambda_{i,\min}^{1/2}}
	=\frac{\gamma_i^2}{q_in_i \mu_{i}^{3/2}}
		+\frac{\gamma_i}{q_i^{1/2}n_i  \mu_i^{1/2}}, \\
        	&\|\hat\mx^{(i)} \mw^{(i)}-\mx_{\mathcal{U}_i}\|_{2\to\infty}
	    \lesssim \|\me^{(i)}\muu^{(i)}\|_{2\to\infty}
	    \cdot \|(\mLambda^{(i)})^{-1/2}\|
	    +\|\mr^{(i)}\|_{2\to\infty}
	    \lesssim \frac{(\|\mpp^{(i)}\|_{\max}+\sigma_i) \log^{1/2} n_i}{q_i^{1/2}\lambda_{i,\min}^{1/2}}=\frac{\gamma_i}{(q_in_i  \mu_i)^{1/2}} 
\end{aligned}
$$
    with high probability. }
\end{lemma}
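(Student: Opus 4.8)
The plan is to prove Lemma~\ref{lemma:hat X(i)W(i)-X}, which gives a $2\to\infty$-norm expansion for the spectral embedding $\hat{\mx}^{(i)}$ of a single noisily and partially observed submatrix $\ma^{(i)}$. The overall strategy is a standard but delicate entrywise eigenvector perturbation analysis. First I would reduce notation by fixing $i$ and writing $\ma = \ma^{(i)}$, $\mpp = \mpp^{(i)}$, $\me = \me^{(i)} = \ma - \mpp$, $\mx = \mx_{\mathcal{U}_i}$, etc. The foundation is the classical first-order expansion of eigenspaces: since $\mpp = \mx\mx^\top$ has rank $d$ with leading eigenspace $\muu^{(i)}$, and $\ma = \mpp + \me$, I would write $\hat{\mx}\mw^{(i)} - \mx$ in terms of $\me$ acting on $\mx(\mx^\top\mx)^{-1}$ plus higher-order remainders. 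The leading term $\me\mx(\mx^\top\mx)^{-1}$ comes from the standard perturbation series for $\hat\muu\hat\mLambda^{1/2}$ after optimal orthogonal alignment, and everything else is collected into $\mr^{(i)}$.

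The key technical ingredients I would invoke are: (1) a bound on $\|\me\|$, the spectral norm of the noise-plus-missingness matrix. Because $\me = (\mpp + \mn)\circ\bm{\Omega}/q_i - \mpp$, its entries are independent mean-zero with variance controlled by $(\|\mpp\|_{\max}+\sigma_i)^2/q_i$, so a matrix Bernstein or a sub-Gaussian-with-sparsification concentration inequality gives $\|\me\| \lesssim (\|\mpp\|_{\max}+\sigma_i)(n_i/q_i)^{1/2}$ with high probability, using $q_i n_i \gtrsim \log^2 n_i$; (2) a bound on $\|\me\muu^{(i)}\|_{2\to\infty}$, the row-wise size of the noise projected onto the signal directions, which by the bounded-coherence assumption $\|\muu^{(i)}\|_{2\to\infty}\lesssim d^{1/2}n_i^{-1/2}$ and a row-wise Bernstein argument should be of order $(\|\mpp\|_{\max}+\sigma_i)(\log n_i/q_i)^{1/2}$. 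These two quantities, combined with the eigengap $\lambda_{i,\min}$ (bounded condition number ensures $\|\mpp\| \asymp \lambda_{i,\min}$), drive all the stated bounds after dividing by appropriate powers of $\lambda_{i,\min}$ and converting $\lambda_{i,\min} = n_i\mu_i$.

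The concrete steps would be: establish the spectral-norm expansion first, giving the $\|\mr^{(i)}\|$ bound, where the $\frac{(\|\mpp\|_{\max}+\sigma_i)^2 n_i}{q_i\lambda_{i,\min}^{3/2}}$ term is the second-order contribution $\|\me\|^2/\lambda_{i,\min}^{3/2}$ and the $\frac{(\|\mpp\|_{\max}+\sigma_i)\log^{1/2}n_i}{q_i^{1/2}\lambda_{i,\min}^{1/2}}$ term comes from the alignment error between $\mw^{(i)}$ and the sign/rotation matrix from the SVD of $\hat\muu^{(i)\top}\muu^{(i)}$. Then, under the stronger condition Eq.~\eqref{eq:cond2_lemmaA1}, I would upgrade these to $2\to\infty$ bounds. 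This requires a leave-one-out or iterative-refinement decoupling argument (in the spirit of \cite{abbe2020entrywise,chen2021spectral} cited in the paper) to control each row of $\hat\muu^{(i)}$ independently of the corresponding row of $\me$, so that the row-wise noise $\|\me\muu^{(i)}\|_{2\to\infty}$ controls $\|\hat\mx\mw^{(i)} - \mx\|_{2\to\infty}$ through the leading term $\|\me\muu^{(i)}\|_{2\to\infty}\|(\mLambda^{(i)})^{-1/2}\|$. Finally I would verify the arithmetic identifications $\frac{(\|\mpp\|_{\max}+\sigma_i)\log^{1/2}n_i}{q_i^{1/2}\lambda_{i,\min}^{1/2}} = \frac{\gamma_i}{(q_in_i\mu_i)^{1/2}}$ using $\gamma_i = (\|\mpp\|_{\max}+\sigma_i)\log^{1/2}n_i$ and $\lambda_{i,\min} = n_i\mu_i$, and similarly for the other terms.

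The main obstacle will be the $2\to\infty$ control of the remainder $\mr^{(i)}$, i.e., establishing $\|\mr^{(i)}\|_{2\to\infty}$ rather than merely $\|\mr^{(i)}\|$. Spectral-norm perturbation is routine via Davis–Kahan and Weyl, but the entrywise/row-wise bound requires a leave-one-out coupling to break the statistical dependence between the perturbed eigenvectors and individual rows of the noise, together with careful bookkeeping of how the missingness factor $1/q_i$ inflates variances. Handling both the sub-Gaussian noise $\mn^{(i)}$ and the Bernoulli masking $\bm{\Omega}^{(i)}$ simultaneously in these row-wise concentration bounds—so that the effective per-entry variance is $(\|\mpp\|_{\max}^2 + \sigma_i^2)/q_i$ and the factor $\|\mpp\|_{\max}$ correctly appears through the $(1-q_i)$ masking variance—is where the analysis is most error-prone and will demand the most care.
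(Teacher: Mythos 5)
Your proposal matches the paper's own proof in all essentials: the same decomposition into the leading term $\me^{(i)}\mx_{\mathcal{U}_i}(\mx_{\mathcal{U}_i}^\top\mx_{\mathcal{U}_i})^{-1}$ plus remainder, the same concentration inputs ($\|\me^{(i)}\|\lesssim(\|\mpp^{(i)}\|_{\max}+\sigma_i)(n_i/q_i)^{1/2}$ via matrix Bernstein after splitting the masking and noise contributions, and a row-wise bound on $\|\me^{(i)}\muu^{(i)}\|_{2\to\infty}$), the same two-stage structure (spectral-norm expansion via Weyl/Wedin, then the $2\to\infty$ upgrade), and crucially the same leave-one-out decoupling (the paper adapts Xie's and Abbe et al.'s analysis, exactly as you anticipate) to control $\|\me^{(i)}(\hat\muu^{(i)}\mw_\muu^{(i)}-\muu^{(i)})\|_{2\to\infty}$, which you correctly single out as the main obstacle. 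The only cosmetic difference is that the paper routes the argument through an explicit intermediate eigenvector expansion $\hat\muu^{(i)}\mw^{(i)}_\muu-\muu^{(i)}=\me^{(i)}\muu^{(i)}(\mLambda^{(i)})^{-1}+\mr^{(i)}_\muu$ and then converts to scaled eigenvectors by bounding the commutator $(\hat\mLambda^{(i)})^{1/2}\mw^{(i)}_\muu-\mw^{(i)}_\muu(\mLambda^{(i)})^{1/2}$ and the Procrustes mismatch $\breve\mw^{(i)}-\mw^{(i)}_\muu$, which is the bookkeeping your sketch describes informally.
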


\begin{rem}
    As $\mpp^{(i)} \in \mathbb{R}^{n_i \times n_i}$, we generally have $\|\mpp^{(i)}\|_{F} = \Theta(n_i)$, e.g.,  
    $\mpp^{(i)}$ has $\Theta(n_i^2)$ entries that are lower bounded by some constant $c_0$ not depending on $N$ or $n_i$. Thus, as $\mpp^{(i)}$ is low-rank with bounded condition number, we also have $\lambda_{i,\min} = \Theta(n_i)$ and the second condition in Eq.~\eqref{eq:cond_lemmaA1} simplifies to
    \[ \frac{\|\mpp^{(i)}\|_{\max}+\sigma_i}{(q_i n_i )^{1/2}} \ll 1.\]
    Similarly, the condition in Eq.~\eqref{eq:cond2_lemmaA1} simplifies to
    \[ \frac{(\|\mpp^{(i)}\|_{\max}+\sigma_i) \log^{1/2}{n_i}}{(q_i n_i )^{1/2}} \ll 1.\]
   Both conditions are then trivially satisfied whenever 
    $q_i  n_i \gg \log{n_i}$ and $\|\mpp^{(i)}\|_{\max} + \sigma_i = O(1)$.  
    Finally when  $\lambda_{i,\min} = \Theta(n_i)$, the bounds in Lemma~\ref{lemma:hat X(i)W(i)-X} simplify to
    \begin{gather*}
        \|\mr^{(i)}\|
	\lesssim \frac{(\|\mpp^{(i)}\|_{\max}+\sigma_i)^2}{q_in_i^{1/2} }
		+\frac{(\|\mpp^{(i)}\|_{\max}+\sigma_i)\log^{1/2}{n_i}}{(q_in_i )^{1/2}}, \\
  \|\mr^{(i)}\|_{2\to\infty}
	\lesssim \frac{(\|\mpp^{(i)}\|_{\max}+\sigma_i)^2 \log n_i}{q_in_i }
		+\frac{ (\|\mpp^{(i)}\|_{\max}+\sigma_i)\log^{1/2} n_i}{q_i^{1/2}n_i }, \\
  \|\hat\mx^{(i)} \mw^{(i)}-\mx_{\mathcal{U}_i}\|_{2\to\infty}
	     \lesssim \frac{(\|\mpp^{(i)}\|_{\max}+\sigma_i) \log^{1/2} n_i}{(q_in_i )^{1/2}} 
    \end{gather*}
    with high probability. 
\end{rem}

\begin{lemma}
\label{lemma:|| W(i)T W(i,j) W(j)-I ||}
{\em 
    Consider the setting of Theorem~\ref{thm:R(i,j)}. We then have
	$$
    \begin{aligned}
    	\|\mw^{(i)\top}\mw^{(i,j)}\mw^{(j)} -\mi\|
&\lesssim \frac{n_{i,j} \gamma_i  \gamma_j }{\theta_{i,j}(q_in_i  \mu_i)^{1/2} (q_jn_j  \mu_j)^{1/2} }
    +\frac{n_{i,j}^{1/2}\|\mx_{\mathcal{U}_i\cap \mathcal{U}_j}\|_{2\to\infty}}{\theta_{i,j}}\Big(\frac{\gamma_i}{(q_i n_i\mu_i)^{1/2}} +\frac{\gamma_j}{(q_j n_j\mu_j)^{1/2}}\Big)
\\
      & + \frac{(n_{i,j} \vartheta_{i,j})^{1/2}}{\theta_{i,j}}\Big(\frac{\gamma_i^2 }{q_in_i  \mu_i^{3/2}}
    +\frac{\gamma_j^2 }{q_jn_j  \mu_j^{3/2}}\Big)=:\alpha_{i,j}
    \end{aligned}
    $$
    with high probability, 
    where $\mw^{(i)},\mw^{(j)}$ are defined in Lemma~\ref{lemma:hat X(i)W(i)-X} and $\mw^{(i,j)}=\underset{\mo\in \mathcal{O}_{d}}{\operatorname{argmin}} \|\hat\mx^{(i)}_{ \langle\mathcal{U}_i\cap \mathcal{U}_j\rangle}\mo-\hat\mx^{(j)}_{ \langle\mathcal{U}_i\cap \mathcal{U}_j\rangle}\|_F$.
    }
\end{lemma}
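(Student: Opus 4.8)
The plan is to eliminate the unknown alignments $\mw^{(i)},\mw^{(j)}$ by a change of variables, thereby reducing the claim to a polar-factor perturbation estimate for two noisy copies of the \emph{same} matrix. Write $\mx_\cap:=\mx_{\mathcal{U}_i\cap\mathcal{U}_j}$ for the (common) true latent positions of the overlapping entities, and set $\tilde\mx^{(i)}:=\hat\mx^{(i)}_{\langle\mathcal{U}_i\cap\mathcal{U}_j\rangle}\mw^{(i)}$ and $\tilde\mx^{(j)}:=\hat\mx^{(j)}_{\langle\mathcal{U}_i\cap\mathcal{U}_j\rangle}\mw^{(j)}$. Since $\mathcal{O}_d$ is a group and the Frobenius norm is orthogonally invariant, substituting $\mo=\mw^{(i)}\mathbf{Z}\mw^{(j)\top}$ in the Procrustes objective shows that $\mathbf{Z}^\star:=\mw^{(i)\top}\mw^{(i,j)}\mw^{(j)}$ is exactly the orthogonal Procrustes solution aligning $\tilde\mx^{(i)}$ to $\tilde\mx^{(j)}$; equivalently, $\mathbf{Z}^\star$ is the orthogonal polar factor of $\mathbf{M}:=\tilde\mx^{(i)\top}\tilde\mx^{(j)}$. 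The key point is that both $\tilde\mx^{(i)}$ and $\tilde\mx^{(j)}$ approximate the common $\mx_\cap$, so $\mathbf{M}$ is a small perturbation of the symmetric positive definite matrix $\mathbf{M}_0:=\mx_\cap^\top\mx_\cap$, whose polar factor is $\mi$.

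Next I would quantify the errors $\me_a:=\tilde\mx^{(i)}-\mx_\cap$ and $\me_b:=\tilde\mx^{(j)}-\mx_\cap$ using Lemma~\ref{lemma:hat X(i)W(i)-X}. Restricting its expansion to the overlap rows gives $\me_a=\me_a^{(1)}+\me_a^{(2)}$, where $\me_a^{(1)}=[\me^{(i)}\mx_{\mathcal{U}_i}(\mx_{\mathcal{U}_i}^\top\mx_{\mathcal{U}_i})^{-1}]_{\langle\mathcal{U}_i\cap\mathcal{U}_j\rangle}$ is the first-order (noise-linear) part with $\|\me_a^{(1)}\|_{2\to\infty}\lesssim\gamma_i/(q_in_i\mu_i)^{1/2}$, and $\me_a^{(2)}=[\mr^{(i)}]_{\langle\mathcal{U}_i\cap\mathcal{U}_j\rangle}$ with $\|\me_a^{(2)}\|_{2\to\infty}\lesssim\gamma_i^2/(q_in_i\mu_i^{3/2})+\gamma_i/(q_i^{1/2}n_i\mu_i^{1/2})$ (and symmetrically for $\me_b$). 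Writing $\mathbf{M}=\mathbf{M}_0+\Delta$ with
\[
\Delta=\mx_\cap^\top\me_b+\me_a^\top\mx_\cap+\me_a^\top\me_b,
\]
I would invoke a perturbation bound for the orthogonal polar factor (of Li--Mathias type): since $\mathbf{M}_0$ is positive definite with $\sigma_{\min}(\mathbf{M}_0)=\theta_{i,j}$, and provided $\|\Delta\|\ll\theta_{i,j}$ (which follows from Assumption~\ref{ass:main}), one has $\|\mathbf{Z}^\star-\mi\|\lesssim\|\Delta\|/\theta_{i,j}$. Thus the whole lemma reduces to bounding $\|\Delta\|$.

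Finally I would bound the three pieces of $\Delta$, which match the three summands of $\alpha_{i,j}$. The product term $\me_a^\top\me_b$ is controlled by $\|\me_a\|_F\|\me_b\|_F\le n_{i,j}\|\me_a\|_{2\to\infty}\|\me_b\|_{2\to\infty}$, producing the first summand $n_{i,j}\gamma_i\gamma_j/[\theta_{i,j}(q_in_i\mu_i)^{1/2}(q_jn_j\mu_j)^{1/2}]$. The higher-order terms $\mx_\cap^\top\me_b^{(2)}$ and $\me_a^{(2)\top}\mx_\cap$ are bounded by $\|\mx_\cap\|\cdot\sqrt{n_{i,j}}\,\|\mr\|_{2\to\infty}$; since $\|\mx_\cap\|=\vartheta_{i,j}^{1/2}$ this yields the third summand $\tfrac{(n_{i,j}\vartheta_{i,j})^{1/2}}{\theta_{i,j}}\bigl(\tfrac{\gamma_i^2}{q_in_i\mu_i^{3/2}}+\tfrac{\gamma_j^2}{q_jn_j\mu_j^{3/2}}\bigr)$, with the subdominant linear remainder absorbed into the second summand. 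The first-order bilinear terms $\mx_\cap^\top\me_b^{(1)}$ and $\me_a^{(1)\top}\mx_\cap$ are the delicate ones: expressing $\mx_\cap^\top\me_b^{(1)}=\sum_{s,t}\me^{(j)}_{s,t}\,\bar{\mathbf{x}}_s\mathbf{y}_t^\top$ as a sum of independent mean-zero matrices (with $\mathbf{y}=\mx_{\mathcal{U}_j}(\mx_{\mathcal{U}_j}^\top\mx_{\mathcal{U}_j})^{-1}$ and $\bar{\mathbf{x}}$ the overlap-supported rows of $\mx_\cap$), a matrix Bernstein inequality gives a bound governed by the variance statistic $\approx\vartheta_{i,j}\,v_j\,\mathrm{tr}((\mLambda^{(j)})^{-1})$ with $v_j\asymp\gamma_j^2/(q_j\log n_j)$, which after dividing by $\theta_{i,j}$ and using $\vartheta_{i,j}^{1/2}\le\sqrt{n_{i,j}}\,\|\mx_\cap\|_{2\to\infty}$ is dominated by the second summand $\tfrac{n_{i,j}^{1/2}\|\mx_{\mathcal{U}_i\cap\mathcal{U}_j}\|_{2\to\infty}}{\theta_{i,j}}\bigl(\tfrac{\gamma_i}{(q_in_i\mu_i)^{1/2}}+\tfrac{\gamma_j}{(q_jn_j\mu_j)^{1/2}}\bigr)$. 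The main obstacle is precisely this bilinear concentration step: a naive submultiplicative bound loses a factor of $n_j^{1/2}$, so one must exploit the cancellation in the noise sum and verify that the Bernstein boundedness term is of lower order, which is where the bulk of the technical work lies.
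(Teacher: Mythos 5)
Your proposal is correct and follows essentially the same route as the paper: both identify $\mw^{(i)\top}\mw^{(i,j)}\mw^{(j)}$ as the polar factor of the aligned cross-Gram matrix, bound its distance to $\mi$ via the polar-decomposition perturbation theorem of Li applied around $\mx_{\mathcal{U}_i\cap\mathcal{U}_j}^\top\mx_{\mathcal{U}_i\cap\mathcal{U}_j}$, decompose the perturbation using the expansion from Lemma~\ref{lemma:hat X(i)W(i)-X} into the error-product term, the noise-linear bilinear terms, and the remainder terms, and control the bilinear terms by matrix-Bernstein concentration (the paper's Lemma~\ref{lemma:||E(i)_ij||,||UiT E(i)_ij Uj||F}) rather than a naive submultiplicative bound. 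The only small difference is that you require $\|\Delta\|\ll\theta_{i,j}$, whereas the paper dispenses with this by observing that when $\|\mathbf{F}\|\geq\theta_{i,j}$ the claimed bound holds trivially since $\|\mw^{(i)\top}\mw^{(i,j)}\mw^{(j)}-\mi\|\leq 2$.
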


We now proceed with the proof of Theorem~\ref{thm:R(i,j)}. Recall Eq.~\eqref{eq:hatXW-X} and
let $\xi_i := \hat\mx^{(i)} \mw^{(i)}-\mx_{\mathcal{U}_i}$ for any $i$. Also denote $\tilde{\mw}^{(i,j)} = \mw^{(i)\top}\mw^{(i,j)} \mw^{(j)}$.
We then have
	$$
\begin{aligned}
	\hat\mx^{(i)} \mw^{(i,j)} \hat\mx^{(j)\top}
	-\mx_{\mathcal{U}_i}\mx_{\mathcal{U}_j}^\top
=&(\hat\mx^{(i)} \mw^{(i)})( \mw^{(i)\top}\mw^{(i,j)} \mw^{(j)})(\mw^{(j)\top}\hat\mx^{(j)\top})
	-\mx_{\mathcal{U}_i}\mx_{\mathcal{U}_j}^\top\\
 =&(\mx_{\mathcal{U}_i} + \xi_i) \tilde{\mw}^{(i,j)} (\mx_{\mathcal{U}_j} + \xi_j)^{\top}
	-\mx_{\mathcal{U}_i}\mx_{\mathcal{U}_j}^\top\\
  =&(\mx_{\mathcal{U}_i} + \xi_i) (\tilde{\mw}^{(i,j)} - \mi) (\mx_{\mathcal{U}_j} + \xi_j)^{\top} +  (\mx_{\mathcal{U}_i} + \xi_i) (\mx_{\mathcal{U}_j} + \xi_j)^{\top}
	-\mx_{\mathcal{U}_i}\mx_{\mathcal{U}_j}^\top\\
=& \xi_i \mx_{\mathcal{U}_j}^{\top} + \mx_{\mathcal{U}_i} \xi_j^{\top} + \xi_i \xi_j^{\top} + (\mx_{\mathcal{U}_i} + \xi_i) (\tilde{\mw}^{(i,j)} - \mi) (\mx_{\mathcal{U}_j} + \xi_j)^{\top} 
	\\ = & \me^{(i)}\mx_{\mathcal{U}_i}(\mx_{\mathcal{U}_i}^\top\mx_{\mathcal{U}_i})^{-1} \mx_{\mathcal{U}_j}^\top
	+\mx_{\mathcal{U}_i} (\mx_{\mathcal{U}_j}^\top\mx_{\mathcal{U}_j})^{-1} \mx_{\mathcal{U}_j}^\top \me^{(j)}
	+\mr^{(i,j)}+\ms^{(i,j)},
\end{aligned}
$$
where we set
\begin{equation}\label{eq:Rij}
	\begin{aligned}
	\mr^{(i,j)} :=
\mr^{(i)} 
	\mx_{\mathcal{U}_j}^\top+\mx_{\mathcal{U}_i}  
	\mr^{(j)\top} + \xi_i \xi_j^{\top} 
	,\quad
	\ms^{(i,j)}:=(\mx_{\mathcal{U}_i} + \xi_i) (\tilde{\mw}^{(i,j)} - \mi) (\mx_{\mathcal{U}_j} + \xi_j)^{\top}.
\end{aligned}
\end{equation}
We now bound $\mr^{(i,j)}$ and $\ms^{(i,j)}$. Note that,
for matrices $\mathbf{M}_1$ and $\mathbf{M}_2$ of conformal dimensions, we have
\[ \|\mathbf{M}_1 \mathbf{M}_2^{\top}\|_{\max} \leq \|\mathbf{M}_1\|_{2 \to \infty} \times \|\mathbf{M}_2\|_{2 \to \infty}, \quad \text{and} \quad
  \|\mathbf{M}_1 \mathbf{M}_2\|_{2 \to \infty} \leq \|\mathbf{M}_1\|_{2 \to \infty} \times \|\mathbf{M}_2\|.\]
We therefore have
\begin{equation}
  \label{eq:rij_technical0}
  \begin{split}
   &	\|\mr^{(i,j)}\|_{\max}
	\leq 
	\|\xi_i\|_{2 \to \infty} \times \|\xi_j\|_{2 \to \infty} 
	+ \|\mr^{(i)}\|_{2\to\infty}
	\times \|\mx_{\mathcal{U}_j}\|_{2\to\infty}
	+ \|\mx_{\mathcal{U}_i}\|_{2\to\infty}
	\times \|\mr^{(j)}\|_{2\to\infty},
	\\ &\|\ms^{(i,j)}\|_{\max}\leq
	(\|\mx_{\mathcal{U}_i}\|_{2 \to \infty} + \|\xi_i\|_{2 \to \infty})
	\times (\|\mx_{\mathcal{U}_j}\|_{2 \to \infty} + \|\xi_j\|_{2 \to \infty})
	\times \|\tilde{\mw}^{(i,j)} - \mi \|. 
  \end{split}
\end{equation}
Next, by Lemma~\ref{lemma:hat X(i)W(i)-X}, for any $i$  
  we have
\begin{equation}\label{eq:Rij_R hatXW-X}
	\begin{aligned}
	\|\mr^{(i)}\|_{2\to\infty}
	\lesssim \frac{\gamma_i^2}{q_in_i  \mu_i^{3/2}}
	+\frac{\gamma_i }{q_i^{1/2}n_i \mu_i^{1/2}},\quad
    \|\xi_i\|_{2 \to \infty} 
    \lesssim \frac{\gamma_i }{(q_i n_i \mu_i)^{1/2}}
\end{aligned}
\end{equation}
with high probability. Finally, by Lemma~\ref{lemma:|| W(i)T W(i,j) W(j)-I ||} we have
\begin{equation}\label{eq:tildeW-I}
\begin{aligned} 
	\|\tilde{\mw}^{(i,j)} -\mi\| 
&\lesssim \alpha_{i,j}
\end{aligned}
\end{equation}
with high probability. Substituting the above bounds in 
Eq.~\eqref{eq:Rij_R hatXW-X} and Eq.~\eqref{eq:tildeW-I} into Eq.~\eqref{eq:rij_technical0}, we obtain the desired bounds of $\|\mr^{(i,j)}\|_{\max}$ and $\|\ms^{(i,j)}\|_{\max}$.

\subsection{Proof of Theorem~\ref{thm:R(i0,...,iL)}}
Theorem~\ref{thm:R(i0,...,iL)} follows the same argument as that for Theorem~\ref{thm:R(i,j)}, with the only change being the use of 
Lemma~\ref{lemma: || W(i0)T ... W(iL)-I||} below (see Section~\ref{sec:technical_w_chain} for a proof) to bound the difference between $\mw^{(i_0,i_1)}
    	\cdots
    	\mw^{(i_{L-1},i_L)}$ and $\mw^{(i_0)}\mw^{(i_L)\top}$.
\begin{lemma}
\label{lemma: || W(i0)T ... W(iL)-I||} {\em 
    Consider the setting of Theorem~\ref{thm:R(i0,...,iL)}. Let $\mt^{(L)} := \mw^{(i_0)\top}\mw^{(i_0,i_1)}
    	\cdots
    	\mw^{(i_{L-1},i_L)}\mw^{(i_L)}$. Then
    	$$
    \begin{aligned}
    	\|\mt^{(L)} -\mi\|
&\lesssim \sum_{\ell=1}^{L}
\Big[\frac{n_{{i_{\ell-1}},{i_{\ell}}} \gamma_{i_{\ell-1}}  \gamma_{i_{\ell}} }{\theta_{{i_{\ell-1}},{i_{\ell}}}(q_{i_{\ell-1}}n_{i_{\ell-1}}  \mu_{i_{\ell-1}})^{1/2} (q_{i_{\ell}}n_{i_{\ell}}  \mu_{i_{\ell}})^{1/2} }
    \\&+\frac{n_{{i_{\ell-1}},{i_{\ell}}}^{1/2}\|\mx_{\mathcal{U}_{i_{\ell-1}}\cap \mathcal{U}_{i_{\ell}}}\|_{2\to\infty}}{\theta_{{i_{\ell-1}},{i_{\ell}}}}\Big(\frac{\gamma_{i_{\ell-1}}}{(q_{i_{\ell-1}} n_{i_{\ell-1}}\mu_{i_{\ell-1}})^{1/2}} +\frac{\gamma_{i_{\ell}}}{(q_{i_{\ell}} n_{i_{\ell}}\mu_j)^{1/2}}\Big)
\\
      & + \frac{(n_{{i_{\ell-1}},{i_{\ell}}} \vartheta_{{i_{\ell-1}},{i_{\ell}}})^{1/2}}{\theta_{{i_{\ell-1}},{i_{\ell}}}}\Big(\frac{\gamma_{i_{\ell-1}}^2 }{q_{i_{\ell-1}}n_{i_{\ell-1}}  \mu_{i_{\ell-1}}^{3/2}}
    +\frac{\gamma_{i_{\ell}}^2 }{q_{i_{\ell}}n_{i_{\ell}}  \mu_{i_{\ell}}^{3/2}}\Big)
    \Big]=\sum_{\ell=1}^{L}\alpha_{{i_{\ell-1}},{i_{\ell}}}
    \end{aligned}
    $$
    with high probability.}
\end{lemma}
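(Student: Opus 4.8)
The plan is to reduce the chain alignment error $\|\mt^{(L)} - \mi\|$ to a telescoping product of the \emph{pairwise} alignment errors that are already controlled by Lemma~\ref{lemma:|| W(i)T W(i,j) W(j)-I ||}. For each $1 \leq \ell \leq L$ I would introduce the conjugated pairwise aligner $\tilde{\mw}^{(i_{\ell-1},i_\ell)} := \mw^{(i_{\ell-1})\top}\mw^{(i_{\ell-1},i_\ell)}\mw^{(i_\ell)}$. Since $\mw^{(i_{\ell-1})},\mw^{(i_\ell)}\in\mathcal{O}_d$ are the orthogonal alignments supplied by Lemma~\ref{lemma:hat X(i)W(i)-X} and $\mw^{(i_{\ell-1},i_\ell)}\in\mathcal{O}_d$ is a solution of the orthogonal Procrustes problem, each $\tilde{\mw}^{(i_{\ell-1},i_\ell)}$ is a product of orthogonal matrices, hence itself orthogonal, with $\|\tilde{\mw}^{(i_{\ell-1},i_\ell)}\| = 1$ exactly. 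Applying Lemma~\ref{lemma:|| W(i)T W(i,j) W(j)-I ||} to the pair $(i_{\ell-1},i_\ell)$ then gives $\|\tilde{\mw}^{(i_{\ell-1},i_\ell)} - \mi\| \lesssim \alpha_{i_{\ell-1},i_\ell}$ with high probability.

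The key step is the telescoping identity. Inserting $\mw^{(i_\ell)}\mw^{(i_\ell)\top}=\mi$ between consecutive factors, the intermediate orthogonal matrices cancel, so that
$$\mt^{(L)} = \mw^{(i_0)\top}\mw^{(i_0,i_1)}\cdots\mw^{(i_{L-1},i_L)}\mw^{(i_L)} = \prod_{\ell=1}^L \tilde{\mw}^{(i_{\ell-1},i_\ell)}.$$
It then remains to bound the deviation from $\mi$ of a product of near-identity orthogonal matrices. Writing $\mq_\ell := \tilde{\mw}^{(i_{\ell-1},i_\ell)}$ and using the decomposition $\mq_1\cdots\mq_L - \mi = \mq_1(\mq_2\cdots\mq_L - \mi) + (\mq_1 - \mi)$ together with submultiplicativity and $\|\mq_1\|=1$, a straightforward induction on $L$ yields
$$\|\mt^{(L)} - \mi\| \leq \sum_{\ell=1}^L \|\mq_\ell - \mi\| = \sum_{\ell=1}^L \|\tilde{\mw}^{(i_{\ell-1},i_\ell)} - \mi\|.$$
Combining this with the pairwise bounds above gives $\|\mt^{(L)} - \mi\| \lesssim \sum_{\ell=1}^L \alpha_{i_{\ell-1},i_\ell}$ on the intersection of the $L$ favorable events, which holds with high probability by a union bound (valid as long as $L$ grows at most polynomially in $\min_\ell n_{i_\ell}$).

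I expect the one genuinely delicate point to be the exact orthogonality $\|\mq_\ell\|=1$ of each conjugated factor, since it is precisely this that forces the error to accumulate \emph{linearly} rather than geometrically in $L$: if the factors had spectral norm exceeding $1$ (as happens for the least-squares aligners in the indefinite setting of Theorem~\ref{thm:R(il)_ind}), the same induction would instead produce a product of growth factors and the bound would no longer collapse to a simple sum. Beyond this structural observation the argument is routine, as the telescoping is exact and the product-to-sum inequality is elementary. With Lemma~\ref{lemma: || W(i0)T ... W(iL)-I||} in hand, Theorem~\ref{thm:R(i0,...,iL)} would then follow verbatim from the proof of Theorem~\ref{thm:R(i,j)}, substituting $\mt^{(L)}$ for $\tilde{\mw}^{(i,j)}$ in the expansion of $(\mx_{\mathcal{U}_{i_0}}+\xi_{i_0})\mt^{(L)}(\mx_{\mathcal{U}_{i_L}}+\xi_{i_L})^\top$, so that the dominant terms and $\mr^{(i_0,i_L)}$ retain their $K=2$ bounds while $\ms^{(i_0,\dots,i_L)}$ inherits the summed bound.
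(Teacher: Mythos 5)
Your proposal is correct and is essentially the paper's own argument: the paper likewise writes $\mt^{(L)} = \mt^{(L-1)}\,\mw^{(i_{L-1})\top}\mw^{(i_{L-1},i_L)}\mw^{(i_L)}$ (your telescoping with the conjugated pairwise aligners), applies Lemma~\ref{lemma:|| W(i)T W(i,j) W(j)-I ||} to each consecutive pair, and uses the decomposition $\mt^{(L)}-\mi = (\mt^{(L-1)}-\mi)\,\tilde{\mw}^{(i_{L-1},i_L)} + (\tilde{\mw}^{(i_{L-1},i_L)}-\mi)$ together with exact orthogonality of the factors so that the errors accumulate additively by induction on $L$. Your observation that orthogonality is precisely what prevents geometric growth (in contrast to the least-squares aligners of Theorem~\ref{thm:R(il)_ind}) matches the paper's own remarks.
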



\subsection{Proof of Theorem~\ref{thm:normal}}


By Theorem~\ref{thm:R(i0,...,iL)}, for any fixed $s\in[n_{i_0}],t\in[n_{i_L}]$, we have\begin{equation}\label{eq:normal_1}
\begin{aligned}
	\bigl( \hat{\mpp}_{\mathcal{U}_{i_0},\mathcal{U}_{i_L}} - \mpp_{\mathcal{U}_{i_0},\mathcal{U}_{i_L}}\bigr)_{s,t}
  &=\sum_{k_1=1}^{n_{i_0}}\me^{({i_0})}_{s,k_1}\mb^{(i_0,i_L)}_{k_1,t} 
  + \sum_{k_2=1}^{n_{i_L}}\me^{({i_L})}_{t,k_2}\mb^{(i_L,i_0)}_{k_2,s}
	    +\mr^{(i_0,i_L)}_{s,t}
	    +\ms^{(i_0,\dots,i_L)}_{s,t}.
\end{aligned}
\end{equation}
As $\me^{(i_0)} =\ma^{(i_0)}-\mpp^{(i_0)} =(\mpp^{(i_0)}+\mn^{(i_0)})\circ\mathbf\Omega^{(i_0)}/q_{i_0}
    	-\mpp^{(i_0)}$,
we have for any $s,k_1\in[n_{i_0}]$ that
\begin{equation}\label{eq:D}
	\begin{aligned}
	\mathrm{Var}\big[\me^{({i_0})}_{s,k_1}\big]
	=&\mathbb{E}\Big[\mathrm{Var}\big[\me^{({i_0})}_{s,k_1}\big|\mathbf{\Omega}^{(i_0)}_{s,k_1}\big]\Big]
	+\mathrm{Var}\Big[\mathbb{E}\big[\me^{({i_0})}_{s,k_1}\big|\mathbf{\Omega}^{(i_0)}_{s,k_1}\big]\Big]\\
	=&\mathbb{E}\big[\mathrm{Var}(\mn^{(i_0)}_{s,k_1})\mathbf{\Omega}^{(i_0)}_{s,k_1}/q_{i_0}^2\big]
	+\mathrm{Var}\big[\mpp^{(i_0)}_{s,k_1}\mathbf{\Omega}^{(i_0)}_{s,k_1}/q_{i_0}\big]
          \\
	=&\bigl[\mathrm{Var}(\mn^{(i_0)}_{s,k_1}) +(1 - q_{i_0})(\mpp^{(i_0)}_{s,k_1})^2\bigr]/q_{i_0}
	=\md^{(i_0)}_{s,k_1}.
\end{aligned}
\end{equation}
Similarly, we also have that for any $k_2,t\in[n_{i_L}]$, $\mathrm{Var}\big[\me^{({i_L})}_{t,k_2}\big]=\md^{(i_L)}_{t,k_2}$.
Note that $\{\me^{({i_0})}_{s,k_1},\me^{({i_L})}_{t,k_2}\}_{k_1\in [n_{i_1}],k_2\in[n_{i_L}]}$ are independent.
We thus have
\begin{equation*}
	\begin{aligned}
\mathrm{Var}\Bigl[\sum_{k_1=1}^{n_{i_0}}\me^{({i_0})}_{s,k_1}\mb^{(i_0,i_L)}_{k_1,t} 
  + \sum_{k_2=1}^{n_{i_L}}\me^{({i_L})}_{t,k_2}\mb^{(i_L,i_0)}_{k_2,s}\Bigr]
    =&\sum_{k_1=1}^{n_{i_0}}(\mb^{(i_0,i_L)}_{k_1,t})^2\md^{({i_0})}_{s,k_1}
    +\sum_{k_2=1}^{n_{i_L}}(\mb^{(i_L,i_0)}_{k_2,s})^2\md^{({i_L})}_{t,k_2}
    =\tilde\sigma^2_{s,t}.
\end{aligned}
\end{equation*}
Let 
\begin{gather*}
\my^{(i_0)}_{k_1}:=\me^{({i_0})}_{s,k_1}\mb^{(i_0,i_L)}_{k_1,t} \text{ for any }k_1\in[n_{i_0}], \quad
\my^{(i_L)}_{k_2}:=\me^{({i_L})}_{t,k_2}\mb^{(i_L,i_0)}_{k_2,s} \text{ for any }k_2\in[n_{i_L}],
\end{gather*}
and note that $\{\my^{(i_0)}_{k_1},\my^{(i_L)}_{k_2}\}_{k_1\in[n_{i_0}],k_2\in[n_{i_L}]}$ are mutually independent zero-mean random variables.
Let 
$$
\begin{aligned}
	\tilde\my^{(i_0)}_{k_1}:=\tilde\sigma^{-1}_{s,t}\my^{(i_0)}_{k_1} =
\tilde\sigma^{-1}_{s,t} \me^{({i_0})}_{s,k_1}\mb^{(i_0,i_L)}_{k_1,t} \text{ for any }k_1\in[n_{i_0}], \\
\tilde\my^{(i_L)}_{k_2}:=\tilde\sigma^{-1}_{s,t}\my^{(i_L)}_{k_2}=\tilde\sigma^{-1}_{s,t}\me^{({i_L})}_{t,k_2}\mb^{(i_L,i_0)}_{k_2,s} \text{ for any }k_2\in[n_{i_L}].
\end{aligned}
$$

We now analyze $\tilde\my^{(i_0)}_{k_1}$ for any fixed $k_1\in[n_{i_0}]$; the same analysis also applies to $\tilde\my^{(i_L)}_{k_2}$ for any fixed $k_2\in[n_{i_L}]$.
Rewrite $\me^{(i_0)}_{s,k_1}$ as
    \begin{equation}\label{eq:E1E2_initial}
    	\begin{aligned}
    	\me^{(i_0)}_{s,k_1}
    	=\ma^{(i_0)}_{s,k_1}-\mpp^{(i_0)}_{s,k_1}
    	=
    	[\mpp^{(i_0)}_{s,k_1}\cdot \mathbf\Omega^{(i_0)}_{s,k_1}/q_{i_0}-\mpp^{(i_0)}_{s,k_1}]
    	+
    	\mn^{(i_0)}_{s,k_1}\cdot \mathbf\Omega^{(i_0)}_{s,k_1}/q_{i_0}
    	.
    \end{aligned}
    \end{equation}
Then by Eq.~\eqref{eq:E1E2_initial} we have 
$$
\begin{aligned}
	\tilde\my^{(i_0)}_{k_1}
	 &=\underbrace{\tilde\sigma^{-1}_{s,t}[\mpp^{(i_0)}_{s,k_1} \mathbf{\Omega}^{(i_0)}_{s,k_1}/q_{i_0}-\mpp^{(i_0)}_{s,k_1}]\mb^{(i_0,i_L)}_{k_1,t}}_{\tilde\my_{k_1}^{(i_0,1)}}
	 +\underbrace{\tilde\sigma^{-1}_{s,t}[\mn^{(i_0)}_{s,k_1} \mathbf{\Omega}^{(i_0)}_{s,k_1}/q_{i_0}]\mb^{(i_0,i_L)}_{k_1,t}}_{\tilde\my_{k_1}^{(i_0,2)}}.
\end{aligned}
$$
{\color{black} The condition in Eq.~\eqref{eq:con_2} 
	} implies
	\begin{equation}\label{eq:mSigma}
		\tilde\sigma_{s,t}^{-1} \lesssim \zeta_{i_0,i_L}^{-1/2}
		=\min\Bigl\{\frac{(q_{i_0}n_{i_0}  \mu_{i_0})^{1/2} }{(\sigma_{i_0}^2 + (1 - q_{i_0})\|\mpp^{(i_0)}\|_{\max}^2)^{1/2}\|\bm{x}_{i_{L},t}\|},
  \frac{(q_{i_L}n_{i_L}  \mu_{i_L})^{1/2} }{(\sigma_{i_L}^2 + (1 - q_{i_0})\|\mpp^{(i_L)}\|_{\max}^2)^{1/2} \|\bm{x}_{i_0,s}\| }\Bigr\},
	\end{equation}
 where $\bm{x}_{i_0,s}$ and $\bm{x}_{i_L,t}$ denote the $s$th row and $t$th row of $\mx_{\mathcal{U}_{i_0}}$ and $\mx_{\mathcal{U}_{i_L}}$, respectively. 
Next we have
\begin{equation}\label{eq:z}
	\begin{aligned}
|\mb^{(i_0,i_L)}_{k_1,t}|
\leq \|\bm{x}_{i_0,k_1}\|
\cdot \|\mLambda^{(i_0)}\|^{-1} 
\cdot \|\bm{x}_{i_L,t}\|
&\lesssim (n_{i_0}\mu_{i_0})^{-1} \|\bm{x}_{i_0,k_1}\| \times \|\bm{x}_{i_L,t}\|.
\end{aligned}
\end{equation}
For any fixed but arbitrary $\epsilon>0$, we have
\begin{equation}\label{eq:E[Y]1}
	\begin{aligned}
	\mathbb{E}\big[|\tilde\my^{(i_0)}_{k_1}|^2\cdot\mathbb{I}\{|\tilde\my^{(i_0)}_{k_1}|>\epsilon\}\big]
	&\leq \mathbb{E}\big[|\tilde\my^{(i_0)}_{k_1}|^2\cdot\mathbb{I}\{|\tilde\my_{k_1}^{(i_0,1)}|>\epsilon/2\}\big]
	+\mathbb{E}\big[|\tilde\my^{(i_0)}_{k_1}|^2\cdot\mathbb{I}\{|\tilde\my_{k_1}^{(i_0,2)}|>\epsilon/2\}\big]\\
	&\leq \mathbb{E}\big[|\tilde\my^{(i_0)}_{k_1}|^2\cdot\mathbb{I}\{|\tilde\my_{k_1}^{(i_0,1)}|>\epsilon/2\}]
	\\ &+\mathbb{E}\big[|\tilde\my^{(i_0)}_{k_1}|^2\cdot\mathbb{I}\{|\tilde\my_{k_1}^{(i_0,2)}|>\epsilon/2\big\}|\mathbf{\Omega}^{(i_0)}_{s,k_1}=1]\cdot q_{i_0},
\end{aligned}
\end{equation}
where the last inequality follows from the fact that $\tilde{\my}^{(i_0,2)}_{k_1} =0$ whenever $\bm{\Omega}_{s,k_1}^{(i_0)} = 0$.

Now if $\mathbf{\Omega}^{(i_0)}_{s,k_1}=1$ then by Eq.~\eqref{eq:mSigma} and  Eq.~\eqref{eq:z} we have
\begin{equation}\label{eq:case1:tildeY1}
	\begin{aligned}
		|\tilde\my_{k_1}^{(i_0,1)}|
		&\leq \tilde\sigma^{-1}_{s,t}
		\cdot |\mpp^{(i_0)}_{s,k_1}/q_{i_0}-\mpp^{(i_0)}_{s,k_1}|
		\cdot |\mb^{(i_0,i_L)}_{k_1,t}|
		\lesssim \frac{(1-q_{i_0})\|\mpp^{(i_0)}\|_{\max} \cdot \|\bm{x}_{i_0,k_1}\|}{(n_{i_0} q_{i_0} \mu_{i_0})^{1/2} (\sigma_{i_0}^2+(1-q_{i_0})\|\mpp^{i_0)}\|_{\max}^2)^{1/2}}.
	\end{aligned}
\end{equation}
Similarly, if $\mathbf{\Omega}^{(i_0)}_{s,k_1}=0$ we have
\begin{equation}\label{eq:case2:tildeY1}
	\begin{aligned}
		|\tilde\my_{k_1}^{(i_0,1)}|
		&\leq \tilde\sigma^{-1}_{s,t}
		\cdot |-\mpp^{(i_0)}_{s,k_1}|
		\cdot |\mb^{(i_0,i_L)}_{k_1,t}|
 \lesssim  \frac{q_{i_0}^{1/2} \|\mpp^{(i_0)}\|_{\max} \cdot \|\bm{x}_{i_0,k_1}\|}{(n_{i_0} \mu_{i_0})^{1/2} (\sigma_{i_0}^2+(1-q_{i_0})\|\mpp^{i_0)}\|_{\max}^2)^{1/2}}.
	\end{aligned}
\end{equation}
Eq.~\eqref{eq:case1:tildeY1} and Eq.~\eqref{eq:case2:tildeY1} together imply 
\begin{equation}\label{eq:E[Y]1_12}
	\sup_{k_1 \in [n_{i_0}]} |\tilde\my_{k_1}^{(i_0,1)}|\leq \epsilon/2
\end{equation}
asymptotically almost surely, provided by the condition in Eq.~\eqref{eq:con_3}.
Returning to Eq.~\eqref{eq:E[Y]1}, note that $\tilde\my_{k_1}^{(i_0,1)}$ is a deterministic function of $\mathbf{\Omega}^{(i_0)}_{s,k_1}$ and hence
\begin{equation}\label{eq:E[Y]2}
	\begin{aligned}
	\mathbb{E}\big[|\tilde\my^{(i_0)}_{k_1}|^2\cdot\mathbb{I}\{|\tilde\my_{k_1}^{(i_0,2)}|>\epsilon/2\}|\mathbf{\Omega}^{(i_0)}_{s,k_1}=1\big]
	&\leq 
	2 \big[|\tilde\my_{k_1}^{(i_0,1)}|^2|\mathbf{\Omega}^{(i_0)}_{s,k_1}=1\big]
	\cdot\mathbb{P}\big[|\tilde\my_{k_1}^{(i_0,2)}|>\epsilon/2|\mathbf{\Omega}^{(i_0)}_{s,k_1}=1\big]
\\ &+ 2 \mathbb{E}\big[|\tilde\my_{k_1}^{(i_0,2)}|^2\cdot\mathbb{I}\{|\tilde\my_{k_1}^{(i_0,2)}|>\epsilon/2\}|\mathbf{\Omega}^{(i_0)}_{s,k_1}=1\big].
	\end{aligned}
\end{equation}

We now bound the terms appearing 
in the above display. 
First, by Eq.~\eqref{eq:case1:tildeY1}, we have
\begin{equation}\label{eq:E[Y]3}
	\begin{aligned}
	\big[|\tilde\my_{k_1}^{(i_0,1)}|^2|\mathbf{\Omega}^{(i_0)}_{s,k_1}=1\big]
	& \lesssim \frac{(1-q_{i_0})^2\|\mpp^{(i_0)}\|_{\max}^2 \cdot \|\bm{x}_{i_0,k_1}\|^2}{n_{i_0} q_{i_0} \mu_{i_0} (\sigma_{i_0}^2+(1-q_{i_0})\|\mpp^{i_0)}\|_{\max}^2)}.
\end{aligned}
\end{equation}
Next,  
as $\mn^{(i_0)}_{s,k_1}$ is sub-Gaussian with $\|\mn^{(i_0)}_{s,k_1}\|_{\psi_2}\leq \sigma_{i_0}$, we have by a similar analysis to Eq.~\eqref{eq:case1:tildeY1} that 
$\tilde\my_{k_1}^{(i_0,2)}$ is also sub-Gaussian with
\[
\|\tilde\my_{k_1}^{(i_0,2)}\|_{\psi_2} \lesssim 
\frac{\sigma_{i_0}\|\mx_{\mathcal{U}_{i_0}}\|_{2 \to \infty}}{(n_{i_0} q_{i_0} \mu_{i_0})^{1/2} (\sigma_{i_0}^2+(1-q_{i_0})\|\mpp^{i_0)}\|_{\max}^2)^{1/2}} =: \nu_{i_0}.
\]
There thus exists a constant $C>0$ such that 
\begin{equation*}
	\begin{aligned}
		\mathbb{P}\big[|\tilde\my_{k_1}^{(i_0,2)}|>t|\mathbf{\Omega}^{(i_0)}_{s,k_1}=1\big]
		&\leq 2\exp\Bigl(\frac{-C t^2}{\nu_{i_0}^2} \Bigr)\text{ for any }t>0;
	\end{aligned}
\end{equation*}
see Eq.~(2.14) in \cite{vershynin2018high} for more details on tail bounds for sub-Gaussian random variables.  
We therefore have
\begin{equation}\label{eq:E[Y]5}
	\begin{aligned}
		&\mathbb{E}\big[|\tilde\my_{k_1}^{(i_0,2)}|^2\cdot\mathbb{I}\{|\tilde\my_{k_1}^{(i_0,2)}|>\epsilon/2\}|\mathbf{\Omega}^{(i_0)}_{s,k_1}=1\big]\\	&=\int_{t=0}^{\infty} \mathbb{P}\big[|\tilde\my_{k_1}^{(i_0,2)}|^2 \cdot \mathbb{I}\big\{\|\tilde\my_{k_1}^{(i_0,2)}\|^2>\epsilon^2/4\big\} \geq t \big|\mathbf{\Omega}^{(i_0)}_{s,k_1}=1\big] d t \\
       & =\epsilon^2/4 \times \mathbb{P}\big[|\tilde\my_{k_1}^{(i_0,2)}|^2 \geq \epsilon^2/4\big|\mathbf{\Omega}^{(i_0)}_{s,k_1}=1\big]
+
        \int_{t=\epsilon^2/4}^{\infty} \mathbb{P}\big[|\tilde\my_{k_1}^{(i_0,2)}|^2\geq t\big|\mathbf{\Omega}^{(i_0)}_{s,k_1}=1 \big] d t \\
        & \leq 
        \epsilon^2/4\times 2\exp\Big(\frac{-C\epsilon^2}{4 \nu_{i_0}^2}\Big)
        +
        2 \int_{\epsilon^2/4}^{\infty} 
        \exp\Big(\frac{-C t}{\nu_{i_0}^2}\Bigr) \, \mathrm{dt} 
       = \Big[\epsilon^2/2+\frac{2 \nu_{i_0}^2}{C} \Big]\exp\Bigl(-\frac{C \epsilon^2}{4 \nu_{i_0}^2}\Bigr).
	\end{aligned}
\end{equation}
Combining Eq.~\eqref{eq:E[Y]1} and Eq.~\eqref{eq:E[Y]1_12} through Eq.~\eqref{eq:E[Y]5}, we have
\begin{equation}\label{eq:lin1}
	\begin{aligned}
	\sum_{k_1=1}^{n_{i_0}}\mathbb{E}\big[|\tilde\my^{(i_0)}_{k_1}|^2\cdot\mathbb{I}\{|\tilde\my^{(i_0)}_{k_1}|>\epsilon\}\big]
	&\lesssim n_{i_0} \cdot q_{i_0}
 \Bigl(\frac{(1-q_{i_0})^2 \|\mpp^{(i_0)}\|_{\max}^2 \cdot \|\mx_{\mathcal{U}_{i_0}}\|_{2 \to \infty}^2}{q_{i_0} n_{i_0}\mu_{i_0} (\sigma_{i_0}^2+(1-q_{i_0})\|\mpp^{i_0)}\|_{\max}^2)} + \epsilon^2+\nu_{i_0}^2\Bigr) \cdot \exp\Bigl(-\frac{C \epsilon^2}{4 \nu_{i_0}^2}\Bigr)\\
 &\lesssim 
 \Bigl(q_{i_0}n_{i_0}\epsilon^2+\frac{\|\mx_{\mathcal{U}_{i_0}}\|_{2 \to \infty}^2}{\mu_{i_0}}\Bigr) \cdot \exp\Bigl(-\frac{C \epsilon^2}{4 \nu_{i_0}^2}\Bigr)
\\ & \rightarrow 0
\end{aligned}
\end{equation}
as $n_{i_0} \rightarrow \infty$, under the assumption that {\color{black} $\frac{n_{i_0} q_{i_0}\mu_{i_0}}{\|\mx_{\mathcal{U}_{i_0}}\|_{2\to\infty}}=\omega(\log (q_{i_0}n_{i_0}))$ provided by Eq.~\eqref{eq:con_3}}. 
Using the same argument we also have 
\begin{equation}\label{eq:lin2}
	\begin{aligned}
	\lim_{n_{i_L} \rightarrow\infty}\sum_{k_2=1}^{n_{i_L}}\mathbb{E}\big[|\tilde\my^{(i_L)}_{k_2}|^2\cdot \mathbb{I}\{|\tilde\my^{(i_L)}_{k_2}|>\epsilon\}\big]
	= 0.
\end{aligned}
\end{equation}
By Eq.~\eqref{eq:lin1}, Eq.~\eqref{eq:lin2}, and applying the Lindeberg-Feller central limit theorem (see e.g.,
Proposition~2.27 in \cite{van2000asymptotic})  we have
\begin{equation}\label{eq:normal_2}
\tilde\sigma_{s,t}^{-1}\Big[\sum_{k_1=1}^{n_{i_0}}\me^{({i_0})}_{s,k_1}\mb^{(i_0,i_L)}_{k_1,t} 
  + \sum_{k_2=1}^{n_{i_L}}\me^{({i_L})}_{t,k_2}\mb^{(i_L,i_0)}_{k_2,s}\Big]
\rightsquigarrow \mathcal{N}(0,1)
\end{equation}
as $\min\{n_{i_0},n_{i_{L}}\} \rightarrow \infty$. Finally, invoking Theorem~\ref{thm:R(i0,...,iL)} and the assumption in Eq.~\eqref{eq:con_4}, 
we have $\tilde{\sigma}_{s,t}^{-1}(\|\mr^{(i_0,\dots,i_L)}\|_{\max} + \|\ms^{(i_0,\dots,i_L)}\|_{\max}) \rightarrow 0$ in probability. Then applying Slutsky's theorem we obtain $\tilde{\sigma}_{s,t}^{-1}(\hat{\mpp}_{\mathcal{U}_{i_0},\mathcal{U}_{i,L}} - \mpp_{\mathcal{U}_{i_0}, \mathcal{U_{i_L}}})_{s,t} \rightsquigarrow \mathcal{N}(0,1)$ as claimed.

\setcounter{theorem}{0}

\section{Technical Lemmas}

\subsection{Technical lemmas for Lemma~\ref{lemma:hat X(i)W(i)-X}}

\begin{lemma}
\label{lemma:|E|,|EU|,|UT EU|}
{\em Consider the setting of Lemma~\ref{lemma:hat X(i)W(i)-X}.
	Then for any $i$, for
    $
    	\me^{(i)}
    	=\ma^{(i)}-\mpp^{(i)}
    $
    we have 
	$$
	\begin{aligned}
		&\|\me^{(i)}\|
    	\lesssim q_i^{-1/2}(\|\mpp^{(i)}\|_{\max}+\sigma_i) n_i^{1/2},\\
    	&\|\muu^{(i)\top}\me^{(i)}\muu^{(i)}\|
    	\lesssim d^{1/2}q_i^{-1/2}(\|\mpp^{(i)}\|_{\max}+\sigma_i)\log^{1/2} n_i ,\\
    	&\|\me^{(i)}\muu^{(i)}\|_{2\to\infty}
    	\lesssim d^{1/2}q_i^{-1/2}(\|\mpp^{(i)}\|_{\max}+\sigma_i)\log^{1/2} n_i 
	\end{aligned}
	$$
	with high probability. }
\end{lemma}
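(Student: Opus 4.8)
The goal is to establish three concentration bounds for the noise matrix $\me^{(i)} = \ma^{(i)} - \mpp^{(i)}$, where the entries of $\me^{(i)}$ have the form $\mpp^{(i)}_{s,t}(\mathbf{\Omega}^{(i)}_{s,t}/q_i - 1) + \mn^{(i)}_{s,t}\mathbf{\Omega}^{(i)}_{s,t}/q_i$.

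The plan is to treat each bound as a consequence of standard random matrix concentration applied to a symmetric matrix with independent (upper-triangular) mean-zero entries. First I would record the per-entry properties of $\me^{(i)}$: each entry is mean-zero, has variance $\md^{(i)}_{s,t} = [\mathrm{Var}(\mn^{(i)}_{s,t}) + (1-q_i)(\mpp^{(i)}_{s,t})^2]/q_i \lesssim (\sigma_i^2 + \|\mpp^{(i)}\|_{\max}^2)/q_i$, and is sub-exponential with a controlled Orlicz norm. The Bernoulli factor $\mathbf{\Omega}^{(i)}_{s,t}/q_i$ has magnitude up to $1/q_i$, so each entry is bounded by roughly $(\|\mpp^{(i)}\|_{\max} + |\mn^{(i)}_{s,t}|)/q_i$; combined with sub-Gaussianity of $\mn^{(i)}$, each entry is sub-exponential with parameter $\lesssim (\|\mpp^{(i)}\|_{\max} + \sigma_i)/q_i$.

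For the spectral norm bound $\|\me^{(i)}\| \lesssim q_i^{-1/2}(\|\mpp^{(i)}\|_{\max}+\sigma_i)n_i^{1/2}$, I would invoke a matrix Bernstein or the standard bound for symmetric matrices with independent sub-exponential entries (e.g.\ along the lines of Bandeira--van Handel or Corollary~3.12 in \cite{bandeira2016sharp}-type results). The variance proxy is $\sigma^2_{\mathrm{row}} = \max_s \sum_t \md^{(i)}_{s,t} \lesssim n_i(\|\mpp^{(i)}\|_{\max}^2 + \sigma_i^2)/q_i$, giving a leading term $\sqrt{n_i}(\|\mpp^{(i)}\|_{\max}+\sigma_i)q_i^{-1/2}$; I would check that the sub-exponential tail contributes only a lower-order $\log$ factor under the assumption $q_i n_i \gtrsim \log^2 n_i$, so that the stated clean rate holds with high probability. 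For the two remaining bounds, which are projections onto the low-dimensional eigenspace $\muu^{(i)}$, the key observation is that $\muu^{(i)\top}\me^{(i)}\muu^{(i)}$ and each row of $\me^{(i)}\muu^{(i)}$ are sums of independent mean-zero terms reweighted by the bounded-coherence vectors $\|\muu^{(i)}\|_{2\to\infty} \lesssim d^{1/2}n_i^{-1/2}$. I would apply a scalar/vector Bernstein inequality entrywise: for a fixed pair of columns $(\muu^{(i)}_a, \muu^{(i)}_b)$, the quantity $\muu^{(i)\top}_a \me^{(i)} \muu^{(i)}_b = \sum_{s,t} \me^{(i)}_{s,t}\muu^{(i)}_{s,a}\muu^{(i)}_{t,b}$ has variance $\lesssim \|\muu^{(i)}\|_{2\to\infty}^2 \cdot q_i^{-1}(\|\mpp^{(i)}\|_{\max}+\sigma_i)^2 \lesssim q_i^{-1}(\|\mpp^{(i)}\|_{\max}+\sigma_i)^2 d/n_i$, and the bounded-coherence weights control the maximal term; summing the deviation over all $O(d^2)$ (resp.\ $n_i d$) entries and taking a union bound yields the $\log^{1/2} n_i$ factors and the $d^{1/2}$ prefactor.

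The main obstacle is handling the sub-exponential (rather than sub-Gaussian or bounded) nature of the entries cleanly while keeping the $\log$ powers exactly as stated, since the missingness reweighting by $1/q_i$ inflates the entrywise tail. The delicate point is verifying that the sub-exponential correction term in each Bernstein application is dominated by the variance term precisely under the hypothesis $q_i n_i \gtrsim \log^2 n_i$ (this is why the quadratic $\log$ appears in the assumption rather than a single $\log$), so that no extra $\log$ factor leaks into the final rates. I would isolate this verification into a preliminary step and otherwise rely on the listed assumptions and standard concentration tools.
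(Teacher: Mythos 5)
Your plan for $\|\me^{(i)}\|$ is sound and essentially the paper's own argument: the paper also splits $\me^{(i)}=\me^{(i,1)}+\me^{(i,2)}$ with $\me^{(i,1)}=\mpp^{(i)}\circ\bm{\Omega}^{(i)}/q_i-\mpp^{(i)}$ and $\me^{(i,2)}=\mn^{(i)}\circ\bm{\Omega}^{(i)}/q_i$, bounds the first piece by bounded-entry concentration and the second by a Bandeira--van Handel bound after truncating $|\mn^{(i)}_{s,t}|\lesssim\sigma_i\log^{1/2}n_i$ with high probability, with the correction terms absorbed exactly by $q_in_i\gtrsim\log^2 n_i$; the "preliminary verification" you defer is precisely this truncation step.

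The genuine gap is in the projected bounds, where the variance inequality you assert is false. Writing $\bm{u}_{s}$ for the $s$th row of $\muu^{(i)}$ and letting $a,b$ index two (unit-norm) columns, the independence of the upper-triangular entries gives
\begin{equation*}
\mathrm{Var}\Bigl(\sum_{s,t}\me^{(i)}_{s,t}\muu^{(i)}_{s,a}\muu^{(i)}_{t,b}\Bigr)
\;\asymp\;\sum_{s,t}\mathrm{Var}\bigl(\me^{(i)}_{s,t}\bigr)\,(\muu^{(i)}_{s,a})^2(\muu^{(i)}_{t,b})^2
\;\le\;\max_{s,t}\mathrm{Var}\bigl(\me^{(i)}_{s,t}\bigr)\;\asymp\;q_i^{-1}(\|\mpp^{(i)}\|_{\max}+\sigma_i)^2,
\end{equation*}
since the weights $(\muu^{(i)}_{s,a})^2$ sum to one; there is \emph{no} gain of a factor $\|\muu^{(i)}\|_{2\to\infty}^2\asymp d/n_i$. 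Coherence controls only the maximal summand (the $L_\infty$ term in Bernstein), not the variance. Two consequences follow. First, with the corrected variance, your entrywise-plus-union-bound route gives each entry of $\muu^{(i)\top}\me^{(i)}\muu^{(i)}$ of order $q_i^{-1/2}(\|\mpp^{(i)}\|_{\max}+\sigma_i)\log^{1/2}n_i$, and aggregating the $d^2$ entries (via the Frobenius norm) yields a prefactor $d$, not the stated $d^{1/2}$; your claimed recovery of $d^{1/2}$ rests entirely on the erroneous $d/n_i$ factor. To obtain $d^{1/2}$ one must work at the matrix level, as the paper does: apply the matrix Bernstein inequality (Tropp's Theorems~1.4 and 1.6) to $\sum_{s\le t}\me^{(i)}_{s,t}\bigl(\bm{u}_s\bm{u}_t^\top+\bm{u}_t\bm{u}_s^\top\bigr)$, where the matrix variance statistic is bounded by $d\,q_i^{-1}\|\mpp^{(i)}\|_{\max}^2$ and the $d$ arises from the total energy $\sum_s\|\bm{u}_s\|^2=d$, not from coherence. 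Second, the $2\to\infty$ bound does survive your scalar route once the variance is corrected, because there the $d^{1/2}$ comes from the row length: each of the $d$ entries of a row of $\me^{(i)}\muu^{(i)}$ is $O(q_i^{-1/2}(\|\mpp^{(i)}\|_{\max}+\sigma_i)\log^{1/2}n_i)$, and the row $\ell_2$ norm then carries the stated factor. If $d$ is treated as a constant the $d$-versus-$d^{1/2}$ loss is harmless downstream, but as a proof of the lemma as stated your argument contains an incorrect step, not merely a lossy one.
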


\begin{proof}
    First write $\me^{(i)}$ as the sum of two matrices, namely
    \begin{equation}\label{eq:E1E2}
    	\begin{aligned}
    	\me^{(i)}
    	=(\mpp^{(i)}+\mn^{(i)})\circ\mathbf\Omega^{(i)}/q_i
    	-\mpp^{(i)}
    	=\underbrace{(\mpp^{(i)}\circ\mathbf\Omega^{(i)}/q_i-\mpp^{(i)})}_{\me^{(i,1)}}
    	+\underbrace{\mn^{(i)}\circ\mathbf\Omega^{(i)}/q_i}_{\me^{(i,2)}}.
    \end{aligned}
    \end{equation}
     If {\color{black}$ n_i q_i \gg  \log n_i$} then,
     following the same arguments as that for Lemma~13 in \cite{abbe2020entrywise} we obtain
    \begin{equation}\label{eq:||Ei1||}
    \begin{aligned}
    	\|\me^{(i,1)}\|
    	&\lesssim (n_i/q_i)^{1/2}\|\mpp^{(i)}\|_{\max}
    \end{aligned}	
    \end{equation}
    with high probability. Next for any arbitrary 
    $s,t\in[n_i]$ with $s\leq t$, let 
    $$
    \begin{aligned}
    \mz^{(i;s,t)}:=\left\{
    \begin{aligned}
    	&\me^{(i,1)}_{s,t}\big[\mathbf{u}^{(i)}_s(\mathbf{u}^{(i)}_t)^\top+\mathbf{u}^{(i)}_t(\mathbf{u}^{(i)}_s)^\top\big]\text{ for }s<t,\\
    	&\me^{(i,1)}_{s,s}\mathbf{u}^{(i)}_s(\mathbf{u}^{(i)}_s)^\top\qquad\qquad\qquad\quad \text{for }s=t,
    \end{aligned}
    \right.
    \end{aligned}
    $$
    where $\mathbf{u}^{(i)}_s$ denotes the $s$th row of $\muu^{(i)}$ for any $s\in[n_i]$.
    Then $\muu^{(i)\top}\me^{(i,1)}\muu^{(i)} = \sum_{s \leq t} \mz^{(i;s,t)}$; note that 
    $\{\mz^{(i;s,t)}\}_{s\leq t}$ are independent, random, self-adjoint matrices of $d$ dimension with $\mathbb{E}[\mz^{(i;s,t)}]=\mathbf{0}$ and
    $$
    \begin{aligned}
    	\|\mz^{(i;s,t)}\|
    	&\leq 
    	|\me^{(i,1)}_{s,t}|
    	\cdot 2
    	\|\mathbf{u}^{(i)}_s\|\cdot\|\mathbf{u}^{(i)}_t\|\\
    	&\leq \|\mpp^{(i)}\|_{\max}/q_i
    	\cdot 2\|\muu^{(i)}\|^2_{2\to\infty}
    	\lesssim d n_i^{-1}q_i^{-1}\|\mpp^{(i)}\|_{\max}.
    \end{aligned}
    $$
    Now for any square matrix $\mm$, we have $(\mm+\mm^\top)^2\preceq 2\mm\mm^\top+2\mm^\top\mm$, where $\preceq$ denotes the Loewner ordering for positive semidefinite matrices. Therefore for any $s<t$ we have
    $$
    \begin{aligned}
    	\big[\mathbf{u}^{(i)}_s(\mathbf{u}^{(i)}_t)^\top+\mathbf{u}^{(i)}_t(\mathbf{u}^{(i)}_s)^\top\big]^2
    	&\preceq 2\mathbf{u}^{(i)}_s(\mathbf{u}^{(i)}_t)^\top\mathbf{u}^{(i)}_t(\mathbf{u}^{(i)}_s)^\top
    	+2\mathbf{u}^{(i)}_t(\mathbf{u}^{(i)}_s)^\top\mathbf{u}^{(i)}_s(\mathbf{u}^{(i)}_t)^\top\\
    	&\preceq 2\|\mathbf{u}^{(i)}_t\|^2\mathbf{u}^{(i)}_s(\mathbf{u}^{(i)}_s)^\top
    	+2\|\mathbf{u}^{(i)}_s\|^2\mathbf{u}^{(i)}_t(\mathbf{u}^{(i)}_t)^\top.\\
    \end{aligned}
    $$
    Furthermore, as
    $\mathbb{E}[(\me^{(i,1)}_{s,t})^2]
    	=\frac{1-q_i}{q_i}(\mpp^{(i)}_{s,t})^2
    	\leq q_i^{-1} \|\mpp^{(i)}\|_{\max}^2
    $ for all $s,t \in [n_i]$, 
    we have
    $$
    \begin{aligned}
    	\Big\|\sum_{s\leq t} \mathbb{E}[(\mz^{(i;s,t)})^2]\Big\|
    	&\leq \max_{s\leq t} \mathbb{E}[(\me^{(i,1)}_{s,t})^2] 
    	\cdot 2\Big\|\sum_{s\in[n_i]}\sum_{t\in[n_i]}\|\mathbf{u}^{(i)}_s\|^2\mathbf{u}^{(i)}_t(\mathbf{u}^{(i)}_t)^\top\Big\|\\
       &\leq 2q_i^{-1} \|\mpp^{(i)}\|_{\max}^2
    	\cdot \sum_{s\in[n_i]}\|\mathbf{u}^{(i)}_s\|^2
    	\cdot \Big\|\sum_{t\in[n_i]}\mathbf{u}^{(i)}_t(\mathbf{u}^{(i)}_t)^\top\Big\|\\
&\leq 2q_i^{-1} \|\mpp^{(i)}\|_{\max}^2
    \cdot d \cdot \|\muu^{(i)\top}\muu^{(i)}\|
     \leq 2d q_i^{-1} \|\mpp^{(i)}\|_{\max}^2.
    \end{aligned}
    $$
    Therefore according to Theorem~1.4 in \cite{tropp2012user}, for all $t>0$, we have
    $$
    \begin{aligned}
    	\mathbb{P}\Big\{\|\muu^{(i)\top}\me^{(i,1)}\muu^{(i)}\|\geq t\Big\}
    	&\leq d\cdot \exp \Big(\frac{-t^2/2}{2d q_i^{-1} \|\mpp^{(i)}\|_{\max}^2+d n_i^{-1}q_i^{-1}\|\mpp^{(i)}\|_{\max}t/3}\Big),
    \end{aligned}
    $$
    and hence
    \begin{equation}\label{eq:||UtEiU1||}
    	\begin{aligned}
    	\|\muu^{(i)\top}\me^{(i,1)}\muu^{(i)}\|
    	&\lesssim d^{1/2}q_i^{-1/2} \|\mpp^{(i)}\|_{\max}\log^{1/2}{n_i}
    	+dn_i^{-1}q_i^{-1} \|\mpp^{(i)}\|_{\max}\log n_i \\
    	&\lesssim d^{1/2}q_i^{-1/2} \|\mpp^{(i)}\|_{\max}\log^{1/2}{n_i} 
    \end{aligned}
    \end{equation}
    with high probability.
    
    Next note that
    $
    \|\me^{(i,1)}\muu^{(i)}\|_{2\to\infty}
    =\max_{s\in[n_i]}\|(\me^{(i,1)}\muu^{(i)})_s\|
    $, where $(\me^{(i,1)}\muu^{(i)})_s$ is the $s$th row of $\me^{(i,1)}\muu^{(i)}$.
    Thus, for any fixed $s\in[n_i]$, $(\me^{(i,1)}\muu^{(i)})_s=\sum_{t\in[n_i]}[\me^{(i,1)}_{s,t} \mathbf{u}^{(i)}_t]$; note that $\{\me^{(i,1)}_{s,t} \mathbf{u}^{(i)}_t\}_{t\in[n_i]}$ are independent, random matrices of dimension $1\times d$ with $\mathbb{E}[\me^{(i,1)}_{s,t} \mathbf{u}^{(i)}_t]=\mathbf{0}$ and
    $$
    \begin{aligned}
    	\|\me^{(i,1)}_{s,t} \mathbf{u}^{(i)}_t\|
    	\leq |\me^{(i,1)}_{s,t}| 
    	\cdot\|\mathbf{u}^{(i)}_t\|
    	\leq \|\mpp^{(i)}\|_{\max}/q_i\cdot \|\muu^{(i)}\|_{2\to\infty}
    	\leq d^{1/2}n_i^{-1/2}q_i^{-1}\|\mpp^{(i)}\|_{\max}.
    \end{aligned}
    $$
    Let $\sigma_*^2 = \max_{t \in [n_i]} \mathbb{E}[
    	(\me^{(i,1)}_{s,t})^2]$. 
    We then have
    $$
    \begin{aligned}
    \max\Big\{
    	\Big\|\sum_{t\in[n_i]}
    	\mathbb{E}
    	[
    	(\me^{(i,1)}_{s,t})^2\mathbf{u}^{(i)}_t (\mathbf{u}^{(i)}_t)^\top
    	]\Big\|,
    	\Big\|\sum_{t\in[n_i]}
    	\mathbb{E}
    	[
    	(\me^{(i,1)}_{s,t})^2 (\mathbf{u}^{(i)}_t)^\top\mathbf{u}^{(i)}_t 
    	]\Big\|
    \Big\}
    &\leq \sigma_{*}^2 \max\Bigl\{
    \bigl\|\muu^{(i)}\bigr\|^2,
    \sum_{t\in[n_i]}\|\bm{u}^{(i)}_t\|^2
    \Bigr\}\\
     & \lesssim d q_i^{-1} \|\mpp^{(i)}\|_{\max}^2.
    \end{aligned}
    $$
    Therefore, by Theorem~1.6 in \cite{tropp2012user}, for any $t > 0$ we have
    $$
    \begin{aligned}
    	\mathbb{P}\Big\{\|(\me^{(i,1)}\muu^{(i)})_s\|\geq t\Big\}
    	\leq (1+d)\cdot \exp \Big(\frac{-t^2/2}{d q_i^{-1} \|\mpp^{(i)}\|_{\max}^2+d^{1/2}n_i^{-1/2}q_i^{-1}\|\mpp^{(i)}\|_{\max}t/3}\Big),
    \end{aligned}
    $$
    and hence
    $$
    \begin{aligned}
    	\|(\me^{(i,1)}\muu^{(i)})_s\|
    	&\lesssim d^{1/2} q_i^{-1/2}\|\mpp^{(i)}\|_{\max}\log^{1/2}{n_i}
    	+d^{1/2}n_i^{-1/2}q_i^{-1}\|\mpp^{(i)}\|_{\max}\log n_i\\
    	&\lesssim d^{1/2}q_i^{-1/2}\|\mpp^{(i)}\|_{\max}\log^{1/2} n_i
    \end{aligned}
    $$
    with high probability, where the final inequality follows from the assumption 
    $n_i q_i\gtrsim \log n_i$. Taking a union over all $s \in [n_i]$ we obtain
    \begin{equation}\label{eq:||EiU1||}
    	\|\me^{(i,1)}\muu^{(i)}\|_{2\to\infty}
    \lesssim d^{1/2}q_i^{-1/2}\|\mpp^{(i)}\|_{\max}\log^{1/2} n_i
    \end{equation}
    with high probability.

    For $\me^{(i,2)}$, its upper triangular entries are independent random variables. Because for any $s,t\in[n_i]$ $\{\mn_{s,t}^{(i)}\}$ is a sub-gaussian random variable with $\|\mn_{s,t}^{(i)}\|_{\psi_2}\leq\sigma_i$, we have
    $
    	\mathbb{E}[(\mn_{s,t}^{(i)})^2]
    	\leq 2\sigma_i^2,
    $
    and it follows that
    $$
    \begin{aligned}
        \mathbb{E}[(\me^{(i,2)}_{s,t})^2]
    	=\mathbb{E}\Big[\frac{(\mn_{s,t}^{(i)})^2}{q_i^2}\circ\mathbf\Omega^{(i)}\Big]
    	\leq \frac{2\sigma_i^2q_i}{q_i^2}
    	=\frac{2\sigma_i^2}{q_i},
    \text{ and }
    	\max_{s\in[n_i]}\sqrt{\sum_{t=1}^{n_i} 
    	\mathbb{E}[(\me^{(i,2)}_{s,t})^2]}
    	\leq \sqrt{\frac{2\sigma_i^2 n_i}{q_i}}.
    \end{aligned}
    $$
    For sub-gaussian random variables, we also have
    $
    |\mn_{s,t}^{(i)}|
    \leq c\sigma_i \log^{1/2} n_i 
    $
    with high probability; here $c$ is some finite constant not depending on $n_i$ or $\sigma_i$. 
    Then with high probability $$
    \max_{s,t\in[n_i]}|\me^{(i,2)}_{s,t}|\leq c q_i^{-1}\sigma_i \log^{1/2} n_i.$$
    Then by combining Corollary 3.12 and Remark~3.13
    in \cite{bandeira2016sharp} with Proposition~A.7 in \cite{truncated_bernstein}, 
    there exists some constant $c'>0$ such that for any $t>0$
    $$
    \begin{aligned}
    	\mathbb{P}
    	\Big\{\|\me^{(i,2)}\|
    	\geq 3\sqrt{\frac{2\sigma_i^2 n_i}{q_i}} +t\Big\}
    	\leq n \exp\Big(-\frac{t^2}{c' (c q_i^{-1}\sigma_i\log^{1/2} n_i)^2 }\Big).
    \end{aligned}
    $$
    Let $t=C(n_i/q_i)^{1/2}\sigma_i$ for some $c>0$, then {\color{black} from $n_i q_i\gtrsim \log^2 n_i$} we have
    \begin{equation}\label{eq:||Ei2||}
    \begin{aligned}
    	\|\me^{(i,2)}\|
    	\lesssim (n_i/q_i)^{1/2}\sigma_i
    \end{aligned}	
    \end{equation}
    with high probability.
    
    For $\muu^{(i)\top}\me^{(i,2)}\muu^{(i)}$ and $\me^{(i,2)}\muu^{(i)}$, with the similar analysis with $\muu^{(i)\top}\me^{(i,1)}\muu^{(i)}$ and $\me^{(i,1)}\muu^{(i)}$ we have
    \begin{equation}\label{eq:||UtEiU2||}
    	\begin{aligned}
    	\|\muu^{(i)\top}\me^{(i,2)}\muu^{(i)}\|
    	&\lesssim d^{1/2} q_i^{-1/2} \sigma_i \log^{1/2} n_i
    	+dn_i^{-1}q_i^{-1} \sigma_i\log^{3/2} n_i\\
    	&\lesssim d^{1/2} q_i^{-1/2} \sigma_i \log^{1/2} n_i
    \end{aligned}
    \end{equation}
    with high probability
    and
    \begin{equation}\label{eq:||EiU2||}
    	\begin{aligned}
    	\|\me^{(i,2)}\muu^{(i)}\|_{2\to\infty}
    	&\lesssim d^{1/2} q_i^{-1/2}\sigma_i\log^{1/2} n_i
    	+d^{1/2} n_i^{-1/2}q_i^{-1}\sigma_i\log^{3/2} n_i\\
    	&\lesssim d^{1/2} q_i^{-1/2}\sigma_i\log^{1/2} n_i
    \end{aligned}
    \end{equation}
    with high probability. 
    
    Finally we combine Eq.~\eqref{eq:||Ei1||} and Eq.~\eqref{eq:||Ei2||}, Eq.~\eqref{eq:||UtEiU1||} and Eq.~\eqref{eq:||UtEiU2||}, Eq.~\eqref{eq:||EiU1||} and Eq.~\eqref{eq:||EiU2||} and obtain the desired results for $\me^{(i)}$, $\muu^{(i)\top}\me^{(i)}\muu^{(i)}$, and $\me^{(i)}\muu$.
\end{proof}

\begin{lemma}
\label{lemma:||sinTheta(Uhat,U)||,Lambdahat}
   {\em Consider the setting of Lemma~\ref{lemma:hat X(i)W(i)-X}.
	Then for any $i\in[K]$ we have 
    $$
    \begin{aligned}
    	    &\lambda_k(\ma^{(i)})\asymp \lambda_k(\mpp^{(i)})\text{ for }k= 1,\dots,d,\\
    	    &\lambda_{k}(\ma^{(i)})\lesssim q_i^{-1/2}(\|\mpp^{(i)}\|_{\max}+\sigma_i) n_i^{1/2}\text{ for }k=d+1,\dots, n_i
    \end{aligned}
    $$
    with high probability, and for $\muu^{(i)}$ and $\hat\muu^{(i)}$ we have
    $$
    \begin{aligned}
    	&\|\sin\Theta(\hat\muu^{(i)},\muu^{(i)})\|
    	\leq \frac{(\|\mpp^{(i)}\|_{\max}+\sigma_i) n_i^{1/2}}{q_i^{1/2}\lambda_{i,\min}},\\
    	&\|\muu^{(i)\top}\hat\muu^{(i)}-\mw^{(i)\top}_\muu\|
    	\lesssim \frac{(\|\mpp^{(i)}\|_{\max}+\sigma_i)^2 n_i}{q_i\lambda_{i,\min}^2},\\
    	&\|\hat\muu^{(i)}\mw^{(i)}_\muu-\muu^{(i)}\|
    	\lesssim \frac{(\|\mpp^{(i)}\|_{\max}+\sigma_i) n_i^{1/2}}{q_i^{1/2}\lambda_{i,\min}}
    \end{aligned}
    $$
    with high probability. }
\end{lemma}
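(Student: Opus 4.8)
The plan is to treat $\ma^{(i)} = \mpp^{(i)} + \me^{(i)}$ as an additive perturbation of $\mpp^{(i)}$ and apply standard eigenvalue/eigenvector perturbation theory, using the spectral-norm bound $\|\me^{(i)}\| \lesssim q_i^{-1/2}(\|\mpp^{(i)}\|_{\max}+\sigma_i)n_i^{1/2}$ from Lemma~\ref{lemma:|E|,|EU|,|UT EU|} as the only probabilistic input. Throughout I abbreviate $E_i := q_i^{-1/2}(\|\mpp^{(i)}\|_{\max}+\sigma_i)n_i^{1/2}$ and note that the second condition in Eq.~\eqref{eq:cond_lemmaA1} is exactly $E_i/\lambda_{i,\min} \ll 1$, so that $\|\me^{(i)}\| \ll \lambda_{i,\min}$ with high probability; this single gap estimate drives every bound in the statement.

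For the eigenvalue claims I would apply Weyl's inequality to $\ma^{(i)} = \mpp^{(i)} + \me^{(i)}$, giving $|\lambda_k(\ma^{(i)}) - \lambda_k(\mpp^{(i)})| \leq \|\me^{(i)}\| \lesssim E_i$ for every $k$. Since $\mpp^{(i)}$ has rank $d$, for $k \leq d$ we have $\lambda_k(\mpp^{(i)}) \geq \lambda_{i,\min} \gg E_i$, whence $\lambda_k(\ma^{(i)}) = (1+o(1))\lambda_k(\mpp^{(i)}) \asymp \lambda_k(\mpp^{(i)})$; and for $k > d$ we have $\lambda_k(\mpp^{(i)}) = 0$, so $|\lambda_k(\ma^{(i)})| \leq \|\me^{(i)}\| \lesssim E_i$, which is the claimed tail bound.

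For the subspace estimate, the separation just established isolates the top-$d$ eigenvalues of $\mpp^{(i)}$ (all $\geq \lambda_{i,\min}$) from its trailing eigenvalues (all $0$). Applying the Davis--Kahan $\sin\Theta$ theorem with gap $\lambda_{i,\min}$ yields $\|\sin\Theta(\hat\muu^{(i)},\muu^{(i)})\| \leq \|\me^{(i)}\|/\lambda_{i,\min} \leq E_i/\lambda_{i,\min}$, which is the stated bound. The two remaining estimates then follow from elementary relations between principal angles and the orthogonal Procrustes alignment $\mw^{(i)}_\muu$. Writing the singular value decomposition $\muu^{(i)\top}\hat\muu^{(i)} = \mathbf{W}_1 \cos\Theta\, \mathbf{W}_2^{\top}$, the polar factor is $\mw^{(i)\top}_\muu = \mathbf{W}_1 \mathbf{W}_2^{\top}$, so that $\|\muu^{(i)\top}\hat\muu^{(i)} - \mw^{(i)\top}_\muu\| = \|\mi - \cos\Theta\| = 1 - \cos\theta_{\max} \leq \|\sin\Theta(\hat\muu^{(i)},\muu^{(i)})\|^2 \lesssim E_i^2/\lambda_{i,\min}^2$, using $1-\cos\theta = \sin^2\theta/(1+\cos\theta) \leq \sin^2\theta$; and $\|\hat\muu^{(i)}\mw^{(i)}_\muu - \muu^{(i)}\| \leq \sqrt{2}\,\|\sin\Theta(\hat\muu^{(i)},\muu^{(i)})\| \lesssim E_i/\lambda_{i,\min}$.

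These calculations are all routine, and the only point requiring genuine care is ensuring that the Davis--Kahan gap is the true gap $\lambda_{i,\min}$ rather than a perturbed one, which is precisely where the condition $\|\me^{(i)}\| \ll \lambda_{i,\min}$ from Eq.~\eqref{eq:cond_lemmaA1} enters: without it the leading and trailing invariant subspaces of $\ma^{(i)}$ could fail to be well separated and the $\sin\Theta$ bound would degrade. I therefore expect the main (and minor) obstacle to be the constant bookkeeping needed so that the $\sin\Theta$ estimate holds with the stated constant $1$ rather than merely up to an absolute multiplicative factor, which requires invoking the exact (gap-$\lambda_{i,\min}$) form of Davis--Kahan together with the high-probability eigenvalue separation from the first step.
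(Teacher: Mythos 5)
Your proposal is correct and follows essentially the same route as the paper: Weyl's inequality for the eigenvalue claims, a $\sin\Theta$ theorem (the paper invokes Wedin's version with denominator $\lambda_d(\ma^{(i)})-\lambda_{d+1}(\mpp^{(i)})$, you invoke Davis--Kahan with gap $\lambda_{i,\min}$ — equivalent here since $\ma^{(i)}$ is symmetric and the eigenvalue separation was just established), followed by the standard principal-angle relations linking the Procrustes factor $\mw^{(i)}_\muu$ to $\|\sin\Theta(\hat\muu^{(i)},\muu^{(i)})\|$ (your $\sqrt{2}\,\|\sin\Theta\|$ bound versus the paper's triangle inequality $\|\sin\Theta\|+\|\sin\Theta\|^2$ give the same rate). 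Your concern about obtaining the $\sin\Theta$ bound with constant exactly $1$ is moot: the probabilistic input $\|\me^{(i)}\|\lesssim q_i^{-1/2}(\|\mpp^{(i)}\|_{\max}+\sigma_i)n_i^{1/2}$ already carries an unspecified constant, and the paper's own proof likewise only establishes that inequality up to $\lesssim$.
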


\begin{proof}

	By perturbation theorem for singular values (see Problem~III.6.13 in \cite{horn2012matrix}) and Lemma~\ref{lemma:|E|,|EU|,|UT EU|}, for any $k\in[n_i]$ we have
    $$
    \begin{aligned}
    	|\lambda_k(\ma^{(i)})-\lambda_k(\mpp^{(i)})|
    	\leq \|\me^{(i)}\|
    	\lesssim q_i^{-1/2}(\|\mpp^{(i)}\|_{\max}+\sigma_i) n_i^{1/2}
    \end{aligned}
    $$
    with high probability. Then the condition in Eq.~\eqref{eq:cond_lemmaA1} 
    yields the stated claim for $\lambda_k(\ma^{(i)})$. 
   Next, by Wedin's $\sin\Theta$ Theorem (see e.g., Theorem~4.4 in Chapter~4 of  \cite{stewart_sun}) and Lemma~\ref{lemma:|E|,|EU|,|UT EU|}, we have
	$$
	\begin{aligned}
		\|\sin \Theta(\hat\muu^{(i)},\muu^{(i)})\|
		\leq \frac{\|\me^{(i)}\|}{\lambda_{d}(\ma^{(i)})-\lambda_{d+1}(\mpp^{(i)})}
		\lesssim \frac{\|\me^{(i)}\|}{\lambda_{i,\min}}
		\lesssim \frac{(\|\mpp^{(i)}\|_{\max}+\sigma_i) n_i^{1/2}}{q_i^{1/2}\lambda_{i,\min}}
	\end{aligned}
	$$
	with high probability, and hence
    $$
    \begin{aligned}
    	&\|\muu^{(i)\top}\hat\muu^{(i)}-\mw^{(i)\top}_\muu\|
    	\leq \|\sin\Theta(\hat\muu^{(i)},\muu^{(i)})\|^2
    	\lesssim \frac{(\|\mpp^{(i)}\|_{\max}+\sigma_i)^2 n_i}{q_i\lambda_{i,\min}^2},\\
    	&\|\hat\muu^{(i)}\mw^{(i)}_\muu-\muu^{(i)}\|
    	\leq \|\sin\Theta(\hat\muu^{(i)},\muu^{(i)})\|
    	+\|\muu^{(i)\top}\hat\muu^{(i)}-\mw^{(i)\top}_\muu\|
    	\lesssim \frac{(\|\mpp^{(i)}\|_{\max}+\sigma_i) n_i^{1/2}}{q_i^{1/2}\lambda_{i,\min}}
    \end{aligned}
    $$
    with high probability.
\end{proof}

\begin{lemma}
\label{lemma:||Lambda UT Uhat-UT Uhat Lambdahat||, ||Lambda WT-WT Lambdahat||, ||Lambdahat1/2 W-W Lambda1/2||}
{\em     Consider the setting of Lemma~\ref{lemma:hat X(i)W(i)-X}.
	Then for each $i $, we have
	$$
	\begin{aligned}
		&\|\mLambda^{(i)}\muu^{(i)\top}\hat\muu^{(i)}-\muu^{(i)\top}\hat\muu^{(i)} \hat\mLambda^{(i)}\|
		\lesssim \frac{ (\|\mpp^{(i)}\|_{\max}+\sigma_i)^2 n_i}{q_i\lambda_{i,\min}}
		+d^{1/2} q_i^{-1/2}(\|\mpp^{(i)}\|_{\max}+\sigma_i)\log^{1/2} n_i ,\\
		&\|\mLambda^{(i)}\mw_\muu^{(i)\top}
		-\mw_\muu^{(i)\top}\hat\mLambda^{(i)}\|
	    \lesssim \frac{(\|\mpp^{(i)}\|_{\max}+\sigma_i)^2 n_i}{q_i\lambda_{i,\min}}
		+d^{1/2}q_i^{-1/2}(\|\mpp^{(i)}\|_{\max}+\sigma_i)\log^{1/2} n_i,\\
		&\|(\hat\mLambda^{(i)})^{1/2}\mw^{(i)}_\muu-\mw^{(i)}_\muu(\mLambda^{(i)})^{1/2}\|
		\lesssim \frac{d^{1/2}(\|\mpp^{(i)}\|_{\max}+\sigma_i)^2 n_i}{q_i\lambda_{i,\min}^{3/2}}
		+\frac{d(\|\mpp^{(i)}\|_{\max}+\sigma_i)\log^{1/2} n_i}{q_i^{1/2}\lambda_{i,\min}^{1/2}},\\
		&\|\breve\mw^{(i)}-\mw_\muu^{(i)}\|
		\lesssim \frac{d^{1/2}(\|\mpp^{(i)}\|_{\max}+\sigma_i)^2 n_i}{q_i\lambda_{i,\min}^2}
		+\frac{d(\|\mpp^{(i)}\|_{\max}+\sigma_i)\log^{1/2} n_i}{q_i^{1/2}\lambda_{i,\min}}
	\end{aligned}
	$$
	with high probability.}
\end{lemma}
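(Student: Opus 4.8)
The plan is to establish the four bounds in the order stated, since each successive estimate will be derived from its predecessor together with the perturbation estimates already available in Lemma~\ref{lemma:|E|,|EU|,|UT EU|} and Lemma~\ref{lemma:||sinTheta(Uhat,U)||,Lambdahat}. For the first quantity I would start from the exact algebraic identity coming from the two eigendecompositions. Because $\muu^{(i)\top}\muu^{(i)}=\mi$ we have $\mLambda^{(i)}\muu^{(i)\top}=\muu^{(i)\top}\mpp^{(i)}$, and because $\hat\muu^{(i)\top}_\perp\hat\muu^{(i)}=\mathbf{0}$ we have $\hat\muu^{(i)}\hat\mLambda^{(i)}=\ma^{(i)}\hat\muu^{(i)}$; subtracting yields
\[
\mLambda^{(i)}\muu^{(i)\top}\hat\muu^{(i)}-\muu^{(i)\top}\hat\muu^{(i)}\hat\mLambda^{(i)}
=\muu^{(i)\top}\bigl(\mpp^{(i)}-\ma^{(i)}\bigr)\hat\muu^{(i)}
=-\muu^{(i)\top}\me^{(i)}\hat\muu^{(i)}.
\]
I would then split $\hat\muu^{(i)}=\muu^{(i)}\mw^{(i)\top}_\muu+\bigl(\hat\muu^{(i)}-\muu^{(i)}\mw^{(i)\top}_\muu\bigr)$, giving
\[
\|\muu^{(i)\top}\me^{(i)}\hat\muu^{(i)}\|
\leq\|\muu^{(i)\top}\me^{(i)}\muu^{(i)}\|
+\|\me^{(i)}\|\cdot\|\hat\muu^{(i)}\mw^{(i)}_\muu-\muu^{(i)}\|,
\]
and substitute the bounds for $\|\muu^{(i)\top}\me^{(i)}\muu^{(i)}\|$ and $\|\me^{(i)}\|$ from Lemma~\ref{lemma:|E|,|EU|,|UT EU|} together with the $\sin\Theta$ estimate from Lemma~\ref{lemma:||sinTheta(Uhat,U)||,Lambdahat}. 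The product of the last two factors produces the term $\frac{(\|\mpp^{(i)}\|_{\max}+\sigma_i)^2 n_i}{q_i\lambda_{i,\min}}$, while $\|\muu^{(i)\top}\me^{(i)}\muu^{(i)}\|$ supplies the logarithmic term.

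For the second quantity I would replace $\mw^{(i)\top}_\muu$ by $\muu^{(i)\top}\hat\muu^{(i)}$ up to the correction $\mathbf{G}:=\mw^{(i)\top}_\muu-\muu^{(i)\top}\hat\muu^{(i)}$, whose norm is controlled by Lemma~\ref{lemma:||sinTheta(Uhat,U)||,Lambdahat}. Writing
\[
\mLambda^{(i)}\mw^{(i)\top}_\muu-\mw^{(i)\top}_\muu\hat\mLambda^{(i)}
=\bigl(\mLambda^{(i)}\muu^{(i)\top}\hat\muu^{(i)}-\muu^{(i)\top}\hat\muu^{(i)}\hat\mLambda^{(i)}\bigr)
+\mLambda^{(i)}\mathbf{G}-\mathbf{G}\hat\mLambda^{(i)},
\]
the first parenthesis is exactly the first estimate; using $\|\mLambda^{(i)}\|\asymp\|\hat\mLambda^{(i)}\|\asymp\lambda_{i,\max}\lesssim\lambda_{i,\min}$ (bounded condition number plus the eigenvalue comparison of Lemma~\ref{lemma:||sinTheta(Uhat,U)||,Lambdahat}), the two remaining terms are $\lesssim\lambda_{i,\min}\|\mathbf{G}\|\lesssim\frac{(\|\mpp^{(i)}\|_{\max}+\sigma_i)^2 n_i}{q_i\lambda_{i,\min}}$, which is already subsumed by the first bound. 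Thus the second bound coincides with the first, as claimed.

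The third quantity is the main obstacle, being a commutator between the two matrix square roots. The key is that $\mLambda^{(i)}$ and $\hat\mLambda^{(i)}$ are diagonal, with all relevant eigenvalues $\asymp\lambda_{i,\min}$ and bounded away from zero. Setting $\mathbf{X}:=(\hat\mLambda^{(i)})^{1/2}\mw^{(i)}_\muu-\mw^{(i)}_\muu(\mLambda^{(i)})^{1/2}$ and $\mathbf{C}:=\hat\mLambda^{(i)}\mw^{(i)}_\muu-\mw^{(i)}_\muu\mLambda^{(i)}$, I note that $\|\mathbf{C}\|$ equals the second estimate (it is the transpose of that estimate's argument). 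Since the matrices are diagonal, the entries obey $\mathbf{X}_{k,l}=\mathbf{C}_{k,l}/\bigl(\hat\lambda_k^{1/2}+\lambda_l^{1/2}\bigr)$, so $|\mathbf{X}_{k,l}|\leq|\mathbf{C}_{k,l}|/(2\lambda_{i,\min}^{1/2})$ entrywise. Passing through the Frobenius norm then gives
\[
\|\mathbf{X}\|\leq\|\mathbf{X}\|_{\F}\leq\frac{\|\mathbf{C}\|_{\F}}{2\lambda_{i,\min}^{1/2}}
\leq\frac{d^{1/2}\,\|\mathbf{C}\|}{2\lambda_{i,\min}^{1/2}},
\]
where $\|\mathbf{C}\|_{\F}\leq d^{1/2}\|\mathbf{C}\|$ (valid as $\mathbf{C}$ is $d\times d$) is exactly what produces the extra $d^{1/2}$ relative to the second bound. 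Dividing the second bound by $\lambda_{i,\min}^{1/2}$ and inserting this $d^{1/2}$ yields the stated third bound. (Equivalently, one may read $\mathbf{A}\mathbf{X}+\mathbf{X}\mathbf{B}=\mathbf{C}$ with $\mathbf{A}=(\hat\mLambda^{(i)})^{1/2}$, $\mathbf{B}=(\mLambda^{(i)})^{1/2}$ as a Sylvester equation with positive definite coefficients, though the entrywise route is the cleanest way to track the $d^{1/2}$.)

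Finally, for $\|\breve\mw^{(i)}-\mw^{(i)}_\muu\|$ I would recall that $\breve\mw^{(i)}$ is the natural rescaling $(\hat\mLambda^{(i)})^{-1/2}\mw^{(i)}_\muu(\mLambda^{(i)})^{1/2}$, so that
\[
\breve\mw^{(i)}-\mw^{(i)}_\muu
=(\hat\mLambda^{(i)})^{-1/2}\bigl(\mw^{(i)}_\muu(\mLambda^{(i)})^{1/2}-(\hat\mLambda^{(i)})^{1/2}\mw^{(i)}_\muu\bigr)
=-(\hat\mLambda^{(i)})^{-1/2}\mathbf{X}.
\]
Using $\|(\hat\mLambda^{(i)})^{-1/2}\|\lesssim\lambda_{i,\min}^{-1/2}$ (again from the eigenvalue comparison of Lemma~\ref{lemma:||sinTheta(Uhat,U)||,Lambdahat}), this is the third bound divided by $\lambda_{i,\min}^{1/2}$, which is precisely the fourth bound. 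All estimates hold on the same high-probability event on which Lemma~\ref{lemma:|E|,|EU|,|UT EU|} and Lemma~\ref{lemma:||sinTheta(Uhat,U)||,Lambdahat} hold, so I would conclude by intersecting these events.
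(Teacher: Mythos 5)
Your handling of the first three bounds is correct and follows essentially the same route as the paper: the exact identity $\mLambda^{(i)}\muu^{(i)\top}\hat\muu^{(i)}-\muu^{(i)\top}\hat\muu^{(i)}\hat\mLambda^{(i)}=-\muu^{(i)\top}\me^{(i)}\hat\muu^{(i)}$ followed by splitting $\hat\muu^{(i)}$ along $\muu^{(i)}\mw_\muu^{(i)\top}$, the transfer from $\muu^{(i)\top}\hat\muu^{(i)}$ to $\mw_\muu^{(i)\top}$ via $\|\muu^{(i)\top}\hat\muu^{(i)}-\mw_\muu^{(i)\top}\|$, and the Hadamard-quotient device (entrywise division by $\hat\lambda_k^{1/2}+\lambda_l^{1/2}$, then passing through the Frobenius norm to pay the factor $d^{1/2}$) are exactly the paper's steps.

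The fourth bound, however, has a genuine gap. You ``recall'' that $\breve\mw^{(i)}=(\hat\mLambda^{(i)})^{-1/2}\mw^{(i)}_\muu(\mLambda^{(i)})^{1/2}$, but this is not the definition and it cannot hold: by construction $\breve\mw^{(i)}=\argmin_{\mo\in\mathcal{O}_d}\|\hat\mx^{(i)}\mo-\mx^{(i)}\|_F$ is an \emph{orthogonal} matrix (the polar factor of $\hat\mx^{(i)\top}\mx^{(i)}$), whereas $(\hat\mLambda^{(i)})^{-1/2}\mw^{(i)}_\muu(\mLambda^{(i)})^{1/2}$ is orthogonal only if $\mw^{(i)}_\muu\mLambda^{(i)}\mw^{(i)\top}_\muu=\hat\mLambda^{(i)}$, i.e.\ only if $\ma^{(i)}$ and $\mpp^{(i)}$ share their top-$d$ eigenvalues, which fails under noise. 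Hence the identity $\breve\mw^{(i)}-\mw^{(i)}_\muu=-(\hat\mLambda^{(i)})^{-1/2}\mathbf{X}$ on which your last step rests is false, and the step collapses; quantifying the distance between the Procrustes solution and $\mw_\muu^{(i)}$ is precisely the content of this bound and cannot be read off from a closed form. The paper's argument instead observes that $\mw^{(i)}_\muu=\hat H(\mw^{(i)}_\muu\mLambda^{(i)})$ and $\breve\mw^{(i)}=\hat H(\hat\mx^{(i)\top}\mx^{(i)})$, where $\hat H(\mc)=\mc(\mc^\top\mc)^{-1/2}$ is the polar-factor map, and invokes the perturbation theorem for polar decompositions (Theorem~1 of \cite{rencang}): once $\|\hat\mx^{(i)\top}\mx^{(i)}-\mw^{(i)}_\muu\mLambda^{(i)}\|\leq\lambda_{i,\min}$ (which holds with high probability under Eq.~\eqref{eq:cond_lemmaA1}),
\[
\|\breve\mw^{(i)}-\mw^{(i)}_\muu\|\leq\frac{2\,\|\hat\mx^{(i)\top}\mx^{(i)}-\mw^{(i)}_\muu\mLambda^{(i)}\|}{\lambda_{i,\min}}.
\]
The numerator is then controlled through the decomposition
\[
\hat\mx^{(i)\top}\mx^{(i)}-\mw^{(i)}_\muu\mLambda^{(i)}
=(\hat\mLambda^{(i)})^{1/2}\bigl(\hat\muu^{(i)\top}\muu^{(i)}-\mw^{(i)}_\muu\bigr)(\mLambda^{(i)})^{1/2}
+\bigl((\hat\mLambda^{(i)})^{1/2}\mw^{(i)}_\muu-\mw^{(i)}_\muu(\mLambda^{(i)})^{1/2}\bigr)(\mLambda^{(i)})^{1/2},
\]
using Lemma~\ref{lemma:||sinTheta(Uhat,U)||,Lambdahat} for the first term and your (correct) third bound for the second; dividing by $\lambda_{i,\min}$ yields the stated fourth bound. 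Replacing your final step with this polar-decomposition perturbation argument repairs the proof; everything before it can stand.
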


\begin{proof}
    For ease of exposition we will fix a value of $i$ and thereby drop the index from our matrices.
    
    For $\mLambda\muu^\top\hat\muu-\muu^\top\hat\muu \hat\mLambda$, because
	$$
	\begin{aligned}
		\mLambda\muu^\top\hat\muu-\muu^\top\hat\muu \hat\mLambda
		&=\muu^\top\mpp\hat\muu-\muu^\top\ma\hat\muu
		=-\muu^\top\me\hat\muu
		=-\muu^\top\me(\hat\muu\mw_\muu-\muu)\mw_\muu^\top-\muu^\top\me\muu\mw_\muu^\top,
	\end{aligned}
	$$
	by Lemma~\ref{lemma:|E|,|EU|,|UT EU|} and Lemma~\ref{lemma:||sinTheta(Uhat,U)||,Lambdahat} we have
	\begin{equation}\label{eq:lambdaUtU-UtUlambda}
		\begin{aligned}
		\|\mLambda\muu^\top\hat\muu-\muu^\top\hat\muu \hat\mLambda\|
		&\leq \|\me\|\cdot\|\hat\muu\mw_\muu-\muu\|
		+\|\muu^\top\me\muu\|\\
		&\lesssim \frac{ (\|\mpp^{(i)}\|_{\max}+\sigma_i)^2 n_i}{q_i\lambda_{i,\min}}
		+d^{1/2} q_i^{-1/2}(\|\mpp^{(i)}\|_{\max}+\sigma_i)\log^{1/2}{n_i}
	\end{aligned}
	\end{equation}
	with high probability. 
	
	For $\mLambda\mw_\muu^{\top}
		-\mw_\muu^{\top}\hat\mLambda$, we notice
	$$
	\begin{aligned}
		\mLambda\mw_\muu^{\top}
		-\mw_\muu^{\top}\hat\mLambda
		=\mLambda(\mw_\muu^{\top}-\muu^\top\hat\muu)
		+(\mLambda\muu^\top\hat\muu-\muu^\top\hat\muu\hat\mLambda)
		+(\muu^\top\hat\muu-\mw_\muu^{\top})\hat\mLambda.
	\end{aligned}
	$$
	For the right hand side of the above display, we have bounded the second term in Eq.~\eqref{eq:lambdaUtU-UtUlambda}. 
	For the first term and the third term, by Lemma~\ref{lemma:||sinTheta(Uhat,U)||,Lambdahat} 
	we have
	\begin{equation}\label{eq:lambda...+...lambda}
		\begin{aligned}
		\|\mLambda(\mw_\muu^{\top}-\muu^\top\hat\muu)
		+(\muu^\top\hat\muu-\mw_\muu^{\top})\hat\mLambda\|
		&\leq (\|\mLambda\|+\|\hat\mLambda\|)\cdot\|\muu^\top\hat\muu-\mw_\muu^{\top}\|\\
		&\lesssim \lambda_{i,\max}\cdot\frac{(\|\mpp^{(i)}\|_{\max}+\sigma_i)^2 n_i}{q_i\lambda_{i,\min}^2}
		\lesssim \frac{(\|\mpp^{(i)}\|_{\max}+\sigma_i)^2 n_i}{q_i\lambda_{i,\min}}
	\end{aligned}
	\end{equation}
	with high probability.
	Combining Eq.~\eqref{eq:lambdaUtU-UtUlambda} and Eq.~\eqref{eq:lambda...+...lambda}, we have 
	\begin{equation}\label{eq:lambdaWt+Wtlambda}
		\begin{aligned}
			\|\mLambda\mw_\muu^{\top}
		-\mw_\muu^{\top}\hat\mLambda\|
	    \lesssim \frac{(\|\mpp^{(i)}\|_{\max}+\sigma_i)^2 n_i}{q_i\lambda_{i,\min}}
		+d^{1/2}q_i^{-1/2}(\|\mpp^{(i)}\|_{\max}+\sigma_i)\log^{1/2} n_i 
		\end{aligned}
	\end{equation}
    with high probability.
	
	For $\hat\mLambda^{1/2}\mw_\muu-\mw_\muu\mLambda^{1/2}$, for any $k,l\in[d]$ the $(k,l)$ entry can be written as
	\begin{equation}\label{eq:lambdaW-Wlambda}
		\begin{aligned}
		(\hat\mLambda^{1/2}\mw_\muu-\mw_\muu\mLambda^{1/2})_{k,l}
	    &=(\mw_\muu)_{k,l}\cdot\big(\sqrt{\lambda_{l}(\ma)}-\sqrt{\lambda_{k}(\mpp)}\big)\\
	    &=(\mw_\muu)_{k,l}\cdot\big({\lambda_{l}(\ma)}-{\lambda_{k}(\mpp)}\big)\cdot \big(\sqrt{\lambda_{l}(\ma)}+\sqrt{\lambda_{k}(\mpp)}\big)^{-1}\\
	    &=(\hat\mLambda\mw_\muu-\mw_\muu\mLambda)_{k,l}\cdot \big(\sqrt{\lambda_{l}(\ma)}+\sqrt{\lambda_{k}(\mpp)}\big)^{-1}.
	\end{aligned}
	\end{equation}
	We define $\mh$ as a $d\times d$ matrix whose entries are $\mh_{k,\ell}=\big(\sqrt{\lambda_{\ell}(\ma)}+\sqrt{\lambda_{k}(\mpp)}\big)^{-1}$.  Eq.~\eqref{eq:lambdaW-Wlambda} means
	$$
	\hat\mLambda^{1/2}\mw_\muu-\mw_\muu\mLambda^{1/2}
	=(\hat\mLambda\mw_\muu-\mw_\muu\mLambda)\circ \mh,
	$$
	where $\circ$ denotes the Hadamard matrix product, and by Eq.~\eqref{eq:lambdaWt+Wtlambda} and Lemma~\ref{lemma:||sinTheta(Uhat,U)||,Lambdahat} it follows that
	\begin{equation}\label{eq:lambda1/2W-Wlambda1/2}
		\begin{aligned}
		\|\hat\mLambda^{1/2}\mw_\muu-\mw_\muu\mLambda^{1/2}\|
	    &\leq d^{1/2}\|\mh\|_{\max}\cdot \|\hat\mLambda\mw_\muu-\mw_\muu\mLambda\|\\
	    &\leq d^{1/2}(\sqrt{\lambda_d(\ma)}+\sqrt{\lambda_d(\mpp)})^{-1}
	    \cdot \|\mLambda\mw_\muu^{\top}
		-\mw_\muu^{\top}\hat\mLambda\|\\
	    &\lesssim \frac{d^{1/2}}{\lambda_{i,\min}^{1/2}}\Big(\frac{(\|\mpp^{(i)}\|_{\max}+\sigma_i)^2 n_i}{q_i\lambda_{i,\min}}
		+d^{1/2}q_i^{-1/2}(\|\mpp^{(i)}\|_{\max}+\sigma_i)\log^{1/2} n_i\Big)\\
		&\lesssim \frac{d^{1/2}(\|\mpp^{(i)}\|_{\max}+\sigma_i)^2 n_i}{q_i\lambda_{i,\min}^{3/2}}
		+\frac{d(\|\mpp^{(i)}\|_{\max}+\sigma_i)\log^{1/2} n_i}{q_i^{1/2}\lambda_{i,\min}^{1/2}}
	\end{aligned}
	\end{equation}
	with high probability.
	
	For $\breve\mw-\mw_\muu$, notice
	$$
    \begin{aligned}
    	&\breve\mw=\underset{\mo\in \mathcal{O}_d}{\arg\min}\|\hat\mx\mo-\mx\|_F
    	\text{ and }
    	\mw_\muu
    	=\hat H(\mw_\muu \mLambda),
    \end{aligned}
    $$
    where $\hat H(\cdot)$ is a matrix-valued function, and for any $d\times d$ invertible matrix $\mc$, $\hat H(\mc)=\mc(\mc^\top\mc)^{-1/2}$. Then if $\|\hat\mx^\top\mx-\mw_\muu \mLambda\|\leq \lambda_{d}(\mw_\muu\mLambda)$, according to Theorem~1 in \cite{rencang} we have
	$$
	\begin{aligned}
		\|\breve\mw-\mw_\muu\|
		&\leq \frac{2\|\hat\mx^\top\mx-\mw_\muu \mLambda\|}{\sigma_{d}(\hat{\mx}^{\top} \mx) + \sigma_{d}(\mw_\muu\mLambda)} \leq \frac{2 \|\hat\mx^\top\mx-\mw_\muu \mLambda\|}{\lambda_{i,\min}}
	\end{aligned}
    $$
    We now bound $\|\hat\mx^\top\mx-\mw_\muu \mLambda\|$.
    By Lemma~\ref{lemma:||sinTheta(Uhat,U)||,Lambdahat} and Eq.~\eqref{eq:lambda1/2W-Wlambda1/2} we have
    $$
    \begin{aligned}
    	\|\hat\mx^\top\mx-\mw_\muu \mLambda\|
    	&=\|\hat\mLambda^{1/2}\hat\muu^\top\muu\mLambda^{1/2}-\mw_\muu \mLambda\|\\
    	&=\|\hat\mLambda^{1/2}(\hat\muu^\top\muu-\mw_\muu)\mLambda^{1/2}
    	+(\hat\mLambda^{1/2}\mw_\muu-\mw_\muu\mLambda^{1/2})\mLambda^{1/2}\|\\
    	&\leq \|\hat\mLambda\|^{1/2}\cdot \|\hat\muu^\top\muu-\mw_\muu\| \cdot \|\mLambda\|^{1/2}
    	+\|\hat\mLambda^{1/2}\mw_\muu-\mw_\muu\mLambda^{1/2}\|\cdot \|\mLambda\|^{1/2}\\
    	&\lesssim \lambda_{i,\max}\cdot \frac{(\|\mpp^{(i)}\|_{\max}+\sigma_i)^2 n_i}{q_i\lambda_{i,\min}^2}
    	+\Big(\frac{d^{1/2}(\|\mpp^{(i)}\|_{\max}+\sigma_i)^2 n_i}{q_i\lambda_{i,\min}^{3/2}}
		+\frac{d(\|\mpp^{(i)}\|_{\max}+\sigma_i)\log^{1/2} n_i}{q_i^{1/2}\lambda_{i,\min}^{1/2}}\Big)\cdot \lambda_{i,\max}^{1/2}\\
		&\lesssim \frac{d^{1/2}(\|\mpp^{(i)}\|_{\max}+\sigma_i)^2 n_i}{q_i\lambda_{i,\min}}
		+dq_i^{-1/2}(\|\mpp^{(i)}\|_{\max}+\sigma_i)\log^{1/2} n_i 
    \end{aligned}
    $$
    with high probability. Now by Eq.~\eqref{eq:cond_lemmaA1}
we have
$\|\hat\mx^\top\mx-\mw_\muu \mLambda\|\leq \lambda_{i,\min}$ with high probability. 
In summary we have  
	$$
	\begin{aligned}
		\|\breve\mw-\mw_\muu\|
		&\leq \frac{2\|\hat\mx^\top\mx-\mw_\muu \mLambda\|}{\lambda_{i,\min}}
		\lesssim \frac{d^{1/2}(\|\mpp^{(i)}\|_{\max}+\sigma_i)^2 n_i}{q_i\lambda_{i,\min}^2}
		+\frac{d(\|\mpp^{(i)}\|_{\max}+\sigma_i)\log^{1/2} n_i}{q_i^{1/2}\lambda_{i,\min}}
	\end{aligned}
    $$
    with high probability.
\end{proof}

\subsection{Proof of Lemma~\ref{lemma:hat X(i)W(i)-X}}
\label{App: Lemma:hat X(i)W(i)-X}

We first state a lemma for bounding the 
error of $\hat\muu^{(i)}$ as an estimate of $\muu^{(i)}$; see Section~\ref{App: lemma:hat U(i)W(i)-U(i)} for a proof.
\begin{lemma} 
\label{lemma:hat U(i)W(i)-U(i)}
    {\em Consider the setting of Lemma~\ref{lemma:hat X(i)W(i)-X}.
    Define $\mw^{(i)}_\muu=\underset{\mo\in \mathcal{O}_d}{\arg\min}\|\hat\muu^{(i)}\mo-\muu^{(i)}\|_F$.
    We then have
	\begin{equation}\label{eq:hatUW-U=...}
		\hat\muu^{(i)}\mw^{(i)}_\muu-\muu^{(i)}=\me^{(i)}\muu^{(i)}(\mLambda^{(i)})^{-1}+\mr_\muu^{(i)},
	\end{equation}
	where $\me^{(i)}=\ma^{(i)}-\mpp^{(i)}$ and $\mr_\muu^{(i)}$ is a $n_i\times d$ random matrix satisfying
	$$
	\|\mr_\muu^{(i)}\|
	\lesssim \frac{ (\|\mpp^{(i)}\|_{\max}+\sigma_i)^2 n_i}{q_i\lambda_{i,\min}^2}
		+\frac{(\|\mpp^{(i)}\|_{\max}+\sigma_i)\log^{1/2}n_i}{q_i^{1/2}\lambda_{i,\min}}
	$$
    with high probability. Furthermore, suppose
	\[\frac{(\|\mpp^{(i)}\|_{\max}+\sigma_i) n_i^{1/2}\log^{1/2}n_i}{q_i^{1/2}\lambda_{i,\min}}\ll1.\]
 We then have
	$$\|\mr_\muu^{(i)}\|_{2\to\infty}
	\lesssim \frac{(\|\mpp^{(i)}\|_{\max}+\sigma_i)^2 n_i^{1/2}\log n_i}{q_i\lambda_{i,\min}^2}\
		+\frac{(\|\mpp^{(i)}\|_{\max}+\sigma_i)\log^{1/2} n_i}{q_i^{1/2}n_i^{1/2}\lambda_{i,\min}}
	$$
	with high probability. }
\end{lemma}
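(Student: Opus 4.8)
The plan is to establish the first-order expansion \eqref{eq:hatUW-U=...} by algebraic manipulation of the eigenvector equation, bound the remainder $\mr_\muu^{(i)}$ in spectral norm using the three technical lemmas already proved, and finally upgrade to the $2\to\infty$ norm via a leave-one-out decoupling. Throughout I would fix $i$ and suppress it, writing $\muu,\mLambda,\hat\muu,\hat\mLambda,\me,\mw_\muu$ for $\muu^{(i)},\mLambda^{(i)},\hat\muu^{(i)},\hat\mLambda^{(i)},\me^{(i)},\mw^{(i)}_\muu$.

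First I would derive the master identity. Since $\ma^{(i)}\hat\muu=\hat\muu\hat\mLambda$ and $\mpp^{(i)}=\muu\mLambda\muu^\top$, subtracting gives $\me\hat\muu=\hat\muu\hat\mLambda-\muu\mLambda\muu^\top\hat\muu$, hence, on the high-probability event that $\hat\mLambda$ is invertible (guaranteed by the eigenvalue estimates of the $\sin\Theta$ lemma), $\hat\muu=\muu\mLambda\muu^\top\hat\muu\hat\mLambda^{-1}+\me\hat\muu\hat\mLambda^{-1}$. Right-multiplying by $\mw_\muu$ and subtracting $\muu$ then yields
\begin{equation*}
\begin{aligned}
\hat\muu\mw_\muu-\muu={}&\underbrace{\muu\bigl(\mLambda\muu^\top\hat\muu\hat\mLambda^{-1}\mw_\muu-\mi\bigr)}_{\mathrm{(a)}}+\underbrace{\me\hat\muu\bigl(\hat\mLambda^{-1}\mw_\muu-\mw_\muu\mLambda^{-1}\bigr)}_{\mathrm{(b)}}\\
&+\underbrace{\me(\hat\muu\mw_\muu-\muu)\mLambda^{-1}}_{\mathrm{(c)}}+\me\muu\mLambda^{-1},
\end{aligned}
\end{equation*}
so that $\mr_\muu:=\mathrm{(a)}+\mathrm{(b)}+\mathrm{(c)}$. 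Term (a) is a pure remainder: after replacing $\muu^\top\hat\muu$ by $\mw_\muu^\top$ and $\mLambda\mw_\muu^\top$ by $\mw_\muu^\top\hat\mLambda$, the bracket telescopes to $\mi-\mi$, so what survives is controlled by $\|\muu^\top\hat\muu-\mw_\muu^\top\|$ and $\|\mLambda\mw_\muu^\top-\mw_\muu^\top\hat\mLambda\|$; terms (b), (c) are second-order since each carries one extra factor of $\me$ or of an alignment discrepancy.

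For the spectral bound I would bound (a) by the telescoping just described, which reproduces exactly the claimed $\tfrac{(\|\mpp^{(i)}\|_{\max}+\sigma_i)^2 n_i}{q_i\lambda_{i,\min}^2}+\tfrac{(\|\mpp^{(i)}\|_{\max}+\sigma_i)\log^{1/2}n_i}{q_i^{1/2}\lambda_{i,\min}}$; bound (c) by $\|\me\|\,\|\hat\muu\mw_\muu-\muu\|\,\|\mLambda^{-1}\|$, matching the first displayed term; and bound (b) by $\|\me\|$ times $\|\hat\mLambda^{-1}\mw_\muu-\mw_\muu\mLambda^{-1}\|=\|\hat\mLambda^{-1}(\mw_\muu\mLambda-\hat\mLambda\mw_\muu)\mLambda^{-1}\|$, using the eigenvalue-alignment estimate. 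Under the smallness hypothesis \eqref{eq:cond_lemmaA1} both (b) and (c) are dominated by (a), giving the stated $\|\mr_\muu\|$ bound.

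The main obstacle is the $2\to\infty$ bound, and specifically the $2\to\infty$ control of the factor $\me\hat\muu$ entering (b) and, through $\me(\hat\muu\mw_\muu-\muu)$, entering (c): the naive estimate $\|\me\mathbf{M}\|_{2\to\infty}\le\|\me\|_{2\to\infty}\|\mathbf{M}\|$ is hopelessly lossy since $\me$ is dense and unstructured. The resolution I would adopt is the leave-one-out technique of \cite{abbe2020entrywise}: for each row $m$ define $\ma^{(m)}$ to agree with $\ma^{(i)}$ except that its $m$th row and column are replaced by their noiseless values, so $\ma^{(m)}$ is independent of the $m$th row of $\me$; let $\hat\muu^{(m)}$ hold its leading eigenvectors. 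I would show via a second $\sin\Theta$-type estimate that $\hat\muu\mw_\muu$ and $\hat\muu^{(m)}\mw_\muu^{(m)}$ are close, so that conditionally on $\ma^{(m)}$ the $m$th row of $\me\hat\muu^{(m)}$ is a sum of independent mean-zero sub-Gaussian terms; a Bernstein bound plus a union over $m$ then gives $\|\me\hat\muu\|_{2\to\infty}\lesssim q_i^{-1/2}(\|\mpp^{(i)}\|_{\max}+\sigma_i)\log^{1/2}n_i$ with high probability. Granting this, the rest is routine: (a) is handled by $\|\muu\|_{2\to\infty}\lesssim d^{1/2}n_i^{-1/2}$ times its spectral bound, while (b) and (c) are handled by $\|\me\hat\muu\|_{2\to\infty}$ and $\|\me\muu\|_{2\to\infty}$ together with the alignment and $\sin\Theta$ estimates; invoking the extra condition $\tfrac{(\|\mpp^{(i)}\|_{\max}+\sigma_i)n_i^{1/2}\log^{1/2}n_i}{q_i^{1/2}\lambda_{i,\min}}\ll1$ to discard lower-order contributions yields the claimed $\|\mr_\muu\|_{2\to\infty}$ bound.
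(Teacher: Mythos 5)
Your expansion and your spectral-norm bound are correct and essentially the paper's own argument in a mild regrouping: your (a) equals the sum of the paper's $\mr_{\muu,1}$ and $\mr_{\muu,2}$, your (b) and (c) are a rearrangement of the paper's $\mr_{\muu,3}$ and $\mr_{\muu,4}$, and you invoke the same supporting lemmas. The genuine gap is in the $2\to\infty$ bound for your term (c) $=\me(\hat\muu\mw_\muu-\muu)\mLambda^{-1}$. You propose to control it ``by $\|\me\hat\muu\|_{2\to\infty}$ and $\|\me\muu\|_{2\to\infty}$'', i.e.\ by splitting $\me(\hat\muu\mw_\muu-\muu)=\me\hat\muu\mw_\muu-\me\muu$ and bounding the two pieces separately. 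Writing $\beta:=\|\mpp^{(i)}\|_{\max}+\sigma_i$, each piece is of order $q_i^{-1/2}\beta\log^{1/2}n_i$ in $2\to\infty$ norm, so the split yields
\[
\|\me(\hat\muu\mw_\muu-\muu)\mLambda^{-1}\|_{2\to\infty}\lesssim \frac{\beta\log^{1/2}n_i}{q_i^{1/2}\lambda_{i,\min}},
\]
while the lemma claims the total remainder is at most
\[
\frac{\beta^2 n_i^{1/2}\log n_i}{q_i\lambda_{i,\min}^2}+\frac{\beta\log^{1/2}n_i}{q_i^{1/2}n_i^{1/2}\lambda_{i,\min}}.
\]
Under the lemma's own hypothesis $\beta n_i^{1/2}\log^{1/2}n_i/(q_i^{1/2}\lambda_{i,\min})\ll1$, your bound exceeds the first claimed term by the factor $q_i^{1/2}\lambda_{i,\min}/(\beta n_i^{1/2}\log^{1/2}n_i)\gg1$ and the second by the factor $n_i^{1/2}$, so the split can never produce the claim. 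The failure is structural: $\me\hat\muu\mw_\muu$ and $\me\muu$ are individually first-order quantities whose \emph{difference} is second-order small, and splitting throws away exactly that cancellation.

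The missing idea is that the leave-one-out argument must be run on the product $\me(\hat\muu\mw_\muu-\muu)$ itself, not on $\me\hat\muu$. That is what the paper does in Lemma~\ref{lemma:|| E(Uhat W -U) ||2toinf}: each row $\mathbf{e}_h^\top(\hat\muu\mw_\muu-\muu)$ is decomposed through the auxiliary eigenvectors $\hat\muu^{[h]}$, e.g.\ into $\mathbf{e}_h^\top(\hat\muu\hat\muu^\top-\hat\muu^{[h]}\hat\muu^{[h]\top})\muu$ and $\mathbf{e}_h^\top(\hat\muu^{[h]}\hat\muu^{[h]\top}\muu-\muu)$, so that one can exploit simultaneously the independence of $\mathbf{e}_h$ from $\hat\muu^{[h]}$ and the fact that the vector it hits, $\hat\muu^{[h]}\hat\muu^{[h]\top}\muu-\muu$, is itself already small in $2\to\infty$ norm (Lemma~\ref{lemma:||eh E (Uhat[h] Uhat[h]T U-U)||}), together with the proximity bound on $\|\sin\Theta(\hat\muu^{[h]},\hat\muu)\|$ (Lemma~\ref{lemma:||eh E Uhat||, ||sin Theta(Uhat[h],Uhat)||}). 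This produces $\|\me(\hat\muu\mw_\muu-\muu)\|_{2\to\infty}\lesssim \beta^2 n_i^{1/2}\log n_i/(q_i\lambda_{i,\min})$, a true product of two small factors, which after multiplication by $\|\mLambda^{-1}\|$ is precisely the first term of the claimed bound. Your leave-one-out sketch, which only establishes $\|\me\hat\muu\|_{2\to\infty}\lesssim q_i^{-1/2}\beta\log^{1/2}n_i$, does suffice for your term (b), because there the extra factor $\|\hat\mLambda^{-1}\mw_\muu-\mw_\muu\mLambda^{-1}\|$ supplies the missing smallness; it does not suffice for (c).
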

Now recall that
    $$
        \mx_{\mathcal{U}_i}=\muu_{\mathcal{U}_i}\mLambda^{1/2}, \quad \mx^{(i)}=\muu^{(i)}(\mLambda^{(i)})^{1/2}
        \text{ and }
        \hat\mx^{(i)}=\hat\muu^{(i)}(\hat\mLambda^{(i)})^{1/2}.
    $$
    As $\mx_{\mathcal{U}_i}\mx_{\mathcal{U}_i}^\top=\mpp^{(i)}=\mx^{(i)}\mx^{(i)\top}$, there exists an orthogonal $\tilde\mw^{(i)}$ such that $\mx_{\mathcal{U}_i}=\mx^{(i)}\tilde\mw^{(i)}$.
    Define 
    \begin{equation}
    	\begin{aligned}
    	\breve\mw^{(i)}=\underset{\mo\in \mathcal{O}_d}{\arg\min}\|\hat\mx^{(i)}\mo-\mx^{(i)}\|_F,
    \end{aligned}
    \end{equation}
    and recall
    $$
    \begin{aligned}
    	\mw^{(i)}=\underset{\mo\in \mathcal{O}_d}{\arg\min}\|\hat\mx^{(i)}\mo-\mx_{\mathcal{U}_i}\|_F \text{ and }\mw^{(i)}_\muu=\underset{\mo\in \mathcal{O}_d}{\arg\min}\|\hat\muu^{(i)}\mo-\muu^{(i)}\|_F.
    \end{aligned}
    $$
    Note that $\mw^{(i)}=\breve\mw^{(i)}\tilde\mw^{(i)}$.
    Next, by Eq.~\eqref{eq:hatUW-U=...} we have
    $$
    \begin{aligned}
    	\hat\mx^{(i)}\mw^{(i)}-\mx_{\mathcal{U}_i}
    	=&[\hat\mx^{(i)}\breve\mw^{(i)}-\mx^{(i)}]\tilde\mw^{(i)}\\
    	=&\Big[[(\hat\muu^{(i)}\mw_\muu^{(i)}-\muu^{(i)})+\muu^{(i)}](\mw_\muu^{(i)})^\top(\hat\mLambda^{(i)})^{1/2}(\breve\mw^{(i)}-\mw^{(i)}_\muu)
    	\\&+[(\hat\muu^{(i)}\mw_\muu^{(i)}-\muu^{(i)})+\muu^{(i)}](\mw_\muu^{(i)})^\top[(\hat\mLambda^{(i)})^{1/2}\mw^{(i)}_\muu-\mw^{(i)}_\muu(\mLambda^{(i)})^{1/2}]\\
    	&+(\hat\muu^{(i)}\mw^{(i)}_\muu-\muu^{(i)})(\mLambda^{(i)})^{1/2}\Big]\tilde\mw^{(i)}\\
    	=&\me^{(i)}\muu^{(i)}(\mLambda^{(i)})^{-1/2}\tilde\mw^{(i)}
    	+\mr^{(i)} \\
    	=&\me^{(i)}\mx^{(i)}(\mx^{(i)\top}\mx^{(i)})^{-1}\tilde\mw^{(i)}
    	+\mr^{(i)} \\
    	=&\me^{(i)}\mx_{\mathcal{U}_i}(\mx_{\mathcal{U}_i}^\top\mx_{\mathcal{U}_i})^{-1}
    	+\mr^{(i)},
    \end{aligned}
    $$
    where the remainder term $\mr^{(i)}$ is
    $$
    \begin{aligned}
    	\mr^{(i)}
    	=&\Big[\mr^{(i)}_\muu(\mLambda^{(i)})^{1/2}
    	+[(\hat\muu^{(i)}\mw_\muu^{(i)}-\muu^{(i)})+\muu^{(i)}](\mw_\muu^{(i)})^\top(\hat\mLambda^{(i)})^{1/2}(\breve\mw^{(i)}-\mw^{(i)}_\muu)
    	\\&+[(\hat\muu^{(i)}\mw_\muu^{(i)}-\muu^{(i)})+\muu^{(i)}](\mw_\muu^{(i)})^\top[(\hat\mLambda^{(i)})^{1/2}\mw^{(i)}_\muu-\mw^{(i)}_\muu(\mLambda^{(i)})^{1/2}]
    	\Big]\tilde\mw^{(i)}.
    \end{aligned}
    $$
    By Lemma~\ref{lemma:hat U(i)W(i)-U(i)}, Lemma~\ref{lemma:||sinTheta(Uhat,U)||,Lambdahat} and Lemma~\ref{lemma:||Lambda UT Uhat-UT Uhat Lambdahat||, ||Lambda WT-WT Lambdahat||, ||Lambdahat1/2 W-W Lambda1/2||} we obtain
    $$
    \begin{aligned}
    	\|\mr^{(i)}\|
    	&\lesssim \|\mr^{(i)}_\muu\|\cdot \|\mLambda^{(i)}\|^{1/2}
    	+[\|\hat\muu^{(i)}\mw_\muu^{(i)}-\muu^{(i)}\|+\|\muu^{(i)}\|]\cdot\|\hat\mLambda^{(i)}\|^{1/2}\cdot\|\breve\mw^{(i)}-\mw^{(i)}_\muu\|
    	\\&+[\|\hat\muu^{(i)}\mw_\muu^{(i)}-\muu^{(i)}\|+\|\muu^{(i)}\|]\cdot\|(\hat\mLambda^{(i)})^{1/2}\mw^{(i)}_\muu-\mw^{(i)}_\muu(\mLambda^{(i)})^{1/2}\|\\
    	&\lesssim \Big(\frac{ (\|\mpp^{(i)}\|_{\max}+\sigma_i)^2 n_i}{q_i\lambda_{i,\min}^2}
		+\frac{(\|\mpp^{(i)}\|_{\max}+\sigma_i)\log^{1/2}n_i}{q_i^{1/2}\lambda_{i,\min}}\Big)
		\cdot\lambda_{i,\max}^{1/2}\\
		&+\lambda_{i,\max}^{1/2}\cdot 
		\Big(\frac{(\|\mpp^{(i)}\|_{\max}+\sigma_i)^2 n_i}{q_i\lambda_{i,\min}^2}
		+\frac{(\|\mpp^{(i)}\|_{\max}+\sigma_i)\log^{1/2} n_i}{q_i^{1/2}\lambda_{i,\min}}\Big)	\\
		&+\frac{(\|\mpp^{(i)}\|_{\max}+\sigma_i)^2 n_i}{q_i\lambda_{i,\min}^{3/2}}
		+\frac{(\|\mpp^{(i)}\|_{\max}+\sigma_i)\log^{1/2} n_i}{q_i^{1/2}\lambda_{i,\min}^{1/2}}\\
		&\lesssim \frac{(\|\mpp^{(i)}\|_{\max}+\sigma_i)^2 n_i}{q_i\lambda_{i,\min}^{3/2}}
		+\frac{(\|\mpp^{(i)}\|_{\max}+\sigma_i)\log^{1/2} n_i}{q_i^{1/2}\lambda_{i,\min}^{1/2}}
    \end{aligned}
    $$
    with high probability and 
    $$
    \begin{aligned}
    	\|\mr^{(i)}\|_{2\to\infty}
    	&\lesssim \|\mr^{(i)}_\muu\|_{2\to\infty}\cdot \|\mLambda^{(i)}\|^{1/2}
    	+[\|\hat\muu^{(i)}\mw_\muu^{(i)}-\muu^{(i)}\|_{2\to\infty}+\|\muu^{(i)}\|_{2\to\infty}]\cdot \|\hat\mLambda^{(i)}\|^{1/2}\cdot\|\breve\mw^{(i)}-\mw^{(i)}_\muu\|\\
    	&+[\|\hat\muu^{(i)}\mw_\muu^{(i)}-\muu^{(i)}\|_{2\to\infty}+\|\muu^{(i)}\|_{2\to\infty}]\cdot\|(\hat\mLambda^{(i)})^{1/2}\mw^{(i)}_\muu-\mw^{(i)}_\muu(\mLambda^{(i)})^{1/2}\|\\
    	&\lesssim \Big(\frac{(\|\mpp^{(i)}\|_{\max}+\sigma_i)^2 n_i^{1/2}\log N}{q_i\lambda_{i,\min}^2}\
		+\frac{(\|\mpp^{(i)}\|_{\max}+\sigma_i)\log^{1/2} n_i}{q_i^{1/2}n_i^{1/2}\lambda_{i,\min}}\Big)
		\cdot \lambda_{i,\max}^{1/2}\\
		&+ \frac{1}{n_i^{1/2}}\cdot
		\lambda_{i,\max}^{1/2}\cdot 
		\Big(\frac{(\|\mpp^{(i)}\|_{\max}+\sigma_i)^2 n_i}{q_i\lambda_{i,\min}^2}
		+\frac{(\|\mpp^{(i)}\|_{\max}+\sigma_i)\log^{1/2} n_i}{q_i^{1/2}\lambda_{i,\min}}\Big)\\
		&+\frac{1}{n_i^{1/2}}\cdot
		\Big(\frac{(\|\mpp^{(i)}\|_{\max}+\sigma_i)^2 n_i}{q_i\lambda_{i,\min}^{3/2}}
		+\frac{(\|\mpp^{(i)}\|_{\max}+\sigma_i)\log^{1/2} n_i}{q_i^{1/2}\lambda_{i,\min}^{1/2}}\Big)\\
		&\lesssim \frac{(\|\mpp^{(i)}\|_{\max}+\sigma_i)^2 n_i^{1/2}\log n_i}{q_i\lambda_{i,\min}^{3/2}}
		+\frac{ (\|\mpp^{(i)}\|_{\max}+\sigma_i)\log^{1/2} n_i}{q_i^{1/2}n_i^{1/2}\lambda_{i,\min}^{1/2}}
    \end{aligned}
    $$
    with high probability.

\subsection{Proof of Lemma~\ref{lemma:hat U(i)W(i)-U(i)}}
\label{App: lemma:hat U(i)W(i)-U(i)}

    For ease of exposition we fix a value of $i$ and drop this index from our matrices.
	First note that
	$$
	\begin{aligned}
		\hat\muu
		&=\ma\hat\muu\hat\mLambda^{-1}
		=(\muu\mLambda\muu^\top + \me) \hat\muu\hat\mLambda^{-1}
		=\muu\muu^\top\hat\muu
		+\muu(\mLambda\muu^\top\hat\muu-\muu^\top\hat\muu\hat\mLambda)\hat\mLambda^{-1}
		+\me\hat\muu\hat\mLambda^{-1}.
	\end{aligned}
	$$
	Hence for any $d\times d$ orthogonal matrix $\mw$, we have
	$$
	\begin{aligned}
		\hat\muu\mw-\muu
		&=\me\muu\mLambda^{-1}
		+\underbrace{\muu(\muu^\top\hat\muu-\mw^\top)\mw}_{\mr_{\muu,1}}
		+\underbrace{\muu(\mLambda\muu^\top\hat\muu-\muu^\top\hat\muu\hat\mLambda)\hat\mLambda^{-1}\mw}_{\mr_{\muu,2}}\\
		&+\underbrace{\me\muu\mLambda^{-1}(\mLambda\mw^\top-\mw^\top\hat\mLambda)\hat\mLambda^{-1}\mw}_{\mr_{\muu,3}}
		+\underbrace{\me(\hat\muu\mw-\muu)\mw^\top\hat\mLambda^{-1}\mw}_{\mr_{\muu,4}}.
	\end{aligned}
	$$
    Let $\mw_\muu$ be the minimizer of $\|\hat\muu\mo-\muu\|_F$ over all $d\times d$ orthogonal matrices $\mo$. We now bound the spectral norms of $\mr_{\muu,1},\dots,\mr_{\muu,4}$ when $\mw = \mw_{\muu}$.
	
	For $\mr_{\muu,1}$, by Lemma~\ref{lemma:||sinTheta(Uhat,U)||,Lambdahat} we have
	\begin{equation*}
		\begin{aligned}
		&\|\mr_{\muu,1}\|
		\leq \|\muu^{\top}\hat\muu-\mw^{\top}_\muu\|
		\lesssim \frac{ (\|\mpp^{(i)}\|_{\max}+\sigma_i)^2 n_i}{q_i\lambda_{i,\min}^2},\\
		&\|\mr_{\muu,1}\|_{2\to\infty}
		\leq \|\muu\|_{2\to\infty}
		\cdot\|\muu^{\top}\hat\muu-\mw^{\top}_\muu\|
		\lesssim \frac{(\|\mpp^{(i)}\|_{\max}+\sigma_i)^2n_i^{1/2} }{q_i\lambda_{i,\min}^2}
	\end{aligned}
	\end{equation*}
	with high probability.
	
	For $\mr_{\muu,2}$, by Lemma~\ref{lemma:||sinTheta(Uhat,U)||,Lambdahat} and Lemma~\ref{lemma:||Lambda UT Uhat-UT Uhat Lambdahat||, ||Lambda WT-WT Lambdahat||, ||Lambdahat1/2 W-W Lambda1/2||} we have
	$$
    \begin{aligned}
    	\|\mr_{\muu,2}\|
    	\leq& \|\mLambda\muu^{\top}\hat\muu-\muu^{\top}\hat\muu\hat\mLambda^{(i)}\|
    	\cdot \|\hat\mLambda^{-1}\|\\
    	\lesssim& \Big(\frac{ (\|\mpp^{(i)}\|_{\max}+\sigma_i)^2 n_i}{q_i\lambda_{i,\min}}
		+q_i^{-1/2}(\|\mpp^{(i)}\|_{\max}+\sigma_i)\log^{1/2} n_i\Big)\lambda_{i,\min}^{-1}\\
		\lesssim &\frac{ (\|\mpp^{(i)}\|_{\max}+\sigma_i)^2 n_i}{q_i\lambda_{i,\min}^2}
		+\frac{(\|\mpp^{(i)}\|_{\max}+\sigma_i)\log^{1/2} n_i}{q_i^{1/2}\lambda_{i,\min}},\\
		\|\mr_{\muu,2}\|_{2\to\infty}
		&\leq \|\muu\|_{2\to\infty}
		\cdot\|\mLambda\muu^{\top}\hat\muu-\muu^{\top}\hat\muu\hat\mLambda^{(i)}\|
    	\cdot \|\hat\mLambda^{-1}\|\\
		&\lesssim \frac{ (\|\mpp^{(i)}\|_{\max}+\sigma_i)^2 n_i^{1/2}}{q_i\lambda_{i,\min}^2}
		+\frac{(\|\mpp^{(i)}\|_{\max}+\sigma_i)\log^{1/2} n_i}{q_i^{1/2}n_i^{1/2}\lambda_{i,\min}}
	\end{aligned}
	$$
	with high probability.

	For $\mr_{\muu,3}$, by Lemma~\ref{lemma:|E|,|EU|,|UT EU|}, Lemma~\ref{lemma:||sinTheta(Uhat,U)||,Lambdahat} and Lemma~\ref{lemma:||Lambda UT Uhat-UT Uhat Lambdahat||, ||Lambda WT-WT Lambdahat||, ||Lambdahat1/2 W-W Lambda1/2||} we have
	$$
	\begin{aligned}
		\|\mr_{\muu,3}\|
		&\leq \|\me\|\cdot\|\mLambda^{-1}\|\cdot\|\mLambda\mw_{\muu}^\top-\mw^\top_{\muu}\hat\mLambda\|\cdot\|\hat\mLambda^{-1}\|\\
		&\lesssim q_i^{-1/2}(\|\mpp^{(i)}\|_{\max}+\sigma_i)n_i^{1/2}
		\cdot \lambda_{i,\min}^{-2}
		\cdot \Big(\frac{ (\|\mpp^{(i)}\|_{\max}+\sigma_i)^2 n_i}{q_i\lambda_{i,\min}}
		+\frac{(\|\mpp^{(i)}\|_{\max}+\sigma_i)\log^{1/2} n_i}{q_i^{1/2}}\Big)
		\\
		&\lesssim \frac{(\|\mpp^{(i)}\|_{\max}+\sigma_i)^3 n_i^{3/2}}{q_i^{3/2}\lambda_{i,\min}^3}
		+\frac{(\|\mpp^{(i)}\|_{\max}+\sigma_i)^2n_i^{1/2}\log^{1/2} n_i}{q_i\lambda_{i,\min}^2},\\
		\|\mr_{\muu,3}\|_{2\to\infty}
		&\leq \|\me\muu\|_{2\to\infty}\cdot\|\mLambda^{-1}\|\cdot\|\mLambda\mw^\top_{\muu}-\mw^\top_{\muu}\hat\mLambda\|\cdot\|\hat\mLambda^{-1}\|\\
		&\lesssim q_i^{-1/2}(\|\mpp^{(i)}\|_{\max}+\sigma_i)\log^{1/2} n_i
		\cdot \lambda_{i,\min}^{-2}
		\cdot \Big(\frac{ (\|\mpp^{(i)}\|_{\max}+\sigma_i)^2 n_i}{q_i\lambda_{i,\min}}
		+\frac{(\|\mpp^{(i)}\|_{\max}+\sigma_i)\log^{1/2} n_i}{q_i^{1/2}}\Big)
		\\
		&\lesssim \frac{(\|\mpp^{(i)}\|_{\max}+\sigma_i)^3 n_i\log^{1/2} n_i}{q_i^{3/2}\lambda_{i,\min}^3}
		+\frac{(\|\mpp^{(i)}\|_{\max}+\sigma_i)^2\log n_i}{q_i\lambda_{i,\min}^2}
	\end{aligned}
	$$
	with high probability.
	
	For $\mr_{\muu,4}$, by Lemma~\ref{lemma:|E|,|EU|,|UT EU|}, Lemma~\ref{lemma:||sinTheta(Uhat,U)||,Lambdahat} and Lemma~\ref{lemma:|| E(Uhat W -U) ||2toinf}, we have
	$$
	\begin{aligned}
		\|\mr_{\muu,4}\|
		&\leq \|\me\|
		\cdot \|\hat\muu\mw_\muu-\muu\|
		\cdot \|\hat\mLambda^{-1}\|\\
		&\lesssim q_i^{-1/2}(\|\mpp^{(i)}\|_{\max}+\sigma_i) n_i^{1/2}\cdot \frac{(\|\mpp^{(i)}\|_{\max}+\sigma_i) n_i^{1/2}}{q_i^{1/2}\lambda_{i,\min}}\cdot \frac{1}{\lambda_{i,\min}}\\
		&\lesssim \frac{(\|\mpp^{(i)}\|_{\max}+\sigma_i)^2 n_i}{q_i\lambda_{i,\min}^2},\\
	    \|\mr_{\muu,4}\|_{2\to\infty}
	    &\leq \|\me(\hat\muu\mw_\muu-\muu)\|_{2\to\infty}
	    \cdot \|\hat\mLambda^{-1}\|\\
		&\lesssim \frac{(\|\mpp^{(i)}\|_{\max}+\sigma_i)^2 n_i^{1/2}\log n_i}{q_i\lambda_{i,\min}^2}
	\end{aligned}
	$$
	with high probability. 
	Combining the above bounds we obtain
	$$
	\begin{aligned}
		\Big\|\sum_{r=1}^4\mr_{\muu,r}\Big\|
		&\lesssim \frac{ (\|\mpp^{(i)}\|_{\max}+\sigma_i)^2 n_i}{q_i\lambda_{i,\min}^2}\\
		&+\frac{(\|\mpp^{(i)}\|_{\max}+\sigma_i)^2 n_i}{q_i\lambda_{i,\min}^2}
		+\frac{(\|\mpp^{(i)}\|_{\max}+\sigma_i)\log^{1/2} n_i}{q_i^{1/2}\lambda_{i,\min}}\\
		&+\frac{(\|\mpp^{(i)}\|_{\max}+\sigma_i)^3 n_i^{3/2}}{q_i^{3/2}\lambda_{i,\min}^3}
		+\frac{(\|\mpp^{(i)}\|_{\max}+\sigma_i)^2n_i^{1/2}\log^{1/2} n_i}{q_i\lambda_{i,\min}^2}\\
		&+\frac{(\|\mpp^{(i)}\|_{\max}+\sigma_i)^2 n_i}{q_i\lambda_{i,\min}^2}\\
		&\lesssim \frac{ (\|\mpp^{(i)}\|_{\max}+\sigma_i)^2 n_i}{q_i\lambda_{i,\min}^2}
		+\frac{(\|\mpp^{(i)}\|_{\max}+\sigma_i)\log^{1/2}n_i}{q_i^{1/2}\lambda_{i,\min}},\\
		\Big\|\sum_{r=1}^4\mr_{\muu,r}\Big\|_{2\to\infty}
		&\lesssim \frac{(\|\mpp^{(i)}\|_{\max}+\sigma_i)^2n_i^{1/2} }{q_i\lambda_{i,\min}^2}\\
		&+\frac{ (\|\mpp^{(i)}\|_{\max}+\sigma_i)^2 n_i^{1/2}}{q_i\lambda_{i,\min}^2}
		+\frac{(\|\mpp^{(i)}\|_{\max}+\sigma_i)\log^{1/2} n_i}{q_i^{1/2}n_i^{1/2}\lambda_{i,\min}}\\
		&+\frac{(\|\mpp^{(i)}\|_{\max}+\sigma_i)^3 n_i \log^{1/2}n_i}{q_i^{3/2}\lambda_{i,\min}^3}
		+\frac{(\|\mpp^{(i)}\|_{\max}+\sigma_i)^2\log n_i}{q_i\lambda_{i,\min}^2}\\
		&+\frac{(\|\mpp^{(i)}\|_{\max}+\sigma_i)^2 n_i^{1/2}\log n_i}{q_i\lambda_{i,\min}^2}\\
		&\lesssim \frac{(\|\mpp^{(i)}\|_{\max}+\sigma_i)^2 n_i^{1/2}\log n_i}{q_i\lambda_{i,\min}^2}\
		+\frac{(\|\mpp^{(i)}\|_{\max}+\sigma_i)\log^{1/2} n_i}{q_i^{1/2}n_i^{1/2}\lambda_{i,\min}}
	\end{aligned}
	$$
	with high probability
	as {\color{black} $\frac{(\|\mpp^{(i)}\|_{\max}+\sigma_i)n_i^{1/2}\log^{1/2} n_i}{q_i^{1/2}\lambda_{i,\min}}\lesssim 1$ as implied by Eq.~\eqref{eq:cond2_lemmaA1}.}

\subsection{Technical lemmas for $\mr_{\muu,4}$ in Lemma~\ref{lemma:hat U(i)W(i)-U(i)}}
Our bound for $\mr_{\muu,4}$ in the above proof of Lemma~\ref{lemma:hat U(i)W(i)-U(i)} is based on a series of technical lemmas which culminate in a high-probability bound for $\|\me^{(i)}(\hat\muu^{(i)}\mw^{(i)}_\muu-\muu^{(i)})\|_{2\to\infty}$. These lemmas are derived using an adaptation of the leave-one-out analysis presented in Theorem~3.2 of \cite{xie2021entrywise}; the noise model for $\mn^{(i)}$ in the current paper is, however, somewhat different from that of \cite{xie2021entrywise} and thus we chose to provide self-contained proofs of these lemmas here. 
Once again, for ease of exposition we will fix a value of $i$ and thereby drop this index from our matrices in this section. 

We introduce some notations.
For $\ma$ whose entries are independent Bernoulli random variables sampled according to $\mpp$, 
we define the following collection of auxiliary matrices $\ma^{[1]},\dots,\ma^{[n_i]}$ generated from $\ma$.
For each row index $h\in[n_i]$, the matrix $\ma^{[h]}=(\ma^{[h]}_{s,t})_{n_i\times n_i}$ is obtained by replacing the entries in the $h$th row of $\ma$ with their expected values, i.e.,
$$
\mathbf{A}_{s,t}^{[h]}= 
\begin{cases}
\mathbf{A}_{s,t}, & \text { if } s \neq h \text{ and } t \neq h, \\ 
\mathbf{P}_{s,t}, & \text { if } s=h \text{ or } s=h.
\end{cases}
$$
Denote the singular decompositions of $\ma$ and $\ma^{[h]}$ as
$$
\begin{aligned}
	&\ma=\hat\muu\hat\mLambda\hat\muu^{\top}+\hat\muu_\perp\hat\mLambda_\perp\hat\muu^{\top}_\perp,\\
	\ma^{[h]}&=\hat\muu^{[h]}\hat\mLambda^{[h]}\hat\muu^{[h]\top}+\hat\muu^{[h]}_\perp\hat\mLambda^{[h]}_\perp\hat\muu^{[h]\top}_\perp.
\end{aligned}
$$

\begin{lemma}
\label{lemma:||Uhat|| and ||Uhat[h]||}
	{\em Consider the setting in Lemma~\ref{lemma:hat X(i)W(i)-X}, we then have
	$$
	    \|\hat\muu^{(i)}\|_{2 \to \infty}
	    \lesssim \frac{ d^{1/2}}{n_i^{1/2}},\quad
	    \|(\hat\muu^{(i)})^{[h]}\|_{2\to\infty}
	    \lesssim \frac{d^{1/2}}{n_i^{1/2}}
	$$
	with high probability. Furthermore, let $\mw^{[h]}$ be the solution of orthogonal Procrustes problem between $\hat\muu^{[h]}$ and $\muu$. We then have
    $$
    \begin{aligned}
    	\|(\hat\muu^{(i)})^{[h]}(\mw^{(i)})^{[h]}-\muu^{(i)}\|_{2\to\infty}
    	\lesssim \frac{d^{1/2}(\|\mpp^{(i)}\|_{\max}+\sigma_i)\log^{1/2} n_i}{q_i^{1/2}\lambda_{i,\min}}
    \end{aligned}
    $$
    with high probability.}
\end{lemma}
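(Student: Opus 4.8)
The three estimates are best established jointly by a leave-one-out argument of the type in Theorem~3.2 of \cite{xie2021entrywise}, built on the eigenvector fixed-point identity $\hat\muu = \ma\hat\muu\hat\mLambda^{-1}$ together with the purely spectral estimates already proved in Lemma~\ref{lemma:|E|,|EU|,|UT EU|} and Lemma~\ref{lemma:||sinTheta(Uhat,U)||,Lambdahat}. (As before I suppress the index $i$.) The engine is that $\ma^{[h]}$ replaces the $h$th row and column of $\ma$ by their population values, so its leading eigenvectors $\hat\muu^{[h]}$ are \emph{independent} of the $h$th row $\me_{h,\cdot}$ of the noise. This independence is what lets me condition on $\hat\muu^{[h]}$ and treat $\me_{h,\cdot}\hat\muu^{[h]}$ as a sum of independent mean-zero terms, to which the matrix Bernstein computations already used in Lemma~\ref{lemma:|E|,|EU|,|UT EU|} apply.

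I would first prove the row-level bound (the third claim). Applying the spectral parts underlying the expansion in Lemma~\ref{lemma:hat U(i)W(i)-U(i)} to $\ma^{[h]}$ --- which shares the independent-entry, bounded-coherence, bounded-condition-number structure --- gives $\hat\muu^{[h]}\mw^{[h]} - \muu = \me^{[h]}\muu\mLambda^{-1} + \mr^{[h]}$, where $\me^{[h]} := \ma^{[h]} - \mpp$ and $\mr^{[h]}$ is small in spectral norm. Taking $2\to\infty$ norms, the leading term satisfies $\|\me^{[h]}\muu\mLambda^{-1}\|_{2\to\infty} \le \|\me\muu\|_{2\to\infty}\,\lambda_{\min}^{-1} \lesssim d^{1/2}q^{-1/2}(\|\mpp\|_{\max}+\sigma)\log^{1/2}n\,/\lambda_{\min}$ by Lemma~\ref{lemma:|E|,|EU|,|UT EU|}, which is precisely the asserted rate; the remaining work is to show $\|\mr^{[h]}\|_{2\to\infty}$ is of lower order. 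Crucially this cannot invoke the $2\to\infty$ remainder bound of Lemma~\ref{lemma:hat U(i)W(i)-U(i)} (which itself rests on the present lemma), so I would re-expand $\mr^{[h]}$ row by row and close the estimate by a self-consistent bootstrap, exploiting that all prefactors are $\ll 1$ under \eqref{eq:cond_lemmaA1} and \eqref{eq:cond2_lemmaA1}. The bound $\|\hat\muu^{[h]}\|_{2\to\infty}\lesssim d^{1/2}n^{-1/2}$ then follows from $\|\hat\muu^{[h]}\|_{2\to\infty} \le \|\hat\muu^{[h]}\mw^{[h]}-\muu\|_{2\to\infty} + \|\muu\|_{2\to\infty}$, since \eqref{eq:cond2_lemmaA1} forces the first summand to be dominated by $\|\muu\|_{2\to\infty}\lesssim d^{1/2}n^{-1/2}$.

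For the bound on $\|\hat\muu\|_{2\to\infty}$ I would analyze the $h$th row through the same fixed point: using $\mpp_{h,\cdot} = \mathbf{u}_h\mLambda\muu^\top$ gives $\hat\muu_{h,\cdot} = \mathbf{u}_h\mLambda\muu^\top\hat\muu\hat\mLambda^{-1} + \me_{h,\cdot}\hat\muu\hat\mLambda^{-1}$. The first summand is $\lesssim \|\muu\|_{2\to\infty}(\lambda_{\max}/\lambda_{\min}) \lesssim d^{1/2}n^{-1/2}$ by bounded coherence and bounded condition number. For the noise summand I decouple $\me_{h,\cdot}\hat\muu = \me_{h,\cdot}\hat\muu^{[h]}\mh + \me_{h,\cdot}(\hat\muu - \hat\muu^{[h]}\mh)$ for a suitable near-orthogonal alignment $\mh$: the first piece is handled by independence plus Bernstein (using the bound on $\|\hat\muu^{[h]}\|_{2\to\infty}$), and the second by $\|\me_{h,\cdot}\| \cdot \|\hat\muu-\hat\muu^{[h]}\mh\|$, where the subspace perturbation is controlled by Davis--Kahan applied to $\|\ma-\ma^{[h]}\|\lesssim\|\me_{h,\cdot}\|$. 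Assembling these yields a self-bounding inequality of the form $\|\hat\muu\|_{2\to\infty} \lesssim d^{1/2}n^{-1/2} + o(1)\,\|\hat\muu\|_{2\to\infty}$, which rearranges to the claim.

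The main obstacle is the statistical dependence between $\hat\muu$ and $\me$, which the leave-one-out construction is designed to break, and --- more delicately --- making the whole system of $2\to\infty$ bounds internally consistent without circularly invoking the downstream $2\to\infty$ remainder estimate of Lemma~\ref{lemma:hat U(i)W(i)-U(i)}; this forces me to re-expand and absorb the remainders via the smallness conditions rather than cite them, and it is here that the finest accounting (ensuring each remainder is genuinely $O(\|\muu\|_{2\to\infty})$ rather than merely $o(1)$) is required. Secondary technicalities are the bookkeeping of the several orthogonal alignments $\mw_\muu$, $\mw^{[h]}$, and $\mh$, and the truncation step needed to pass from the sub-Gaussian noise $\mn$ to bounded increments, exactly as in the proof of Lemma~\ref{lemma:|E|,|EU|,|UT EU|}.
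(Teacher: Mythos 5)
Your proposal is correct in outline but takes a genuinely different, and much heavier, route than the paper. The paper's proof is essentially a citation: it verifies the hypotheses of Theorem~2.1 of \cite{abbe2020entrywise} (following the derivations of their Lemma~12), identifying their $\Delta^*$ with $\lambda_{i,\min}$, their $\kappa$ with the bounded condition number $M$, and their $\gamma$ with $c(\|\mpp^{(i)}\|_{\max}+\sigma_i)n_i^{1/2}/(q_i^{1/2}\lambda_{i,\min})$ via the spectral-norm bound on $\me^{(i)}$ from Lemma~\ref{lemma:|E|,|EU|,|UT EU|}; all three claims then follow from that theorem under $q_in_i\gtrsim\log^2 n_i$ and Eq.~\eqref{eq:cond2_lemmaA1}. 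What you propose --- the fixed-point identity $\hat\muu=\ma\hat\muu\hat\mLambda^{-1}$, the decoupling $\me_{h}^{\top}\hat\muu=\me_{h}^{\top}\hat\muu^{[h]}\mh+\me_{h}^{\top}(\hat\muu-\hat\muu^{[h]}\mh)$, matrix Bernstein plus a Davis--Kahan step, and a self-bounding inequality --- is precisely the internal machinery by which that cited theorem is proved, so you are re-deriving the black box rather than invoking it. This buys self-containedness, but the cost is real: the ``self-consistent bootstrap'' you acknowledge needing, which closes the system of $2\to\infty$ bounds for $\hat\muu$, $\hat\muu^{[h]}$, and their remainders simultaneously without circular appeal to Lemma~\ref{lemma:hat U(i)W(i)-U(i)}, is the genuinely delicate part, and your sketch gestures at it rather than executes it. In particular, your sequential ordering (first bound $\|\hat\muu^{[h]}\mw^{[h]}-\muu\|_{2\to\infty}$, then use it for $\|\hat\muu\|_{2\to\infty}$) does not quite work as stated, since bounding the $\hat\muu^{[h]}$ quantities by your own method would require leave-one-out matrices of $\ma^{[h]}$ itself; in \cite{abbe2020entrywise} the bounds are closed jointly, not sequentially. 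A small additional slip: the inequality $\|\me^{[h]}\muu\mLambda^{-1}\|_{2\to\infty}\le\|\me\muu\|_{2\to\infty}\lambda_{\min}^{-1}$ is not literally true (row $s\neq h$ of $\me^{[h]}\muu$ differs from that of $\me\muu$ by $\me_{s,h}\mathbf{u}_h$), though the same Bernstein argument applied directly to $\me^{[h]}$ gives the right rate. In short, your concern about circularity is exactly the reason the paper reaches for the external theorem instead of arguing from scratch.
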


\begin{proof}

The proof is based on verifying the conditions in Theorem~2.1 of \cite{abbe2020entrywise}, and this can be done by following the exact same derivations as that in Lemma~12 of \cite{abbe2020entrywise}.
More specifically, $\Delta^*$ in \cite{abbe2020entrywise} corresponds to $\lambda_{i,\min}$ in our paper while $\kappa$ in \cite{abbe2020entrywise} corresponds to $M$ in our setting, where $M$ appears in the assumptions of Lemma~\ref{lemma:hat X(i)W(i)-X} and is bounded,
and $\gamma$ in \cite{abbe2020entrywise} can be set to be $\frac{c(\|\mpp^{(i)}\|_{\max}+\sigma_i)n_i^{1/2}}{q_i^{1/2}\lambda_{i,\min}}$ for some sufficiently large constant $c>0$, based on the bound of $\|\me\|$ from Lemma~\ref{lemma:|E|,|EU|,|UT EU|}. Then all desired results of Lemma~\ref{lemma:||Uhat|| and ||Uhat[h]||} can by obtained {\color{black}under the conditions that $q_in_i\gg \log n_i $ and condition in Eq.~\eqref{eq:cond2_lemmaA1}.
    }
\end{proof}

\begin{lemma}
\label{lemma:ehC}
{\em 
Consider the setting in Lemma~\ref{lemma:hat X(i)W(i)-X}.
Recall in Lemma~\ref{lemma:|E|,|EU|,|UT EU|}, we decompose $\me^{(i)}$ as $\me^{(i,1)}+\me^{(i,2)}$.
Let $\mathbf{e}^{(i)}_{h}$ denote the $h$th row of $\me^{(i)}$.
Then for any deterministic $n_i\times d$ matrix $\mc$, we have
$$
\begin{aligned}
	\|\mathbf{e}^{(i)}_{h} \mc\|
&\lesssim q_i^{-1/2}(\|\mpp^{(i)}\|_{\max}+\sigma_i)n_i^{1/2}\|\mc\|_{2\to\infty}\log^{1/2}n_i
	+q_i^{-1}(\|\mpp^{(i)}\|_{\max}+\sigma_i\log^{1/2} n_i)\|\mc\|_{2\to\infty}\log n_i
\end{aligned}
$$
with high probability. }
\end{lemma}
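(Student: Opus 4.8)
Fix the row index $h$ and the deterministic matrix $\mc$, and write $\mathbf{c}_t$ for the $t$th row of $\mc$. The starting observation is that
\[
\mathbf{e}^{(i)}_{h}\mc = \sum_{t=1}^{n_i}\me^{(i)}_{h,t}\,\mathbf{c}_t,
\]
a sum of independent, mean-zero $1\times d$ random matrices: the entries $\{\me^{(i)}_{h,t}\}_{t\in[n_i]}$ index distinct unordered pairs $\{h,t\}$ (together with the diagonal $\{h,h\}$), so they remain mutually independent despite the symmetry of $\me^{(i)}$. The plan is then to apply the matrix Bernstein inequality (Theorem~1.6 in \cite{tropp2012user}) to this sum, exactly as was done for $\|\me^{(i)}\muu^{(i)}\|_{2\to\infty}$ in the proof of Lemma~\ref{lemma:|E|,|EU|,|UT EU|}, but now with the general matrix $\mc$ playing the role of $\muu^{(i)}$. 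The only structural change is that $\muu^{(i)}$ satisfies $\sum_t\|\mathbf{u}^{(i)}_t\|^2=d$ whereas here $\sum_t\|\mathbf{c}_t\|^2=\|\mc\|_F^2\le n_i\|\mc\|_{2\to\infty}^2$, which is precisely what produces the factor $n_i^{1/2}\|\mc\|_{2\to\infty}$ in place of $d^{1/2}n_i^{-1/2}$. Following Eq.~\eqref{eq:E1E2}, I would split $\me^{(i)}=\me^{(i,1)}+\me^{(i,2)}$ and treat the two contributions separately, combining by the triangle inequality.

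For the bounded part $\me^{(i,1)}$, each summand satisfies $\|\me^{(i,1)}_{h,t}\mathbf{c}_t\|\le q_i^{-1}\|\mpp^{(i)}\|_{\max}\|\mc\|_{2\to\infty}=:L_1$, and the matrix variance proxy is controlled by $\max_t\mathbb{E}[(\me^{(i,1)}_{h,t})^2]\cdot\|\mc\|_F^2\le q_i^{-1}\|\mpp^{(i)}\|_{\max}^2\, n_i\|\mc\|_{2\to\infty}^2=:V_1$, using $\mathbb{E}[(\me^{(i,1)}_{h,t})^2]=\tfrac{1-q_i}{q_i}(\mpp^{(i)}_{h,t})^2$. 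Bernstein's inequality then yields $\|\sum_t\me^{(i,1)}_{h,t}\mathbf{c}_t\|\lesssim \sqrt{V_1\log n_i}+L_1\log n_i$, giving the $\|\mpp^{(i)}\|_{\max}$ contributions to both terms of the claimed bound. For the sub-Gaussian part $\me^{(i,2)}$, I would first pass to the high-probability event $\{\max_t|\mn^{(i)}_{h,t}|\lesssim\sigma_i\log^{1/2}n_i\}$, which has negligible complement by the sub-Gaussian tail bound (Eq.~(2.14) in \cite{vershynin2018high}); on this event $\|\me^{(i,2)}_{h,t}\mathbf{c}_t\|\lesssim q_i^{-1}\sigma_i\log^{1/2}n_i\,\|\mc\|_{2\to\infty}=:L_2$, while the variance proxy is bounded via $\mathbb{E}[(\me^{(i,2)}_{h,t})^2]\le 2\sigma_i^2/q_i$ by $V_2\lesssim q_i^{-1}\sigma_i^2\,n_i\|\mc\|_{2\to\infty}^2$. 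A truncated Bernstein bound (as in the combination of \cite{bandeira2016sharp} and \cite{truncated_bernstein} used for $\me^{(i,2)}$ in Lemma~\ref{lemma:|E|,|EU|,|UT EU|}) then gives $\sqrt{V_2\log n_i}+L_2\log n_i$, contributing $q_i^{-1/2}\sigma_i n_i^{1/2}\|\mc\|_{2\to\infty}\log^{1/2}n_i$ and $q_i^{-1}\sigma_i\log^{3/2}n_i\,\|\mc\|_{2\to\infty}$. Summing the $\me^{(i,1)}$ and $\me^{(i,2)}$ estimates reproduces the two displayed terms.

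The main obstacle is the sub-Gaussian part: because the entries of $\mn^{(i)}$ are unbounded, one cannot apply the bounded-variable Bernstein inequality directly, and this is precisely where the $\log^{3/2}n_i$ factor in the second term originates. The remedy is the truncation device already employed in the paper, but some care is needed to check that restricting to the boundedness event does not disturb the mean-zero centering and the variance estimate beyond an exponentially small error that can be absorbed into the high-probability statement. Everything else is a routine transcription of the row-wise matrix-Bernstein computation from Lemma~\ref{lemma:|E|,|EU|,|UT EU|}. Finally, although the statement is for a fixed $h$, the same bound holds uniformly over all $h\in[n_i]$ after a union bound, at a cost absorbed into the $\log n_i$ factors.
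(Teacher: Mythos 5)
Your proposal is correct and follows essentially the same route as the paper: the same split $\me^{(i)}=\me^{(i,1)}+\me^{(i,2)}$ from Eq.~\eqref{eq:E1E2}, matrix Bernstein (Theorem~1.6 of \cite{tropp2012user}) applied to the row sum $\sum_t \me^{(i)}_{h,t}\mathbf{c}_t$ with variance proxy $\max_t\mathbb{E}[(\me^{(i)}_{h,t})^2]\cdot n_i\|\mc\|_{2\to\infty}^2$ for the bounded part, and the truncation-plus-Bernstein device for the sub-Gaussian part, yielding exactly the two displayed terms. The paper's proof is just a compressed citation of the corresponding computation for $\|\me^{(i,1)}\muu^{(i)}\|_{2\to\infty}$ and $\|\me^{(i,2)}\muu^{(i)}\|_{2\to\infty}$ in Lemma~\ref{lemma:|E|,|EU|,|UT EU|} with $\mc$ in place of $\muu^{(i)}$, which is precisely what you carried out explicitly.
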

\begin{proof}
For any $h\in[n_i]$, let $\mathbf{e}^{(i,1)}_{h}$ and $\mathbf{e}^{(i,2)}_{h}$ denote the $h$th row of $\me^{(i,1)}$ and $\me^{(i,2)}$. Following the same arguments as that for $\|\me^{(i,1)}\muu^{(i)}\|_{2\to\infty}$ and $\|\me^{(i,2)}\muu^{(i)}\|_{2\to\infty}$ in the proof of Lemma~\ref{lemma:|E|,|EU|,|UT EU|}, we have
	$$
\begin{aligned}
	&\mathbb{P}\Big\{\|(\mathbf{e}^{(i,1)}_{h})^\top \mc\|
	\geq t\Big\}
	\leq  (d+1)\exp\Big(\frac{-t^2/2}{q_i^{-1}\|\mpp^{(i)}\|_{\max}^2\max\{\|\mc\|^2,n_i\|\mc\|_{2\to\infty}^2\}+q_i^{-1}\|\mpp^{(i)}\|_{\max}\|\mc\|_{2\to\infty}t/3}\Big),\\
	&\mathbb{P}\Big\{\|(\mathbf{e}^{(i,2)}_{h})^\top \mc \|
	\geq t\Big\}
	\leq  (d+1)\exp\Big(\frac{-t^2/2}{2q_i^{-1}\sigma_i^2\max\{\|\mc\|^2,n_i\|\mc\|_{2\to\infty}^2\}+q_i^{-1}\sigma_i\log^{1/2} N\|\mc\|_{2\to\infty}t/3}\Big).
	\end{aligned}
$$	
We therefore have
$$
\begin{aligned}
	&\|(\mathbf{e}^{(i,1)}_{h})^\top \mc\|
	\lesssim q_i^{-1/2}\|\mpp^{(i)}\|_{\max}n_i^{1/2}\|\mc\|_{2\to\infty}\log^{1/2}n_i
	+q_i^{-1}\|\mpp^{(i)}\|_{\max}\|\mc\|_{2\to\infty}\log n_i,\\
	&\|(\mathbf{e}^{(i,2)}_{h})^\top \mc\|
	\lesssim q_i^{-1/2}\sigma_in_i^{1/2}\|\mc\|_{2\to\infty}\log^{1/2}n_i+q_i^{-1}\sigma_i\|\mc\|_{2\to\infty}\log^{3/2} n_i
	\end{aligned}
$$	
with high probability. 
Combining the above bounds yields the desired claim. 
\end{proof}

\begin{lemma}
\label{lemma:||eh E Uhat||, ||sin Theta(Uhat[h],Uhat)||}
{\em 
	Consider the setting in Lemma~\ref{lemma:hat U(i)W(i)-U(i)},	 we then have
	$$
	\begin{aligned}
		&\|\sin \Theta((\hat\muu^{(i)})^{[h]},\hat\muu^{(i)})\|
		\lesssim \frac{d^{1/2}(\|\mpp^{(i)}\|_{\max}+\sigma_i) \log^{1/2}n_i}{q_i^{1/2}\lambda_{i,\min}}
	\end{aligned}
	$$
	with high probability.}
\end{lemma}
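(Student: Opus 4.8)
The plan is to treat $\ma^{[h]}$ as a structured perturbation of $\ma$ that is supported only on the $h$th row and column, and to exploit the fact that, by construction, $\hat\muu^{[h]}$ is statistically independent of that row. Throughout I suppress the index $i$ on the matrices, as elsewhere in this section. First I would record that, writing $\me = \ma - \mpp$ and $\me^{[h]} = \ma^{[h]} - \mpp$, the matrix $\me^{[h]}$ has zero $h$th row and column, so the perturbation $\Delta^{[h]} := \ma - \ma^{[h]} = \me - \me^{[h]}$ is supported on the $h$th row and column and coincides with $\me$ there. Letting $\mathbf{e}_h \in \mathbb{R}^{n_i}$ be the $h$th row of $\me$ viewed as a column vector (the quantity $\mathbf{e}^{(i)}_h$ of Lemma~\ref{lemma:ehC}) and $\bm{\delta}_h$ the $h$th standard basis vector, symmetry of $\me$ yields
\[ \Delta^{[h]} = \bm{\delta}_h \mathbf{e}_h^{\top} + \mathbf{e}_h \bm{\delta}_h^{\top} - \me_{h,h}\,\bm{\delta}_h \bm{\delta}_h^{\top}. \]

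Next I would invoke a Davis--Kahan (Wedin) $\sin\Theta$ bound. Since $\hat\muu^{[h]}$ and $\hat\muu$ are the leading rank-$d$ eigenspaces of $\ma^{[h]}$ and $\ma$ respectively, and these two matrices differ by $\Delta^{[h]}$,
\[ \|\sin\Theta(\hat\muu^{[h]}, \hat\muu)\| \leq \frac{\|\Delta^{[h]}\hat\muu^{[h]}\|}{\lambda_d(\ma^{[h]}) - \lambda_{d+1}(\ma)}. \]
The denominator is $\gtrsim \lambda_{i,\min}$: by Weyl's inequality $\|\me^{[h]}\| \leq \|\me\| + \|\Delta^{[h]}\| \lesssim \|\me\|$, so the eigenvalue estimates of Lemma~\ref{lemma:||sinTheta(Uhat,U)||,Lambdahat} (applied to both $\ma$ and $\ma^{[h]}$) together with condition Eq.~\eqref{eq:cond_lemmaA1} give $\lambda_d(\ma^{[h]}) \gtrsim \lambda_{i,\min}$ and $\lambda_{d+1}(\ma) \ll \lambda_{i,\min}$.

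It then remains to bound the numerator $\|\Delta^{[h]}\hat\muu^{[h]}\|$. Using the decomposition above,
\[ \Delta^{[h]}\hat\muu^{[h]} = \bm{\delta}_h\big(\mathbf{e}_h^{\top}\hat\muu^{[h]}\big) + \mathbf{e}_h\big(\bm{\delta}_h^{\top}\hat\muu^{[h]}\big) - \me_{h,h}\,\bm{\delta}_h\big(\bm{\delta}_h^{\top}\hat\muu^{[h]}\big). \]
The first term is dominant, and here is the crux of the argument: because $\hat\muu^{[h]}$ is a deterministic function of $\ma^{[h]}$, whose $h$th row and column have been frozen to their population values, $\hat\muu^{[h]}$ is independent of $\mathbf{e}_h$. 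Conditioning on $\ma^{[h]}$ I would therefore apply Lemma~\ref{lemma:ehC} with the (now deterministic) matrix $\mc = \hat\muu^{[h]}$, and substitute the coherence bound $\|\hat\muu^{[h]}\|_{2\to\infty} \lesssim d^{1/2} n_i^{-1/2}$ from Lemma~\ref{lemma:||Uhat|| and ||Uhat[h]||}, to obtain $\|\mathbf{e}_h^{\top}\hat\muu^{[h]}\| \lesssim d^{1/2} q_i^{-1/2}(\|\mpp^{(i)}\|_{\max}+\sigma_i)\log^{1/2}n_i$. The remaining two terms are lower order: using the same coherence bound for $\|\bm{\delta}_h^{\top}\hat\muu^{[h]}\|$, the row-norm bound $\|\mathbf{e}_h\| \leq \|\me\| \lesssim q_i^{-1/2}(\|\mpp^{(i)}\|_{\max}+\sigma_i)n_i^{1/2}$ from Lemma~\ref{lemma:|E|,|EU|,|UT EU|}, and $|\me_{h,h}| \leq \|\me\|_{\max}$, they are seen to be dominated by the first term. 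Dividing by the gap $\gtrsim \lambda_{i,\min}$ then yields the claimed bound for a fixed $h$.

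The main obstacle is the conditioning and independence bookkeeping: Lemma~\ref{lemma:ehC} requires a deterministic $\mc$, so one must condition on $\ma^{[h]}$ to apply it to the random-but-independent $\hat\muu^{[h]}$, then intersect the resulting high-probability event with the high-probability events carrying the coherence bound of Lemma~\ref{lemma:||Uhat|| and ||Uhat[h]||} and the spectral-norm bound of Lemma~\ref{lemma:|E|,|EU|,|UT EU|}, and finally take a union bound over all $h \in [n_i]$ (the extra $\log n_i$ from the union is absorbed into the $\log^{1/2}n_i$ already present). Some care is also needed to confirm that the first term genuinely dominates the other two across the full range of parameters permitted by Assumption~\ref{ass:main}, which I expect to reduce to the already-assumed condition $q_i n_i \gtrsim \log^2 n_i$.
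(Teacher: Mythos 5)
Your proposal is correct and follows essentially the same route as the paper: Wedin's $\sin\Theta$ theorem with denominator $\lambda_d(\ma^{[h]})-\lambda_{d+1}(\ma)\gtrsim\lambda_{i,\min}$ (via Weyl plus Lemma~\ref{lemma:||sinTheta(Uhat,U)||,Lambdahat}), a decomposition of $\ma-\ma^{[h]}$ into its row-$h$ and column-$h$ supported pieces, the leave-one-out independence of $\mathbf{e}_h$ and $\hat\muu^{[h]}$ to apply Lemma~\ref{lemma:ehC} to the dominant term $\mathbf{e}_h^{\top}\hat\muu^{[h]}$, and the coherence bound $\|\hat\muu^{[h]}\|_{2\to\infty}\lesssim d^{1/2}n_i^{-1/2}$ from Lemma~\ref{lemma:||Uhat|| and ||Uhat[h]||} for the remaining pieces. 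The only cosmetic difference is that the paper bounds $\|(\ma^{[h]}-\ma)\hat\muu^{[h]}\|_F$ directly by splitting rows, whereas you bound the spectral norms of the rank-one pieces; the estimates and conclusions are identical.
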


\begin{proof}
    By the construction of $\ma^{[h]}$ and Lemma~\ref{lemma:|E|,|EU|,|UT EU|} we have
	$$
	\|\ma^{[h]}-\ma\|
	\leq 2\|\mathbf{e}_h\|
	\leq 2\|\me\|_{2\to\infty}
	\leq 2\|\me\|
	\lesssim q_i^{-1/2}(\|\mpp^{(i)}\|_{\max}+\sigma_i) n_i^{1/2}
	$$
	with high probability, and hence
	\begin{equation}\label{eq:Ah-P_1}
		\begin{aligned}
			\|\ma^{[h]}-\mpp\|
	\leq \|\ma^{[h]}-\ma\|+\|\me\|
	\lesssim q_i^{-1/2}(\|\mpp^{(i)}\|_{\max}+\sigma_i) n_i^{1/2}
		\end{aligned}
	\end{equation}
	with high probability.
    Then by Weyl's inequality we have
    \begin{equation}\label{eq:Ah-P_2}
    	\begin{aligned}
    	&|\lambda_{d}(\ma^{[h]})-\lambda_{i,\min}|
    	\leq \|\ma^{[h]}-\mpp\|
    	\lesssim q_i^{-1/2}(\|\mpp^{(i)}\|_{\max}+\sigma_i) n_i^{1/2}
    \end{aligned}
    \end{equation}
    with high probability. The condition in Eq.~\eqref{eq:cond2_lemmaA1} implies
    $\frac{(\|\mpp^{(i)}\|_{\max}+\sigma_i) n_i^{1/2}}{q_i^{1/2}\lambda_{i,\min}}=o(1)$ and hence
    $\lambda_{d}(\ma^{[h]})\asymp \lambda_{i,\min}$.
    Furthermore, by Lemma~\ref{lemma:||sinTheta(Uhat,U)||,Lambdahat} we have $\lambda_{d+1}(\ma)\lesssim q_i^{-1/2}(\|\mpp^{(i)}\|_{\max}+\sigma_i) n_i^{1/2}$ with high probability. Applying
	Wedin's $\sin\Theta$ Theorem (see e.g., Theorem~4.4 of \cite{stewart_sun}) we have	
	\begin{equation}
 \label{eq:wedin_lemD7}
	\begin{aligned}
		\|\sin \Theta(\hat\muu^{[h]},\hat\muu)\|
		\leq \frac{\|(\ma^{[h]}-\ma)\hat\muu^{[h]}\|}{\lambda_{d}(\ma^{[h]})-\lambda_{d+1}(\ma)}
		\lesssim \frac{\|(\ma^{[h]}-\ma)\hat\muu^{[h]}\|_F}{\lambda_{i,\min}}
	\end{aligned}
    \end{equation}
	with high probability.
		We now bound $\|(\ma^{[h]}-\ma)\hat\muu^{[h]}\|_F$. Note that
	\begin{equation}\label{eq:Ah-A hatU}
		\begin{aligned}
		\|(\ma^{[h]}-\ma)\hat\muu^{[h]}\|_F
		&= \Big[\sum_{s\in[n_i],s\neq h} \sum_{r\in[d]} (\me_{s,h}\hat\muu^{[h]}_{h,r})^2
		+\Big \|\mathbf{e}_{h}^\top\hat\muu^{[h]}\Big\|^2\Big]^{1/2},
	\end{aligned}
	\end{equation}
	where $\mathbf{e}_h$ is the $h$th row of $\me$.
	For the first term on the right side of Eq.~\eqref{eq:Ah-A hatU}, by Cauchy-Schwarz inequality, Lemma~\ref{lemma:||Uhat|| and ||Uhat[h]||} and Lemma~\ref{lemma:|E|,|EU|,|UT EU|} we have
	\begin{equation}\label{eq:Ah-A hatU_1}
		\begin{aligned}
	\big[\sum_{s\in[n_i],s\neq h} \sum_{r\in[d]} (\me_{s,h}\hat\muu^{[h]}_{h,r})^2\big]^{1/2}
	    &\leq \|\me\|_{2\to\infty}\cdot\|\hat\muu^{[h]}\|_{2\to\infty}\\
		&\leq \|\me\|\cdot\|\hat\muu^{[h]}\|_{2\to\infty}
		\lesssim d^{1/2}q_i^{-1/2}(\|\mpp^{(i)}\|_{\max}+\sigma_i)
	\end{aligned}
	\end{equation}
	with high probability.
	For the second term on the right side of Eq.~\eqref{eq:Ah-A hatU}, 
	    as $\mathbf{e}_{h}$ and $\hat\muu^{[h]}$ are independent, we have by Lemma~\ref{lemma:ehC}, Lemma~\ref{lemma:||Uhat|| and ||Uhat[h]||}, and the assumption $n_i q_i \gtrsim \log^2n_i$  that
	\begin{equation}\label{eq:Ah-A hatU_2}
		\begin{aligned}
		\|\mathbf{e}_{h}^\top\hat\muu^{[h]}\|
		&\lesssim q_i^{-1/2}(\|\mpp^{(i)}\|_{\max}+\sigma_i)n_i^{1/2}\|\hat\muu^{[h]}\|_{2\to\infty}\log^{1/2}n_i
 +q_i^{-1}(\|\mpp^{(i)}\|_{\max}+\sigma_i\log^{1/2} n_i)\|\hat\muu^{[h]}\|_{2\to\infty}\log n_i\\
	&\lesssim q_i^{-1/2}(\|\mpp^{(i)}\|_{\max}+\sigma_i)n_i^{1/2}\cdot d^{1/2}n_i^{-1/2}\cdot\log^{1/2}N
	+q_i^{-1}(\|\mpp^{(i)}\|_{\max}+\sigma_i\log^{1/2} N)\cdot d^{1/2}n_i^{-1/2}\cdot \log n_i\\
	&\lesssim d^{1/2}q_i^{-1/2}(\|\mpp^{(i)}\|_{\max}+\sigma_i) \log^{1/2}n_i
	\end{aligned}
	\end{equation}
	with high probability. 
	Combining Eq.~\eqref{eq:Ah-A hatU}, Eq.~\eqref{eq:Ah-A hatU_1} and Eq.~\eqref{eq:Ah-A hatU_2}, we have
	\begin{equation}
 \label{eq:bound_Ah-A_lemD7}
		\begin{aligned}
			\|(\ma^{[h]}-\ma)\hat\muu^{[h]}\|_F
			\lesssim d^{1/2}q_i^{-1/2}(\|\mpp^{(i)}\|_{\max}+\sigma_i) \log^{1/2}n_i
		\end{aligned}
	\end{equation}
	with high probability. Substituting Eq.~\eqref{eq:bound_Ah-A_lemD7} into Eq.~\eqref{eq:wedin_lemD7}
	yields the desired claim. 
\end{proof}

\begin{lemma}
\label{lemma:||eh E (Uhat[h] Uhat[h]T U-U)||}{\em 
	Consider the setting in Lemma~\ref{lemma:hat U(i)W(i)-U(i)}, we then have
	$$
	\begin{aligned}
		\|(\mathbf{e}_h^{(i)})^\top [(\hat\muu^{(i)})^{[h]}(\hat\muu^{(i)})^{[h]\top}\muu^{(i)}-\muu^{(i)}]\|
		\lesssim \frac{d^{1/2}(\|\mpp^{(i)}\|_{\max}+\sigma_i)^2n_i^{1/2}\log n_i}{q_i\lambda_{i,\min}}
	\end{aligned}
	$$
	with high probability. }
\end{lemma}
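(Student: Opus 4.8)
The plan is to exploit the defining feature of the leave-one-out construction: since $\ma^{[h]}$ replaces the entire $h$th row and column of $\ma^{(i)}$ with their deterministic population values, the matrix $(\hat\muu^{(i)})^{[h]}$ --- and hence $\mc := (\hat\muu^{(i)})^{[h]}(\hat\muu^{(i)})^{[h]\top}\muu^{(i)}-\muu^{(i)}$ --- is a function only of the entries of $\ma^{(i)}$ outside the $h$th row, and is therefore independent of $\mathbf{e}_h^{(i)}$, the $h$th row of $\me^{(i)}$. I would thus condition on $\ma^{[h]}$, so that $\mc$ becomes deterministic while $\mathbf{e}_h^{(i)}$ retains its law, and apply Lemma~\ref{lemma:ehC} with this $\mc$.

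First I would bound $\|\mc\|_{2\to\infty}$. Writing $\mw^{[h]}$ for the orthogonal Procrustes alignment of $(\hat\muu^{(i)})^{[h]}$ to $\muu^{(i)}$, I decompose
\[ \mc = (\hat\muu^{(i)})^{[h]}\bigl((\hat\muu^{(i)})^{[h]\top}\muu^{(i)}-\mw^{[h]}\bigr) + \bigl((\hat\muu^{(i)})^{[h]}\mw^{[h]}-\muu^{(i)}\bigr). \]
The second summand is controlled directly in the $2\to\infty$ norm by Lemma~\ref{lemma:||Uhat|| and ||Uhat[h]||}. For the first summand I would use $\|(\hat\muu^{(i)})^{[h]}\|_{2\to\infty}\lesssim d^{1/2}n_i^{-1/2}$ (again Lemma~\ref{lemma:||Uhat|| and ||Uhat[h]||}) together with the submultiplicative bound $\|(\hat\muu^{(i)})^{[h]\top}\muu^{(i)}-\mw^{[h]}\|\leq \|\sin\Theta((\hat\muu^{(i)})^{[h]},\muu^{(i)})\|^2$, where the $\sin\Theta$ factor is of order $(\|\mpp^{(i)}\|_{\max}+\sigma_i)n_i^{1/2}/(q_i^{1/2}\lambda_{i,\min})$ by Wedin's theorem applied to the bound on $\|\ma^{[h]}-\mpp^{(i)}\|$ from Eq.~\eqref{eq:Ah-P_1}. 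Comparing the two summands under the assumption in Eq.~\eqref{eq:cond2_lemmaA1} shows that the second dominates, yielding
\[ \|\mc\|_{2\to\infty} \lesssim \frac{d^{1/2}(\|\mpp^{(i)}\|_{\max}+\sigma_i)\log^{1/2} n_i}{q_i^{1/2}\lambda_{i,\min}}. \]

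With this bound in hand, I would substitute into Lemma~\ref{lemma:ehC}. Its leading (first) term becomes
\[ q_i^{-1/2}(\|\mpp^{(i)}\|_{\max}+\sigma_i) n_i^{1/2}\log^{1/2} n_i \cdot \|\mc\|_{2\to\infty} \lesssim \frac{d^{1/2}(\|\mpp^{(i)}\|_{\max}+\sigma_i)^2 n_i^{1/2}\log n_i}{q_i\lambda_{i,\min}}, \]
which is exactly the claimed rate; a short check using $q_i n_i \gtrsim \log^2 n_i$ confirms that the second term of Lemma~\ref{lemma:ehC} is no larger. Finally I would combine the two high-probability events --- the one on which $\|\mc\|_{2\to\infty}$ obeys the displayed bound (which is $\ma^{[h]}$-measurable) and the conditional concentration event from Lemma~\ref{lemma:ehC} --- by a union bound to conclude.

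The main obstacle --- and the reason the leave-one-out device is introduced in the first place --- is the independence bookkeeping: one cannot apply Lemma~\ref{lemma:ehC} directly to $\hat\muu^{(i)}(\hat\muu^{(i)})^\top\muu^{(i)}-\muu^{(i)}$ because $\hat\muu^{(i)}$ depends on $\mathbf{e}_h^{(i)}$. Care is needed to verify that conditioning on $\ma^{[h]}$ legitimately freezes $\mc$ while preserving the sub-Gaussian and Bernstein tail behaviour of $\mathbf{e}_h^{(i)}$ invoked in Lemma~\ref{lemma:ehC}, and that the high-probability $2\to\infty$ bound on $\mc$ can be intersected with the conditional tail bound without degrading the final rate.
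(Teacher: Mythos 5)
Your proposal is correct and follows essentially the same route as the paper's own proof: the identical decomposition of $(\hat\muu^{(i)})^{[h]}(\hat\muu^{(i)})^{[h]\top}\muu^{(i)}-\muu^{(i)}$ via the Procrustes alignment $\mw^{[h]}$, the same use of Wedin's theorem with Eq.~\eqref{eq:Ah-P_1} and Lemma~\ref{lemma:||Uhat|| and ||Uhat[h]||} to get the $2\to\infty$ bound, and the same application of Lemma~\ref{lemma:ehC} using the independence of $\mathbf{e}_h^{(i)}$ from the leave-one-out quantities, with the second term absorbed via $q_i n_i \gtrsim \log^2 n_i$. Your explicit conditioning-on-$\ma^{[h]}$ bookkeeping just spells out what the paper invokes implicitly when it cites independence.
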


\begin{proof}
  Eq.~\eqref{eq:Ah-P_1} and Eq.~\eqref{eq:Ah-P_2} implies
  $\|\ma^{[h]}-\mpp\|\lesssim q_i^{-1/2}(\|\mpp^{(i)}\|_{\max}+\sigma_i) n_i^{1/2}$ and $\lambda_d(\ma^{[h]})\asymp \lambda_{i,\min}$ with high probability.
  Then by Wedin's $\sin\Theta$ Theorem (see e.g., Theorem~4.4 in \cite{stewart_sun})
we have
    $$
    \begin{aligned}
    	\|\sin\Theta(\hat\muu^{[h]},\muu)\|
		\leq \frac{\|\ma^{[h]}-\mpp\|}{\lambda_d(\ma^{[h]})-\lambda_{d+1}(\mpp)}
		\lesssim \frac{(\|\mpp^{(i)}\|_{\max}+\sigma_i) n_i^{1/2}}{q_i^{1/2}\lambda_{i,\min}}
    \end{aligned}
    $$
    with high probability.
    Let $\mw^{[h]}$ be the orthogonal Procrustes alignment between $\hat\muu^{[h]}$ and $\muu$. Then 
    \begin{equation}\label{eq:hatUtU-Wh}
    	\begin{aligned}
    	\|\hat\muu^{[h]\top}\muu-\mw^{[h]}\|
    	\leq \|\sin\Theta(\hat\muu^{[h]},\muu)\|^2
    	\lesssim \frac{(\|\mpp^{(i)}\|_{\max}+\sigma_i)^2 n_i}{q_i\lambda_{i,\min}^2},\\
    \end{aligned}
    \end{equation}
    with high probability.
    
    Now let $\mz^{[h]}=\hat\muu^{[h]}\hat\muu^{[h]\top}\muu-\muu$. By Eq.~\eqref{eq:hatUtU-Wh} and Lemma~\ref{lemma:||Uhat|| and ||Uhat[h]||}
    we obtain
    $$
    \begin{aligned}
    	\|\mz^{[h]}\|_{2\to\infty}
    	&\leq \|\hat\muu^{[h]}\hat\muu^{[h]\top}\muu-\hat\muu^{[h]}\mw^{[h]}\|_{2\to\infty}
    	+\|\hat\muu^{[h]}\mw^{[h]}-\muu\|_{2\to\infty}\\
    	&\leq \|\hat\muu^{[h]}\|_{2\to\infty}\cdot\|\hat\muu^{[h]\top}\muu-\mw^{[h]}\|
    	+\|\hat\muu^{[h]}\mw^{[h]}-\muu\|_{2\to\infty}\\
    	&\lesssim \frac{d^{1/2}}{n_i^{1/2}}\cdot\frac{(\|\mpp^{(i)}\|_{\max}+\sigma_i)^2 n_i}{q_i\lambda_{i,\min}^2}
    	+\frac{d^{1/2}(\|\mpp^{(i)}\|_{\max}+\sigma_i)\log^{1/2} n_i}{q_i^{1/2}\lambda_{i,\min}}\\
    	&\lesssim \frac{d^{1/2}(\|\mpp^{(i)}\|_{\max}+\sigma_i)^2 n_i^{1/2}}{q_i\lambda_{i,\min}^2}
    	+\frac{d^{1/2}(\|\mpp^{(i)}\|_{\max}+\sigma_i)\log^{1/2} n_i}{q_i^{1/2}\lambda_{i,\min}}\\
    	&\lesssim \frac{d^{1/2}(\|\mpp^{(i)}\|_{\max}+\sigma_i)\log^{1/2} n_i}{q_i^{1/2}\lambda_{i,\min}},\\
    \end{aligned}
    $$
    with high probability, where the final inequality follows from the fact that, under the conditions in Eq.~\eqref{eq:cond2_lemmaA1} we have
    $\frac{(\|\mpp\|_{\max}+\sigma_i)n_i^{1/2}}{q^{1/2}\lambda_{i,\min}\log^{1/2} n_i}\lesssim 1$. 
    
    Finally, as $\mathbf{e}_{h}$ and $\mz^{[h]}$ are independent, by Lemma~\ref{lemma:ehC} and the assumption $ q_i n_i \gtrsim \log^2n_i$ we have
    $$
\begin{aligned}
	\|\mathbf{e}^{(i)}_{h} \mz^{[h]}\|
&\lesssim q_i^{-1/2}(\|\mpp^{(i)}\|_{\max}+\sigma_i)n_i^{1/2}\|\mz^{[h]}\|_{2\to\infty}\log^{1/2}n_i\\
	&+q_i^{-1}(\|\mpp^{(i)}\|_{\max}+\sigma_i\log^{1/2} n_i)\|\mz^{[h]}|_{2\to\infty}\log n_i\\
	&\lesssim q_i^{-1/2}(\|\mpp^{(i)}\|_{\max}+\sigma_i) n_i^{1/2}\cdot \frac{d^{1/2}(\|\mpp^{(i)}\|_{\max}+\sigma_i)\log^{1/2} n_i}{q_i^{1/2}\lambda_{i,\min}}\cdot \log^{1/2}n_i\\
	&+q_i^{-1}(\|\mpp^{(i)}\|_{\max}+\sigma_i\log^{1/2} n_i)\cdot \frac{d^{1/2}(\|\mpp^{(i)}\|_{\max}+\sigma_i)\log^{1/2} n_i}{q_i^{1/2}\lambda_{i,\min}}\cdot\log n_i\\
	&\lesssim  \frac{d^{1/2}(\|\mpp^{(i)}\|_{\max}+\sigma_i)^2n_i^{1/2}\log n_i}
	{q_i\lambda_{i,\min}}
	+\frac{d^{1/2}(\|\mpp^{(i)}\|_{\max}+\sigma_i)^2\log^{2} n_i}{q_i^{3/2}\lambda_{i,\min}}\\
	&\lesssim  \frac{d^{1/2}(\|\mpp^{(i)}\|_{\max}+\sigma_i)^2n_i^{1/2}\log n_i}
	{q_i\lambda_{i,\min}}
\end{aligned}
$$
with high probability. 
\end{proof}

\begin{lemma}
\label{lemma:|| E(Uhat W -U) ||2toinf}
{\em Consider the setting in Lemma~\ref{lemma:hat U(i)W(i)-U(i)},	 we then have
	$$
	\begin{aligned}
		\|\me^{(i)}(\hat\muu^{(i)}\mw_\muu^{(i)}-\muu^{(i)})\|_{2\to\infty}
		\lesssim \frac{d^{1/2}(\|\mpp^{(i)}\|_{\max}+\sigma_i)^2 n_i^{1/2}\log n_i}{q_i\lambda_{i,\min}}
	\end{aligned}
	$$
	with high probability. }
\end{lemma}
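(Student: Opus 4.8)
The plan is to exploit that the $2\to\infty$ norm is the maximum $\ell_2$ norm over rows, so it suffices to bound $\|\mathbf{e}_h^\top(\hat\muu\mw_\muu-\muu)\|$ uniformly over all row indices $h\in[n_i]$ and then take a union bound; as in the surrounding lemmas I drop the index $i$ and write $\mathbf{e}_h$ for the $h$th row of $\me$. The obstacle that forces the leave-one-out construction is that the naive submultiplicative bound $\|\mathbf{e}_h\|\,\|\hat\muu\mw_\muu-\muu\|$ is fatal: $\|\mathbf{e}_h\|\lesssim\|\me\|$ carries a factor $n_i^{1/2}$ (Lemma~\ref{lemma:|E|,|EU|,|UT EU|}) and so does $\|\hat\muu\mw_\muu-\muu\|$ (Lemma~\ref{lemma:||sinTheta(Uhat,U)||,Lambdahat}), producing an unacceptable $n_i$ where only $n_i^{1/2}$ is allowed. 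The matrix $\ma^{[h]}$ is introduced precisely so that $\mathbf{e}_h$ is independent of $\hat\muu^{[h]}$, and this independence must be used to trade one of the $n_i^{1/2}$ factors for a $\log^{1/2}n_i$ factor.

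First I would insert the leave-one-out projection and split
\[ \mathbf{e}_h^\top(\hat\muu\mw_\muu-\muu) = \underbrace{\mathbf{e}_h^\top\big(\hat\muu\mw_\muu - \hat\muu^{[h]}\hat\muu^{[h]\top}\muu\big)}_{(\mathrm{I})} + \underbrace{\mathbf{e}_h^\top\big(\hat\muu^{[h]}\hat\muu^{[h]\top}\muu - \muu\big)}_{(\mathrm{II})}. \]
Term $(\mathrm{II})$ is exactly the quantity controlled by Lemma~\ref{lemma:||eh E (Uhat[h] Uhat[h]T U-U)||}, which already delivers the target bound $d^{1/2}(\|\mpp^{(i)}\|_{\max}+\sigma_i)^2 n_i^{1/2}\log n_i/(q_i\lambda_{i,\min})$.

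For $(\mathrm{I})$ I would decompose using the orthogonal projection onto the column span of $\hat\muu^{[h]}$,
\[ \hat\muu\mw_\muu - \hat\muu^{[h]}\hat\muu^{[h]\top}\muu = (\mi - \hat\muu^{[h]}\hat\muu^{[h]\top})\hat\muu\mw_\muu + \hat\muu^{[h]}\hat\muu^{[h]\top}(\hat\muu\mw_\muu - \muu). \]
The first piece is bounded by $\|\mathbf{e}_h\|\,\|(\mi - \hat\muu^{[h]}\hat\muu^{[h]\top})\hat\muu\| = \|\mathbf{e}_h\|\,\|\sin\Theta(\hat\muu^{[h]},\hat\muu)\|$; combining $\|\mathbf{e}_h\|\lesssim\|\me\|$ from Lemma~\ref{lemma:|E|,|EU|,|UT EU|} with the $\sin\Theta$ estimate of Lemma~\ref{lemma:||eh E Uhat||, ||sin Theta(Uhat[h],Uhat)||} gives $d^{1/2}(\|\mpp^{(i)}\|_{\max}+\sigma_i)^2 n_i^{1/2}\log^{1/2}n_i/(q_i\lambda_{i,\min})$, dominated by the target. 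The crucial second piece is where independence enters: instead of pulling $\|\mathbf{e}_h\|$ out, I keep the product $\mathbf{e}_h^\top\hat\muu^{[h]}$ intact, condition on $\hat\muu^{[h]}$, and apply Lemma~\ref{lemma:ehC} to the (conditionally deterministic) matrix $\mc=\hat\muu^{[h]}$ together with $\|\hat\muu^{[h]}\|_{2\to\infty}\lesssim d^{1/2}n_i^{-1/2}$ from Lemma~\ref{lemma:||Uhat|| and ||Uhat[h]||}. This replaces the $n_i^{1/2}$ in $\|\mathbf{e}_h\|$ by $d^{1/2}\log^{1/2}n_i$, so that $\|\mathbf{e}_h^\top\hat\muu^{[h]}\|\,\|\hat\muu\mw_\muu-\muu\|$, with the second factor again from Lemma~\ref{lemma:||sinTheta(Uhat,U)||,Lambdahat}, is likewise dominated by the target.

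Collecting the three contributions and taking a union bound over $h\in[n_i]$ (the polynomially many high-probability events are absorbed into the logarithmic factors already present in the underlying Bernstein and sub-Gaussian tail estimates) yields the claimed bound. The main obstacle is conceptual rather than computational: one must recognize that the second piece of $(\mathrm{I})$ cannot be estimated by the submultiplicative split $\|\mathbf{e}_h\|\,\|\cdot\|$, but has to be handled by preserving $\mathbf{e}_h^\top\hat\muu^{[h]}$ and invoking the leave-one-out independence, which is exactly what converts the fatal $n_i^{1/2}\cdot n_i^{1/2}$ into the admissible $n_i^{1/2}$.
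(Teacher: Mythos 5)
Your proof is correct, and it is built from the same ingredients as the paper's: the leave-one-out matrices $\hat\muu^{[h]}$, the $\sin\Theta$ estimate of Lemma~\ref{lemma:||eh E Uhat||, ||sin Theta(Uhat[h],Uhat)||}, the independence-based Lemma~\ref{lemma:||eh E (Uhat[h] Uhat[h]T U-U)||} for the dominant contribution, and Lemma~\ref{lemma:ehC} combined with the incoherence bound of Lemma~\ref{lemma:||Uhat|| and ||Uhat[h]||}. The route differs in the intermediate algebra, though. The paper decomposes
\[
\begin{aligned}
\mathbf{e}_h^\top(\hat\muu\mw_\muu-\muu)
&=\mathbf{e}_h^\top (\hat\muu\mw_\muu-\muu)\mw_\muu^\top(\mw_\muu-\hat\muu^\top\muu)
+\mathbf{e}_h^\top \muu\mw_\muu^\top(\mw_\muu-\hat\muu^\top\muu)\\
&\quad+\mathbf{e}_h^\top (\hat\muu\hat\muu^\top-\hat\muu^{[h]}\hat\muu^{[h]\top})\muu
+\mathbf{e}_h^\top (\hat\muu^{[h]}\hat\muu^{[h]\top}\muu-\muu),
\end{aligned}
\]
whose first term is a small multiple of the very quantity being bounded and must be absorbed into the left-hand side via the bootstrap observation $\|\mw_\muu-\hat\muu^\top\muu\|=o(1)$, and whose second term is handled through $\|\me\muu\|_{2\to\infty}$ from Lemma~\ref{lemma:|E|,|EU|,|UT EU|}. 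Your split of term $(\mathrm{I})$ into $(\mi-\hat\muu^{[h]}\hat\muu^{[h]\top})\hat\muu\mw_\muu$ plus $\hat\muu^{[h]}\hat\muu^{[h]\top}(\hat\muu\mw_\muu-\muu)$ eliminates the self-referential term, so no absorption step is needed: your first piece is controlled by $\|\me\|\cdot\|\sin\Theta(\hat\muu^{[h]},\hat\muu)\|$, which is exactly how the paper treats its third term, while your second piece is controlled by $\|\mathbf{e}_h^\top\hat\muu^{[h]}\|\cdot\|\hat\muu\mw_\muu-\muu\|$, reusing the bound on $\|\mathbf{e}_h^\top\hat\muu^{[h]}\|$ that the paper derives in Eq.~\eqref{eq:Ah-A hatU_2} together with the spectral bound of Lemma~\ref{lemma:||sinTheta(Uhat,U)||,Lambdahat}. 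Both routes give identical rates: the correction terms carry $\log^{1/2}n_i$ and the dominant term $(\mathrm{II})$ carries the full $\log n_i$. What your version buys is a cleaner argument without the absorption step; what it costs is one extra invocation of the $\|\mathbf{e}_h^\top\hat\muu^{[h]}\|$ estimate. One presentational point rather than a gap: Lemma~\ref{lemma:ehC} is stated for deterministic $\mc$, so the conditional application should be phrased as conditioning on $\hat\muu^{[h]}$ (independent of $\mathbf{e}_h$) and intersecting with the high-probability event $\|\hat\muu^{[h]}\|_{2\to\infty}\lesssim d^{1/2}n_i^{-1/2}$; this is the same convention the paper itself uses in the proofs of Lemma~\ref{lemma:||eh E Uhat||, ||sin Theta(Uhat[h],Uhat)||} and Lemma~\ref{lemma:||eh E (Uhat[h] Uhat[h]T U-U)||}.
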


\begin{proof}
	For each $h\in[n_i]$, let $\mathbf{e}_h$ denote the $h$th row of $\me$. Notice
	\begin{equation}\label{eq:eh(hat UW-U)}
		\begin{aligned}
		\mathbf{e}_h^\top(\hat\muu\mw_\muu-\muu)
		&=\mathbf{e}_h^\top (\hat\muu\mw_\muu-\muu)\mw_\muu^\top(\mw_\muu-\hat\muu^\top\muu)
		+\mathbf{e}_h^\top \muu\mw_\muu^\top(\mw_\muu-\hat\muu^\top\muu)\\
		&+\mathbf{e}_h^\top (\hat\muu\hat\muu^\top-\hat\muu^{[h]}\hat\muu^{[h]\top})\muu
		+\mathbf{e}_h^\top (\hat\muu^{[h]}\hat\muu^{[h]\top}\muu-\muu).
	\end{aligned}
	\end{equation}
	We now bound all terms on the right hand side of Eq.~\eqref{eq:eh(hat UW-U)}.
	For $\mathbf{e}_h^\top (\hat\muu\mw_\muu-\muu)\mw_\muu^\top(\mw_\muu-\hat\muu^\top\muu)$, by Lemma~\ref{lemma:||sinTheta(Uhat,U)||,Lambdahat} we have
	\begin{equation}\label{eq:eh(hat UW-U)_1}
		\begin{aligned}
		\|\mathbf{e}_h^\top (\hat\muu\mw_\muu-\muu)\mw_\muu^\top(\mw_\muu-\hat\muu^\top\muu)\|
		&\lesssim \|\mathbf{e}_h^\top(\hat\muu\mw_\muu-\muu)\|
		\cdot \|\mw_\muu-\hat\muu^\top\muu\|\\
		&\lesssim \|\mathbf{e}_h^\top (\hat\muu\mw_\muu-\muu)\|
		\cdot \frac{(\|\mpp^{(i)}\|_{\max}+\sigma_i)^2 n_i}{q_i\lambda_{i,\min}^2}\\
		&=O(\|\mathbf{e}_h^\top (\hat\muu\mw_\muu-\muu)\|)
	\end{aligned}
	\end{equation}
	with high probability {\color{black}as $\frac{(\|\mpp^{(i)}\|_{\max}+\sigma_i) n_i^{1/2}}{q_i^{1/2}\lambda_{i,\min}}\lesssim 1$ as implied by Eq.~\eqref{eq:cond2_lemmaA1}}.
	
	For $\mathbf{e}_h^\top \muu\mw_\muu^\top(\mw_\muu-\hat\muu^\top\muu)$, by Lemma~\ref{lemma:|E|,|EU|,|UT EU|} and Lemma~\ref{lemma:||sinTheta(Uhat,U)||,Lambdahat} we have
	\begin{equation}\label{eq:eh(hat UW-U)_2}
		\begin{aligned}
		\|\mathbf{e}_h^\top \muu\mw_\muu^\top(\mw_\muu-\hat\muu^\top\muu)\|
		&\leq \|\me\muu\|_{2\to\infty}
		\cdot \|\mw_\muu-\hat\muu^\top\muu\|\\
		&\lesssim d^{1/2}q_i^{-1/2}(\|\mpp^{(i)}\|_{\max}+\sigma_i)\log^{1/2}n_i
		\cdot \frac{(\|\mpp^{(i)}\|_{\max}+\sigma_i)^2 n_i}{q_i\lambda_{i,\min}^2}\\
		&\lesssim \frac{d^{1/2}(\|\mpp^{(i)}\|_{\max}+\sigma_i)^3 n_i\log^{1/2}n_i}{q_i^{3/2}\lambda_{i,\min}^2}
	\end{aligned}
	\end{equation}
	with high probability.
	
	For $\mathbf{e}_h^\top (\hat\muu\hat\muu^\top-\hat\muu^{[h]}\hat\muu^{[h]\top})\muu$, by Lemma~\ref{lemma:|E|,|EU|,|UT EU|} and Lemma~\ref{lemma:||eh E Uhat||, ||sin Theta(Uhat[h],Uhat)||} we have
	\begin{equation}\label{eq:eh(hat UW-U)_3}
		\begin{aligned}
		\|\mathbf{e}_h^\top (\hat\muu\hat\muu^\top-\hat\muu^{[h]}\hat\muu^{[h]\top})\muu\|
		&\leq \|\me\|
		\cdot 2\|\sin \Theta(\hat\muu^{[h]},\hat\muu)\|\\
		&\lesssim q_i^{-1/2}(\|\mpp^{(i)}\|_{\max}+\sigma_i) n_i^{1/2}\cdot \frac{d^{1/2}(\|\mpp^{(i)}\|_{\max}+\sigma_i) \log^{1/2}n_i}{q_i^{1/2}\lambda_{i,\min}}\\
		&\lesssim \frac{d^{1/2}(\|\mpp^{(i)}\|_{\max}+\sigma_i)^2 n_i^{1/2} \log^{1/2}n_i}{q_i\lambda_{i,\min}}
	\end{aligned}
	\end{equation}
	with high probability.
	
	For $\mathbf{e}_h^\top (\hat\muu^{[h]}\hat\muu^{[h]\top}\muu-\muu)$, by Lemma~\ref{lemma:||eh E (Uhat[h] Uhat[h]T U-U)||} we have
	\begin{equation}\label{eq:eh(hat UW-U)_4}
		\begin{aligned}
		\|\mathbf{e}_h^\top (\hat\muu^{[h]}\hat\muu^{[h]\top}\muu-\muu)\|
		\lesssim \frac{d^{1/2}(\|\mpp^{(i)}\|_{\max}+\sigma_i)^2n_i^{1/2}\log n_i}
	{q_i\lambda_{i,\min}}
	\end{aligned}
	\end{equation}
	with high probability.
	
	Combining Eq.~\eqref{eq:eh(hat UW-U)}, Eq.~\eqref{eq:eh(hat UW-U)_1}, \dots, Eq.~\eqref{eq:eh(hat UW-U)_4} we finally obtain
	$$
	\begin{aligned}
		\|\mathbf{e}_h^\top (\hat\muu\mw_\muu-\muu)\|
		&\lesssim \frac{d^{1/2}(\|\mpp^{(i)}\|_{\max}+\sigma_i)^2 n_i^{1/2}\log n_i}{q_i\lambda_{i,\min}}
	\end{aligned}
	$$
    with high probability {\color{black}as $\frac{(\|\mpp\|_{\max}+\sigma_i) n_i^{1/2}}{q_i^{1/2}\lambda_{i,\min}\log^{1/2} n_i} \lesssim 1$ as implied by Eq.~\eqref{eq:cond2_lemmaA1}}.
\end{proof}

\subsection{Proof of Lemma~\ref{lemma:|| W(i)T W(i,j) W(j)-I ||}}
\label{App: lemma:|| W(i)T W(i,j) W(j)-I ||}
Recall that
\begin{gather*}
	\mw^{(i)\top}\mw^{(i,j)}\mw^{(j)}
	= \underset{\mo\in \mathcal{O}_d}{\arg\min}
	\|\hat \mx^{(i)}_{\langle\mathcal{U}_i\cap \mathcal{U}_j\rangle}\mw^{(i)}\mo-\hat\mx^{(j)}_{\langle\mathcal{U}_i\cap \mathcal{U}_j\rangle}\mw^{(j)}\|_F.
\end{gather*}
Denote \[\mathbf{F}:=\mw^{(i)\top}(\hat\mx^{(i)}_{\langle\mathcal{U}_i\cap \mathcal{U}_j\rangle})^\top\hat\mx^{(j)}_{\langle\mathcal{U}_i\cap \mathcal{U}_j\rangle}\mw^{(j)}
		-\mx_{\mathcal{U}_i\cap \mathcal{U}_j}^\top\mx_{\mathcal{U}_i\cap \mathcal{U}_j}.\]
We therefore have, by perturbation bounds for polar decompositions, that
\begin{equation}
 \label{eq:rencang} 
  \|\mw^{(i)\top}\mw^{(i,j)}\mw^{(j)} -\mi\| \leq \frac{2\|\mathbf{F}\|}{\sigma_{\min}(\mx_{\mathcal{U}_i\cap \mathcal{U}_j}^\top\mx_{\mathcal{U}_i\cap \mathcal{U}_j})}. 
\end{equation}
Indeed, we suppose $\mx_{\mathcal{U}_i\cap \mathcal{U}_j}^\top\mx_{\mathcal{U}_i\cap \mathcal{U}_j}$ is invertible in Theorem~\ref{thm:R(i,j)}.
Now suppose $\|\mathbf{F}\| <  \sigma_{\min}(\mx_{\mathcal{U}_i\cap \mathcal{U}_j}^\top\mx_{\mathcal{U}_i\cap \mathcal{U}_j})$. Then 
$(\hat\mx^{(i)}_{\langle\mathcal{U}_i\cap \mathcal{U}_j\rangle})^{\top}\hat \mx^{(j)}_{\langle\mathcal{U}_i\cap \mathcal{U}_j\rangle}$ is also invertible and hence, by Theorem~1 in \cite{rencang} we have
\[ \begin{split} \|\mw^{(i)\top}\mw^{(i,j)}\mw^{(j)} -\mi\| &\leq \frac{2\|\mathbf{F}\|}{\sigma_{\min}((\hat\mx^{(i)}_{\langle\mathcal{U}_i\cap \mathcal{U}_j\rangle})^{\top}\hat \mx^{(j)}_{\langle\mathcal{U}_i\cap \mathcal{U}_j\rangle}) + \sigma_{\min}(\mx_{\mathcal{U}_i\cap \mathcal{U}_j}^\top\mx_{\mathcal{U}_i\cap \mathcal{U}_j})} \leq \frac{2\|\mathbf{F}\|}{\sigma_{\min}(\mx_{\mathcal{U}_i\cap \mathcal{U}_j}^\top\mx_{\mathcal{U}_i\cap \mathcal{U}_j})}. \end{split} \]
Otherwise if $\|\mathbf{F}\| \geq  \sigma_{\min}(\mx_{\mathcal{U}_i\cap \mathcal{U}_j}^\top\mx_{\mathcal{U}_i\cap \mathcal{U}_j})$
then, as $\|\mw^{(i)\top}\mw^{(i,j)}\mw^{(j)} -\mi\| \leq 2$, Eq.~\eqref{eq:rencang} holds trivially. 

    We now bound $\|\mathbf{F}\|$. First note that
      \begin{equation*}
    	\begin{aligned}
		\mathbf{F}
		&= (\hat\mx^{(i)}_{\langle\mathcal{U}_i\cap \mathcal{U}_j\rangle}\mw^{(i)}-\mx_{\mathcal{U}_i\cap \mathcal{U}_j})^\top
		(\hat\mx^{(j)}_{\langle\mathcal{U}_i\cap \mathcal{U}_j\rangle}\mw^{(j)}-\mx_{\mathcal{U}_i\cap \mathcal{U}_j})
		\\ &+(\hat\mx^{(i)}_{\langle\mathcal{U}_i\cap \mathcal{U}_j\rangle}\mw^{(i)}-\mx_{\mathcal{U}_i\cap \mathcal{U}_j})^\top\mx_{\mathcal{U}_i\cap \mathcal{U}_j}
		+\mx_{\mathcal{U}_i\cap \mathcal{U}_j}^\top(\hat\mx^{(j)}_{\langle\mathcal{U}_i\cap \mathcal{U}_j\rangle}\mw^{(j)}-\mx_{\mathcal{U}_i\cap \mathcal{U}_j}).
    \end{aligned}
    \end{equation*}
    Next, by Lemma~\ref{lemma:hat X(i)W(i)-X}, we have
    \begin{equation*}
    \begin{split}
    	\hat\mx^{(i)}_{\langle\mathcal{U}_i\cap \mathcal{U}_j\rangle}\mw^{(i)}-\mx_{\mathcal{U}_i\cap \mathcal{U}_j}
    	&=\me^{(i)}_{\langle\mathcal{U}_i\cap \mathcal{U}_j\rangle}
    	\mx_{\mathcal{U}_i}
    	(\mx_{\mathcal{U}_i}^\top\mx_{\mathcal{U}_i})^{-1}
    	+\mr^{(i)}_{\langle\mathcal{U}_i\cap \mathcal{U}_j\rangle}
    	\\ &=\me^{(i)}_{\langle\mathcal{U}_i\cap \mathcal{U}_j\rangle}
    	\muu^{(i)}
    	(\mLambda^{(i)})^{-1/2}
    	\tilde\mw^{(i)}
    	+\mr^{(i)}_{\langle\mathcal{U}_i\cap \mathcal{U}_j\rangle},
    \end{split}
    \end{equation*}
    where $\me^{(i)}_{\langle\mathcal{U}_i\cap \mathcal{U}_j\rangle}$ and $\mr^{(i)}_{\langle\mathcal{U}_i\cap \mathcal{U}_j\rangle}$ contain the rows in $\me^{(i)}$ and $\mr^{(i)}$ corresponding to entities $\mathcal{U}_i\cap \mathcal{U}_j$, respectively. A similar expansion holds for
    	$\hat\mx^{(j)}_{\langle\mathcal{U}_i\cap \mathcal{U}_j\rangle}\mw^{(j)}-\mx_{\mathcal{U}_i\cap \mathcal{U}_j}$. 
    We therefore have
    $$
    \begin{aligned}
    \mathbf{F}
    	&=\underbrace{(\hat\mx^{(i)}_{\langle\mathcal{U}_i\cap \mathcal{U}_j\rangle}\mw^{(i)}-\mx_{\mathcal{U}_i\cap \mathcal{U}_j})^\top
		(\hat\mx^{(j)}_{\langle\mathcal{U}_i\cap \mathcal{U}_j\rangle}\mw^{(j)}-\mx_{\mathcal{U}_i\cap \mathcal{U}_j})}_{\mathbf{F}_1}\\
    	&+\underbrace{
    	\tilde\mw^{(i)\top}
    	(\mLambda^{(i)})^{-1/2}
    	\muu^{(i)\top}
   	\me^{(i)\top}_{\langle\mathcal{U}_i\cap \mathcal{U}_j\rangle}
    	\mx_{\mathcal{U}_i\cap \mathcal{U}_j}
    	}_{\mathbf{F}_2}
    	+\underbrace{
    	\mx_{\mathcal{U}_i\cap \mathcal{U}_j}^\top 	
    	\me^{(j)}_{\langle\mathcal{U}_i\cap \mathcal{U}_j\rangle}
    	\muu^{(j)}
    	(\mLambda^{(j)})^{-1/2}
    	\tilde\mw^{(j)}
    	}_{\mathbf{F}_3}\\
    	&+\underbrace{
    	\mr^{(i)\top}_{\langle\mathcal{U}_i\cap \mathcal{U}_j\rangle}
    	\mx_{\mathcal{U}_i\cap \mathcal{U}_j}
    	}_{\mathbf{F}_4}
    	+\underbrace{
    	\mx_{\mathcal{U}_i\cap \mathcal{U}_j}^\top 	
    	\mr^{(j)}_{\langle\mathcal{U}_i\cap \mathcal{U}_j\rangle}
    	}_{\mathbf{F}_5}.
    \end{aligned}
    $$
    For $\mathbf{F}_1$, by Lemma~\ref{lemma:hat X(i)W(i)-X} 
    we have
    $$
    \begin{aligned}
    	\|\mathbf{F}_1\|
    	&\leq n_{i,j}^{1/2}\|\hat\mx^{(i)}\mw^{(i)}-\mx_{\mathcal{U}_i}\|_{2\to\infty}
    	\cdot n_{i,j}^{1/2}\|\hat\mx^{(j)}\mw^{(j)}-\mx_{\mathcal{U}_j}\|_{2\to\infty}\\
    	&\lesssim n_{i,j}
    	\cdot \frac{(\|\mpp^{(i)}\|_{\max}+\sigma_i)\log^{1/2}{n_i}}{q_i^{1/2}\lambda_{i,\min}^{1/2}}
    	\cdot \frac{(\|\mpp^{(j)}\|_{\max}+\sigma_j)\log^{1/2}{n_j}}{q_j^{1/2}\lambda_{j,\min}^{1/2}}\\
      & \lesssim \frac{n_{i,j} \gamma_i  \gamma_j }{(q_in_i  \mu_i)^{1/2} (q_jn_j  \mu_j)^{1/2}}
    \end{aligned}
    $$
    with high probability. 
    For $\mathbf{F}_2$, by Lemma~\ref{lemma:||E(i)_ij||,||UiT E(i)_ij Uj||F} 
    we have
    $$
    \begin{aligned}
    	\|\mathbf{F}_2\|
    	&\leq \|(\bm{\Lambda}^{(i)})^{-1/2}\|
    	\cdot\|\muu^{(i){\top}}
    	\me^{(i)\top}_{\langle\mathcal{U}_i\cap \mathcal{U}_j\rangle}
    \mx_{\mathcal{U}_i\cap \mathcal{U}_j}\|
   \\
    	&\lesssim \lambda_{i,\min}^{-1/2} 
    	\cdot q_i^{-1/2} n_{i,j}^{1/2}\|\mx_{\mathcal{U}_i\cap \mathcal{U}_j}\|_{2\to\infty} (\|\mpp^{(i)}\|_{\max}+\sigma_i) \log^{1/2}{n_i} 
    	\lesssim \frac{n_{i,j}^{1/2}\|\mx_{\mathcal{U}_i\cap \mathcal{U}_j}\|_{2\to\infty}\gamma_i}{(q_i n_i)^{1/2}\mu_i^{1/2}}  
    \end{aligned}
    $$
    with high probability. The same argument also yields
    $$
    \begin{aligned}
    	\|\mathbf{F}_3\|
\lesssim \frac{n_{i,j}^{1/2}\|\mx_{\mathcal{U}_i\cap \mathcal{U}_j}\|_{2\to\infty}\gamma_j}{(q_j n_j)^{1/2}\mu_j^{1/2}} 
    \end{aligned}
    $$
    with high probability. For $\mathbf{F}_4$, once again by Lemma~\ref{lemma:||E(i)_ij||,||UiT E(i)_ij Uj||F} 
    we have
    $$
    \begin{aligned}
    	\|\mathbf{F}_4\|
    	&\leq \|\mr^{(i)}_{\langle\mathcal{U}_i\cap \mathcal{U}_j\rangle}\|
    	\cdot \|\mx_{\mathcal{U}_i \cap \mathcal{U}_j}\|
    	\lesssim n_{i,j}^{1/2}\Big(\frac{(\|\mpp^{(i)}\|_{\max}+\sigma_i)^2n_i^{1/2} \log n_i}{q_i\lambda_{i,\min}^{3/2}}
    +\frac{(\|\mpp^{(i)}\|_{\max}+\sigma_i)\log^{1/2}{n_i}}{q_i^{1/2}n_i^{1/2}\lambda_{i,\min}^{1/2}}\Big)
    \cdot \vartheta_{i,j}^{1/2} \\
    	& \lesssim (n_{i,j} \vartheta_{i,j})^{1/2}\Big(\frac{\gamma_i^2 }{q_in_i  \mu_i^{3/2}}
    +\frac{\gamma_i}{q_i^{1/2} n_i \mu_i^{1/2}}\Big)
    \end{aligned}
    $$
    with high probability. The same argument also yields
    $$
    \begin{aligned}
    	\|\mathbf{F}_5\|
    	& \lesssim (n_{i,j} \vartheta_{i,j})^{1/2}\Big(\frac{\gamma_j^2 }{q_jn_j  \mu_j^{3/2}}
    +\frac{\gamma_j}{q_j^{1/2} n_j \mu_j^{1/2}}\Big)
    \end{aligned}
    $$
    with high probability.
    Combining the above bounds for $\mathbf{F}_1,\dots,\mathbf{F}_5$ and simplifying, we obtain
    $$
    \begin{aligned}
    	\|\mathbf{F}\|
    	&\leq \|\mathbf{F}_1\|+\cdots+\|\mathbf{F}_5\| \\ 
&\lesssim \frac{n_{i,j} \gamma_i  \gamma_j }{(q_in_i  \mu_i)^{1/2} (q_jn_j  \mu_j)^{1/2}}
+n_{i,j}^{1/2}\|\mx_{\mathcal{U}_i\cap \mathcal{U}_j}\|_{2\to\infty}\Big(\frac{\gamma_i}{(q_i n_i \mu_i)^{1/2}} +\frac{\gamma_j}{(q_j n_j\mu_j)^{1/2}}\Big)\\
&+(n_{i,j} \vartheta_{i,j})^{1/2}\Big(\frac{\gamma_i^2 }{q_in_i  \mu_i^{3/2}}
    +\frac{\gamma_i}{q_i^{1/2} n_i \mu_i^{1/2}}
    +\frac{\gamma_j^2 }{q_jn_j  \mu_j^{3/2}}
    +\frac{\gamma_j}{q_j^{1/2} n_j \mu_j^{1/2}}\Big)\\
    &\lesssim \frac{n_{i,j} \gamma_i  \gamma_j }{(q_in_i  \mu_i)^{1/2} (q_jn_j  \mu_j)^{1/2}}
    +n_{i,j}^{1/2}\|\mx_{\mathcal{U}_i\cap \mathcal{U}_j}\|_{2\to\infty}\Big(\frac{\gamma_i}{(q_i n_i\mu_i)^{1/2}} +\frac{\gamma_j}{(q_j n_j\mu_j)^{1/2}}\Big)
\\
      & +(n_{i,j} \vartheta_{i,j})^{1/2}\Big(\frac{\gamma_i^2 }{q_in_i  \mu_i^{3/2}}
    +\frac{\gamma_j^2 }{q_jn_j  \mu_j^{3/2}}\Big)
    \end{aligned}
    $$
    with high probability. 
    Substituting the above bound for $\|\mathbf{F}\|$ into Eq.~\eqref{eq:rencang} yields the stated claim. 

\subsection{Technical lemmas for Lemma~\ref{lemma:|| W(i)T W(i,j) W(j)-I ||}}

\begin{lemma}
\label{lemma:||E(i)_ij||,||UiT E(i)_ij Uj||F}
{\em     Consider the setting of Theorem~\ref{thm:R(i,j)}. We then have
	$$
	\begin{aligned}
	    &\|\mr^{(i)}_{\langle\mathcal{U}_i\cap \mathcal{U}_j\rangle}\|
	    \lesssim n_{i,j}^{1/2}\Bigl(\frac{(\|\mpp^{(i)}\|_{\max}+\sigma_i)^2n_i^{1/2} \log n_i }{q_i\lambda_{i,\min}^{3/2}}
    +\frac{(\|\mpp^{(i)}\|_{\max}+\sigma_i)\log^{1/2} n_i}{q_i^{1/2}n_i^{1/2}\lambda_{i,\min}^{1/2}}\Bigr),\\
		&\|\muu^{(i)\top}
    	\me^{(i)\top}_{\langle\mathcal{U}_i\cap \mathcal{U}_j\rangle}\mx_{\mathcal{U}_i\cap \mathcal{U}_j}\|
    	\lesssim q_i^{-1/2} n_{i,j}^{1/2}\|\mx_{\mathcal{U}_i\cap \mathcal{U}_j}\|_{2\to\infty} (\|\mpp^{(i)}\|+\sigma_i) \log^{1/2}{n_i}
	\end{aligned}
	$$
	with high probability. 
	}
\end{lemma}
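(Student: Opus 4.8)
The plan is to prove the two bounds separately, since the first is essentially a corollary of an earlier lemma while the second is where all the probabilistic work lies. For the bound on $\|\mr^{(i)}_{\langle\mathcal{U}_i\cap\mathcal{U}_j\rangle}\|$, I would use that any matrix $\mathbf{M}$ with $n_{i,j}$ rows satisfies $\|\mathbf{M}\|\le\|\mathbf{M}\|_F\le n_{i,j}^{1/2}\|\mathbf{M}\|_{2\to\infty}$, together with the trivial fact that restricting to a subset of rows cannot increase the $2\to\infty$ norm. Hence
\[ \|\mr^{(i)}_{\langle\mathcal{U}_i\cap\mathcal{U}_j\rangle}\| \le n_{i,j}^{1/2}\,\|\mr^{(i)}\|_{2\to\infty}. \]
The $2\to\infty$ bound on $\mr^{(i)}$ is exactly the one furnished by Lemma~\ref{lemma:hat X(i)W(i)-X}, and substituting $\gamma_i=(\|\mpp^{(i)}\|_{\max}+\sigma_i)\log^{1/2}n_i$ and $\mu_i=\lambda_{i,\min}/n_i$ rewrites that bound in the two-term form appearing in the statement; multiplying by $n_{i,j}^{1/2}$ gives the claim. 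This step is routine.

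For the second bound I would first recast the quantity so that matrix concentration applies. Using the symmetry of $\me^{(i)}$, let $\tilde{\mx}\in\mathbb{R}^{n_i\times d}$ be the zero-padded latent-position matrix, i.e.\ the row indexed by $s$ equals the latent position $\mathbf{x}_s$ when $s\in\mathcal{U}_i\cap\mathcal{U}_j$ and $\mathbf{0}$ otherwise; then $\muu^{(i)\top}\me^{(i)\top}_{\langle\mathcal{U}_i\cap\mathcal{U}_j\rangle}\mx_{\mathcal{U}_i\cap\mathcal{U}_j}=\muu^{(i)\top}\me^{(i)}\tilde{\mx}$, a bilinear form $\mathbf{A}^\top\me^{(i)}\mathbf{B}$ in the deterministic matrices $\mathbf{A}=\muu^{(i)}$ and $\mathbf{B}=\tilde{\mx}$. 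I would expand it over the independent upper-triangular entries of $\me^{(i)}$, writing $\mathbf{A}^\top\me^{(i)}\mathbf{B}=\sum_{s\le t}\mathbf{Z}^{(s,t)}$ with $\mathbf{Z}^{(s,t)}=\me^{(i)}_{s,t}(\mathbf{a}_s\mathbf{b}_t^\top+\mathbf{a}_t\mathbf{b}_s^\top)$ for $s<t$ and $\mathbf{Z}^{(s,s)}=\me^{(i)}_{s,s}\mathbf{a}_s\mathbf{b}_s^\top$, where $\mathbf{a}_s,\mathbf{b}_s$ are the $s$-th rows of $\mathbf{A},\mathbf{B}$. These are independent mean-zero $d\times d$ matrices, so the rectangular matrix Bernstein inequality (Theorem~1.6 in \cite{tropp2012user}) applies.

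The two inputs to Bernstein are a high-probability bound $L$ on $\|\mathbf{Z}^{(s,t)}\|$ and the variance proxy $v$. From $\|\mathbf{a}_s\|\le\|\muu^{(i)}\|_{2\to\infty}\lesssim d^{1/2}n_i^{-1/2}$, $\|\mathbf{b}_s\|\le\|\mx_{\mathcal{U}_i\cap\mathcal{U}_j}\|_{2\to\infty}$ (with $\mathbf{b}_s=\mathbf{0}$ off the overlap), and the entrywise bound $|\me^{(i)}_{s,t}|\lesssim q_i^{-1}(\|\mpp^{(i)}\|_{\max}+\sigma_i\log^{1/2}n_i)$ that holds with high probability (derived exactly as inside the proof of Lemma~\ref{lemma:|E|,|EU|,|UT EU|}), one reads off $L$. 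For $v$ I would use $\mathbb{E}[(\me^{(i)}_{s,t})^2]\lesssim q_i^{-1}(\|\mpp^{(i)}\|_{\max}+\sigma_i)^2$ together with $\sum_s\mathbf{a}_s\mathbf{a}_s^\top=\muu^{(i)\top}\muu^{(i)}=\mi$ and $\sum_t\|\mathbf{b}_t\|^2=\|\mx_{\mathcal{U}_i\cap\mathcal{U}_j}\|_F^2\le n_{i,j}\|\mx_{\mathcal{U}_i\cap\mathcal{U}_j}\|_{2\to\infty}^2$, which gives $v\lesssim q_i^{-1}(\|\mpp^{(i)}\|_{\max}+\sigma_i)^2\,n_{i,j}\,\|\mx_{\mathcal{U}_i\cap\mathcal{U}_j}\|_{2\to\infty}^2$. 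Matrix Bernstein at confidence $1-n_i^{-c}$ then produces a bound of order $\sqrt{v\log n_i}+L\log n_i$, whose leading term $\sqrt{v\log n_i}$ is precisely the asserted right-hand side, while the $L\log n_i$ term is of no larger order under the standing assumptions $q_in_i\gtrsim\log^2 n_i$ and $n_{i,j}\ge d$.

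The main obstacle is the variance computation, and in particular controlling \emph{both} Gram statistics $\|\sum_{s\le t}\mathbb{E}[\mathbf{Z}^{(s,t)}\mathbf{Z}^{(s,t)\top}]\|$ and $\|\sum_{s\le t}\mathbb{E}[\mathbf{Z}^{(s,t)\top}\mathbf{Z}^{(s,t)}]\|$. Because $\me^{(i)}$ is symmetric, each off-diagonal entry appears in two rows, so the summands are the symmetrized rank-two pieces above rather than independent row contributions, and one must verify that the cross terms $(\mathbf{b}_s^\top\mathbf{b}_t)\mathbf{a}_s\mathbf{a}_t^\top$ and the terms $\|\mathbf{b}_s\|^2\mathbf{a}_t\mathbf{a}_t^\top$ do not exceed the stated order; here the transposed statistic is handled via $\sum_t\|\mathbf{a}_t\|^2=d$ and $\lambda_{\max}(\mx_{\mathcal{U}_i\cap\mathcal{U}_j}^\top\mx_{\mathcal{U}_i\cap\mathcal{U}_j})\le n_{i,j}\|\mx_{\mathcal{U}_i\cap\mathcal{U}_j}\|_{2\to\infty}^2$, with the fixed rank $d$ absorbed as elsewhere in the paper. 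To stay closest to the paper's style and make this bookkeeping transparent, I would split $\me^{(i)}=\me^{(i,1)}+\me^{(i,2)}$ as in Lemma~\ref{lemma:|E|,|EU|,|UT EU|} and run the Bernstein argument on each piece, whose second moments $q_i^{-1}\|\mpp^{(i)}\|_{\max}^2$ and $2q_i^{-1}\sigma_i^2$ are already recorded there, then add the two identical-order bounds.
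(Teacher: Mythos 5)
Your proposal is correct and follows essentially the same route as the paper: the first bound is obtained exactly as in the paper via $\|\mr^{(i)}_{\langle\mathcal{U}_i\cap\mathcal{U}_j\rangle}\|\le n_{i,j}^{1/2}\|\mr^{(i)}\|_{2\to\infty}$ together with Lemma~\ref{lemma:hat X(i)W(i)-X}, and your second argument (symmetrized upper-triangular decomposition, split into $\me^{(i,1)}+\me^{(i,2)}$, matrix Bernstein with the stated variance and range parameters) is precisely the derivation the paper invokes by pointing to the proof of $\|\muu^{(i)\top}\me^{(i)}\muu^{(i)}\|$ in Lemma~\ref{lemma:|E|,|EU|,|UT EU|}, with the lower-order Bernstein term absorbed under $q_in_i\gtrsim\log^2 n_i$ just as you note. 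You have merely written out in full the bookkeeping that the paper compresses into a cross-reference.
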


\begin{proof}
    From Lemma~\ref{lemma:hat X(i)W(i)-X} we have
    $$
    \begin{aligned}
    \|\mr^{(i)}_{\langle\mathcal{U}_i\cap \mathcal{U}_j\rangle}\|
    \lesssim n_{i,j}^{1/2}\|\mr^{(i)}\|_{2\to\infty}
    \lesssim n_{i,j}^{1/2}\Bigl(\frac{(\|\mpp^{(i)}\|_{\max}+\sigma_i)^2n_i^{1/2} \log n_i}{q_i\lambda_{i,\min}^{3/2}}
    +\frac{(\|\mpp^{(i)}\|_{\max}+\sigma_i)\log^{1/2} n_i}{q_i^{1/2}n_i^{1/2}\lambda_{i,\min}^{1/2}}\Bigr)
    \end{aligned}
    $$
    with high probability. 
    Furthermore, following the same derivations as that for $\|\muu^{(i)\top}
    	\me^{(i)}\muu^{(i)}\|$ in the proof of Lemma~\ref{lemma:|E|,|EU|,|UT EU|}, we have
	$$
\|\muu^{(i)\top}
    	\me^{(i)\top}_{\langle\mathcal{U}_i\cap \mathcal{U}_j\rangle}\mx_{\mathcal{U}_i\cap \mathcal{U}_j}\| 
    	\lesssim q_i^{-1/2} n_{i,j}^{1/2}\|\mx_{\mathcal{U}_i\cap \mathcal{U}_j}\|_{2\to\infty} (\|\mpp^{(i)}\|+\sigma_i) \log^{1/2}{n_i}
    $$
    with high probability, provided that 
    $q_i n_i \gtrsim \log^2 n_i$.
\end{proof}

\begin{lemma}
	\label{lemma:i->global}
        {\em Suppose the entities in $\mathcal{U}_i$ are selected uniformly at random from all $N$ entities.
	Write the eigen-decomposition of $\mpp_{\mathcal{U}_i,\mathcal{U}_i}$ as $\mpp_{\mathcal{U}_i,\mathcal{U}_i}=\muu^{(i)}\mLambda^{(i)}\muu^{(i)\top}$ where 
	$\muu^{(i)}$ are the eigenvectors corresponding to the non-zero eigenvalues. Let $\lambda_{i,\max}$ and $\lambda_{i,\min}$ denote the largest and smallest non-zero eigenvalue of
        $\mpp_{\mathcal{U}_i,\mathcal{U}_i}$, respectively. Then for $n_i \gg \log N$ we have 
	$$
	\begin{aligned}
		&
		\frac{n_i}{N}\lambda_{\min} \lesssim \lambda_{i,\min} \leq \lambda_{i,\max}\lesssim \frac{dn_i}{N}\lambda_{\max},
                  \quad \text{and } \|\muu^{(i)}\|_{2\to\infty}
	\lesssim\frac{d^{1/2}}{n_i^{1/2}}
	\end{aligned}
	$$
	with high probability.
        }
\end{lemma}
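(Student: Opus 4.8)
The plan is to reduce every claim to a statement about the $d\times d$ Gram matrix $\mathbf{G} := \muu_{\mathcal{U}_i}^{\top}\muu_{\mathcal{U}_i} = \sum_{s \in \mathcal{U}_i} \mathbf{u}_s \mathbf{u}_s^{\top}$, where $\mathbf{u}_s$ denotes the $s$th row of the global eigenvector matrix $\muu$ (whose coherence $\|\muu\|_{2\to\infty}\lesssim d^{1/2}N^{-1/2}$ is assumed). Since $\mpp_{\mathcal{U}_i,\mathcal{U}_i} = \mx_{\mathcal{U}_i}\mx_{\mathcal{U}_i}^{\top}$, its non-zero eigenvalues coincide with those of $\mx_{\mathcal{U}_i}^{\top}\mx_{\mathcal{U}_i} = \mLambda^{1/2}\mathbf{G}\mLambda^{1/2}$, and a one-line Rayleigh-quotient computation (for any unit $v$, $v^{\top}\mLambda^{1/2}\mathbf{G}\mLambda^{1/2}v \geq \lambda_{\min}(\mathbf{G})\,v^{\top}\mLambda v \geq \lambda_{\min}(\mathbf{G})\lambda_{\min}$, and symmetrically for the maximum) gives $\lambda_{\min}(\mathbf{G})\lambda_{\min} \leq \lambda_{i,\min} \leq \lambda_{i,\max} \leq \lambda_{\max}(\mathbf{G})\lambda_{\max}$. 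Thus the two eigenvalue bounds reduce to showing $\lambda_{\min}(\mathbf{G}) \gtrsim n_i/N$ and $\lambda_{\max}(\mathbf{G}) \lesssim d n_i/N$.

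The upper bound is deterministic: as $\mathbf{G}\succeq 0$, $\lambda_{\max}(\mathbf{G}) \leq \mathrm{tr}(\mathbf{G}) = \sum_{s\in\mathcal{U}_i}\|\mathbf{u}_s\|^2 \leq n_i\|\muu\|_{2\to\infty}^2 \lesssim dn_i/N$, which accounts for the factor of $d$ appearing in the stated upper bound. For the lower bound I would use matrix concentration: because $\mathcal{U}_i$ is a uniform size-$n_i$ subset of $[N]$, we have $\mathbb{E}[\mathbf{G}] = \frac{n_i}{N}\muu^{\top}\muu = \frac{n_i}{N}\mi_d$, and $\mathbf{G}$ is a sum of $n_i$ positive semidefinite rank-one terms each of spectral norm at most $R := \max_s\|\mathbf{u}_s\|^2 \lesssim d/N$. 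A matrix Chernoff bound, valid for sampling without replacement via the standard without-replacement-to-with-replacement reduction together with the estimates of \cite{tropp2012user}, then yields for a fixed $\delta \in (0,1)$ a bound of the form $\mathbb{P}\{\lambda_{\min}(\mathbf{G}) \leq (1-\delta)\tfrac{n_i}{N}\} \leq d\exp(-c\delta^2 n_i/(NR)) \lesssim d\exp(-c'\delta^2 n_i/d)$, so that $\lambda_{\min}(\mathbf{G}) \geq \tfrac12\tfrac{n_i}{N}$ with high probability once $n_i/d \gg \log N$, i.e. $n_i \gg \log N$ for bounded rank $d$.

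Finally, for the $2\to\infty$ bound I would exploit that $\muu^{(i)}$ and $\muu_{\mathcal{U}_i}$ span the same column space (both being bases for the range of $\mpp_{\mathcal{U}_i,\mathcal{U}_i}=\mx_{\mathcal{U}_i}\mx_{\mathcal{U}_i}^{\top}$, since $\mLambda^{1/2}$ is invertible), so $\muu^{(i)} = \muu_{\mathcal{U}_i}\mathbf{M}$ for some invertible $\mathbf{M} \in \mathbb{R}^{d\times d}$. Orthonormality $\muu^{(i)\top}\muu^{(i)} = \mathbf{M}^{\top}\mathbf{G}\mathbf{M} = \mi_d$ forces $\mathbf{M}\mathbf{M}^{\top} = \mathbf{G}^{-1}$, hence $\|\mathbf{M}\| = \lambda_{\min}(\mathbf{G})^{-1/2}$, and submultiplicativity of the $2\to\infty$ norm gives $\|\muu^{(i)}\|_{2\to\infty} \leq \|\muu_{\mathcal{U}_i}\|_{2\to\infty}\|\mathbf{M}\| \leq \|\muu\|_{2\to\infty}\lambda_{\min}(\mathbf{G})^{-1/2} \lesssim (d^{1/2}N^{-1/2})(N/n_i)^{1/2} = d^{1/2}n_i^{-1/2}$ on the high-probability event of the previous step. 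I expect the main obstacle to be the lower tail of $\lambda_{\min}(\mathbf{G})$: one must invoke a matrix concentration inequality appropriate for sampling without replacement and verify that its sample-complexity requirement $n_i \gtrsim d\log N$ is reconciled with the stated hypothesis $n_i \gg \log N$, which holds precisely in the low-rank regime where $d$ is treated as fixed.
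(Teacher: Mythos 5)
Your proposal is correct and follows essentially the same route as the paper: both arguments reduce the lemma to controlling the Gram matrix $\mathbf{G}=\muu_{\mathcal{U}_i}^{\top}\muu_{\mathcal{U}_i}$, lower-bound $\lambda_{\min}(\mathbf{G})\gtrsim n_i/N$ with high probability under uniform sampling, and then derive the eigenvalue sandwich, the bound $\lambda_{i,\max}\lesssim \frac{dn_i}{N}\lambda_{\max}$, and the coherence of $\muu^{(i)}$ deterministically from the coherence of $\muu$. The one substantive difference is in how the concentration step is handled: the paper simply cites Proposition~S.3 of \cite{zhou2021multi}, whereas you prove it from scratch via the matrix Chernoff bound of \cite{tropp2012user} together with the standard without-replacement reduction. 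This buys self-containedness and makes explicit that the true sample-size requirement is $n_i\gtrsim d\log N$, which matches the stated hypothesis $n_i\gg\log N$ only because $d$ is treated as bounded --- a point the paper leaves implicit. Your handling of the $2\to\infty$ bound also differs mechanically: you factor $\muu^{(i)}=\muu_{\mathcal{U}_i}\mathbf{M}$ with $\mathbf{M}\mathbf{M}^{\top}=\mathbf{G}^{-1}$ and use $\|\muu^{(i)}\|_{2\to\infty}\leq\|\muu_{\mathcal{U}_i}\|_{2\to\infty}\|\mathbf{M}\|$, while the paper compares diagonal entries of $\mpp_{\mathcal{U}_i,\mathcal{U}_i}$ in its two eigendecompositions to get $\|\muu^{(i)}\|_{2\to\infty}^2\leq \lambda_{\max}\|\muu\|_{2\to\infty}^2/\lambda_{i,\min}$; both are valid one-line arguments yielding the same rate.
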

\begin{proof}
  If the entities in $\mathcal{U}_i$ are chosen uniformly at random from all $N$ entities then, by Proposition~S.3. in \cite{zhou2021multi} together with the assumption 
  $n_i\gg \log N$, we have
  \[\lambda_{\min}(\muu_{\mathcal{U}_i}^{\top} \muu_{\mathcal{U}_i}) \gtrsim \frac{n_i}{N} \]
	with high probability. As $\mpp^{(i)}=\muu_{\mathcal{U}_i}\mLambda \muu_{\mathcal{U}_i}^\top$, given the above bound we have
	\begin{equation}\label{eq:lambda_i min}
	\lambda_{i,\min}
		\geq \lambda_{\min}(\muu_{\mathcal{U}_i}^{\top}\muu_{\mathcal{U}_i})
		\cdot\lambda_d(\mLambda)
		\gtrsim \frac{n_i}{N}\lambda_{\min},
	\end{equation}
        and hence
    $$
    \begin{aligned}
    	\|\muu^{(i)}\|_{2\to\infty}^2
    	\leq 
    	\frac{\lambda_{\max}\|\muu\|_{2\to\infty}^2}{\lambda_{i,\min}}
    	\lesssim \lambda_{\max}\cdot \frac{d}{N}\cdot \frac{N}{n_i\lambda_{\min}}
    	\lesssim \frac{d}{n_i}. 
    \end{aligned}
    $$
    Finally, from
    $$
    \begin{aligned}
    	\|\muu_{\mathcal{U}_i}\|
    	\leq n_i^{1/2} \|\muu_{\mathcal{U}_i}\|_{2\to\infty}
    	\leq n_i^{1/2} \|\muu\|_{2\to\infty}
    	\lesssim \frac{d^{1/2}n_i^{1/2}}{N^{1/2}},
    \end{aligned}
    $$  
    we obtain
    \begin{equation}\label{eq:lambda_i max}
    	\lambda_{i,\max}
    	= \|\mpp^{(i)}\|
    	\leq \|\muu_{\mathcal{U}_i}\|^2\cdot \|\mLambda\|
    	\lesssim \frac{d n_i}{N}\lambda_{\max}
    \end{equation}
    as claimed. 
\end{proof}

\subsection{Proof of Lemma~\ref{lemma: || W(i0)T ... W(iL)-I||}}
\label{sec:technical_w_chain}

    We proceed by induction on $L$. The case $L = 1$ follows from Lemma~\ref{lemma:|| W(i)T W(i,j) W(j)-I ||}.
    Now for any $L\geq 2$, define
    \[ \mt^{(L-1)} = \mw^{(i_0)\top}\mw^{(i_0,i_1)} \cdots \mw^{(i_{L-2},i_{L-1})}\mw^{(i_{L-1})}, \quad \mt^{(L)} = \mw^{(i_0)\top}\mw^{(i_0,i_1)} \cdots \mw^{(i_{L-1},i_{L})}\mw^{(i_{L})}. \]
     and suppose the stated bound holds for $L-1$, i.e.
     \begin{equation}\label{eq:WWW-I_L-1}
     \begin{aligned}
    	\|\mt^{(L-1)} -\mi\|
&\lesssim \sum_{\ell=1}^{L-1}\alpha_{{i_{\ell-1}},{i_{\ell}}}
    \end{aligned}
    \end{equation}
    with high probability. 
    Next note that
    \begin{equation*}
    \begin{split}
    	\mt^{(L)}-\mi
    	&= \mt^{(L-1)} \mw^{(i_{L-1})\top}\mw^{(i_{L-1},i_L)}\mw^{(i_L)} -\mi\\
    	&= (\mt^{(L-1)}-\mi) \mw^{(i_{L-1})\top}\mw^{(i_{L-1},i_L)}\mw^{(i_L)} + (\mw^{(i_{L-1})\top}\mw^{(i_{L-1},i_L)}\mw^{(i_L)}-\mi),
    \end{split}
    \end{equation*}
    and hence, by combining Eq.~\eqref{eq:WWW-I_L-1} and Lemma~\ref{lemma:|| W(i)T W(i,j) W(j)-I ||} (for $\mw^{(i_{L-1})\top}\mw^{(i_{L-1},i_L)}\mw^{(i_L)}-\mi$),
 we obtain
    $$
    \begin{aligned}
    	\|\mt^{(L)} -\mi
    	\| 
    	\leq \|\mt^{(L-1)}-\mi\| + \|\mw^{(i_{L-1})\top}\mw^{(i_{L-1},i_L)}\mw^{(i_L)}-\mi\|
 \lesssim \sum_{\ell=1}^{L}\alpha_{{i_{\ell-1}},{i_{\ell}}}
    \end{aligned}
    $$
    with high probability.

\subsection{Extension to symmetric indefinite matrices}
We first state an analogue of Lemma~\ref{lemma:hat X(i)W(i)-X} for symmetric but possibly indefinite matrices. 
\begin{lemma}
\label{lemma:hat X(i)W(i)-X:indefinite}
{\em 
Fix an $i\in[K]$ and consider $\ma^{(i)}=(\mpp^{(i)}+\mn^{(i)})\circ \mathbf{\Omega}^{(i)}/q_i\in\mathbb{R}^{n_i\times n_i}$ as defined in Eq.~\eqref{eq:A(i)=...}.
    Write the eigen-decompositions of $\mpp^{(i)}$ and $\ma^{(i)}$ as
    \[ \mpp^{(i)}=\muu^{(i)}\mLambda^{(i)}\muu^{(i)\top}, \quad \ma^{(i)}=\hat\muu^{(i)}\hat\mLambda^{(i)}\hat\muu^{(i)\top} + \hat\muu^{(i)}_\perp\hat\mLambda^{(i)}_\perp\hat\muu^{(i)\top}_\perp.\]
    Let $d_{+}$ and $d_{-}$ denote the number of positive and negative eigenvalues of $\mpp^{(i)}$, respectively, and denote $d = d_{+} + d_{-}$. 
    Let $\hat\mx^{(i)}=\hat\muu^{(i)}|\hat\mLambda^{(i)}|^{1/2}$ and $\mx^{(i)} = \muu^{(i)} |\bm{\Lambda}^{(i)}|^{1/2}$. 
    Suppose that
    \begin{itemize}
      \item	$\muu^{(i)}$ is a $n_i\times d$ matrix with bounded coherence, i.e., 
      $$\|\muu^{(i)}\|_{2\to\infty}\lesssim d^{1/2}n_i^{-1/2}.$$
      \item $\mpp^{(i)}$ has bounded condition number, i.e.,
      $$\frac{\sigma_{i,\max}}{\sigma_{i,\min}}\leq M$$
      for some finite constant $M > 0$; here $\sigma_{i,\max}$ and $\sigma_{i,\min}$ denote the largest and smallest 
      non-zero 
      singular values of $\mpp^{(i)}$, respectively. 
      \item 
      The following conditions are satisfied.
      \begin{equation*}
          q_i n_i\gtrsim \log^2{n_i} ,\quad
          \frac{(\|\mpp^{(i)}\|_{\max}+\sigma_i) n_i^{1/2}}{q_i^{1/2}\sigma_{i,\min}}=\frac{\gamma_i}{(q_in_i)^{1/2}\mu_i \log^{1/2} n_i}\ll 1.
      \end{equation*}
    \end{itemize}    
    Then there exists an matrix $\mathring{\mw}^{(i)} \in \mathcal{O}_{d} \cap \mathcal{O}_{d_{+}, d_{-}}$ such that
    \begin{equation}\label{eq:hatXW-X_indefinite}
    	\hat\mx^{(i)} \mathring{\mw}^{(i)}-\mx^{(i)}
	= \me^{(i)}\mx^{(i)}(\mx^{(i)\top}\mx^{(i)})^{-1} \mi_{d_+,d_-} +\mr^{(i)},
    \end{equation}
	where the remainder term $\mr^{(i)}$ satisfies
	$$
	\|\mr^{(i)}\|
	\lesssim \frac{(\|\mpp^{(i)}\|_{\max}+\sigma_i)^2 n_i}{q_i\sigma_{i,\min}^{3/2}}
		+\frac{(\|\mpp^{(i)}\|_{\max}+\sigma_i)\log^{1/2}n_i}{q_i^{1/2}\sigma_{i,\min}^{1/2}}
    $$
    with high probability. Recall that $\mathcal{O}_{d}$ and $\mathcal{O}_{d_+,d_{-}}$ denote the set of $d \times d$ orthogonal and {\em indefinite} orthogonal matrices, respectively. 
	If we further assume 
 \begin{equation*}
     \frac{(\|\mpp^{(i)}\|_{\max}+\sigma_i) n_i^{1/2}\log^{1/2}n_i}{q_i^{1/2}\sigma_{i,\min}}=\frac{\gamma_i}{(q_i n_i)^{1/2}\mu_i}\ll 1,
 \end{equation*}
 then we also have 
$$
\begin{aligned}
&\|\mr^{(i)}\|_{2\to\infty}
	\lesssim \frac{(\|\mpp^{(i)}\|_{\max}+\sigma_i)^2 n_i^{1/2}\log n_i}{q_i\sigma_{i,\min}^{3/2}}
		+\frac{ (\|\mpp^{(i)}\|_{\max}+\sigma_i)\log^{1/2} n_i}{q_i^{1/2}n_i^{1/2}\sigma_{i,\min}^{1/2}}
	=\frac{\gamma_i^2}{q_in_i \mu_{i}^{3/2}}
		+\frac{\gamma_i}{q_i^{1/2}n_i  \mu_i^{1/2}}, \\
        	&\|\hat\mx^{(i)} \mathring{\mw}^{(i)}-\mx^{(i)}\|_{2\to\infty}
	    \lesssim \|\me^{(i)}\muu^{(i)}\|_{2\to\infty}
	    \cdot \|(\mLambda^{(i)})^{-1/2}\|
	    +\|\mr^{(i)}\|_{2\to\infty}
	    \lesssim \frac{(\|\mpp^{(i)}\|_{\max}+\sigma_i) \log^{1/2} n_i}{q_i^{1/2}\sigma_{i,\min}^{1/2}}=\frac{\gamma_i}{(q_in_i  \mu_i)^{1/2}} 
\end{aligned}
$$
    with high probability. 
     }
\end{lemma}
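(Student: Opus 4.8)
The plan is to mirror the proof of Lemma~\ref{lemma:hat X(i)W(i)-X} in Section~\ref{App: Lemma:hat X(i)W(i)-X}, tracking the two places where indefiniteness enters: the latent positions now carry the indefinite square root $|\hat\mLambda^{(i)}|^{1/2}$, and the alignment is required to lie in $\mathcal{O}_d \cap \mathcal{O}_{d_+,d_-}$. First I would re-establish the building blocks. The spectral-norm bounds on $\me^{(i)}$, $\muu^{(i)\top}\me^{(i)}\muu^{(i)}$, and $\|\me^{(i)}\muu^{(i)}\|_{2\to\infty}$ in Lemma~\ref{lemma:|E|,|EU|,|UT EU|} never use definiteness and carry over verbatim. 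By Weyl's inequality together with the assumption $\gamma_i/((q_in_i)^{1/2}\mu_i\log^{1/2}n_i)\ll 1$, the matrix $\ma^{(i)}$ inherits the signature of $\mpp^{(i)}$: its $d_+$ largest and $d_-$ smallest eigenvalues are $\Theta(\sigma_{i,\min})$ in magnitude and are separated from the remaining eigenvalues (all of magnitude $o(\sigma_{i,\min})$) by a gap of order $\sigma_{i,\min}$. Wedin's $\sin\Theta$ theorem then yields $\|\sin\Theta(\hat\muu^{(i)},\muu^{(i)})\| \lesssim \|\me^{(i)}\|/\sigma_{i,\min}$, so Lemma~\ref{lemma:||sinTheta(Uhat,U)||,Lambdahat} holds with $\lambda_{i,\min}$ replaced by $\sigma_{i,\min}$; the remaining supporting lemmas, including the leave-one-out chain culminating in the $2\to\infty$ bound of Lemma~\ref{lemma:|| E(Uhat W -U) ||2toinf}, are likewise insensitive to the sign pattern.

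Next, the eigenvector expansion of Lemma~\ref{lemma:hat U(i)W(i)-U(i)} still reads $\hat\muu^{(i)}\mw_\muu^{(i)}-\muu^{(i)} = \me^{(i)}\muu^{(i)}(\mLambda^{(i)})^{-1}+\mr_\muu^{(i)}$, now with the indefinite but invertible $\mLambda^{(i)}$ and with $\sigma_{i,\min}$ in the remainder bounds. The delicate point is that $\mathring{\mw}^{(i)}$ must be block diagonal with respect to the positive/negative split, i.e. lie in $\mathcal{O}_d\cap\mathcal{O}_{d_+,d_-}$. I would obtain this by aligning the positive and negative eigenspaces separately, setting $\mathring{\mw}_\muu^{(i)}=\diag(\mw_{\muu,+}^{(i)},\mw_{\muu,-}^{(i)})$ where $\mw_{\muu,\pm}^{(i)}$ solve the orthogonal Procrustes problems between $\hat\muu_\pm^{(i)}$ and $\muu_\pm^{(i)}$. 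The crucial estimate is that the off-diagonal block $\muu_-^{(i)\top}\hat\muu_+^{(i)}$ is negligible: a Sylvester-equation argument exploiting the sign gap (the positive eigenvalues of $\hat\mLambda_+^{(i)}$ are $\geq \sigma_{i,\min}/2$ while the negative eigenvalues of $\mLambda_-^{(i)}$ are $\leq -\sigma_{i,\min}$) gives $\|\muu_-^{(i)\top}\hat\muu_+^{(i)}\| \lesssim \|\muu^{(i)\top}\me^{(i)}\muu^{(i)}\|/\sigma_{i,\min} \lesssim \gamma_i/(q_i^{1/2}\sigma_{i,\min})$, which is precisely the order of the second term in $\|\mr_\muu^{(i)}\|$ (and, after multiplying by $\|\muu^{(i)}\|_{2\to\infty}$, of its $2\to\infty$ counterpart). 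Hence the block-diagonal alignment reproduces the same first-order term with a remainder of the advertised order. I expect this signature-preserving alignment to be the main obstacle, since in the definite case any orthogonal alignment was admissible.

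With the block-diagonal alignment in hand, I would convert the eigenvector expansion into the latent-position expansion exactly as in Section~\ref{App: Lemma:hat X(i)W(i)-X}. Writing $\hat\mx^{(i)}=\hat\muu^{(i)}|\hat\mLambda^{(i)}|^{1/2}$ and $\mx^{(i)}=\muu^{(i)}|\mLambda^{(i)}|^{1/2}$, the telescoping decomposition of $\hat\mx^{(i)}\mathring{\mw}^{(i)}-\mx^{(i)}$ requires the commutator bound on $\||\hat\mLambda^{(i)}|^{1/2}\mathring{\mw}_\muu^{(i)}-\mathring{\mw}_\muu^{(i)}|\mLambda^{(i)}|^{1/2}\|$, the analogue of Lemma~\ref{lemma:||Lambda UT Uhat-UT Uhat Lambdahat||, ||Lambda WT-WT Lambdahat||, ||Lambdahat1/2 W-W Lambda1/2||}; since the alignment respects the sign blocks and $|\mLambda^{(i)}|$ is positive within each block, the Hadamard-product trick of that lemma applies block by block with $|\lambda_k|$ in place of $\lambda_k$. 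The dominant piece is $(\hat\muu^{(i)}\mathring{\mw}_\muu^{(i)}-\muu^{(i)})|\mLambda^{(i)}|^{1/2}$, and the sign bookkeeping produces the claimed factor: because $\mLambda^{(i)}=\mi_{d_+,d_-}|\mLambda^{(i)}|$ is diagonal, one has $(\mLambda^{(i)})^{-1}|\mLambda^{(i)}|^{1/2}=|\mLambda^{(i)}|^{-1/2}\mi_{d_+,d_-}$, so the leading term equals $\me^{(i)}\muu^{(i)}|\mLambda^{(i)}|^{-1/2}\mi_{d_+,d_-}=\me^{(i)}\mx^{(i)}(\mx^{(i)\top}\mx^{(i)})^{-1}\mi_{d_+,d_-}$, using $\mx^{(i)\top}\mx^{(i)}=|\mLambda^{(i)}|$. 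Collecting the remaining terms into $\mr^{(i)}$ and bounding them through the re-derived supporting lemmas then yields the stated spectral and $2\to\infty$ bounds, with $\sigma_{i,\min}$ throughout.
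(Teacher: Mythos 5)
Your proposal is correct and follows the route the paper itself intends: the paper omits this proof entirely, stating only that it ``follows the same argument as that for Lemma~\ref{lemma:hat X(i)W(i)-X},'' and your adaptation of that argument --- with $\sigma_{i,\min}$ replacing $\lambda_{i,\min}$ throughout, and the signature-preserving alignment $\mathring{\mw}^{(i)}=\diag(\mw_{\muu,+}^{(i)},\mw_{\muu,-}^{(i)})$ built from separate Procrustes problems on the positive and negative eigenspaces so that $\mathring{\mw}^{(i)}\in\mathcal{O}_d\cap\mathcal{O}_{d_+,d_-}$, followed by the two-term telescoping $\hat\mx^{(i)}\mathring{\mw}^{(i)}-\mx^{(i)}=\hat\muu^{(i)}[|\hat\mLambda^{(i)}|^{1/2}\mathring{\mw}^{(i)}-\mathring{\mw}^{(i)}|\mLambda^{(i)}|^{1/2}]+(\hat\muu^{(i)}\mathring{\mw}^{(i)}-\muu^{(i)})|\mLambda^{(i)}|^{1/2}$ and the identity $(\mLambda^{(i)})^{-1}|\mLambda^{(i)}|^{1/2}=|\mLambda^{(i)}|^{-1/2}\mi_{d_+,d_-}$ --- supplies exactly the content the paper glosses over. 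The only imprecision is in your sign-gap/Sylvester step: the right-hand side $\muu_-^{(i)\top}\me^{(i)}\hat\muu_+^{(i)}$ must be split as $\muu_-^{(i)\top}\me^{(i)}\muu_+^{(i)}\mo+\muu_-^{(i)\top}\me^{(i)}(\hat\muu_+^{(i)}-\muu_+^{(i)}\mo)$, so the off-diagonal block is bounded by $\|\muu^{(i)\top}\me^{(i)}\muu^{(i)}\|/\sigma_{i,\min}+\|\me^{(i)}\|^2/\sigma_{i,\min}^2$ rather than the first term alone (a naive bound $\|\me^{(i)}\|/\sigma_{i,\min}$ would be too weak); since the extra term matches the first term of the claimed $\|\mr^{(i)}\|$ bound, your final estimates are unaffected.
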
 The proof of Lemma~\ref{lemma:hat X(i)W(i)-X:indefinite}
follows the same argument as that for Lemma~\ref{lemma:hat X(i)W(i)-X}
and is thus omitted.  The main difference between the statements of
these two results is that Lemma~\ref{lemma:hat X(i)W(i)-X} bounds
$\hat{\mx}^{(i)} \mw^{(i)} - \mx_{\mathcal{U}_i}$ while
Lemma~\ref{lemma:hat X(i)W(i)-X:indefinite} bounds $\hat{\mx}^{(i)}
\mathring{\mw}^{(i)} - \mx^{(i)}$. If $\mpp$ is positive semidefinite then $\mx_{\mathcal{U}_i} = \mx^{(i)} \tilde{\mw}^{(i)}$ for some orthogonal matrix $\tilde{\mw}^{(i)}$
and thus we can combine both $\mathring{\mw}^{(i)}$ and $\tilde{\mw}^{(i)}$ into a single orthogonal transformation $\mw^{(i)}$; see the argument in Section~\ref{App: Lemma:hat X(i)W(i)-X}. In contrast, if $\mpp$ is indefinite then $\mx_{\mathcal{U}_i} = \mx^{(i)} \mq^{(i)}$ for some indefinite orthogonal $\mq^{(i)}$. As indefinite
orthogonal matrices behave somewhat differently from orthogonal matrices, it is simpler to consider $\mathring{\mw}^{(i)}$ and $\mq^{(i)}$ separately. 

\begin{lemma}\label{lemma:WTW-QQ:indefinite}
  {\em 
  Consider the setting of Theorem~\ref{thm:R(il)_ind}
   for just two overlapping submatrices $\ma^{(i)}$ and $\ma^{(j)}$.
    Let $\mathring{\mw}^{(i,j)} = 
    (\hat{\mx}_{\langle\mathcal{U}_{i} \cap \mathcal{U}_{j}\rangle}^{(i)\top} \hat{\mx}_{\langle\mathcal{U}_{i} \cap \mathcal{U}_{j}\rangle}^{(i)})^{-1} \hat{\mx}_{\langle\mathcal{U}_{i} \cap \mathcal{U}_{j}\rangle}^{(i)\top} \hat{\mx}_{\langle\mathcal{U}_{i} \cap \mathcal{U}_{j}\rangle}^{(j)} = (\hat{\mx}_{\langle\mathcal{U}_{i} \cap \mathcal{U}_{j}\rangle}^{(i)})^{\dagger} \hat{\mx}_{\langle\mathcal{U}_{i} \cap \mathcal{U}_{j}\rangle}^{(j)}$ be the least square alignment between $\hat{\mx}_{\langle\mathcal{U}_{i} \cap \mathcal{U}_{j}\rangle}^{(i)}$ and $\hat{\mx}_{\langle\mathcal{U}_{i} \cap \mathcal{U}_{j}\rangle}^{(j)}$. 
    Here $(\cdot)^{\dagger}$ denotes the Moore-Penrose pseudoinverse of a matrix.
    Also notice $\mq^{(i)} (\mq^{(j)})^{-1} = (\mx_{\langle\mathcal{U}_{i} \cap \mathcal{U}_{j}\rangle}^{(i)})^{\dagger} \mx_{\langle\mathcal{U}_{i} \cap \mathcal{U}_{j}\rangle}^{(j)}$ is the corresponding alignment between
    $\mx_{\langle\mathcal{U}_{i} \cap \mathcal{U}_{j}\rangle}^{(i)}$ and $\mx_{\langle\mathcal{U}_{i} \cap \mathcal{U}_{j}\rangle}^{(j)}$.  We then have
    \[ \begin{split} \|\mathring{\mw}^{(i)\top} \mathring{\mw}^{(i,j)} \mathring{\mw}^{(j)} - \mq^{(i)} (\mq^{(j)})^{-1}\| &\lesssim \frac{n_{i,j} \gamma_i  \gamma_j }{\theta_{i,j}(q_in_i  \mu_i)^{1/2} (q_jn_j  \mu_j)^{1/2} }
    +\frac{n_{i,j}\vartheta_{i,j}^{1/2}\gamma_i^2}{\theta_{i,j}^{3/2}q_in_i  \mu_i} 
    \\&+\frac{n_{i,j}^{1/2}\|\mx^{(i)}_{\langle\mathcal{U}_i\cap\mathcal{U}_j\rangle}\|_{2\to\infty}}{\theta_{i,j}}\Big(\frac{\gamma_i}{(q_in_i\mu_i)^{1/2}}+\frac{\vartheta_{i,j}^{1/2}\gamma_j}{\theta_{i,j}^{1/2}(q_jn_j\mu_j)^{1/2}}\Big)\\
    &+\frac{n_{i,j}^{1/2}}{\theta_{i,j}^{1/2}}\Bigl(\frac{\gamma_i^2}{q_i n_i \mu_i^{3/2}}+\frac{\vartheta_{i,j}^{1/2}\gamma_j^2}{\theta_{i,j}^{1/2}q_jn_j  \mu_j^{3/2}}\Bigr)=:\alpha_{i,j}
    \end{split}
    \]
    with high probability. 
  }
\end{lemma}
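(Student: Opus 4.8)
The plan is to follow the skeleton of the proof of Lemma~\ref{lemma:|| W(i)T W(i,j) W(j)-I ||}, but to replace the polar/Procrustes perturbation bound used there (which is special to orthogonal Procrustes alignment) by an exact algebraic identity for the least-squares/pseudoinverse alignment. Abbreviate $\mx_i := \mx^{(i)}_{\langle\mathcal{U}_i\cap \mathcal{U}_j\rangle}$, $\mx_j := \mx^{(j)}_{\langle\mathcal{U}_i\cap \mathcal{U}_j\rangle}$ and set $\mathring\xi_i := \hat\mx^{(i)}_{\langle\mathcal{U}_i\cap \mathcal{U}_j\rangle}\mathring{\mw}^{(i)} - \mx_i$, $\mathring\xi_j := \hat\mx^{(j)}_{\langle\mathcal{U}_i\cap \mathcal{U}_j\rangle}\mathring{\mw}^{(j)} - \mx_j$. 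Since $\mathring{\mw}^{(i)},\mathring{\mw}^{(j)}\in\mathcal{O}_d$ are orthogonal, the pseudoinverse factors out, $(\hat\mx^{(i)}_{\langle\mathcal{U}_i\cap \mathcal{U}_j\rangle})^{\dagger} = \mathring{\mw}^{(i)}(\hat\mx^{(i)}_{\langle\mathcal{U}_i\cap \mathcal{U}_j\rangle}\mathring{\mw}^{(i)})^{\dagger}$, so that
\[ \mathring{\mw}^{(i)\top}\mathring{\mw}^{(i,j)}\mathring{\mw}^{(j)} = (\mx_i + \mathring\xi_i)^{\dagger}(\mx_j + \mathring\xi_j), \]
while the target is $\mathbf{Q} := \mq^{(i)}(\mq^{(j)})^{-1} = \mx_i^{\dagger}\mx_j$. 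The whole statement therefore reduces to bounding the pseudoinverse perturbation $\|(\mx_i + \mathring\xi_i)^{\dagger}(\mx_j + \mathring\xi_j) - \mx_i^{\dagger}\mx_j\|$.

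The key structural fact, exactly as in the positive semidefinite case, is that the two noiseless overlap embeddings share a common column space: $\mx_i\mq^{(i)} = \mx_{\langle\mathcal{U}_i\cap \mathcal{U}_j\rangle} = \mx_j\mq^{(j)}$, so $\mx_j = \mx_i\mathbf{Q}$ and $\mx_i^{\dagger}\mx_i = \mi$. Using these two identities one gets the clean \emph{exact} decomposition
\[ (\mx_i + \mathring\xi_i)^{\dagger}(\mx_j + \mathring\xi_j) - \mathbf{Q} = (\mx_i + \mathring\xi_i)^{\dagger}\mathring\xi_j - (\mx_i + \mathring\xi_i)^{\dagger}\mathring\xi_i\,\mathbf{Q}, \]
with no hidden higher-order remainder. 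I would then bound the two right-hand terms using the ingredients already available. Lemma~\ref{lemma:hat X(i)W(i)-X:indefinite} gives the expansion of $\mathring\xi_i,\mathring\xi_j$ into a linear-noise piece $\me^{(\cdot)}_{\langle\mathcal{U}_i\cap \mathcal{U}_j\rangle}\muu^{(\cdot)}(\mLambda^{(\cdot)})^{-1/2}\mi_{d_+,d_-}$ plus a remainder $\mr^{(\cdot)}_{\langle\mathcal{U}_i\cap \mathcal{U}_j\rangle}$, together with the bounds $\|\mathring\xi_i\|\lesssim n_{i,j}^{1/2}\gamma_i/(q_i n_i\mu_i)^{1/2}$ and $\|\mr^{(i)}_{\langle\mathcal{U}_i\cap \mathcal{U}_j\rangle}\|\lesssim n_{i,j}^{1/2}(\gamma_i^2/(q_in_i\mu_i^{3/2}) + \gamma_i/(q_i^{1/2}n_i\mu_i^{1/2}))$ (and symmetrically in $j$). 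Crucially, whenever a perturbation is contracted against $\mx_i^{\top}$ through $\mx_i^{\dagger} = (\mx_i^{\top}\mx_i)^{-1}\mx_i^{\top}$, I would invoke Lemma~\ref{lemma:||E(i)_ij||,||UiT E(i)_ij Uj||F} (whose proof bounds $\|\muu^{\top}\me^{\top}_{\langle\mathcal{U}_i\cap \mathcal{U}_j\rangle}\mathbf{Z}\|$ for a fixed $\mathbf{Z}$) to trade a factor $n_{i,j}^{1/2}$ for the sharper $\|\mx^{(i)}_{\langle\mathcal{U}_i\cap \mathcal{U}_j\rangle}\|_{2\to\infty}$. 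The relevant operator norms are $\|\mx_i^{\dagger}\|\asymp\theta_{i,j}^{-1/2}$, $\|\mx_j\|\asymp\vartheta_{i,j}^{1/2}$, and $\|\mathbf{Q}\|\le\|\mx_i^{\dagger}\|\,\|\mx_j\|\le(\vartheta_{i,j}/\theta_{i,j})^{1/2}$; distributing these weights over the linear-linear, linear-remainder, and remainder-remainder products assembles the four groups appearing in $\alpha_{i,j}$, the $\vartheta_{i,j}/\theta_{i,j}$ factors being exactly the footprints of the $\mathbf{Q}$-factor.

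The main obstacle, and the step deserving the most care, is certifying that $\mx_i + \mathring\xi_i$ retains full column rank $d$ with $\sigma_{\min}(\mx_i + \mathring\xi_i)\gtrsim\theta_{i,j}^{1/2}$, so that $(\mx_i+\mathring\xi_i)^{\dagger}(\mx_i+\mathring\xi_i)=\mi$ (needed for the exact identity) and $\|(\mx_i + \mathring\xi_i)^{\dagger}\|\lesssim\theta_{i,j}^{-1/2}$. This is precisely where the hypothesis $n_{i,j}^{1/2}\gamma_i/(q_in_i\mu_i)^{1/2}\ll\theta_{i,j}^{1/2}$ enters: it forces $\|\mathring\xi_i\|\ll\sigma_{\min}(\mx_i)=\theta_{i,j}^{1/2}$, whence Weyl's inequality for singular values gives the desired lower bound. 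Unlike the positive semidefinite case, where the alignment is an orthogonal Procrustes map (a contraction, yielding the clean limit $\mi$ in Lemma~\ref{lemma:|| W(i)T W(i,j) W(j)-I ||}), here the least-squares map $\mathbf{Q}=\mx_i^{\dagger}\mx_j$ can have operator norm of order $(\vartheta_{i,j}/\theta_{i,j})^{1/2}>1$; this amplification is the source of the asymmetric $\vartheta_{i,j}/\theta_{i,j}$ weights in $\alpha_{i,j}$ and, when the lemma is iterated along a chain, of the non-linear recursion for $a_L$ in Theorem~\ref{thm:R(il)_ind}. Collecting the term-by-term estimates, discarding the higher-order contributions that the stated smallness conditions render negligible, yields $\|\mathring{\mw}^{(i)\top}\mathring{\mw}^{(i,j)}\mathring{\mw}^{(j)} - \mq^{(i)}(\mq^{(j)})^{-1}\|\lesssim\alpha_{i,j}$ with high probability.
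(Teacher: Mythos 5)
Your proposal is correct and takes essentially the same approach as the paper: the paper likewise reduces the claim to a pseudoinverse perturbation bound for $(\mx_i+\mathring{\xi}_i)^{\dagger}(\mx_j+\mathring{\xi}_j)-\mx_i^{\dagger}\mx_j$, exploits the shared column space (so that $(\mi-\mx_i\mx_i^{\dagger})\mx_j=\mathbf{0}$) together with the full-rank certification $\|\mathring{\xi}_i\|\ll\theta_{i,j}^{1/2}$ via Weyl, and bounds the resulting terms using Lemma~\ref{lemma:hat X(i)W(i)-X:indefinite} and Lemma~\ref{lemma:||E(i)_ij||,||UiT E(i)_ij Uj||F}-type concentration, exactly as you outline. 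The only cosmetic difference is that your single exact identity, once you split $(\mx_i+\mathring{\xi}_i)^{\dagger}=\mx_i^{\dagger}+\bigl((\mx_i+\mathring{\xi}_i)^{\dagger}-\mx_i^{\dagger}\bigr)$ so that the sharp concentration bounds can act through the deterministic $\mx_i^{\top}$ (the paper controls this pseudoinverse difference via Wedin's perturbation theorem, which your Weyl-based rank argument also delivers), reproduces precisely the paper's four-term decomposition and the four groups of $\alpha_{i,j}$.
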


\begin{proof}
    For ease of notation, we will let $\hat{\my} := \hat{\mx}_{\langle\mathcal{U}_{i} \cap \mathcal{U}_{j}\rangle}^{(i)} \mathring{\mw}^{(i)}$ and $\hat{\mz} := \hat{\mx}_{\langle\mathcal{U}_{i} \cap \mathcal{U}_{j}\rangle}^{(j)} \mathring{\mw}^{(j)}$, and define $\my := \mx_{\langle\mathcal{U}_{i} \cap \mathcal{U}_{j}\rangle}^{(i)}$ and $\mz := \mx_{\langle\mathcal{U}_{i} \cap \mathcal{U}_{j}\rangle}^{(j)}$. We then have
    \begin{equation}\label{eq:WTW-QQ1}
    	\mathring{\mw}^{(i)\top} \mathring{\mw}^{(i,j)} \mathring{\mw}^{(j)}- \mq^{(i)} (\mq^{(j)})^{-1} = \hat{\my}^{\dagger} \hat{\mz} - \my^{\dagger} \mz 
    = (\hat{\my}^{\dagger} - \my^{\dagger})(\hat{\mz} - \mz) + \my^{\dagger}(\hat{\mz} - \mz) + (\hat{\my}^{\dagger} - \my^{\dagger})\mz,
    \end{equation}
    where the first equality follows from the fact that if $\mathbf{M}$ is a $p  \times d$ matrix and $\mathbf{W}$ is a $d \times d$ orthogonal matrix then $(\mathbf{M} \mathbf{W})^{\dagger} = \mathbf{W}^{\top} \mathbf{M}^{\dagger}
    $. 

    Now under the assumption $\frac{n_{i,j}^{1/2}\gamma_i}{(q_in_i  \mu_i)^{1/2}} \ll \theta_{i,j}^{1/2}$, we have
    $\|\hat{\my} - \my\| \ll
\sigma_{d}(\my) 
$ with high probability, and hence both
$\my^{\top} \my$ and $\hat{\my}^{\top} \hat{\my}$ are invertible with
high probability.  We thus have $\hat{\my}^{\dagger} \hat{\my} = \my^{\dagger} \my = 
\mi$ and
 \[ \hat{\my}^{\dagger} - \my^{\dagger} = \hat{\my}^{\dagger}(\mathbf{I} - \my (\my^{\top} \my)^{-1} \my^{\top})
 - \hat{\my}^{\dagger} (\hat{\my} - \my) \my^{\dagger}.
    \]
    Furthermore, as $\mz$ and $\my$ share the same column space, we have $(\mi - \my (\my^{\top} \my)^{-1} \my^{\top}) \mz = \mathbf{0}$ so that
    \begin{equation}\label{eq:WTW-QQ2}
    	(\hat{\my}^{\dagger} - \my^{\dagger}) \mz = - \hat{\my}^{\dagger} (\hat{\my} - \my) \my^{\dagger} \mz = (\my^{\dagger} - \hat{\my}^{\dagger}) (\hat{\my} - \my) \my^{\dagger} \mz - \my^{\dagger} (\hat{\my} - \my) \my^{\dagger} \mz.
    \end{equation}
    Combining Eq.~\eqref{eq:WTW-QQ1} and Eq.~\eqref{eq:WTW-QQ2} we obtain
    \[
\mathring{\mw}^{(i)\top} \mt^{(i,j)} \mathring{\mw}^{(j)}
      - \mq^{(i)} (\mq^{(j)})^{-1} =(\hat{\my}^{\dagger} - \my^{\dagger})(\hat{\mz} - \mz) + \my^{\dagger}(\hat{\mz} - \mz) + (\my^{\dagger} - \hat{\my}^{\dagger}) (\hat{\my} - \my) \my^{\dagger} \mz - \my^{\dagger} (\hat{\my} - \my) \my^{\dagger} \mz, \]
    and hence 
    \begin{equation}\label{eq:WTW-QQb}
    	\begin{aligned}
\|
\mathring{\mw}^{(i)\top} \mathring{\mw}^{(i,j)} \mathring{\mw}^{(j)}
- \mq^{(i)} (\mq^{(j)})^{-1}\| &\leq \|\hat{\my}^{\dagger} - \my^{\dagger}\| \cdot \|\hat{\mz} - \mz\| + \|\my^{\dagger} (\hat{\mz} - \mz)\| \\ 
&+ \|\hat{\my}^{\dagger} - \my^{\dagger}\| \cdot \|\hat{\my} - \my\| \cdot \|\my^{\dagger}\|\cdot\| \mz\| + \|\my^{\dagger}(\hat{\my} - \my)\| \cdot \|\my^{\dagger} \|\cdot\|\mz\|.
    	\end{aligned}
    \end{equation}
      
    We now bound the terms on the right side of Eq.~\eqref{eq:WTW-QQb}.
    By Lemma~\ref{lemma:hat X(i)W(i)-X:indefinite} we have
    \begin{equation}\label{eq:WTW-QQb1}
    	\begin{aligned}
    		&\|\hat{\my} - \my\|\lesssim n_{i,j}^{1/2}\cdot \|\hat\mx^{(i)} \mathring{\mw}^{(i)}-\mx^{(i)}\|_{2\to\infty}
    		\lesssim \frac{n_{i,j}^{1/2}\gamma_i}{(q_in_i  \mu_i)^{1/2}},\\&
    		\|\hat{\mz} - \mz\|\lesssim n_{i,j}^{1/2}\cdot \|\hat\mx^{(j)} \mathring{\mw}^{(j)}-\mx^{(j)}\|_{2\to\infty}
    		\lesssim \frac{n_{i,j}^{1/2}\gamma_j}{(q_jn_j  \mu_j)^{1/2}} 
    	\end{aligned}
    \end{equation}
    with high probability.
    Recall that
    $$
    \begin{aligned}
    	\vartheta_{i,j} := \|\mx^{(j)}_{\langle\mathcal{U}_{i} \cap \mathcal{U}_{j}\rangle}\|^2=\|\mz\|^2, 
    \quad \theta_{i,j} := \lambda_d(\mx^{(i)\top}_{\langle\mathcal{U}_{i} \cap \mathcal{U}_{j}\rangle} \mx^{(i)}_{\langle\mathcal{U}_{i} \cap \mathcal{U}_{j}\rangle}) = \lambda_d(\my^{\top} \my)
    \end{aligned}
    $$
    Thus $\|{\my}^{\dagger}\|= 
     \theta_{i,j}^{-1/2}.$
     Because $\|\hat{\my} - \my\| \ll \lambda_d^{1/2}(\my^{\top} \my)$ with high probability, we also have
     \[
     \|\hat{\my}^{\dagger}\| = \lambda_{d}^{-1/2}(\hat\my^{\top} \hat{\my})\leq 2\lambda_{d}^{-1/2}(\my^{\top} \my)\lesssim 
     \theta_{i,j}^{-1/2}
     \]
     with high probability.
     Then by Theorem~4.1 in \cite{wedin-pseudoinverse} we have
     \begin{equation}\label{eq:WTW-QQb2}
     	\|\hat{\my}^{\dagger} - \my^{\dagger}\| \leq \sqrt{2} \|\hat{\my}^{\dagger}\| \cdot \|\my^{\dagger}\| \cdot \|\hat{\my} - \my\|
    \lesssim \frac{n_{i,j}^{1/2}\gamma_i}{\theta_{i,j}(q_in_i  \mu_i)^{1/2}}
     \end{equation}
    with high probability.
    Next, by Eq.~\eqref{eq:hatXW-X_indefinite}, we have
\begin{gather*}
         \|\my^{\dagger}(\hat{\mz} - \mz)\| \leq \|(\my^{\top} \my)^{-1}\| \cdot \|\my^{\top} \me^{(i)}_{\langle\mathcal{U}_i \cap \mathcal{U}_j\rangle} \mx^{(i)}(\mx^{(i)\top} \mx^{(i)})^{-1}\| + \|\my^{\dagger}\| \cdot \|\mr^{(i)}_{\langle\mathcal{U}_i \cap \mathcal{U}_j\rangle}\|, \\
         \|\my^{\dagger}(\hat{\my} - \my)\| \leq \|(\my^{\top} \my)^{-1}\| \cdot \|\my^{\top} \me^{(j)}_{\langle\mathcal{U}_i \cap \mathcal{U}_j\rangle} \mx^{(j)}(\mx^{(j)\top} \mx^{(j)})^{-1}\| + \|\my^{\dagger}\| \cdot \|\mr^{(j)}_{\langle\mathcal{U}_i \cap \mathcal{U}_j\rangle}\|. 
     \end{gather*}
     By similar results to Lemma~\ref{lemma:||E(i)_ij||,||UiT E(i)_ij Uj||F} we have
     \begin{gather*}
     \|\my^{\top} \me^{(i)}_{\langle\mathcal{U}_i \cap \mathcal{U}_j\rangle} \mx^{(i)}(\mx^{(i)\top} \mx^{(i)})^{-1}\|
	    \lesssim \frac{n_{i,j}^{1/2}\gamma_i\|\mx^{(i)}_{\langle\mathcal{U}_i\cap\mathcal{U}_j\rangle}\|_{2\to\infty}}{(q_in_i\mu_i)^{1/2}}, \\
         \|\my^{\top} \me^{(j)}_{\langle\mathcal{U}_i \cap \mathcal{U}_j\rangle} \mx^{(j)}(\mx^{(j)\top} \mx^{(j)})^{-1}\|
	    \lesssim \frac{n_{i,j}^{1/2}\gamma_j\|\mx^{(i)}_{\langle\mathcal{U}_i\cap\mathcal{U}_j\rangle}\|_{2\to\infty}}{(q_jn_j\mu_j)^{1/2}},\\
         \|\mr^{(i)}_{\langle\mathcal{U}_i\cap \mathcal{U}_j\rangle}\|
	    \lesssim n_{i,j}^{1/2}\Bigl(\frac{\gamma_i^2}{q_in_i  \mu_i^{3/2}}
    +\frac{\gamma_i}{q_i^{1/2}n_i  \mu_i^{1/2}}\Bigr), \\
         \|\mr^{(j)}_{\langle\mathcal{U}_i\cap \mathcal{U}_j\rangle}\|
	    \lesssim n_{i,j}^{1/2}\Bigl(\frac{\gamma_j^2}{q_jn_j  \mu_j^{3/2}}
    +\frac{\gamma_j}{q_j^{1/2}n_j  \mu_j^{1/2}}\Bigr)
     \end{gather*} 
     with high probability. Therefore we have
     \begin{equation}\label{eq:WTW-QQb3}
     	\begin{aligned}
     		\|\my^{\dagger}(\hat{\mz} - \mz)\|
     		\lesssim \frac{n_{i,j}^{1/2}\gamma_i\|\mx^{(i)}_{\langle\mathcal{U}_i\cap\mathcal{U}_j\rangle}\|_{2\to\infty}}{\theta_{i,j}(q_in_i\mu_i)^{1/2}}
     		+\frac{n_{i,j}^{1/2}}{\theta_{i,j}^{1/2}}\Bigl(\frac{\gamma_i^2}{q_i n_i \mu_i^{3/2}}
    +\frac{\gamma_i}{q_i^{1/2} n_i  \mu_i^{1/2}}\Bigr),\\
    \|\my^{\dagger}(\hat{\my} - \my)\|
     		\lesssim \frac{n_{i,j}^{1/2}\gamma_j\|\mx^{(i)}_{\langle\mathcal{U}_i\cap\mathcal{U}_j\rangle}\|_{2\to\infty}}{\theta_{i,j}(q_jn_j\mu_j)^{1/2}}
     		+\frac{n_{i,j}^{1/2}}{\theta_{i,j}^{1/2}}\Bigl(\frac{\gamma_j^2}{q_jn_j  \mu_j^{3/2}}
    +\frac{\gamma_j}{q_j^{1/2}n_j  \mu_j^{1/2}}\Bigr)
     	\end{aligned}
     \end{equation}
     with high probability.
     Combining Eq.~\eqref{eq:WTW-QQb}, Eq.~\eqref{eq:WTW-QQb1}, Eq.~\eqref{eq:WTW-QQb2}, and Eq.~\eqref{eq:WTW-QQb3} we have the desired error rate of $\|\mathring{\mw}^{(i)\top} \mathring{\mw}^{(i,j)} \mathring{\mw}^{(j)}
- \mq^{(i)} (\mq^{(j)})^{-1}\|$.
      \end{proof}
      
      We now prove Theorem~\ref{thm:R(il)_ind}. 
      In the case of a chain $(i_0, i_1, \dots, i_L)$ we have
    $$\mathring{\mw}^{(i_0)\top} \Bigl(\prod_{\ell=1}^{L} \mathring{\mw}^{(i_{\ell-1},i_{\ell})}\Bigr) \mathring{\mw}^{(i_L) }
    = \prod_{\ell=1}^{L} \mathring{\mw}^{(i_{\ell-1})\top} \mathring{\mw}^{(i_{\ell-1},i_{\ell})} \mathring{\mw}^{(i_{\ell})} = \prod_{\ell=1}^{L} \tilde{\mt}^{(i_{\ell-1}, i_{\ell})},$$
    where $\prod$ is matrix product and $\tilde{\mt}^{(i_{\ell-1}, i_{\ell})} := \mathring{\mw}^{(i_{\ell-1})\top} \mathring{\mw}^{(i_{\ell-1},i_{\ell})} 
    \mathring{\mw}^{(i_{\ell})}$.  
Furthermore, for any $2 \leq \ell \leq L$ we also have
\[ \mq^{(i_0)} (\mq^{(i_\ell)})^{-1}  = \prod_{k=1}^{\ell} \mq^{(i_{k-1})} \bigl(\mq^{(i_k)}\bigr)^{-1}=
\prod_{k=1}^{\ell} \bigl(\mx_{\mathcal{U}_{i_{k-1}} \cap \mathcal{U}_{i_{k}}}^{(i_{k-1})}\bigr)^{\dagger} 
\mx_{\mathcal{U}_{i_{k-1}} \cap \mathcal{U}_{i_{k}}}^{(i_{k})}.
\]
Therefore, for $\ell \geq 2$, 
\begin{equation}\label{eq:l>=2}
\begin{aligned}
	 \Bigl(\prod_{k=1}^{\ell} 
    \tilde{\mt}^{(i_{k-1},i_{k})} \Bigr)
  - \mq^{(i_0)} (\mq^{(i_{\ell})})^{-1} &= \Bigl(\prod_{k=1}^{\ell-1} 
    \tilde{\mt}^{(i_{k-1},i_{k})} \Bigr)\tilde{\mt}^{(i_{\ell-1},i_{\ell})} -
     \mq^{(i_{0})} (\mq^{(i_{\ell-1})})^{-1} \mq^{(i_{\ell}-1)} (\mq^{(i_{\ell})})^{-1}  \\
     &= \Bigl(\Bigl(\prod_{k=1}^{\ell-1} 
    \tilde{\mt}^{(i_{k-1},i_{k})} \Bigr) - \mq^{(i_0)} (\mq^{(i_{\ell-1})})^{-1} \Bigr) (\tilde{\mt}^{(i_{\ell-1},i_{\ell})}  - \mq^{(i_{\ell-1})} (\mq^{(i_{\ell})})^{-1}) 
    \\ &+ 
\Bigl(\Bigl(\prod_{k=1}^{\ell-1} 
    \tilde{\mt}^{(i_{k-1},i_{k})} \Bigr) -  \mq^{(i_0)} (\mq^{(i_{\ell-1})})^{-1}
    \Bigr) \mq^{(i_{\ell-1})} (\mq^{(i_{\ell})})^{-1}
        \\ &+ 
    \mq^{(i_0)} (\mq^{(i_{\ell-1})})^{-1} \big(\tilde{\mt}^{(i_{\ell-1},i_{\ell})}  - \mq^{(i_{\ell-1})} (\mq^{(i_{\ell})})^{-1}\big).
\end{aligned}
\end{equation}    
    Define $a_{\ell} := \|
\prod_{\ell=1}^{L} \tilde{\mt}^{(i_{\ell-1}, i_{\ell})} - \mq^{(i_0)} (\mq^{(i_{\ell})})^{-1}\|$ for $1 \leq \ell \leq L$. We then have $a_{1} \leq \alpha_{i_0,i_1}$ with high probability by Lemma~\ref{lemma:WTW-QQ:indefinite}, and have
    \[ a_{\ell} \leq a_{\ell-1} \cdot \Bigl({\alpha_{i_{\ell-1},i_{\ell}}}  + \Bigl[\frac{\vartheta_{i_{\ell-1},i_{\ell}}}{\theta_{i_{\ell-1},i_{\ell}}}\Bigr]^{1/2} \Bigr) +  \varrho_{\ell-1} \cdot \alpha_{i_{\ell-1},i_{\ell}} 
     \]
     for $2 \leq \ell \leq L$ with high probability by Eq.~\eqref{eq:l>=2}, where 
     \[\varrho_{\ell-1} 
     := \bigl\|\mq^{(i_0)} \bigl(\mq^{(i_{\ell-1})}\bigr)^{-1} \bigr\|
     = \Big\|\prod_{k=1}^{\ell-1}\mq^{(i_{k-1})} \bigl(\mq^{(i_{k})}\bigr)^{-1} \Big\| 
     = \Big\|\prod_{k=1}^{\ell-1} \bigl(\mx_{\langle\mathcal{U}_{i_{k-1}}\cap \mathcal{U}_{i_k}\rangle}^{(i_{k-1})}\bigr)^{\dagger} \mx_{\langle\mathcal{U}_{i_{k-1}}\cap \mathcal{U}_{i_k}\rangle}^{(i_{k})} \Bigr\|.\] 

Finally, we have
\[ \begin{split} \hat{\mpp}_{\mathcal{U}_{i_0}, \mathcal{U}_{i_{L}}} - \mpp_{\mathcal{U}_{i_0}, \mathcal{U}_{i_{L}}}
&= \hat{\mx}^{(i_0)} \bigl( \prod_{\ell=1}^{L} 
  \mathring{\mw}^{(i_{\ell-1},i_{\ell})} \bigr) \mi_{d_+,d_-} \hat{\mx}^{(i_{L})\top} - \mx_{\mathcal{U}_{i_0}} \mi_{d_+,d_-} \mx_{\mathcal{U}_{i_L}}^{\top} \\
   &= \hat{\mx}^{(i_0)} \mathring{\mw}^{(i_0)}
\bigl(\prod_{\ell=1}^{L} 
    \tilde{\mt}^{(i_{\ell-1},i_{\ell})} \bigr) \mathring{\mw}^{(i_L)\top} \mi_{d_+,d_-} \hat{\mx}^{(i_{L})\top} - \mx^{(i_0)} \mq^{(i_0)} \mi_{d_+,d_-} \mq^{(i_L)\top} \mx^{(i_L)\top}
\\ &= \hat{\mx}^{(i_0)} \mathring{\mw}^{(i_0)}
\bigl(\prod_{\ell=1}^{L} 
    \tilde{\mt}^{(i_{\ell-1},i_{\ell})} \bigr) \mi_{d_+,d_-} \mathring{\mw}^{(i_L)\top} \hat{\mx}^{(i_{L})\top} - \mx^{(i_0)} \mq^{(i_0)} (\mq^{(i_L)})^{-1} \mi_{d_+,d_-} \mx^{(i_L)\top},
\end{split}
\]
where the last equality follows from the facts that $\mathring{\mw}^{(i_L)} \in  \mathcal{O}_d\cap\mathcal{O}_{d_+,d_-}$ and $\mq^{(i_L)}\in  \mathcal{O}_{d_+,d_-}$. 
Let $\xi_i = \hat{\mx}^{(i)} \mathring\mw^{(i)} - \mx^{(i)}$ for $i \in \{i_0,i_L\}$. Following the same derivations as that for Eq.~\eqref{eq:Rij}, with Lemma~\ref{lemma:hat X(i)W(i)-X:indefinite} replacing Lemma~\ref{lemma:hat X(i)W(i)-X}, 
we obtain
\[\begin{split}
  \hat{\mpp}_{\mathcal{U}_{i_0}, \mathcal{U}_{i_{L}}} - \mpp_{\mathcal{U}_{i_0}, \mathcal{U}_{i_{L}}} &= \me^{(i_0)} \mx^{(i_0)} (\mx^{(i_0)\top} \mx^{(i_0)})^{-1} \mi_{d_+,d_-} \mq^{(i_0)} (\mq^{(i_L)})^{-1}  \mi_{d_+,d_{-}} \mx^{(i_L)\top} \\ & + \mx^{(i_0)}
     \mq^{(i_0)} (\mq^{(i_L)})^{-1} (\mx^{(i_L)\top}\mx^{(i_L)})^{-1} \mx^{(i_L)\top} \me^{(i_L)} \\ &+ \mr^{(i_0,i_L)} + \ms^{(i_0,i_1, \dots, i_L)}, \end{split}
\]
where we set
\begin{gather*}
    \mr^{(i_0,i_L)} = \mr^{(i_0)} \mq^{(i_0)} (\mq^{(i_L)})^{-1} \mi_{d_+,d_-} \mx^{(i_L)\top} 
    + \mx^{(i_0)} \mq^{(i_0)} (\mq^{(i_L)})^{-1} \mi_{d_+,d_{-}} \mr^{(i_L)\top}
    + \xi_{i_0} \mq^{(i_0)} (\mq^{(i_L)})^{-1} \mi_{d_{+},d_{-}} \xi_{i_L}^{\top}, \\
\ms^{(i_0,i_1,\dots,i_L)} = (\mx^{(i_0)} + \xi_{i_0}) (\prod_{\ell=1}^{L} 
    \tilde{\mt}^{(i_{\ell-1},i_{\ell})} - \mq^{(i_0)} (\mq^{(i_L)})^{-1}) \mi_{d_+,d_-}(\mx^{(i_L)} + \xi_{i_L})^{\top}.
\end{gather*}
Substituting the bounds for $\mr^{(i)}$ and $\xi_i$ in Lemma~\ref{lemma:hat X(i)W(i)-X:indefinite}, we obtain
\[ \begin{split} \|\mr^{(i_0,i_L)}\|_{\max} & \lesssim   \Big(\frac{\gamma_{i_0}^2}{q_{i_0} n_{i_0}  \mu_{i_0}^{3/2}}
	+ \frac{\gamma_{i_0} }{q_{i_0}^{1/2}n_{i_0} \mu_{i_0}^{1/2}}\Big)
	\|\mx^{(i_L)}\|_{2\to\infty} +
	\Big(\frac{\gamma_{i_L}^2}{q_{i_L} n_{i_L}  \mu_{i_L}^{3/2}}
	+\frac{\gamma_{i_L} }{q_{i_L}^{1/2}n_{i_L} \mu_{i_L}^{1/2}}\Big) 
	\|\mx^{(i_0)}\|_{2\to\infty} \\ &+ \frac{\gamma_{i_0} \gamma_{i_L}}{(q_{i_0} n_{i_0} \mu_{i_0})^{1/2}(q_{i_L} n_{i_L} \mu_{i_L})^{1/2}},
 \end{split}
 \]
 \[ \|\ms^{(i_0,i_1,\dots,i_L)}\|_{\max} \lesssim a_{L} \|\mx^{(i_0)}\|_{2 \to \infty} \cdot \|\mx^{(i_L)}\|_{2 \to \infty} \]
 with high probability under the assumption $\frac{\gamma_i}{(q_in_i\mu_i)^{1/2}}\lesssim \|\mx^{(i)}\|_{2\to\infty}$ 
for $i \in \{i_0,i_L\}$. Finally, as $\mx_{\mathcal{U}_i} = \mx^{(i)} \mq^{(i)}$ for $i \in \{i_0, i_L\}$, with $\mq^{(i)} \in \mathcal{O}_{d_+,d_-}$, we have after some straightforward algebra that
 \begin{gather*}
   \me^{(i_0)} \mx^{(i_0)} (\mx^{(i_0)\top} \mx^{(i_0)})^{-1} \mi_{d_{+},d_{-}}\mq^{(i_0)}  (\mq^{(i_L)})^{-1}  \mi_{d_+,d_{-}} \mx^{(i_L)\top}  = \me^{(i_0)} \mx_{\mathcal{U}_{i_0}} (\mx_{\mathcal{U}_{i_0}}^{\top} \mx_{\mathcal{U}_{i_0}})^{-1}  \mx_{\mathcal{U}_{i_L}}^{\top}, \\
   \mx^{(i_0)} \mq^{(i_0)} (\mq^{(i_L)})^{-1} (\mx^{(i_L)\top}\mx^{(i_L)})^{-1} \mx^{(i_L)\top} \me^{(i_L)} = \mx_{\mathcal{U}_{i_0}} (\mx_{\mathcal{U}_{i_L}}^{\top} \mx_{\mathcal{U}_{i_L}})^{-1} \mx_{\mathcal{U}_{i_L}}^{\top} \me^{(i_L)}
 \end{gather*}
 as desired. 

\subsection{Additional discussions on theoretical results}
\label{sec:add discussion}

We now provide further discussion on Assumption~\ref{ass:main} and the conditions in Theorems~\ref{thm:R(i,j)}, \ref{thm:R(i0,...,iL)}, and \ref{thm:normal}.
For the necessity of Eq.~\eqref{eq:assm1_con} in Assumption~\ref{ass:main}, consider the setting where $n_i \asymp n$, $q_i \asymp q$, $n_{ij} \asymp m$, $\mu_i \asymp 1$, $\sigma_i = O(1)$ and $\|\mathbf{P}\|_{\max} \asymp 1$ as discussed in Remark~\ref{rem:example1_setting}. This is a standard setting for many noisy matrix completion problems where the entries of $\mathbf{P}$ are bounded, the noise (while sub-Gaussian) has bounded Orlicz-2 norms, and each block of $\mpp$ has bounded condition number. Then both conditions in Eq.~\eqref{eq:assm1_con} simplify to $n q \gtrsim \log n$ (as $\|\mathbf{X}_{\mathcal{U}_i}\|_{2 \to \infty} = \|\mathbf{P}_{\mathcal{U}_i,\mathcal{U}_{i}}\|_{\max}^{1/2}$). The condition $n q \gtrsim \log n$ is very mild and is furthermore also necessary for matrix completion to work (see for example the discussion after Theorem~3.22 in \cite{chen2021spectral}). In summary our conditions in Eq.~\eqref{eq:assm1_con} match those in the existing literature for standard matrix completion.
Regarding Eq.~\eqref{eq:thm1_add_condition} in Theorem~\ref{thm:R(i,j)}, it ensures that the remainder term $\ms^{(i,j)}$ of the estimation error  is bounded by the two dominant terms, which is a mild and natural assumption in many settings.
  For example, continuing with the above setting, 
    the expression for $\alpha_{i,j}$ simplifies to 
    \[ \alpha_{i,j} \lesssim \frac{\log n}{n q} + \frac{\log^{1/2}{n}}{\sqrt{n q m}},\]
    in which case Eq.~\eqref{eq:thm1_add_condition} 
     simplifies to
    \[ \frac{\sqrt{\log n}}{\sqrt{nq}} + \frac{1}{\sqrt{m}} \lesssim 1\]
    which is then satisfied for all $m \geq 1$ (assuming $n q \gtrsim \log n$).  
This discussion also extends to Theorem~\ref{thm:R(i0,...,iL)}, where Eq.~\eqref{eq:thm2_add_condition} 
becomes
 \[ L\Bigl(\frac{\sqrt{\log n}}{\sqrt{nq}} + \frac{1}{\sqrt{m}}\Bigr) \precsim 1.\]
 This condition is then satisfied whenever $n q = \Omega(L^2 \log n)$ and $m = \Omega(L^2)$. In other words, the number of matrices in the chain between $\ma^{(i_0)}$ and $\ma^{(i_L)}$ is not too large compared to $m$ (the overlap size) and $nq$ (the average number of non-zero entries in each row of the $\ma^{(i)}$).
Finally, Eq.~\eqref{eq:con_2} to Eq.~\eqref{eq:con_4} provide the technical conditions required for the central limit theorem stated in Theorem~\ref{thm:normal}; see Remark~\ref{rm:qleq1-c} for details.

 We next provide further discussion on $\mu_i$ and $\|\mx_{\mathcal{U}i}\|_{2\to\infty}$.
In our analysis, both $\mu_i$ and $\|\mx_{\mathcal{U}i}\|_{2\to\infty}$ appeared due to their roles in controlling related but {\em distinct} 
        quantities in our model. For example, $\mu_i$ is used to bound the subspace estimation error between $\hat{\muu}_i$ and $\muu_{i}$ when applying the Davis-Kahan theorem as $n_i \mu_i$ is the gap between the leading eigenvalues of $\mpp^{(i)}$ and the remaining eigenvalues. See, for example, the statement and proof of Lemma~\ref{lemma:||sinTheta(Uhat,U)||,Lambdahat}. The Davis-Kahan theorem, however, also depends on an upper bound for $\|\me^{(i)}\|$. As $\me^{(i)}$ accounts for both noise and missingness (see Eq.~\eqref{eq:E1E2}), the magnitude of the entries of $\me^{(i)}$ depend on those of $\mpp^{(i)}$ and hence standard matrix completion bounds typically depend on 
        $\|\mpp^{(i)}\|_{\max}$ (see for example Theorem~3.4 in \cite{abbe2020entrywise}). Now $\|\mpp^{(i)}\|_{\max} = \|\mx_{\mathcal{U}_i}\|_{2 \to \infty}^{2}$ whenever $\mpp^{(i)}$ is positive semidefinite, and this explains the need to also include $\|\mx_{\mathcal{U}_i}\|_{2 \to \infty}$ in our bounds for Lemma~\ref{lemma:||sinTheta(Uhat,U)||,Lambdahat}. The same observation also extends to other bounds in the paper, including those in the main theorems. 
        In addition, notice that we can potentially simplify our bounds to depend only on $\mu_i$ under Assumption~\ref{ass:main}. More specifically if
        $\|\muu_i\|_{2 \to \infty} \lesssim (d/n_i)^{1/2}$ (as in Assumption~\ref{ass:main}) then
            \[ \|\mx_{\mathcal{U}_i}\|_{2 \to \infty}^2 = \|\mpp^{(i)}\|_{\max} \lesssim d \frac{\lambda_{i,\max}}{n_i} \asymp \mu_i, \]
            provided that $d$ is fixed (not depending on $N$) and $\mpp^{(i)}$ has bounded condition number. Hence, under Assumption~\ref{ass:main}, all of our bounds can be simplified to depend only on $\mu_i$. However, if we make no assumptions on $\|\muu_i\|_{2 \to \infty}$ then we only have
        $\|\mx_{\mathcal{U}_i}\|_{2 \to \infty} \leq \|\mx_{\mathcal{U}_i}\|$ and $n_i \mu_i \leq \|\mx_{\mathcal{U}_i}\|$, but this does not yield any explicit 
        relationship between $\mu_i$ and $\|\mx_{\mathcal{U}_i}\|_{2 \to \infty}$. For conciseness we have chosen to keep the stated bounds as they are more general.

\end{document}